\documentclass[12pt]{article}

\usepackage{graphicx}
\usepackage{amsmath}
\usepackage{amssymb}
\usepackage{amsthm,mathtools,xcolor}
\usepackage{cancel}%used for editorial stuff like striking out fonts 
\usepackage[numbers,sort&compress]{natbib}

\usepackage[colorlinks = false]{hyperref}

\usepackage{subcaption}
\usepackage[margin=1in]{geometry}
\usepackage{enumitem}

\newcommand{\footremember}[2]{%
    \footnote{#2}
    \newcounter{#1}
    \setcounter{#1}{\value{footnote}}%
}
\newcommand{\footrecall}[1]{%
    \footnotemark[\value{#1}]%
}

\newcommand{\be}{\begin{equation}}
\newcommand{\ee}{\end{equation}}
\newcommand{\ba}{\begin{array}}
\newcommand{\ea}{\end{array}}
\newcommand{\bea}{\begin{eqnarray}}
\newcommand{\eea}{\end{eqnarray}}

\DeclareMathOperator*{\argsup}{arg\,sup}
\newcommand{\ra}{\rangle}
\newcommand{\la}{\langle}
\newcommand{\D}{\nabla}
\newcommand{\ip}[2]{\langle #1,#2\rangle}
\newcommand{\ipp}[2]{\langle \langle #1,#2\rangle \rangle}

\newcommand{\norm}[1]{\lVert #1\rVert}
\newcommand{\abs}[1]{\lvert #1\rvert}

\newcommand{\TT}{\mathbb{T}}

\newcommand{\calB}{{\cal B }}
\newcommand{\calH}{{\cal H }}

\newcommand{\calN}{{\cal N }}

\newcommand{\calF}{{\cal F }}

\newcommand{\calJ}{{\cal J }}

\newcommand{\calV}{{\cal V }}
\newcommand{\calE}{{\cal E }}

\newcommand{\calS}{{\cal S }}

\newcommand{\calM}{{\cal M }}

\newcommand{\calW}{{\cal W}}

\newcommand{\EE}{\mathbb{E}}
\newcommand{\ZZ}{\mathbb{Z}}
\newcommand{\CC}{\mathbb{C}}

\newcommand{\RR}{\mathbb{R}}
\newcommand{\HH}{\mathbb{H}}
\newcommand{\bm}{}

\newcommand{\multi}[1]{\mathbf{{#1}}}
\newcommand{\nn}{\multi{n}}
\newcommand{\mm}{\multi{m}}

\newcommand{\herm}{\mathrm{He}}
\newcommand{\polylog}[1]{\mathrm{polylog}{(#1)}}
\newcommand{\voracle}[1]{{\mathsf{Val}_{#1}}}
\newcommand{\poracle}[1]{{\mathsf{Pos}_{#1}}}

\newtheorem{prop}{Proposition}
\newtheorem{lemma}{Lemma}

\newtheorem{corol}{Corollary}
\newtheorem{conj}{Conjecture}
\newtheorem{theorem}{Theorem}
\newtheorem*{theorem*}{Theorem}
\newtheorem*{lemma*}{Lemma}
\newtheorem*{corol*}{Corollary}
\newtheorem{rem}{Remark}

\usepackage{acro}

\DeclareAcronym{nse}{
  short=NSE,
  long=Navier--Stokes equations
}

\title{Quantum algorithms for stochastic nonlinear differential equations}

\author{Sergey Bravyi\footremember{ibm}{IBM Quantum, IBM T.J. Watson Research Center, Yorktown Heights, NY 10598 (USA)}
\and
Adam Byrne \footremember{Dublin}{IBM Quantum, IBM Research Europe, Trinity Business School, Dublin,  D02 F6N2 (Ireland)} \footremember{TCD}{School of Mathematics, Trinity College Dublin, College Green, Dublin 2 (Ireland)}
\and 
Mykhaylo Zayats\footrecall{Dublin}\and Sergiy Zhuk\footrecall{Dublin}}

\begin{document}
\maketitle

\begin{abstract}
Stochastic nonlinear dynamics underlie many models in engineering and computational physics, yet accurate high‑dimensional simulation remains challenging. We present a quantum algorithm for a broad class of $N$-dimensional stochastic differential equations with dissipation and  quadratic drift.
The algorithm applies to  strongly nonlinear systems with all-to-all interactions, thereby 
extending the scope of previously known quantum algorithms that were limited to weak nonlinearity and  sparse systems.
 For norm‑preserving drifts---a condition satisfied by key fluid dynamics discretizations---our method approximates expectation values of low‑order correlation functions with rigorous error bounds at a cost polynomial in $\log{(N)}$ and linear  in the evolution time. Our main technical advance is a subroutine for simulating an auxiliary system of $N$ interacting quantum harmonic oscillators with cost polylogarithmic in $N$. Finally, we formulate turbulence models, including Navier–Stokes and damped Euler equations, within this framework, opening a route to quantum simulation of strongly nonlinear SDEs governing turbulence and nonlinear wave dynamics.
\end{abstract}

\clearpage
\tableofcontents

\section{Introduction}

Numerical simulation of nonlinear dynamical systems, such as turbulent flows, remains a central challenge in computational science. Despite their formal simplicity, quadratic nonlinearities render these systems highly sensitive to modelling errors and uncertainties in initial conditions, hindering accurate prediction. The problem is compounded in large-scale models: nonlinear partial differential equations span multiple spatial and temporal scales, yielding high-dimensional discretizations with many degrees of freedom~\cite{yang2021grid}. These are typically cast as Ordinary Differential Equations (ODE) for a state vector $X(t)$ with $N$ components: e.g. $X(t)$ describes coordinates of a fluid velocity field in Fourier basis, its time derivative is a quadratic  function evaluated at $X(t)$, and larger $N$ provides a more accurate description of the fluid. The objective is to infer properties of $X(t)$ from a given initial state $X(0)$.

\subsection*{Motivation}

Quantum algorithms offer a potential route beyond the aforementioned  limitations. Liu et al.~\cite{liu2021efficient} showed that certain quadratic dissipative ODEs can be solved with runtime logarithmic in $N$ and polynomial in time $t$, hence exponentially faster than classical ODE solvers which scale at least linearly in $N$ as they need to store $X(t)$ in memory. However, this advantage is established only for weakly nonlinear ODEs where dynamics is dominated by linear dissipative term, typical of low–Reynolds-number flows~\cite{mikel-sanz-PRR25}. For strongly nonlinear dissipative ODEs with extra condition of norm preservation, existing approaches do provide runtime logarithmic in $N$ but incur exponential scaling in time $t$~\cite{leyton2008quantum}. The latter arguably rules out most practical applications. Whether generic strongly nonlinear ODEs of practical relevance 
admit quantum algorithms with complexity polynomial in both $t$ and $\log(N)$ remains an open question.

\subsection*{Our approach}
The present work offers a new perspective on quantum simulation of nonlinear dynamics: certain ODEs can be made more computationally tractable by adding {\em random noise}.
We show that additive white noise converts a deterministic nonlinear evolution into a stochastic process that admits efficient quantum simulation, while remaining costly for known classical stochastic methods — making quantum advantage plausible in this setting. This perspective is also physically motivated. Noise is ubiquitous in real systems—arising from environmental coupling, uncertain initial conditions, or imperfect model parameters—and, unless actively suppressed, stochastic dynamics often provide a more realistic description than idealized deterministic models.

This perspective leads to a concrete quantum algorithm. We show that a class of stochastic differential equations (SDEs) can be simulated efficiently on a quantum computer. Specifically, systems of $N$ dissipative SDEs with quadratic, divergence-free drifts, driven by additive white noise and Gaussian noise in the initial condition $X(0)$, admit simulation with runtime linear in $t$, polynomial in $\log N$ and the drift strength, and inverse polynomial in the dissipation rate and target error.
The runtime depends exponentially on the norm of $X(0)$ weighted by a ratio of dissipation and Gaussian noise rates; however, this dependence is mild when $X(0)$ has small norm, and the noise level is away from zero.
The divergence-free condition is analogous to the norm-preservation assumption of Leyton and Osborne~\cite{leyton2008quantum} and is satisfied in many turbulence and photonics models.

In contrast, some quadratic ODEs without noise or dissipation are known to be hard to simulate, both classically and on a quantum computer~\cite{abrams1998nonlinear,childs2016optimal,brustle2025quantum,krovi26}. 
The key obstruction is that nonlinear dynamics
can feature a butterfly effect in which  a pair of nearby state vectors diverges exponentially fast, even if the evolution is norm-preserving. 
This effect can be strong enough to enable the solution of NP-hard problems in polynomial time~\cite{abrams1998nonlinear,childs2016optimal,brustle2025quantum}, suggesting  that an efficient simulation of such systems
is unlikely. Our work demonstrates that adding random noise mitigates the butterfly effect and, in certain cases, makes the dynamics tractable for quantum computers, though not necessarily for classical ones.

\subsection*{Main results}
Let us define a general model for a dissipative nonlinear dynamics with a random noise.
To fix notation, consider first $N$-dimensional Ornstein-Uhlenbeck process, a linear SDEs of the form
$dX_i(t)=-\lambda_i X_i(t)\,dt+\sqrt{q}\,dW_i(t)$.
Here $t\ge 0$ is the evolution time, 
$q>0$ is the noise rate, and $dW_i(t)$ is the infinitesimal Wiener noise: in a small time step
$dt$, the variable $X_i(t)$ receives a random kick of typical size
$\sqrt{q\,dt}$. 
The damping term $-\lambda_i X_i(t)$ describes dissipation: it pulls $X_i(t)$  towards zero, while the
noise spreads it out. The two effects balance in a Gaussian steady-state distribution 
\be
\label{mu}
\mu(x)\propto \exp\!\left[-q^{-1}\sum_{i=1}^N \lambda_i x_i^2\right]
\ee
where $\lambda_i$ is the dissipation rate for the variable $X_i(t)$. 
Let us order the variables such that  $\lambda_1=\min_i \lambda_i$ is the smallest dissipation rate. 

We consider $N$-dimensional nonlinear SDEs of the form 
\be
\label{SDE}
dX_i(t) = -\lambda_i X_i(t) dt  + f_i(X(t))dt  + 
\sqrt{q}\, dW_i(t), 
\ee
where 
\[
f_i(x) =  
\sum_{j,k=1}^N c_{ijk}  x_j x_k
\]
are quadratic {\em drift functions}. Assume without loss of generality that $c_{ijk}=c_{ikj}$ for all indices.
The system is said to have 
 {\em drift strength} $J$ if  any
two-dimensional slice of the tensor $c$ has norm at most $J$.
Thus, for any  index $i$ one must have  $\sqrt{\sum_{j,k=1}^N |c_{ijk}|^2}\le J$ and
$\sqrt{\sum_{j,k=1}^N |c_{jik}|^2}\le J$. 
We allow dense drift functions meaning that each variable can be coupled to all other variables. The only restriction is that each {\em pair} of variables can be coupled only to a few other variables. Formally, we say that the drift has $s$ channels if for any fixed pair $(j,k)$ there are at most $s$ indices $i$ such that $c_{ijk}\ne 0$. Below we assume that $s$ is a constant independent of $N$.
We call Eq.~(\ref{SDE}) {\em divergence-free} if
\be
\label{div_free}
\mathrm{div}(\mu(x) f(x))=0
\ee
for all $x\in \RR^N$, where $\mu(x)$ is defined by Eq.~(\ref{mu}).
That is, the nonlinear drift acts like an incompressible flow when space is weighted by the Gaussian density $\mu$. For example, we prove that drifts representing Navier-Stokes and Euler equations in Fourier basis  are divergence-free and have $s\le 4$ channels.

Our quantum algorithm, illustrated in Fig.~\ref{fig:diagram}, departs radically from state-of-the-art quantum ODE solvers~\cite{liu2021efficient,koopman1931hamiltonian,neumann1932operatorenmethode,kowalski1997nonlinear,engel2021linear,tanaka2023polynomial,novikau2025quantum,shi2023koopman}, which aim to track the full state vector $X(t)$. Instead, our goal is to approximate a given observable function $u_0\, : \, \RR^N\to \RR$ evaluated at the solution $X(t)$ for a given initial vector $X(0)$ and averaged over noise.
For example, $u_0(x)=x_i$ and $u_0(x)=x_i x_j$ are observables describing the first and the second moments of the solution.
In general, we allow linear observables $u_0(x)=\sum_{i=1}^N b_i x_i$
with a constant norm of the coefficient vector  $b$ 
and nonlinear monomial observables
$u_0(x)=\prod_{i=1}^N x_i^{d_i}$ with a constant total degree.

Given the target initial condition $x\in \RR^N$ and the evolution time $t\ge 0$, define an expected value 
\be
\label{v(t,x)}
v(t,x)=\EE_{\mathrm{noise}} \left[ u_0(X(t))\right]
\ee
where the  observable function is evaluated at the 
solution $X(t)$ of the SDE Eq.~(\ref{SDE}) with the initial condition 
\be
\label{ic-noisy}
X(0)=x+z
\ee
for a random normally distributed vector $z\in \RR^N$ that models noise in the initial condition.
The expected value is taken over both Wiener noise in the SDE and noise in the initial condition.
Each component $z_i$ is drawn independently from  the normal distribution 
with the zero mean and variance $p_i$.
The parameter $p_i$ can be viewed as the initialization noise rate for the $i$-th variable.
We choose $p_i=q/(2\lambda_i)$, as this choice greatly simplifies the technical analysis but we expect that more general noise models can be considered.
We assume that the target initial condition $x$ is a sparse vector with $O(1)$ nonzero components.

Our main result is a quantum algorithm for estimating observables of high-dimensional
quadratic SDEs. The following theorem describes the algorithm and its runtime accounting for both gate complexity and the number of queries to the oracles specifying dissipation rates,  drift functions, and the observable.
\begin{theorem} 
\label{thm:main}
There exists a quantum algorithm taking as input the target initial condition $x\in \RR^N$, evolution time $t$, an error tolerance $\epsilon$, and the observable function $u_0$ as above. The algorithm outputs a real number approximating the expected value of the observable defined in Eq.~(\ref{v(t,x)})
with an additive error $\epsilon$. The algorithm has runtime scaling linearly with $t$,
polynomially with  $\log{N}$,  $J$,  $\lambda_1^{-1}$, $\epsilon^{-1}$, and 
exponentially with the quantity $\sum_{i=1}^N x_i^2/p_i$, where $p_i=q/(2\lambda_i)$ is the noise rate for the $i$-th variable.
\end{theorem}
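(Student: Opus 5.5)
We would use the backward Kolmogorov equation together with the Gaussian weighting $\mu$. Let $u(t,y)=\EE[u_0(X(t))\mid X(0)=y]$. It solves $\partial_t u = L u$, where
\[
L=\frac{q}{2}\Delta-\sum_i \lambda_i y_i\partial_i + \sum_i f_i(y)\partial_i .
\]
Then $v(t,x)=\int \mu(y-x)\,u(t,y)\,dy$ up to normalization, because $p_i=q/(2\lambda_i)$ is exactly the variance of $\mu$.

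We would work in $L^2(\mu)$ and rescale $y_i=\sqrt{p_i}\,\xi_i$. The Ornstein--Uhlenbeck part becomes $-\sum_i \lambda_i a_i^\dagger a_i$ in the Hermite (Fock) basis. This is a self-adjoint, negative semidefinite number operator for $N$ harmonic oscillators with gap $\lambda_1$. The divergence-free condition (\ref{div_free}) says that $f\cdot\nabla$ is anti-self-adjoint in $L^2(\mu)$. Writing $y$ and $\partial$ through $a,a^\dagger$ turns it into an anti-Hermitian operator that is cubic in ladder operators. Hence $L=-H_0+iK$ with $H_0\ge 0$ and $K$ Hermitian, and $e^{tL}$ is a contraction on $L^2(\mu)$.

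The observable becomes a low-degree Fock state $|u_0\rangle$: a linear $u_0$ gives one-particle states of norm $\|b\|$, and monomials give a constant number of quanta. The initial-condition weighting becomes the inner product with a coherent-like state $|\phi_x\rangle$ obtained from $\mu(y-x)/\mu(y)=\exp(\sum_i (2x_iy_i-x_i^2)\lambda_i/q)$. This vector lies in $L^2(\mu)$ with squared norm $\exp(\sum_i x_i^2/p_i)$, which is the source of the exponential factor. Therefore
\[
v(t,x)=\langle \phi_x|e^{tL}|u_0\rangle
\]
is an overlap of a contractive evolution between states of norm $O(1)$ and $e^{\sum_i x_i^2/(2p_i)}$. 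A Hadamard test estimates it to precision $\epsilon$ with $O(e^{\sum_i x_i^2/p_i}\epsilon^{-2})$ repetitions. $|\phi_x\rangle$ is a product of coherent states on the $O(1)$ modes where $x$ is nonzero, so it is easy to prepare after truncation.

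The main step is implementing the non-unitary contraction $e^{tL}$ on $N$ interacting oscillators at cost polylogarithmic in $N$ and linear in $t$. We would use a block encoding of $L$ and a linear-combination-of-Hamiltonian-simulation (or Schrödinger-isation) scheme, which costs linearly in $t\|L\|$ for a dissipative generator.

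This requires truncating the total particle number at some $M$. $K$ changes particle number by at most one (cubic terms like $a^\dagger a^\dagger a$ and $a^\dagger a a$, up to normal-ordering corrections). The damping $-H_0$ penalizes high occupation with rate at least $\lambda_1$ per quantum. We would therefore bound the leakage above $M$ by an energy-type estimate: for low-degree initial states, the weight beyond $M$ decays so that $M=\mathrm{poly}(J/\lambda_1,\log(1/\epsilon))$ suffices, uniformly in $t$ thanks to contractivity. This truncation estimate is the obstacle we expect to be hardest, and we would first try a Grönwall bound on $\langle e^{\alpha \hat N}\rangle$.

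Within the truncated space, we would encode states by occupation lists of at most $M$ modes, each a $\log N$-bit register. The $s$-channel condition makes each pair $(j,k)$ feed $O(s)$ targets $i$. Combined with the slice-norm bound $J$, this gives a block encoding of $K$ with normalization $\mathrm{poly}(M)\,J$ via oracle queries with $O(\log N)$ gate overhead. $H_0$ is diagonal, normalized by $M\max_i\lambda_i$; alternatively we would handle it via interaction picture to avoid dependence on $\max_i\lambda_i$.

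Combining the two errors (truncation error $\epsilon$ and simulation error $\epsilon$) gives total runtime
\[
O\big(t\cdot \mathrm{poly}(\log N, J,\lambda_1^{-1},\epsilon^{-1})\cdot e^{\sum_i x_i^2/p_i}\big),
\]
as claimed in Theorem~\ref{thm:main}.
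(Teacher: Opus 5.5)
Your architecture matches the paper's: Hermite/Fock embedding of the backward Kolmogorov equation, an anti-Hermitian cubic generator obtained from the divergence-free condition, and a coherent-state readout of squared norm $e^{\sum_i x_i^2/p_i}$. Truncation, block encoding, LCHS and the Hadamard test also follow the paper. The gap is the step you flag as hardest. That step is the technical core of the proof, and the route you suggest for it does not work.

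A Gr\"onwall bound on $\langle e^{\alpha\hat N}\rangle$ does not close. For diagonal $F$ commuting with $A$ one has $\frac{d}{dt}\langle\psi|F|\psi\rangle=-2\langle\psi|AF|\psi\rangle+\langle\psi|[F,C]|\psi\rangle$. Since $e^{\alpha\hat N}C^\uparrow=C^\uparrow e^{\alpha(\hat N+1)}$, the raising part contributes $2(e^\alpha-1)\,\mathrm{Re}\langle\psi|C^\uparrow e^{\alpha\hat N}|\psi\rangle$. Matrix elements of $a_j^\dagger a_k^\dagger a_i$ on $w$-particle states grow like $w^{3/2}$ in the worst case. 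The damping only supplies $-2\lambda_1\langle\hat N e^{\alpha\hat N}\rangle$, so the inequality fails at large particle number.

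Consequently your claimed cutoff $M=\mathrm{poly}(J/\lambda_1,\log(1/\epsilon))$ is unsupported, and it is far stronger than anything the paper proves. Contractivity also does not make the truncation error uniform in $t$. It only gives $\int_0^t\|(L-L_M)\psi(s)\|\,ds$, which still needs moment control of the true $\psi(s)$. It also presupposes that $\psi(s)$ lies in the domain of $C$, which the paper has to establish by constructing the solution as a limit of regularized ones.

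The missing idea is to work with polynomial $A$-moments. There the damping term $-2\langle A^{p+1}\rangle$ is a full power of $A$ stronger than the roughly $A^{p+1/2}$ growth coming from $C$. The paper proves two $N$-independent bounds:
\[
|\langle\phi|C|\psi\rangle|\le\gamma\sqrt{\langle\psi|A|\psi\rangle\langle\phi|A^2|\phi\rangle},\qquad |\langle\psi|[A,C]|\psi\rangle|\le\langle\psi|A^2|\psi\rangle+\kappa_1\langle\psi|A|\psi\rangle.
\]
These use the slice-norm bound $J$ and the $s$-channel condition. They also use a further consequence of divergence-freeness beyond anti-Hermiticity: the coefficient bounds of Proposition~\ref{prop:coeff_upper_bound}, which control the lowering part $c_{ijk}\sqrt{\lambda_i/(\lambda_j\lambda_k)}$ and the commutator weights $c_{ijk}(\lambda_j+\lambda_k-\lambda_i)/(\lambda_j\lambda_k)$. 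Together they yield $\int_0^\infty\langle A^2\rangle\,dt\le\langle\psi_0|A|\psi_0\rangle+\kappa_1\|\psi_0\|^2$. This integrated bound drives the time-uniform estimate $\|\psi_k(t)-\psi(t)\|^2\le 12\gamma k^{-1/2}(\cdots)$.

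This changes your accounting in two ways.
\begin{enumerate}
\item The error decays only polynomially in $k$, and the state error must be $\epsilon/\|\psi_{out}(x)\|$. So the cutoff becomes $k\propto\|\psi_{out}(x)\|^4\,\mathrm{poly}(\epsilon^{-1},J,\lambda_1^{-1})$. The circuit size itself, not just the repetition count, is therefore exponential in $\sum_i x_i^2/p_i$.
\item The paper cuts off at $\langle\mathbf{m}|A|\mathbf{m}\rangle\le k$ rather than at particle number. This gives $\|A\Pi_k\|\le k$ and keeps $\lambda_N$ out of the LCHS cost. Your particle-number cutoff puts $M\lambda_N$ there instead. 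Your interaction-picture remedy is unverified and delicate for a non-unitary generator, because conjugating $C$ by $e^{\pm tA}$ produces exponentially growing matrix elements.
\end{enumerate}
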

The algorithm estimates the time-evolved expectation of a scalar observable.
Equivalently, it solves the backward Kolmogorov equation (bKE), the dual of the
Fokker--Planck equation governing the evolution of probability densities. The
analysis therefore rests on the bKE, a standard object in stochastic process
theory~\cite{gikhman2004theory,da2004kolmogorov}. In the present setting this
analysis is nontrivial: the drift in Eq.~\eqref{SDE} is quadratic, and hence
unbounded and non-Lipschitz. As a result, even the existence theory for bKEs with
quadratic drifts in Hilbert spaces equipped with Gaussian measure appears not to
be fully settled in the literature~\cite{flandoli1998kolmogorov,Flandoli2019KolmogorovGaussian}.

\begin{figure}
\centering\includegraphics[width=16cm]{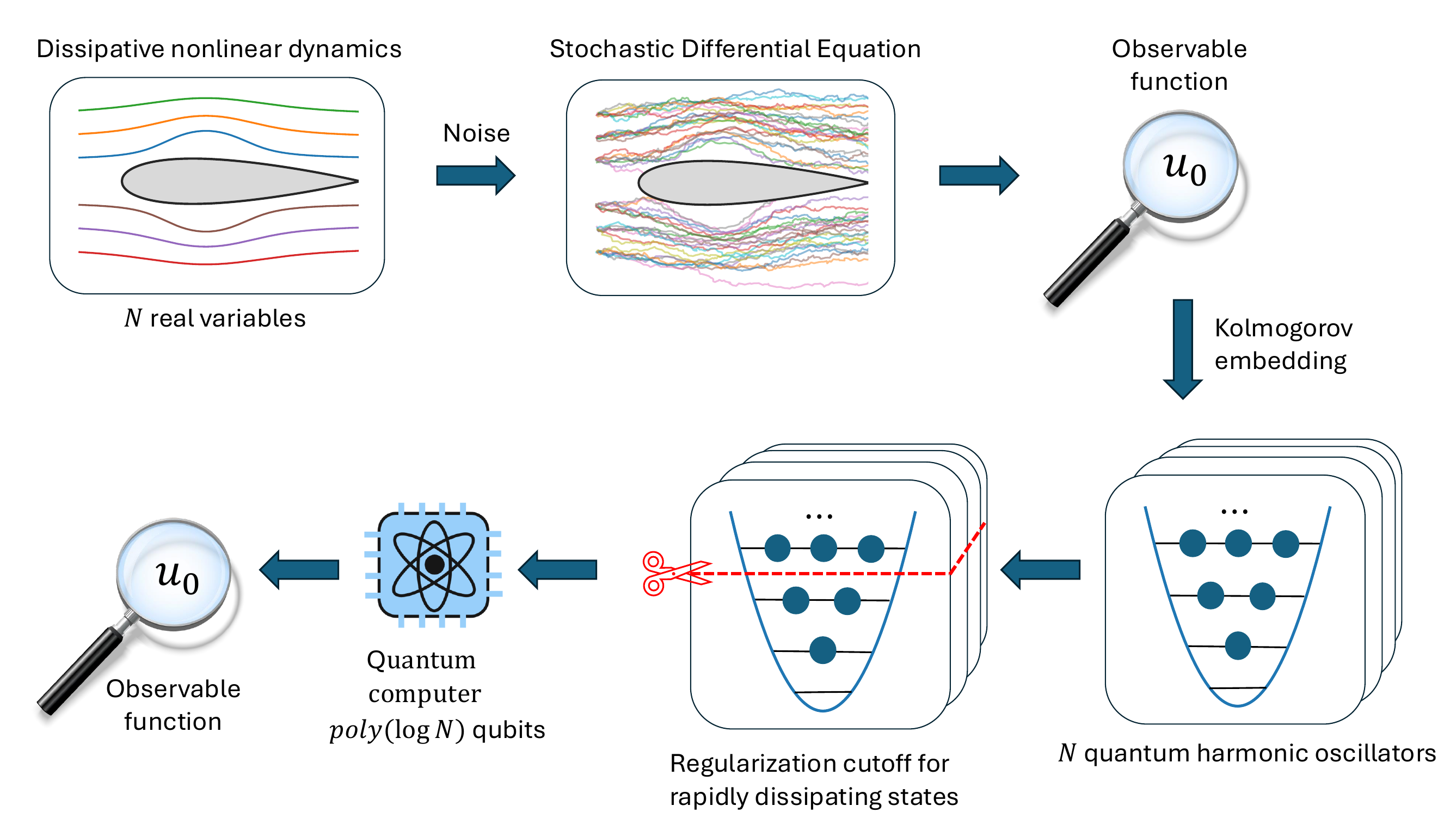}
        \caption{
        {\em Quantum simulation workflow.} A quadratic ODE for a state vector $X(t)\in\RR^N$ is converted into a stochastic differential equation (SDE) by adding random noise.
        Rather than tracking the full state vector, the quantum algorithm estimates the noise-averaged value of a scalar observable $u_0:\RR^N\to\RR$ evaluated at the solution $X(t)$ for a given target initial condition $x\in \RR^N$.
        As a function of $t$ and $x$, this expected value satisfies the Kolmogorov equation.
        Embedding this equation into the Hilbert space of $N$ quantum harmonic oscillators and applying a regularization procedure then enables efficient quantum simulation.}
    \label{fig:diagram}
\end{figure}

We overcome this difficulty by embedding the bKE into a system of $N$ quantum
harmonic oscillators using second quantization. In this representation, the time
evolution is generated by a second-quantized Kolmogorov differential operator,
which we call the \emph{Kolmogorian}, in analogy with the Hamiltonian of quantum
mechanics. We shall see that for quadratic SDEs, the Kolmogorian is cubic in bosonic creation and
annihilation operators. Unlike standard particle-preserving (or number-conserving) Hamiltonians, it can change the total particle number. This feature makes direct simulation inefficient unless the dynamics can be confined to a finite
computational subspace. 
The central technical ingredient of our work is a regularization procedure that
uses dissipation and noise to truncate the Kolmogorian to a finite-dimensional
subspace while preserving the target bKE evolution up to a rigorously controlled
error. This enables simulation of a system of $N$ quantum harmonic oscillators with cost polynomial in $\log N$.

After approximately preparing the second-quantized bKE solution $\psi(t)$, the
desired observable is obtained from the overlap
\[
    v(t,x)=\langle \psi_{\mathrm{out}}(x) \mid \psi(t)\rangle .
\]
Here $\psi_{\mathrm{out}}(x)$ is a readout state, given by a tensor product of
$N$ coherent states encoding the initial condition $x\in\mathbb{R}^N$, as formally defined below.
The norm of this readout state grows exponentially with
$\sum_i x_i^2/p_i$, which accounts for the exponential dependence on this
quantity in Theorem~\ref{thm:main}.

\subsection*{Impact and applications}
We answer affirmatively an open question whether a quantum computer can 
simulate strongly nonlinear $N$-dimensional dynamics with cost polynomial $\log{N}$ and the evolution time. Indeed, the algorithm of Theorem~\ref{thm:main} 
achieves the desired simulation cost scaling when the initial condition $x$ has sufficiently small norm and the noise is sufficiently strong so that
$\sum_{i=1}^N x_i^2/p_i=O(1)$.
We identify practically relevant turbulence models for which this condition holds.

Our main technical tool, the backward Kolmogorov equation, has already played a central role in the analysis of major open problems, including the Navier--Stokes existence and uniqueness Millennium Prize Problem~\cite{fefferman2006existence,ClayNavierStokes,dalang2015stochastic}. We believe that it can play a similarly transformative role in the emerging theory of quantum algorithms for nonlinear differential equations. The framework developed here suggests a shift in perspective: instead of reconstructing the full state vector $X(t)$~\cite{liu2021efficient,koopman1931hamiltonian,neumann1932operatorenmethode,kowalski1997nonlinear,engel2021linear,tanaka2023polynomial,novikau2025quantum,shi2023koopman}, future quantum ODE and SDE solvers may focus on the direct estimation of scalar observables of $X(t)$. 

The regularization procedure introduced in this work---projection of the Kolmogorian onto a finite-dimensional subspace with rigorously controlled error---provides a concrete mechanism for making this paradigm algorithmic. We expect this construction to serve as a foundational building block for future quantum algorithms for nonlinear stochastic dynamics, enabling scalable simulations in regimes where full state reconstruction is neither necessary nor feasible. Beyond algorithm design, the Kolmogorian framework also creates a new bridge between stochastic analysis, turbulence theory, and computational complexity. It has already led to natural conjectures on the classical simulability of the Euler and Navier--Stokes equations; for instance, we conjecture below that in high-dimensional SDEs with strong dissipation, local observables may be efficiently estimated by Monte Carlo-like classical algorithms. In this sense, the impact of the present work extends beyond a single quantum algorithm: it introduces a framework that may guide the future classification, simulation, and complexity analysis of nonlinear dynamical systems on both quantum and classical computational platforms. 

\paragraph{Potential use cases.}
The class of SDEs addressed by our method includes randomly forced turbulence models, such as discretizations of the two-dimensional Navier--Stokes equations~\cite{Foias_Manley_Rosa_Temam_2001} and damped Euler equations in two and three spatial dimensions. In these examples, incompressibility is required to satisfy our divergence-free condition, and the advection term must be regularized to suppress rapidly oscillating Fourier modes. Additional examples include canonical toy models of turbulence, such as the randomly forced Lorenz 96 model~\cite{lorenz1996predictability}, the Burgers--Hopf model~\cite{frank2018detectability}, and the Orszag--McLaughlin system~\cite{orszag1980evidence}. Beyond turbulence, the same framework applies to nonlinear Markov diffusions, including overdamped Langevin dynamics for global optimization~\cite{trillos2023optimization}, as well as randomly forced ODEs arising in photonics, for example the three-wave mixing model~\cite{saltiel2005multistep} for electromagnetic waves interacting in nonlinear media.

As a concrete use case, we consider a modified damped inviscid Euler--Bardina turbulence model in two spatial dimensions, referred to below as the dEB model~\cite{BardinaFerzigerReynolds1980SGS,LaytonLewandowski2006WellPosedTurbulenceModel}. The dEB model is well posed in both two and three dimensions~\cite{CaoLunasinTiti2006Bardina}. Numerically, its viscous variant correlates well with subgrid-scale Reynolds stresses; however, because of its weaker dissipation, it typically requires augmentation by eddy-viscosity models to better match full-scale simulations of homogeneous turbulence~\cite{BardinaFerzigerReynolds1983NASA}. This weaker dissipation is precisely what makes the dEB model a useful test case for our purposes. Below, we conjecture that models with stronger dissipation, such as that induced by the standard Stokes operator, may allow efficient extraction of local information---for example, local correlations between neighboring spatial locations---by Monte Carlo-like classical algorithms. We show that the discrete dEB model can be simulated efficiently by our quantum algorithm. For illustration, Fig.~\ref{fig:dEB} shows a classical Euler--Maruyama simulation of $v(t,x)$, as discussed below.

In summary, our work establishes a route toward practical quantum algorithms for nonlinear SDEs with direct relevance to turbulence and nonlinear wave dynamics to name a few. Indeed, we prove that our algorithm eﬃciently simulates a quadratic SDE which formally represents a projection of two-dimensional dEB model. 
This suggests that Kolmogorian simulation algorithms could support both theoretical and numerical studies of turbulence. In a quantum-centric supercomputing setting~\cite{Bravyi2022JAP_FutureSCQubits}, a quantum solver for the dEB model, or related turbulence models, could augment state-of-the-art classical solvers by providing high-resolution data otherwise out of reach, potentially in combination with eddy-viscosity closures~\cite{BardinaFerzigerReynolds1983NASA} such as the dynamic Smagorinsky model~\cite{Germano1991DynamicSGS,Lilly1992GermanoModification}. As large-scale fault-tolerant quantum computers become available, the proposed framework offers a principled path toward both new theoretical insight and computational advances across these domains.

\subsection*{Previous work}

Most previous work on quantum algorithms for nonlinear differential equations has focused on the Carleman embedding. The original algorithm of Liu et al.~\cite{liu2021efficient} has since been extended in several directions. Krovi~\cite{krovi2023improved}
 and Costa, Schleich, Morales, and Berry~\cite{costa2025further} exponentially improved the scaling with the error tolerance and achieved nearly-linear scaling with the evolution time using higher-order methods and rescaling. Liu et al.~\cite{liu2023efficient} presented an efficient quantum algorithm for nonlinear reaction-diffusion equations and proved convergence of the Carleman method under a less stringent condition than in the original work~\cite{liu2021efficient}, with the strength of the nonlinearity measured using an $\ell_\infty$ norm rather than an $\ell_2$ norm. Surana, Gnanasekaran, and Sahai~\cite{surana2024efficient} extended Carleman-style quantum simulation from quadratic to higher-degree polynomial vector fields. Wang, Jia, et al.~\cite{WangJiaVeerapaneniDing2026} relaxed convergence conditions for the Carleman embedding using pivot-shift and rescaling transformations. Finally, Wu, Wang, and Li~\cite{wu2025quantum} presented a convergence analysis for Carleman embedding beyond the dissipative setting, assuming a no-resonance condition motivated by the theory of dynamical systems.  

As far as we are aware, rigorous guarantees for quantum algorithms based on Carleman embedding require the effective nonlinearity strength to remain below a threshold determined by the problem parameters. In contrast, our approach does not impose a small-nonlinearity condition: for divergence-free quadratic drifts satisfying our assumptions, the algorithm remains efficient for arbitrarily strong nonlinearity, with a polynomial runtime dependence on the drift strength.

Our approach  is conceptually related  to the line of work based on 
Koopman--von Neumann embeddings 
 which represent nonlinear classical dynamics through linear evolution in an 
enlarged observable space. Quantum algorithms based on Koopman-von Neumann embedding were 
studied by Joseph~\cite{joseph2020koopman} as well as Engel, Smith, and Parker~\cite{engel2021linear},
see also~\cite{novikau2025quantum,shi2023koopman,tanaka2023polynomial}.
Similar to Carleman-based methods, the main difficulty here is not the formal linearization itself but the construction of a finite-dimensional embedding with controlled error and efficient quantum implementation. 

We note that existing quantum algorithms based on Koopman--von Neumann embeddings primarily address deterministic dynamics, whereas in our setting additive random forcing is the key mechanism that provides an invariant Gaussian measure and enables efficient regularization with controlled error.

{\bf Connection to the preprint~\cite{bravyi2025quantum}}
The present work builds on several ideas and technical results from our earlier preprint~\cite{bravyi2025quantum}, which introduced the Kolmogorian simulation framework for quadratic SDEs, developed an efficient readout procedure, and proved regularization error bounds. However, the algorithm of~\cite{bravyi2025quantum} is limited to sparse SDEs, in which each variable is coupled to only a few other variables. This sparsity assumption is rather restrictive for discretized  turbulent PDEs, especially when combined with the divergence-free condition. Consequently, the framework of~\cite{bravyi2025quantum} does not directly apply to the dense models considered here.

The main contribution of the present work is to remove the sparsity assumption and to identify a concrete discretized turbulence model, namely the dEB model, that fits into the resulting framework. This requires two major technical advances: extending the regularization error bounds of~\cite{bravyi2025quantum} from sparse to dense Kolmogorians, and developing efficient block encodings for dense regularized Kolmogorians. Lastly, we employ Kolmogorian approach to reveal regimes in which discretized Euler and Navier–Stokes equations might be efficiently simulated by classical algorithms. These changes substantially broaden the scope of the original framework and motivate the present work as a separate submission rather than an update of~\cite{bravyi2025quantum}.

\section{Technical overview}
\label{sec:methods}

This section sketches the main technical tools used in the proof of Theorem~\ref{thm:main}. We first describe how the time evolution of noise-averaged observables can be reformulated as a linear Kolmogorov equation in the Fock space of $N$ quantum harmonic oscillators. We then summarize our regularization procedure, which enables projection of the Kolmogorov equation onto a finite-dimensional subspace with controlled error, and sketch the main steps of the resulting quantum algorithm. We explain how this framework applies to the damped Euler-Bardina model and how this example is used in our numerical experiments. Finally, we discuss implications of our results for classical algorithms for solving ODEs and SDEs. The precise definitions, oracle models, error bounds, block encodings, and formal proofs of the results described above are given in the sections that follow.

\subsection{Kolmogorov equation and regularization}

The first step of our quantum algorithm is the Kolmogorov embedding, see Fig.~\ref{fig:diagram}.
It converts a nonlinear dynamics into a linear one, albeit defined in an infinite-dimensional Hilbert space.

We consider the backward Kolmogorov equation describing the time evolution of the expected value
\[
u(t,x) =
  \EE_{W} u_0(X(t))
    \quad \mbox{subject to $X(0)=x$}.
 \]
Here the expected value is over Wiener noise $W(t)$ in Eq.~(\ref{SDE}).
Note that the expected value $v(t,x)$ computed by our quantum algorithm can be expressed as
$v(t,x)=\EE_z u(t,x+z)$, where $z$ is the noise in the initial condition.
The Kolmogorov equation reads as
 \[
\frac{\partial}{\partial t} u(t,x) = \sum_{i=1}^N (-\lambda_i x_i + f_i(x)) \frac{\partial}{\partial x_i} u(t,x)
+\frac{q}2 \sum_{i=1}^N  \frac{\partial^2}{\partial x_i^2} u(t,x)
\]
with the initial condition $u(0,x)=u_0(x)$. 
By mapping each variable $x_i$ to a quantum harmonic oscillator,
we express the desired expected value $v(t,x)$ as an inner product of two (unnormalized) quantum states,
\[
v(t,x)=\la \psi_{out}(x)|\psi(t)\ra,
\]
where the ambient Hilbert space describes a system of $N$ quantum harmonic oscillators, $|\psi_{out}(x)\ra$ is the
coherent state embedding~\cite{engel2021linear} of the initial condition  $x$, 
the state $|\psi(0)\ra$
encodes the observable function $u_0$,
and $|\psi(t)\ra$ solves the second-quantized Kolmogorov equation  
\be
\label{ACmethods}
\frac{d}{dt} |\psi(t)\ra = (-A+C)|\psi(t)\ra
\ee
with
\[
A=\sum_{i=1}^N \lambda_i a_i^\dag a_i, 
\]
and
\[
C=
(q/2)^{1/2}\sum_{i,j,k=1}^N   (\lambda_i\lambda_j  \lambda_k)^{-1/2}   \lambda_i  c_{ijk} (a_j^\dag a_k^\dag a_i -  a_i^\dag a_k a_j).
\]
Here $a_i^\dag$ and $a_i$ are the creation and the annihilation operators for the $i$-th oscillator. 
Hamiltonians composed of terms $a_j^\dag a_k^\dag a_i$ (and their conjugate) are well studied in the quantum optics community — they
describes the spontaneous parametric down-conversion  where a single photon is converted into a pair of photons. 
Note $A^\dag=A$ and $C^\dag=-C$. 

The main technical challenge addressed in this work is proving that the Kolmogorov equation Eq.~(\ref{ACmethods})
has a well-defined solution and this solution can be well-approximated by projecting the time evolution onto a certain finite-dimensional subspace. This is far from obvious since $A$ and $C$ are unbounded operators. In particular,  the states $A|\psi(t)\ra$ or $C|\psi(t)\ra$ may have an infinite norm, in which case
Eq.~(\ref{ACmethods}) is meaningless. Furthermore, the operator $C$  does not preserve the particle number and cannot be considered as a small perturbation in the strong nonlinearity regime. 

The key to addressing these challenges is proving that the cubic term $C$ can be upper bounded by $A$ and $A^2$ such that for any states $\psi$
and $\phi$ one has
\[
|\la \phi|C|\psi\ra|\le \gamma
\sqrt{\la \phi|A^2|\phi\ra\la \psi|A|\psi\ra}
\]
for a certain coefficient $\gamma$ independent of $N$. Our proof gives $\gamma=3J\lambda_1^{-2}\sqrt{qs}$.
Furthermore, for any state $\psi$ one has
\[
|\la \psi|[A,C]|\psi\ra|\le \la \psi|A^2|\psi\ra + \kappa_1\la \psi|A|\psi\ra
\]
for $\kappa_1=6J^2\lambda_1^{-2} qs$.
We derive similar $N$-independent  bounds for commutators $[A^p,C]$ with any $p\ge 1$.
Loosely speaking, these bounds ensure that the damping effect of the term $-A$ in the Kolmogorov equation  is strong enough to suppress the creation of new particles due to $C$ and keep the norm of the states $A|\psi(t)\ra$ or $C|\psi(t)\ra$ under control.

Equipped with these bounds,
we were able to prove that the solution $\psi(t)$ of Eq.~(\ref{ACmethods}) is well-defined and can be efficiently approximated
on a quantum computer. To this end we define a low-dissipation subspace $\calH_k$ spanned by Fock basis vectors $|\mm\ra$ such that $\la \mm|A|\mm\ra\le k$, where $k>0$ is 
a cutoff parameter. The subspace $\calH_k$
is finite-dimensional and can be expressed using $O(k\lambda_1^{-1} \log{N})$ qubits.
The regularization ``turns off" the cubic term $C$ on the subspace orthogonal to $\calH_k$ while the action of $C$ on $\calH_k$ remains unchanged. 
We prove that the regularized Kolmogorov equation has the unique solution $\psi_k(t)$, and that the family of states $\psi_k(t)$ with the same initial condition $\psi_k(0)=\psi(0)$ has a limit: 
$\psi(t)=\lim_{k\to \infty} \psi_k(t)$. This limit $\psi(t)$ solves Eq.~(\ref{ACmethods}). Importantly, the regularization error
$\epsilon=\|\psi(t)-\psi_k(t)\|$ is upper bounded by a polynomial function of $1/k$, $J$, $\lambda_1^{-1}$. Hence to attain the given error $\epsilon$
one can choose $k=\operatorname{poly}(\epsilon^{-1},J,\lambda_1^{-1})$. For a formal statement see Theorem~\ref{thm:regul} in Section~\ref{sec:regul-sz}.

\subsection{Quantum algorithm}

Our algorithm consists of three steps:

\begin{itemize}[label={}]
    \item \textbf{Step 1.} Prepare an initial state $\psi(0)$ encoding the observable function $u_0(x)$. 
    To this end one expands $u_0(x)$ in the basis of $N$-variate Hermite polynomials. Coefficients in this expansion determine amplitudes of $\psi(0)$ in the Fock basis.
    For example, if $u_0(x)=x_i$ then $|\psi(0)\ra = \sqrt{q/2\lambda_i} a_i^\dag |\bf{0}\ra$, where $|\bf{0}\ra$ is the vacuum state. More precisely, we prepare a projected state $\Pi_k\psi(0)$, where $\Pi_k$ 
    is the projector onto the low-dissipation subspace
    $\calH_k$.

    \item \textbf{Step 2.} Simulate time evolution described by Kolmogorian $-A_k+C_k$, where $A_k$ and $C_k$ are the  projections of $A$ and $C$  onto $\calH_k$.
    To this end we construct quantum circuits block encoding $A_k$ and $C_k$. These circuits have gate and oracle query complexity  polynomial in $k$, $J$, and $\lambda_1^{-1}$. 
    Given the block encodings of $A_k$ and $C_k$, we employ the Linear Combination of Hamiltonian Simulations method of An, Liu, and Lin~\cite{an2023linear,an2026quantum} and its improved version  due to Low and Somma~\cite{low2025optimal} to simulate the time evolution under $-A_k+C_k$. This gives the time evolved state $\Pi_k|\psi_k(t)\ra= e^{t(-A_k+C_k)}\Pi_k|\psi(0)\ra$.

    \item \textbf{Step 3.} Estimate the desired expected value as $v(t,x)\approx\la \psi_{out}(x)|\Pi_k|\psi_k(t)\ra$, where $\psi_{out}(x)$ is a certain readout state encoding the initial condition $x$. The inner product is estimated using the Hadamard test.
    The readout state $\psi_{out}(x)$ coincides with the coherent state encoding of $x$ defined as
    \[
    |\psi_{out}(x)\ra = \bigotimes_{i=1}^N  |\varphi_i\ra, \qquad  
    |\varphi_i\ra =  \sum_{m\ge 0}
     \frac1{\sqrt{(m!)}} \left( x_i\sqrt{\frac{2 \lambda_i}{q}}\right)^{m}  |m\ra.
    \]
    The readout state has norm $\|\psi_{out}(x)\|=\exp{\left[q^{-1}\sum_{i=1}^N \lambda_i x_i^2\right]}$, which can be exponentially large.  This  amplifies errors incurred in the Hadamard test translating to the runtime scaling exponentially with the quantity $q^{-1}\sum_{i=1}^N \lambda_i x_i^2$. 
\end{itemize}

\subsection{Symmetrized damped Euler-Bardina (dEB) model}

As a concrete use case for our quantum algorithm we consider a symmetrization of damped Euler-Bardina (dEB) model~\cite{BardinaFerzigerReynolds1980SGS,LaytonLewandowski2006WellPosedTurbulenceModel} for an incompressible fluid confined to a two-dimensional box $\TT^2=[0,1]^2$ with periodic boundary conditions (BC). It reads as
\be
\label{eq:dEBmethods}
\frac{\partial}{\partial t} V +\overline{(\overline V \cdot \nabla)\overline V} +\lambda V  = -\nabla P.
\ee
Here $t\ge 0$ is time, $V=V(t,\xi)\in \RR^2$ is the fluid velocity at the spatial point $\xi\in \TT^2$, $P=P(t,\xi)$ is the pressure field enforcing incompressibility and $\lambda>0$ is the damping strength. The quantity $\overline V = \calF_{\alpha\beta}V$ is a filtered version of $V$ with filter $\calF_{\alpha\beta}=(I+\alpha S^\beta)^{-1}$ given by the Green's function of the operator 
$I+\alpha S^\beta$, where $S = \frac{1}{2\pi}(-\Delta)^{\frac{1}{2}}$ for the \textit{Stokes operator} $-\Delta$ \cite{Foias_Manley_Rosa_Temam_2001}, which coincides with Laplacian in the case of periodic BC. For $\alpha=\beta=\lambda=0$, Eq.~\eqref{eq:dEBmethods} reduces to the classical Euler equation. For $\lambda=0$ and $\alpha,\beta>0$, Eq.~\eqref{eq:dEBmethods} becomes the classical inviscid Euler-Bardina model, provided one bar is dropped in the nonlinear advection term, $(\overline V \cdot \nabla)\overline V$. We note that dEB is related to the $\alpha$-Navier-Stokes regularization models~\cite{CaoLunasinTiti2006Bardina}. Similarly to~\cite{Verstappen_2008}, we add the extra bar to preserve skew-symmetry and prove that the unique solution $V_\alpha$ of dEB converges to the unique solution $V$ of the classical dEB model at the rate
$\|V_\alpha- V\|_{L^2}=O(\alpha^{\frac{1}{\beta}})$. This and all the following claims/statements are proved in Section~\ref{sec:examples}.

We discretize Eq.~\eqref{eq:dEBmethods} in the basis of divergence-free sine waves
\[
e_p(\xi)=\sqrt{2}\,\frac{p^\perp}{\|p\|}\sin(2\pi p\cdot \xi), \qquad \xi\in \TT^2.
\]
Here $p=(p^1,p^2)\in \ZZ^2$ is a nonzero integer wave vector and $p^\perp=(-p^2,p^1)$. Note that $p$ and $-p$ can be identified since $e_p(\xi)=e_{-p}(\xi)$.
Choose a momentum cutoff $\Lambda$ and let $p_1,\ldots,p_N\in \ZZ^2$ be the list of all nonzero wave vectors $p$ satisfying $\|p\|\le \Lambda$
with one representative picked from each pair $p$ and $-p$. 
We approximate the velocity field by the Galerkin ansatz 
\[
V(t,\xi)=\sum_{i=1}^N X_i(t)e_{p_i}(\xi),
\]
where $X_i(t)$ are coordinates of $V(t,\xi)$ in the basis $e_{p_i}$.
Projecting Eq.~(\ref{eq:dEBmethods}) onto the span of $e_{p_1},\ldots,e_{p_N}$ yields an $N$-dimensional quadratic ODE
\be
\label{Euler2methods}
\dfrac{dX_i(t)}{dt} = -\lambda X_i(t)  + \sum_{j,k=1}^N c_{ijk} X_j(t) X_k(t) , 
\ee
with coefficients $c_{ijk}$ simply related to the wave vectors $p_i,p_j,p_k$ and the parameters $\alpha,\beta$,
see Section~\ref{sec:examples} for details.
 
The modified dEB model has two important properties:  (i)  the quadratic drift  is divergence-free in the sense of Eq.~\eqref{div_free}, 
(ii) the drift has $s=4$ channels\footnote{The drift coefficients $c_{ijk}$ are nonzero only if the corresponding wave vectors
$p_i,p_j,p_k$ obeys certain momentum conservation rules. For fixed $j$ and $k$ there are at most four wave vectors $p_i$
that obeys these rules, see Section~\ref{sec:examples} for details.},
 and
(iii) the drift has strength  $J=O(1)$ for any constant $\alpha>0$,  $\beta>1$ 
and $J=O(\log(N))$ for  $\beta=1$.
In addition, $J_1\equiv \sum_{i,j,k=1}^N |c_{ijk}|$ diverges to infinity as $N\to\infty$ polynomially in $N$ if $1<\beta<5/3$, and only logarithmically if $\beta=5/3$, and it stays bounded independent of $N$ if $\beta>5/3$. 
    
Finally, by adding Wiener noise we convert the ODE Eq.~\eqref{Euler2methods} to an SDE of the form Eq.~\eqref{SDE} that can be simulated by our quantum algorithm.

\subsection{Numerical experiment}

Here we sketch a numerical experiment demonstrating that  our quantum algorithm can reconstruct observables of turbulent flows at good precision, yet the exponential factor $\exp(\lambda \|x\|_2^2/q)$ (the readout norm) does not drastically increase the runtime reported in Theorem~\ref{thm:main}.
To this end we convert the ODE Eq.~\eqref{Euler2methods} describing the symmetrized damped Euler-Bardina model
to an SDE of the form Eq.~\eqref{SDE}
and simulate it classically to  estimate the noise-averaged observable function. 
The Galerkin ansatz defined above consists of $N=220$ basis functions and we consider the observable
\be
\label{eq:fig-observable}
u_{0,\xi^*}(X(t)) = \sum_{i=1}^N X_i(t) \big( \, (1,0) \cdot \left( w(p_i) e_{p_i}(\xi^*) \right) \big)
\ee
where $\xi^*$ is a given point in the computational domain and $w(p) = 1/(1 + \alpha \|p\|^\beta)$ represents the chosen filter $\calF_{\alpha\beta}$ in the basis $e_{p_i}$, ensuring that $\|u_{0,\xi^*}\|=O(1)$.  
In other words, $u_{0,\xi^*}(X(t))$ approximates the first component of the velocity field $V_1(t,\xi)$ at the spatial point $\xi=\xi^*$. 

\paragraph{Turbulent initial condition $x$.} To illustrate the growing simulation complexity due to the impact of a strong quadratic nonlinearity, we generate a turbulent initial condition (IC) $x$. Namely, we use the Euler-Maruyama (EM) method \cite{Maruyama1955ContinuousMP} to evolve Eq.~\eqref{SDE} with damping $\lambda_i=\lambda=10^{-4}$ and deterministic forcing  
\begin{equation}\label{eq:k-forcing_methods}
    F(t,\xi)
    = 
    F_0
    \begin{pmatrix}
     \sin \left(2 \pi k_f \, \xi_2\right), & 0 %\\
    % 0
\end{pmatrix}
^T
\end{equation}
with $F_0=10$ and $k_f=4$. We fix a single realization of Wiener noise with rate $q=5\times 10^{-4}$ and 
simulate the associated system from the zero state $X(0)=0$ 
until turbulent behavior is observed at time $t=t_{\text{turb}}$. The resulting state $x=X(t_{\text{turb}})$ serves as a \textit{deterministic} IC for the proposed quantum algorithm. The first two rows of \autoref{fig:dEB} show the realization of the velocity field $V(t,\xi)=(V_1(t,\xi),V_2(t,\xi))$ for $t\in[0,t_\text{turb}]$ and $\xi\in\mathbb{T}^2$. 

We stress that, in the noiseless case ($q=0$), dEB with forcing Eq.~\eqref{eq:k-forcing_methods} has analytical solution
\be
\label{eq:deb-analytical-sol_methods}
V(t,\xi) =
\frac{F_0}{\lambda} ( 1 - e^{-\lambda t})
\,
\begin{pmatrix}
     \sin \left(2 \pi k_f \, \xi_2\right),
    & 0 %\\
    % 0
\end{pmatrix}
^T.
\ee
The solution $V$ in Eq.~\eqref{eq:deb-analytical-sol_methods} satisfies $\overline{(\overline V\cdot \nabla)\overline V} = 0$, i.e., the nonlinearity vanishes and dEB simplifies to a linear model. However, for $q>0$, the nonlinear term becomes increasingly pronounced over time due to the Wiener noise until it eventually dominates the linear part of dEB. This justifies the inclusion of Wiener noise for generation of the turbulent IC $x$ above. The behavior of this system is illustrated by the first two rows of~\autoref{fig:dEB}. The first row shows, from left to right, the vector-field $V$ and its 1st and 2nd components at time $t=0.3$: $V$ only slightly deviates from the analytical solution Eq.~\eqref{eq:deb-analytical-sol_methods}, suggesting that the impact of nonlinearity is not strong enough; but in the second row, at $t_{\text{turb}}=2.0$, the turbulence fully develops. Clearly, as $t$ grows, so does the complexity of $V$: before $t=0.3$ one could accurately represent $V$ by just a few of basis functions (as $V$ does not deviate from Eq.~\eqref{eq:deb-analytical-sol}, and the latter requires just one basis function), but for $t>0.3$ one requires an increased number of basis functions to resolve $V$ with given precision.

\paragraph{Dynamics of the averaged observable $v(t,x)$.} 
To simulate dynamics of $v(t,x)$ 
we use EM-method as above but with the following key differences: (i) we reduce the Wiener noise rate to $q=2.5\times 10^{-4}$ and remove deterministic forcing, and (ii) we add noise to the turbulent IC, $X(t_{\text{turb}}) = x + z$ for $z\sim\calN(0,{q}/(2\lambda))$ with $q/(2\lambda)=1.25$, as per Eq.~\eqref{ic-noisy}. 
Rows 3, 4 and 5 of~\autoref{fig:dEB} show the ensemble averages of the resulting velocity field and expectation $v(t,x)$ over $5,000$ samples from $t=t_{\text{turb}}=2.0$ to $t=3.0$. The first two columns show the averaged vector field $V$ and 1st component $V_1$ at timepoints $t>t_{\text{turb}}$, and the last column shows $v(t,x)$ for a number of observable functions $u_{0,\xi^*}$, one per grid point $\xi^*\in l=\{(0.5,0),...,(0.5,1)\}$. Since $u_{0,\xi^*}$ in Eq.~\eqref{eq:fig-observable} approximates $V_1(t,\xi^*)$ at $\xi^*$, then $v(t,x)$ corresponding to $\xi^*\in l$ represents the average of the first component of the velocity $V$ along the vertical line,
$l$. The norm of the readout vector $\psi_{out}(x)$ in this case is $\exp(\lambda \|x\|_2^2/q)=7.3\times 10^{4}$. By slightly increasing the noise in the IC (from $1.25$ to $2.5$) we get a readout norm of just $2.7\times 10^{2}$, as reported in Section~\ref{sec:examples}.   For more details on the numerical experiment, we refer the reader to Section~\ref{sec:examples}.

\begin{figure}
\centering\includegraphics[width=0.81\textwidth]{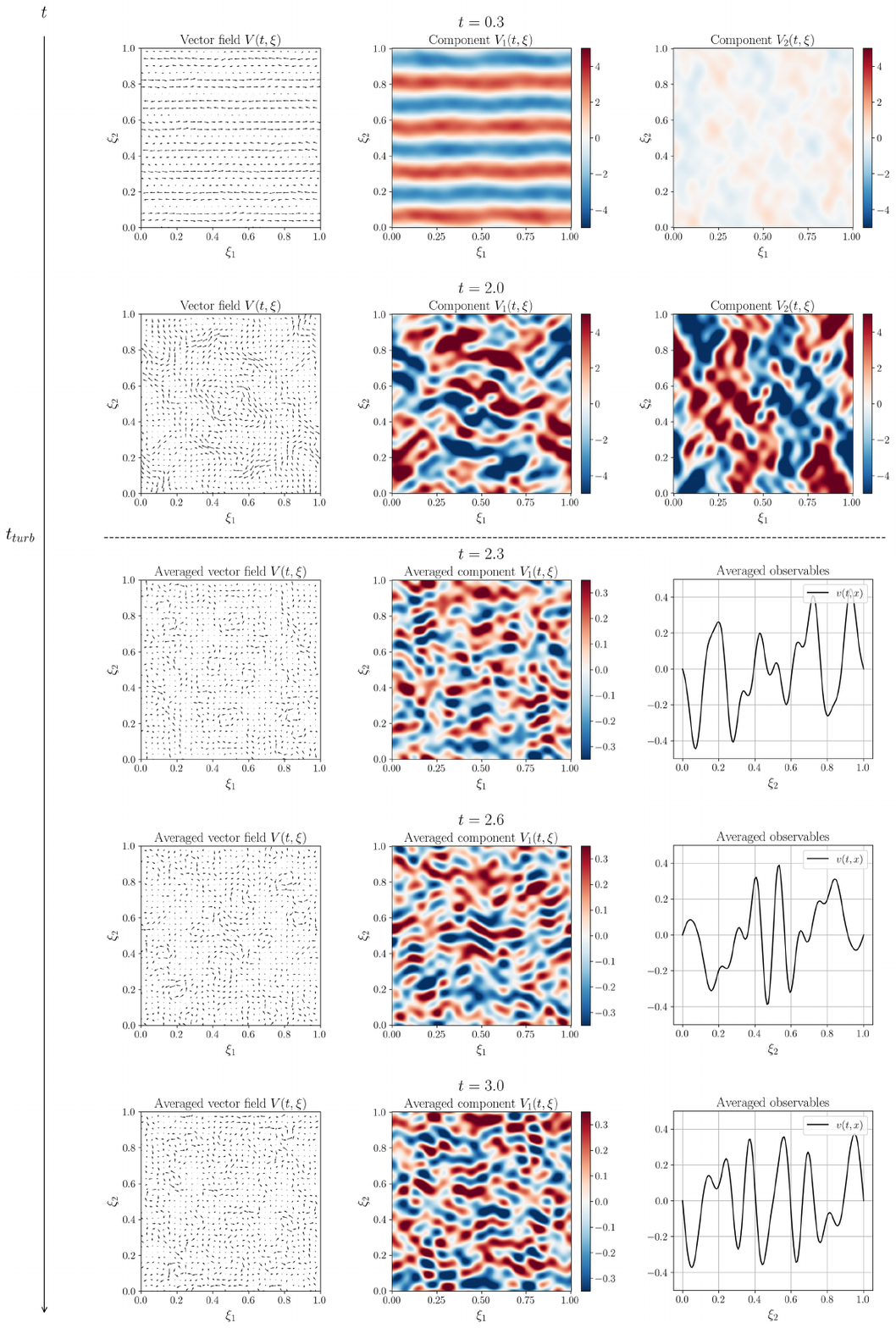}
        \caption{{\em Monte Carlo simulation of dEB.} Rows 1-2; a realization of the velocity field for the SDE Eq.~(\ref{SDE}) describing the discretized dEB model with an additional deterministic forcing Eq.~\eqref{eq:k-forcing_methods}.
        Rows 3-5; the same SDE without deterministic forcing. The first two columns show the averaged velocity field, and the third column shows the expectation $v(t,x)$.
        }
    \label{fig:dEB}
\end{figure}

\subsection{Limitations of classical algorithms}

In this section we address the question of whether  quadratic SDEs satisfying all conditions required for our quantum algorithm
can be solved at a comparable or lower cost on a conventional classical computer. We propose two natural strategies to ``dequantize" our quantum 
algorithm and expose their limitations.

\paragraph{BQP completeness.} To begin with, Ref.~\cite{bravyi2025quantum} proved that the considered problem is BQP-complete in the case when both drift and observable
are  {\em linear functions}, drift functions obey the divergence-free condition $\mathrm{div}(\mu(x)f(x))=0$, and the initial condition for the SDE
is a single  basis vector. 
As a consequence, a classical algorithm solving a linear SDE~(\ref{SDE}) with runtime matching the one of our quantum algorithm for all problem instances 
would be able to simulate the universal quantum computer in polynomial time — a possibility considered highly unlikely. 
However the BQP-completeness result of~\cite{bravyi2025quantum} does not directly extend to homogeneous quadratic drift  considered here.

\paragraph{Tensors with bounded or logarithmic 1-norm.} Next, consider a quadratic $N$-dimensional norm-preserving ODE
\be
\label{norm_preserving_ODE}
\frac{d}{dt} X_i(t) = \sum_{j,k=1}^N b_{ijk} X_j(t) X_k(t), \qquad b_{ijk}=-b_{kji}.
\ee
It can be obtained from the SDE Eq.~(\ref{SDE}) by removing Wiener noise and dissipation terms,
and choosing drift coefficients $c_{ijk}=(1/2)(b_{ijk} + b_{ikj})$,
see Lemma~\ref{lemma:div_free} in Section~\ref{sec:div_free}.
We show that Eq.~(\ref{norm_preserving_ODE})
can be solved classically in time scaling polynomially with $t,\log{N}$,
and drift $1$-norm 
\[
J_1 = \sum_{i,j,k=1}^N |b_{ijk}|.
\]
This algorithm requires an efficient subroutine for sampling triples $(i,j,k)$ from the probability distribution
$P_{ijk} = |b_{ijk}|/J_1$. To describe the algorithm, rewrite
the ODE  Eq.~(\ref{norm_preserving_ODE})  as the Schr\"oedinger equation with a {\em state-dependent} Hamiltonian
\[
\frac{d}{dt} |X(t)\ra = H(X(t)) |X(t)\ra, \qquad H(x) = \sum_{i,j,k=1}^N b_{ijk} x_j |i\ra\la k| = -H(x)^T.
\]
Here $|X(t)\ra = \sum_{i=1}^N X_i(t)|i\ra$ is a time dependent state vector.
We can now simulate the time evolution governed by $H(X(t))$ using qDRIFT algorithm~\cite{qDriftCampbell2019}.
The latter approximates the continuous time evolution under $H(X(t))$ by the first-order Trotter approximation
where each Trotter step evolves under a random term $X_j(t)( |i\ra\la k| -|k\ra\la i|)/2$ with 
$(i,j,k)$ sampled from $P_{ijk}$. The original qDRIFT error analysis~\cite{qDriftCampbell2019} applies almost verbatim
to a state-dependent Hamiltonian showing that $L=O(t^2 J_1^2 \epsilon^{-1})$ Trotter steps is enough to approximate $X(t)$ within error $\epsilon$.
 The key observation is that each Trotter step increases the sparsity of $X(t)$ only by a small constant.
If the initial state $X(0)$ has $s(0)=O(1)$ non-zeros, 
the final state $X(t)$ has at most $s(t)=s(0)+O(L)=O(L)$ non-zeros, which is polynomial in all relevant parameters.
Furthermore, all manipulations with intermediate sparse vectors $X(t)$ can be performed in time linear in $s(t)$. 
We formally describe this algorithm and analyze its runtime in Section~\ref{sec:qdrift-euler}.
We prove that, given an efficient subroutine for sampling triples $(i,j,k)$, the above qDRIFT approach enables efficient classical simulation of damped Euler-Bardina  
model~\eqref{Euler2} in 2D as well as of the closely related $\alpha$-NSE model~\cite{CaoLunasinTiti2006Bardina} for any constant $\beta\ge 5/3$, in which case $J_1=O(\alpha \beta)$ is
independent of $N$. 
In contrast, our quantum algorithm is efficient for any $\beta\ge 1$. Thus the regime $1\le \beta\le 5/3$ is a natural target for a quantum advantage.
Interestingly, the
classical algorithm based on qDRIFT algorithm  
does not directly apply to 
SDEs due to the presence of Wiener noise which maps sparse vectors to dense vectors.

\paragraph{Strongly dissipative SDEs.} Finally, we conjecture that our quantum algorithm can be dequantized
 in the {\em strong dissipation} regime,
when the $i$-th smallest dissipation rate scales as $\lambda_i \propto i^p$ for some positive constant $p>0$.
Indeed, let  $k$ be the regularization cutoff for the second-quantized Kolmogorov equation.
Recall that the subspace $\calH_k$ simulated by our quantum algorithm is spanned by Fock basis vectors $|\mm\ra$ satisfying 
$\sum_{i=1}^N \lambda_i m_i \le k$. 
Therefore,
any basis vector $|\mm\ra \in \calH_k$ must obey $m_i=0$ whenever $\lambda_i>k$.
Let us say that a variable $X_i$ is {\em active} if $\lambda_i\le k$
and let $N_{active}$ be the number of active variables.
The assumption $\lambda_i \propto i^p$
gives  $N_{active} =O(k^{1/p})$. 
Then the active space simulated by our quantum algorithm describes only $N_{active}$ oscillators, whereas all remaining oscillators are frozen in their ground state.
The key observation is that exactly the same second-quantized Kolmogorov equation projected onto $\calH_k$ can be obtained from a truncated SDE describing time evolution of active 
variables only (assuming that the observable function $u_0(x)$ depends only on the active variables). 
Therefore, it is natural to conjecture that the expected observable $v(t,x)$ for the original SDE can be approximated
by solving the truncated SDE with only $N_{active}$ variables. 
The latter can be solved by
classical Monte-Carlo methods, e.g. Euler-Maruyama-type algorithms.  The classical runtime would be polynomial 
in $t,N_{active}$, and the inverse error. 
A practically relevant example of ``strong" dissipation is given by the eigenvalues $\lambda_i$ of Stokes operator determining the dissipative part of 
the Navier-Stokes equation. 
For this reason we illustrate our algorithm on damped Euler-Bardina model which has constant dissipation rates, in which case 
the active space includes all $N$ degrees of freedom. This discussion is formalized in Section~\ref{sec:strongly}.
In this regard, it would be interesting to compare the Euler--Maruyama-type numerical algorithm, assuming the aforementioned conjecture holds, with practical implementations of quantum algorithms for the Navier--Stokes equations based on Carleman embedding, as proposed in~\cite{Jennings2025CarlemanNonlinear,Jennings2025FluidQuantum}. Such a comparison could help identify regimes in which, if any, the authors’ quantum approaches offer a genuine computational advantage over classical methods.

\section{Input data access model}

We assume that the  drift tensor $c$ is specified by the drift strength $J$, number of drift channels $s$, and 
a pair of oracles $\voracle{c}$ and $\poracle{c}$
that encode values and positions of nonzero entries of $c$ respectively. 
Informally,  querying $\voracle{c}$ on an index $i\in [N]$
creates a quantum state 
that encodes a two-dimensional slice of $c$ with a fixed first index $i$.
Querying $\poracle{c}$ on a pair of indices $j,k\in [N]$
creates a uniform superposition of all indices  $i\in [N]$ such that $c_{ijk}\ne 0$.
More precisely, let
\[
\calH_{ind}=\CC^{N+1}=\operatorname{span}\{|0\rangle,|1\rangle,\ldots,|N\rangle\}
\]
be an index register, where $|1\ra,\ldots,|N\ra$ label physical indices and $|0\rangle$ is a blank state. 
Let $\calH_{anc}$ be an ancillary work register of $poly(log N)$ qubits and 
$|0_{anc}\ra\in \calH_{anc}$ be the all-zero  state of the work register. 
The size of the ancilla register may depend on the oracle type. 

The oracles $\voracle{c}$ and $\poracle{c}$ are unitary operators acting on
$\calH_{\mathrm{ind}}^{\otimes 3}\otimes \calH_{anc}$.
We require that 
\[
\voracle{c}|i,0,0,0_{anc}\rangle
 =
 \frac{1}{J}\sum_{j,k=1}^N c_{ijk}|i,j,k,0_{anc}\rangle
 + |\mathsf{junk}_{\, i}\ra
\]
for all $i\in [N]$, 
where $|\mathsf{junk}_{\, i}\ra$ is an arbitrary state orthogonal to the subspace with the ancillary register in the state $|0_{anc}\ra$.

Define a tensor 
\[
\chi_{ijk}=\left\{ \ba{rcl}
1 &\mbox{if} & c_{ijk}\ne 0,\\
0 && \mbox{else}.\\
\ea\right.
\]
By assumption, $\sum_{i=1}^N \chi_{ijk}\le s$ for all $j,k\in [N]$, where $s=O(1)$ is the number of drift channels.
We require that
\[
\poracle{c}|0,j,k,0_{anc}\rangle
 =
 \frac{1}{\sqrt{s}}\sum_{i=1}^N \chi_{ijk}|i,j,k,0_{anc}\rangle
 + |\mathsf{junk}_{\, j,k}\rangle
\]
for all $j,k\in [N]$,
where $|\mathsf{junk}_{\, j,k}\ra$ is an arbitrary state orthogonal to the subspace with the ancillary register in the state $|0_{anc}\ra$.

The vector of dissipation rates $\lambda$ is specified by a value oracle $\voracle{\lambda}$ that computes
the dissipation rate $\lambda_i$ for a given index $i$.
This oracle is a unitary operator acting on $\calH_{ind}\otimes \calH_{anc}$
such that  $\voracle{\lambda}|i,0_{anc}\ra = |i,\lambda_i\ra$ for all $i\in [N]$, where $|\lambda_i\ra$ is the binary encoding of $\lambda_i$
with $poly(\log N)$ bits of precision.  For simplicity, we shall ignore the roundoff error in the binary representation of $\lambda_i$.

We shall consider two types of observables: linear observables
\[
u_0(x)=\sum_{i=1}^N b_i x_i
\]
and nonlinear monomial observables
\[
u_0(x)=\prod_{i=1}^N x_i^{d_i}.
\]
In the linear case the vector of coefficients $b\in \RR^N$ must be normalized such that 
$\|b\|\le J_u=O(1)$ and the normalization $J_u$ must be included in the problem input.
In the monomial case we assume that the total degree of $u_0(x)$ is a constant independent of $N$,
that is, $d=\sum_{i=1}^N d_i=O(1)$.
A monomial observable can be specified explicitly by the list of variables and their degrees 
that appear in the monomial. Linear observables are specified by a value oracle
$\voracle{u}$ analogous to the oracle used to describe the drift tensor.
The oracle $\voracle{u}$
 is a unitary operator acting on 
$\calH_{ind} \otimes \calH_{anc}$ such that 
\[
\voracle{u}|0,0_{anc}\ra = \frac1{J_u} \sum_{i=1}^N b_i |i,0_{anc}\ra + |\mathsf{junk}\ra,
\]
where $|\mathsf{junk}\ra$ is an arbitrary state orthogonal to the subspace with the ancillary register in the state $|0_{anc}\ra$.

We allow the quantum algorithm to access the above oracles, their inverses, and their controlled versions. We count a query to  the inverse or the controlled
version of an oracle as a single query. 

Our normalization of the observable $u_0(x)$ ensures that adding
weak noise in the initial condition of the SDE does not lead to a drastic change of the observable value at time $t=0$.
More precisely, consider the variance $\nu \equiv \mathrm{Var}_{z \sim \mu}(u_0(x+z))$, where $z\in \RR^N$ is a random vector sampled from the steady distribution $\mu$
that models noise in the initial condition.
In Section~\ref{sec:init} we prove that $\nu \le O(q\lambda_1^{-1})$ for the linear observable
and  $\nu \le O(q\lambda_1^{-1} \cdot (\|x\| + \sqrt{q/\lambda_1})^{2d-2})$
for the degree-$d$ monomial observable.
Thus the variance
of the observable is independent of $N$ and vanishes in the zero noise limit $q\to 0$.
In contrast, the variance of the full state vector $X(0)=x+z$ can scale linearly with $N$ even if $q$ is small
and $x$ has norm $O(1)$.

%\sergey{Remove this ?}
%Finally, the initial condition $x\in \RR^N$ is specified by a real number $J_x=O(1)$ such that $\|x\|\le J_x$
%and a value oracle $\voracle{x}$ acting on $\CC^N \otimes \calH_{anc}$
%such that
%\[
%\la i,0_{anc} |\voracle{x}|0,0_{anc}\ra = \frac{x_i}{J_x}.
%\]

\section{Divergence-free condition}
\label{sec:div_free}

In this section we derive several implications of the  divergence-free condition Eq.~(\ref{div_free}).
First, let us express this condition in terms of the coefficients $c_{ijk}$ and dissipation rates $\lambda_i$. 
\begin{lemma}
\label{lemma:div_free}
Consider quadratic drift functions $f_i(x)=\sum_{j,k=1}^N c_{ijk} x_j x_k$ with coefficients $c_{ijk}=c_{ikj}$.
The divergence-free condition $\mathrm{div}(\mu(x)f(x))=0$  for all $x\in \RR^N$
is equivalent  to coefficient-wise constraints 
\be
\label{div_free_part1}
\sum_{i=1}^N c_{iik} = 0 \qquad \mbox{for all $k$}
\ee
and 
\begin{equation}
\label{div_free_part2}
 \lambda_i c_{ijk} + \lambda_j c_{jki} + \lambda_k c_{kij} = 0 \qquad \mbox{for all $i,j,k$}.
\end{equation}
Furthermore, suppose all dissipation rates $\lambda_i$ are the same. 
Define coefficients 
\[
b_{ijk}=-b_{kji}=(2/3)(c_{ijk}-c_{kji}).
\]
Then
\be
\label{resummation}
f_i(x)= \sum_{j,k=1}^N b_{ijk} x_j x_k
\ee
for all $i\in [N]$ and all $x\in \RR^N$.
\end{lemma}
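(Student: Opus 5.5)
The plan is to compute $\mathrm{div}(\mu f)$ explicitly and split it into homogeneous polynomial parts of different degrees. Since $\mu>0$ and $\partial_i \log\mu = -2q^{-1}\lambda_i x_i$, we have
\[
\mathrm{div}(\mu f)=\mu(x)\Big(\sum_{i=1}^N \partial_i f_i(x) - \frac{2}{q}\sum_{i=1}^N \lambda_i x_i f_i(x)\Big).
\]
The divergence-free condition is therefore equivalent to the vanishing of the polynomial in parentheses for all $x$. The first sum is linear in $x$: using $c_{ijk}=c_{ikj}$, it equals $2\sum_{k}\big(\sum_i c_{iik}\big)x_k$. The second sum is a homogeneous cubic. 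A polynomial vanishes identically iff each homogeneous component does, so the linear part gives exactly Eq.~(\ref{div_free_part1}). It remains to show that the cubic part gives Eq.~(\ref{div_free_part2}).

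For the cubic part, set $T_{ijk}=\lambda_i c_{ijk}$. The cubic form $\sum_{i,j,k}T_{ijk}x_ix_jx_k$ vanishes identically iff the full symmetrization of $T$ over $S_3$ vanishes. Since $T_{ijk}=T_{ikj}$, the six permutations pair up, and the symmetrization equals
\[
\tfrac13\big(T_{ijk}+T_{jki}+T_{kij}\big).
\]
Its vanishing is exactly Eq.~(\ref{div_free_part2}). Both directions of the equivalence follow.

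For the resummation claim, assume all $\lambda_i$ are equal, so Eq.~(\ref{div_free_part2}) becomes $c_{ijk}+c_{jki}+c_{kij}=0$. Antisymmetry $b_{ijk}=-b_{kji}$ is immediate from the definition. Write
\[
\sum_{j,k}b_{ijk}x_jx_k=\tfrac23\big(f_i(x)-S\big), \qquad S=\sum_{j,k}c_{kji}x_jx_k=\sum_{j,k}c_{kij}x_jx_k .
\]
Substituting $c_{kij}=-c_{ijk}-c_{jki}$ gives $S=-f_i(x)-\sum_{j,k}c_{jki}x_jx_k$. Relabeling $j\leftrightarrow k$ and using $c_{jki}=c_{jik}$ shows that the last sum equals $S$. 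Hence $S=-f_i/2$, and the total is $\tfrac23\cdot\tfrac32 f_i=f_i$.

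The only step needing care is the index bookkeeping in the symmetrization and in the relabeling. Neither is a real obstacle.
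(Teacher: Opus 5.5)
Your proof is correct and follows the paper's argument: you divide by $\mu$, split into the linear and cubic homogeneous parts, and identify the fully symmetrized cubic coefficient $\tfrac13(\lambda_i c_{ijk}+\lambda_j c_{jki}+\lambda_k c_{kij})$, which the paper instead reads off via third derivatives. The only cosmetic difference is in the resummation step. There you show $\sum_{j,k}c_{kji}x_jx_k=-f_i(x)/2$ at the level of quadratic forms, whereas the paper verifies the coefficient identity $\tfrac12(b_{ijk}+b_{ikj})=c_{ijk}$.
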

\begin{proof}
From $\mathrm{div}(\mu(x)f(x))=0$
one gets
\[
\frac{\mathrm{div}(\mu(x) f(x))}{\mu(x)} = \sum_{i=1}^N \left(  \frac{\partial f_i(x)}{\partial x_i}  -2 q^{-1} \lambda_i x_i f_i(x) \right)=0
\]
for all $x\in \RR^N$. Since the partial derivatives $\partial f_i(x)/\partial x_i$ are linear and
$x_i f_i(x)$ are homogeneous cubic  polynomials, the above implies
\[
\sum_{i=1}^N  \frac{\partial f_i(x)}{\partial x_i} =0 
\]
and
\[
\sum_{i=1}^N  \lambda_i x_i f_i(x)=0
\]
for all $x\in \RR^N$.
Substituting $f_i(x)=\sum_{j,k=1}^N c_{ijk} x_j x_k$ gives
\[
\sum_{i=1}^N \frac{\partial f_i(x)}{\partial x_i}  = \sum_{i,j,k=1}^N c_{ijk} \frac{\partial (x_j x_k)}{\partial x_i} =
2\sum_{i,k=1}^N c_{iik} x_k = 0.
\]
Here the last equality uses the symmetry $c_{ijk}=c_{ikj}$.
The sum $\sum_{i,k=1}^N c_{iik} x_k$ 
vanishes for all $x\in \RR^N$ iff $\sum_{i=1}^N c_{iik}=0$ for all $k$. 
We have proved Eq.~(\ref{div_free_part1}).

Next let us prove Eq.~(\ref{div_free_part2}). 
Let $d_{ijk} = \lambda_i c_{ijk} + \lambda_j c_{jki} + \lambda_k c_{kij}$. We have
\[
\sum_{i=1}^N \lambda_i x_i f_i(x) = \sum_{i,j,k=1}^N \lambda_i c_{ijk} x_i x_j x_k 
=
\frac13  \sum_{i,j,k=1}^N d_{ijk} x_i x_j x_k \equiv P(x).
\]
Clearly, $d_{ijk}=0$ for all $i,j,k$ implies $P(x)=0$ for all $x$.
Using the symmetry $c_{ijk}=c_{ikj}$ one can check that $d_{ijk}$ is invariant under any permutation of indices
$i,j,k$.  Thus
\[
\frac{\partial^3 P(x)}{\partial x_i \partial x_j \partial x_k}=2 d_{ijk}.
\]
Therefore $P(x)=0$ for all $x$ implies $d_{ijk}=0$ for all $i,j,k$.

Now suppose that all dissipation rates are equal. Then
Eq.~\eqref{div_free_part2} reduces to
\[
c_{ijk}+c_{jki}+c_{kij}=0
\qquad \mbox{for all $i,j,k$}.
\]
Since $x_jx_k=x_kx_j$, we get
\[
\sum_{j,k=1}^N b_{ijk}x_jx_k
=
\sum_{j,k=1}^N \frac{b_{ijk}+b_{ikj}}{2}x_jx_k.
\]
Substituting $b_{ijk}=(2/3)(c_{ijk}-c_{kji})$ and $b_{ikj}=(2/3)(c_{ikj}-c_{jki})$ gives
\[
\frac{b_{ijk}+b_{ikj}}{2}
=
\frac{1}{3}
\left[
(c_{ijk}-c_{kji})+(c_{ikj}-c_{jki})
\right].
\]
Using $c_{ijk}=c_{ikj}$ and $c_{kji}=c_{kij}$, this becomes
\[
\frac{b_{ijk}+b_{ikj}}{2}
=
\frac{1}{3}
\left(
2c_{ijk}-c_{kij}-c_{jki}
\right).
\]
By assumption,
$c_{ijk}+c_{jki}+c_{kij}=0$.
Thus
$-c_{kij}-c_{jki}=c_{ijk}$.
Therefore
\[
\frac{b_{ijk}+b_{ikj}}{2}
=
\frac{1}{3}(2c_{ijk}+c_{ijk})
=
c_{ijk}
\]
proving Eq.~(\ref{resummation}).
\end{proof}
{\em Comment:} Choosing $i=j=k$ in Eq.~(\ref{div_free_part2})
implies that $c_{iii}=0$ for all $i$. However, some coefficients $c_{ijk}$
with repeated indices may be nonzero.
For example, suppose $N=3$, $\lambda_1=\lambda_2=\lambda_3=1$, and
the drift functions are
\[
f_1(x)=x_1x_2,\qquad
f_2(x)=-x_1^2+x_3^2,\qquad
f_3(x)=-x_2x_3.
\]
They obey the divergence-free condition Eq.~(\ref{div_free}). However, the tensor $c$
has nonzero entries with repeated indices such as $c_{112}$ or $c_{233}$.

In the rest of this section we use the coefficient-wise version of the divergence-free condition to prove the following
upper bounds. 

\begin{prop}
\label{prop:coeff_upper_bound}
Let $J_{ijk}$ be the maximum of $|c_{ijk}|$ over all permutations of the three indices.
If the system is divergence-free then for all $i,j,k$ 
\be
\label{coeff_upper_bound}
|c_{ijk}|  \sqrt{\frac{\lambda_i}{\lambda_j\lambda_k}} \le J_{ijk} \sqrt{\frac{2}{\lambda_1}}
\ee
and
\be
\label{coeff_upper_comm}
|c_{ijk}| \frac1{\lambda_j \lambda_k}
|\lambda_j+\lambda_k-\lambda_i| \le \frac{2J_{ijk}}{ \lambda_1}.
\ee
\end{prop}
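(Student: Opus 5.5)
Both inequalities should follow from the coefficient-wise divergence-free constraint of Lemma~\ref{lemma:div_free}, Eq.~(\ref{div_free_part2}), together with a case split on the size of $\lambda_i$ relative to $\lambda_j,\lambda_k$. Throughout, assume all $\lambda$'s are positive and $\lambda_1=\min_l\lambda_l$. Fix $i,j,k$ and write $M=\max(\lambda_j,\lambda_k)$ and $m=\min(\lambda_j,\lambda_k)\ge\lambda_1$. The one nontrivial input is obtained by solving Eq.~(\ref{div_free_part2}) for $c_{ijk}$:
\[
|c_{ijk}| = \frac{|\lambda_j c_{jki}+\lambda_k c_{kij}|}{\lambda_i}\le J_{ijk}\,\frac{\lambda_j+\lambda_k}{\lambda_i}.
\]
Here we use that $c_{jki}$ and $c_{kij}$ are obtained from $c_{ijk}$ by permuting the indices, so both are bounded by $J_{ijk}$. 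This estimate is useful exactly when $\lambda_i$ is large. When $\lambda_i$ is small, the trivial bound $|c_{ijk}|\le J_{ijk}$ suffices.

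For Eq.~(\ref{coeff_upper_bound}) I split into two cases.
\begin{itemize}
\item If $\lambda_i\le 2M$, use $|c_{ijk}|\le J_{ijk}$. Then $\lambda_i/(\lambda_j\lambda_k)\le 2M/(Mm)=2/m\le 2/\lambda_1$, which gives the claim.
\item If $\lambda_i>2M$, apply the displayed estimate and $\lambda_j+\lambda_k\le 2M$:
\[
|c_{ijk}|\sqrt{\frac{\lambda_i}{\lambda_j\lambda_k}}\le J_{ijk}\frac{\lambda_j+\lambda_k}{\sqrt{\lambda_i\lambda_j\lambda_k}}\le J_{ijk}\frac{2M}{\sqrt{2M\cdot Mm}}=J_{ijk}\sqrt{\frac{2}{m}}\le J_{ijk}\sqrt{\frac{2}{\lambda_1}}.
\]
\end{itemize}

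For Eq.~(\ref{coeff_upper_comm}) the split is at $\lambda_i=\lambda_j+\lambda_k$.
\begin{itemize}
\item If $\lambda_i<\lambda_j+\lambda_k$, positivity of $\lambda_i$ gives $|\lambda_j+\lambda_k-\lambda_i|<\lambda_j+\lambda_k\le 2M$. Then $|c_{ijk}|\,|\lambda_j+\lambda_k-\lambda_i|/(\lambda_j\lambda_k)\le 2J_{ijk}M/(Mm)=2J_{ijk}/m\le 2J_{ijk}/\lambda_1$.
\item If $\lambda_i\ge\lambda_j+\lambda_k$, then $|\lambda_j+\lambda_k-\lambda_i|=\lambda_i-\lambda_j-\lambda_k\le\lambda_i$. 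Combined with the displayed estimate, the left-hand side is at most $J_{ijk}(\lambda_j+\lambda_k)/(\lambda_j\lambda_k)\le 2J_{ijk}M/(Mm)\le 2J_{ijk}/\lambda_1$.
\end{itemize}

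The only real obstacle is choosing the case thresholds so that the constants come out exactly as $\sqrt{2}$ and $2$. A naive approach that always substitutes the constraint, e.g.\ rewriting $(\lambda_j+\lambda_k-\lambda_i)c_{ijk}=\lambda_j(c_{ijk}+c_{jki})+\lambda_k(c_{ijk}+c_{kij})$, loses a factor of $2$. So the thresholds above ($2M$ for the first bound, $\lambda_j+\lambda_k$ for the second) are the key choices. The rest is elementary arithmetic using only $m\ge\lambda_1$ and $\lambda_j+\lambda_k\le 2M$.
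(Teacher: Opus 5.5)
Your proof is correct and essentially the paper's: for Eq.~(\ref{coeff_upper_comm}) you use the same case split at $\lambda_i=\lambda_j+\lambda_k$ and the same consequence $\lambda_i|c_{ijk}|\le(\lambda_j+\lambda_k)J_{ijk}$ of Eq.~(\ref{div_free_part2}). For Eq.~(\ref{coeff_upper_bound}) the paper combines the same two bounds (the trivial one and the divergence-free one) without cases, via $\min(a,b)\le\sqrt{ab}$, which gives the slightly sharper intermediate bound $J_{ijk}\sqrt{1/\lambda_j+1/\lambda_k}$; this amounts to splitting at $\lambda_i=\lambda_j+\lambda_k$ rather than at $2\max(\lambda_j,\lambda_k)$, so your threshold there is not as critical as your closing remark suggests.
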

\begin{proof}
Let us prove Eq.~(\ref{coeff_upper_bound}).
First, we have an obvious bound
\[
\eta\equiv 
|c_{ijk}|  \sqrt{\frac{\lambda_i}{\lambda_j\lambda_k}} 
 \le J_{ijk} \sqrt{\frac{\lambda_i}{\lambda_j\lambda_k}}\equiv \eta_1.
\]
By Lemma~\ref{lemma:div_free}, the divergence-free condition implies
$\lambda_i c_{ijk} + \lambda_j c_{jki} + \lambda_k c_{kij}=0$.  Thus
\be
\label{lambda_i_cijk}
\lambda_i |c_{ijk}| \le \lambda_j |c_{jki}| + \lambda_k |c_{kij}| \le  (\lambda_j+\lambda_k) J_{ijk}.
\ee
This gives the  second upper bound
\[
\eta
= \frac1{\sqrt{\lambda_i\lambda_j\lambda_k}} |c_{ijk}| \lambda_i
\le   \frac1{\sqrt{\lambda_i\lambda_j\lambda_k}} (\lambda_j+\lambda_k) J_{ijk}
=
J_{ijk} \left(  \sqrt{\frac{\lambda_j}{\lambda_i\lambda_k}} +  \sqrt{\frac{\lambda_k}{\lambda_i\lambda_j}}\right)
  \equiv \eta_2.
\]
From $\eta\le \eta_1$ and $\eta\le \eta_2$ one gets
\[
\eta \le \sqrt{\eta_1 \eta_2} = J_{ijk}\sqrt{\frac1{\lambda_j} + \frac1{\lambda_k}} \le J_{ijk} \sqrt{\frac{2}{\lambda_1}}.
\]
This proves Eq.~(\ref{coeff_upper_bound}).

Next let us prove Eq.~(\ref{coeff_upper_comm}).
Consider two cases.

\noindent
{\em Case 1:} $\lambda_i\le \lambda_j + \lambda_k$. Then $|\lambda_j + \lambda_k - \lambda_i| \le \lambda_j + \lambda_k$ and
\[
|c_{ijk}| \frac1{\lambda_j \lambda_k}
|\lambda_j+\lambda_k-\lambda_i| \le
J_{ijk} \frac{\lambda_j+\lambda_k}{\lambda_j\lambda_k}
=J_{ijk}\left(\frac1{\lambda_j} + \frac1{\lambda_k}\right) \le  \frac{2J_{ijk}}{ \lambda_1}.
\]

\noindent
{\em Case 2:} $\lambda_i> \lambda_j + \lambda_k$. 
Then $|\lambda_j + \lambda_k - \lambda_i| \le \lambda_i$.
Using Eq.~(\ref{lambda_i_cijk}) one gets
\[
|c_{ijk}| \, |\lambda_j+\lambda_k-\lambda_i|  \le |c_{ijk}| \lambda_i \le J_{ijk}(\lambda_j+\lambda_k).
\]
The rest of the proof is the same as in Case~1.
\end{proof}

\section{Kolmogorov equation}
\label{sec:Kolmogorov}
Time evolution of the expected value $u(t,x)=\EE_W u_0(X(t))$ with $X(0)=x$
 is 
governed by the backward Kolmogorov  equation (bKE)~\cite{da2004kolmogorov}.
 Assuming that $u(t,x)$ is a sufficiently smooth function, the Kolmogorov equation reads as
\be
\label{Kolmogorov1}
\frac{\partial}{\partial t} u(t,x) = \sum_{i=1}^N (-\lambda_i x_i + f_i(x)) \frac{\partial}{\partial x_i} u(t,x)
+\frac{q}2 \sum_{i=1}^N  \frac{\partial^2}{\partial x_i^2} u(t,x)
\ee
for $t\ge 0$ with the initial condition $u(0,x)=u_0(x)$.  
The inverse mapping from the partial differential equation Eq.~(\ref{Kolmogorov1}) to an SDE is known as Feynman–Kac formula~\cite{oksendal2013stochastic}.

We note that formally the Gaussian measure $\mu(x)\sim \exp{\left[-q^{-1}\sum_{i=1}^N \lambda_i x_i^2\right]}$ is the invariant measure of Eq.~(\ref{Kolmogorov1}) in the sense that
\be
\label{steady_state}
\frac{d}{dt} \int_{\RR^N} dx\, \mu(x) u(t,x) =0.
\ee
Indeed, we have
\begin{align*}
\frac{d}{dt} \int_{\RR^N} dx\, \mu(x) u(t,x) & =
\int_{\RR^N} dx\, \mu(x) \left[  \sum_{i=1}^N (-\lambda_i x_i + f_i(x)) \frac{\partial}{\partial x_i} u(t,x)
+\frac{q}2 \sum_{i=1}^N  \frac{\partial^2}{\partial x_i^2}u(t,x) \right]\\
&= \int_{\RR^N} dx\, u(t,x) \left[  \sum_{i=1}^N \ \frac{\partial}{\partial x_i} \left[\mu(x)(\lambda_i x_i - f_i(x))\right]
+\frac{q}2 \sum_{i=1}^N  \frac{\partial^2}{\partial x_i^2} \mu(x)\right]=0.
\end{align*}
Here we used the integration by parts, the divergence-free condition $\mathrm{div}(\mu(x)f(x))=0$, and assumed that $u(t,x)$ and its 1st and 2nd derivatives are square-integrable w.r.t. $\mu$.
As a consequence, the mean value of $u(t,x)$ with respect to $\mu$ does not depend on time,
suggesting that $\mu$ is the invariant measure of the underlying SDE.
However, to make this statement rigorous one needs to show that Eq.~(\ref{Kolmogorov1}) has unique solution for appropriately chosen $u_0$. 
%One could also show that %\[
%\int_{\RR^N} dx\, \mu(x) u(t,x) =\int_{\RR^N} dx\, \mu(x) u_0(x)
%\]

\subsection[Mapping to quantum harmonic oscillators]
{Mapping to quantum harmonic oscillators\protect\footnote{This section is largely based on the preprint~\cite{bravyi2025quantum}.}}

We shall construct  a solution of Eq.~(\ref{Kolmogorov1}) in the form
\be
u(t,x)= \frac{\psi(t,x)}{
\sqrt{\mu(x)}},
\ee
where $\psi(t,x)$ is a function satisfying $\int_{\RR^N} dx\,  |\psi(t,x)|^2 <\infty$\footnote{This ansatz ensures that the solution $u(t,x)$ belongs to a 
Gauss-Sobolev space~\cite{da2004kolmogorov} associated with the measure $\mu(x)$, 
that is, $\int_{\RR^N} dx\, \mu(x) |u(t,x)|^2 <\infty$. }
A simple algebra shows that Eq.~(\ref{Kolmogorov1}) is equivalent to 
\be
\label{Kolmogorov1a}
\frac{\partial}{\partial t} \psi(t,x)
=-\frac12 \sum_{i=1}^N \left( q^{-1}\lambda_i^2  x_i^2 -q \frac{\partial^2}{\partial x_i^2} - \lambda_i\right)\psi(t,x)
+ \sum_{i=1}^N f_i(x) ( q^{-1} \lambda_i x_i + \frac{\partial}{\partial x_i}) \psi(t,x).
\ee
The term $( q^{-1}\lambda_i^2  x_i^2 -q \frac{\partial^2}{\partial x_i^2} - \lambda_i)/2$ is the Hamiltonian of
a quantum harmonic oscillator, up to an overall energy shift. 
The oscillator has mass $q^{-1}$, spring constant
$q^{-1}\lambda_i^2$, and frequency $\lambda_i$. 
It will be convenient to describe this Hamiltonian by the  creation ($a_i^\dag$) and annihilation ($a_i$) operators
\be
a_i = \frac1{\sqrt{2}}\left( \sqrt{\frac{\lambda_i}{q}} x_i + \sqrt{\frac{q}{\lambda_i}} \frac{\partial}{\partial x_i}\right) \quad \mbox{and} \quad 
a_i^\dag= \frac1{\sqrt{2}}\left( \sqrt{\frac{\lambda_i}{q}} x_i - \sqrt{\frac{q}{\lambda_i}} \frac{\partial}{\partial x_i}\right).
\ee
Then\footnote{Here we use the identity $(x_i\partial_{x_i}-\partial_{x_i}x_i)\psi = -\psi$.} $( q^{-1}\lambda_i^2  x_i^2 -q \frac{\partial^2}{\partial x_i^2} - \lambda_i)/2=\lambda_i a_i^\dag a_i$.
Using the identity
 $q^{-1} \lambda_i x_i + \frac{\partial}{\partial x_i}=a_i\sqrt{2\lambda_i/q}$ one can rewrite Eq.~(\ref{Kolmogorov1a}) as 
\be
\label{Kolmogorov1b}
\frac{\partial}{\partial t} \psi(t,x) = \left( -\sum_{i=1}^N \lambda_i a_i^\dag a_i  +   \sum_{i=1}^N \sqrt{\frac{2\lambda_i}{q}} f_i(x) a_i\right) \psi(t,x).
\ee
We shall work in the second quantization picture and 
express the wave function $\psi(t,x)$ of $N$ harmonic oscillators  in the Fock (occupation number) basis. 
Let $\ZZ_{\ge 0}$ be the set of nonnegative integers,  $m_i \in \ZZ_{\ge 0}$ be the occupation number of the $i$-th oscillator,
and $\mm = (m_1,\ldots,m_N)$ be a multi-index composed of $N$ occupation numbers. 
Let $|\mm\ra$ be the Fock basis vector
 corresponding to a multi-index $\mm \in \ZZ_{\ge 0}^N$.
These vectors form an orthonormal basis,  i.e. $\la \mm|\nn\ra=\delta_{\nn,\mm}$ for all multi-indices $\mm,\nn$, and 
\be
\label{ai_action}
a_i |\mm\ra = \left\{
\ba{rcl} 
\sqrt{m_i} |\mm-e^i\ra &\mbox{if} & m_i\ge 1,\\
0 &\mbox{if} & m_i=0.\\
\ea
\right.
\ee
Here $e^i\in \ZZ_{\ge 0}^N$ is a multi-index such that $(e^i)_i=1$ and $(e^i)_j=0$ for $j \ne i$.
Accordingly, $a_i^\dag |\mm\ra = \sqrt{m_i+1} |\mm+e^i\ra$ and
$a_i^\dag a_i |\mm\ra = m_i |\mm\ra$. Let $|{\bf 0}\ra=|00\ldots0\ra$ be the vacuum state
annihilated by all operators $a_i$. Recall that 
the Fock space $\calF$ is a Hilbert space formed by linear combinations $|\psi\ra = \sum_{\mm \in \ZZ_{\ge 0}^N}\;  \psi_\mm |\mm\ra$
with complex coefficients $\psi_\mm$ satisfying $\sum_{\mm\in \ZZ_{\ge 0}^N}\;  |\psi_\mm|^2 <\infty$. The inner product in $\calF$ is defined as
$\la \psi|\phi\ra = \sum_{\mm\in \ZZ_{\ge 0}^N}\; \psi_\mm^* \phi_\mm$.
The second-quantized version of Eq.~(\ref{Kolmogorov1b}) becomes
\be
\label{Kolmogorov1c}
\frac{d}{dt}|\psi(t)\ra  =  \left( -\sum_{i=1}^N \lambda_i a_i^\dag a_i  +  \sum_{i=1}^N   \sqrt{2\lambda_i/q}f_i(x) a_i\right)|\psi(t)\ra,
\ee
where $|\psi(t)\ra \in \calF$ 
and $f_i(x)$ is expressed in terms of creation/annihilation operators using the identity
\be
\label{x_vs_a}
x_i = \sqrt{q/(2\lambda_i)} (a_i+a_i^\dag)
\ee
The correspondence between the first and the second quantization pictures is established by $\psi(t,x)=\sum_{\mm\in \ZZ_{\ge 0}^N}\;  \la \mm|\psi(t)\ra \Phi_\mm(x)$,
where $\Phi_\mm(x)$ are the eigenfunctions for the considered system of harmonic oscillators. An explicit expression for $\Phi_\mm(x)$  
in terms of Hermite polynomials will be needed only later in Section~\ref{sec:readout}.
Recalling that  $f_i(x)=\sum_{j,k=1}^N c_{ijk} x_j x_k$ 
and expressing each variable $x_i$ in terms of creation/annihilation operators using Eq.~(\ref{x_vs_a})
one arrives at 
\be
\label{Kolmogorov1d}
\frac{d}{dt}|\psi(t)\ra  =  (-A+C)|\psi(t)\ra,
\ee
where
\be
\label{Aexplicit}
A = \sum_{i=1}^N \lambda_i a_i^\dag a_i
\ee
and
\be
\label{Cexplicit}
C = (q/2)^{1/2}\sum_{i,j,k=1}^N   (\lambda_i\lambda_j  \lambda_k)^{-1/2}  \lambda_i  c_{ijk} (a_j+a_j^\dag) (a_k+a_k^\dag) a_i.
\ee
Obviously, $A$ is hermitian and positive semi-definite. 
\begin{lemma}
\label{lemma:anti_hermitian}
Suppose the system is divergence-free. Then $C$ is anti-hermitian and one can write
\be
\label{Csimplified}
C=
(q/2)^{1/2}\sum_{i,j,k=1}^N   (\lambda_i\lambda_j  \lambda_k)^{-1/2}   \lambda_i  c_{ijk} (a_j^\dag a_k^\dag a_i -  a_i^\dag a_k a_j).
\ee
\end{lemma}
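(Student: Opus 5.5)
The plan is to expand the product $(a_j+a_j^\dag)(a_k+a_k^\dag)a_i$ in Eq.~(\ref{Cexplicit}) into normal-ordered pieces. Then I will use the two coefficient-wise divergence-free constraints of Lemma~\ref{lemma:div_free} to kill or recombine the terms. Write $\gamma_{ijk}=(q/2)^{1/2}(\lambda_i\lambda_j\lambda_k)^{-1/2}\lambda_i c_{ijk}$, so that $C=\sum_{ijk}\gamma_{ijk}(a_j+a_j^\dag)(a_k+a_k^\dag)a_i$. Using $a_ja_k^\dag=a_k^\dag a_j+\delta_{jk}$, the product becomes
\[
a_j^\dag a_k^\dag a_i + a_j^\dag a_k a_i + a_k^\dag a_j a_i + a_j a_k a_i + \delta_{jk} a_i .
\]
By the symmetry $c_{ijk}=c_{ikj}$, the two middle terms contribute equally after summation, giving $2\sum\gamma_{ijk}a_j^\dag a_k a_i$.

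\textbf{Step 1 (the linear term).} This term is $\sum_{i,j}\gamma_{ijj}a_i$, with $\gamma_{ijj}\propto \lambda_i^{1/2}\lambda_j^{-1}c_{ijj}$. This is not directly Eq.~(\ref{div_free_part1}), which constrains $\sum_i c_{iik}$. I would therefore use Eq.~(\ref{div_free_part2}) with $(i,j,j)$: $\lambda_i c_{ijj}+2\lambda_j c_{jji}=0$ (using $c_{jij}=c_{jji}$). Hence $\lambda_i^{1/2}\lambda_j^{-1}c_{ijj}=-2\lambda_i^{-1/2}c_{jji}$. Summing over $j$ and applying Eq.~(\ref{div_free_part1}) with $k=i$ gives zero, so the linear term vanishes.

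\textbf{Step 2 (the $aaa$ and $a^\dag a a$ terms).} The term $\sum\gamma_{ijk}a_ja_ka_i$ involves commuting operators, so only the fully symmetrized coefficient matters. That coefficient is proportional to $(\lambda_i\lambda_j\lambda_k)^{-1/2}(\lambda_ic_{ijk}+\lambda_jc_{jki}+\lambda_kc_{kij})$, which vanishes by Eq.~(\ref{div_free_part2}).

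For the term $2\sum\gamma_{ijk}a_j^\dag a_k a_i$, symmetrize over the two annihilation indices $i,k$. Eq.~(\ref{div_free_part2}) gives $\lambda_ic_{ijk}+\lambda_kc_{kji}=-\lambda_jc_{jik}$. Therefore $2\cdot\frac12(\gamma_{ijk}+\gamma_{kji})=-\gamma_{jik}$, and the term equals $-\sum\gamma_{jik}a_j^\dag a_ka_i$. Relabeling $i\leftrightarrow j$ and using $c_{ijk}=c_{ikj}$ gives $-\sum\gamma_{ijk}a_i^\dag a_k a_j$, which is the second part of Eq.~(\ref{Csimplified}). The first part, $\sum\gamma_{ijk}a_j^\dag a_k^\dag a_i$, survives unchanged.

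\textbf{Step 3 (anti-hermiticity).} Take the adjoint of Eq.~(\ref{Csimplified}). With real coefficients, $(a_j^\dag a_k^\dag a_i)^\dag=a_i^\dag a_k a_j$, so the two sums swap roles with a sign flip, giving $C^\dag=-C$.

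The main obstacle is bookkeeping rather than ideas. The delicate point is that Step 1 needs both constraints combined, since the specialization $(i,j,j)$ of Eq.~(\ref{div_free_part2}) must be fed into Eq.~(\ref{div_free_part1}). Step 2 also requires care with the relabeling so that the symmetric pair lands on the annihilators. These are formal identities on the dense domain of finite-particle states, which I would state as the sense in which the operators are compared.
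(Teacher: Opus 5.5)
Your proof is correct, but it runs in the opposite direction from the paper's.

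The paper proves anti-hermiticity first. Writing $Q_j=a_j+a_j^\dag$, it commutes $a_i^\dag$ through $Q_jQ_k$ in $C^\dag$. This produces only $\delta_{ij}$ and $\delta_{ik}$ contractions, and for these the $\lambda$-prefactor no longer depends on the summed index (e.g.\ $\gamma_{iik}\propto\lambda_k^{-1/2}c_{iik}$). They are therefore killed directly by $\sum_i c_{iik}=0$. Then $C+C^\dag$ becomes a cubic in the commuting $Q$'s whose symmetrized coefficient vanishes by Eq.~(\ref{div_free_part2}). The normal-ordered form Eq.~(\ref{Csimplified}) then follows with no further computation from the particle-number grading: $C_{-p}=-C_p^\dag$, there is no $+3$ component and hence no $-3$ component, and $C_{-1}=-C_1^\dag$.

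You instead normal-order $C$ itself and verify each graded piece by hand, so anti-hermiticity is a corollary of the explicit formula rather than an input. The cost is that your contraction is between $a_j$ and $a_k^\dag$. This gives a $\delta_{jk}$ term whose prefactor $\lambda_i^{1/2}\lambda_j^{-1}c_{ijj}$ still depends on the summed index, a term the paper's route never has to examine. Killing it genuinely needs both constraints, namely the $(i,j,j)$ instance of Eq.~(\ref{div_free_part2}) fed into Eq.~(\ref{div_free_part1}), and you handle this correctly. Likewise, your conversion of the $a^\dag a a$ part into $-C_1^\dag$, by symmetrizing over the two annihilation indices and using the cyclic constraint, is an explicit version of what the grading argument supplies automatically.

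The paper's route is shorter and more structural. Yours shows exactly which constraint eliminates which term. One minor point: your final relabeling $i\leftrightarrow j$ does not actually need $c_{ijk}=c_{ikj}$.
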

\begin{proof}
Define operators $Q_j = a_j + a_j^\dag$.
From Eq.~(\ref{Cexplicit}) one gets
\[
C = (q/2)^{1/2}\sum_{i,j,k=1}^N   (\lambda_i\lambda_j  \lambda_k)^{-1/2}  \lambda_i  c_{ijk} Q_j Q_k a_i
\]
and
\[
C^\dag = (q/2)^{1/2}\sum_{i,j,k=1}^N   (\lambda_i\lambda_j  \lambda_k)^{-1/2}  \lambda_i  c_{ijk} a_i^\dag Q_j Q_k.
\]
Here we noted that $Q_j$ and $Q_k$ are hermitian and $Q_jQ_k=Q_k Q_j$.
Canonical commutation rules imply that
$[a_i^\dag,Q_j] = -\delta_{i,j} I$.  
Thus
\[
a_i^\dag  Q_j Q_k =
[a_i^\dag,Q_j] Q_k + Q_j [a_i^\dag, Q_k] + Q_j Q_k a_i^\dag
=
 -\delta_{i,j}Q_k -\delta_{i,k}Q_j
+Q_j Q_k a_i^\dag.
\]
Substitute this into the above formula for $C^\dag$ and use the coefficient-wise
divergence free condition $\sum_{i=1}^N c_{iik}=0$ of Lemma~\ref{lemma:div_free}.
The symmetry of $c$ implies $\sum_{i=1}^N c_{iik}=0$.
Then the terms proportional to $\delta_{i,j}c_{ijk} Q_k$ and $\delta_{i,k}c_{ijk}Q_j$
vanish after summing over $i$.
This gives
\[
C^\dag = (q/2)^{1/2}\sum_{i,j,k=1}^N   (\lambda_i\lambda_j  \lambda_k)^{-1/2}   \lambda_i  c_{ijk} 
Q_j Q_k a_i^\dag.
\]
Using the identity $a_i^\dag = -a_i  + Q_i$ one gets 
\[
C^\dag = -C +T,
\]
where
\[
T =  (q/2)^{1/2}\sum_{i,j,k=1}^N   (\lambda_i\lambda_j  \lambda_k)^{-1/2}   \lambda_i  c_{ijk} Q_j Q_k Q_i.
\]
Since the operators $Q_i,Q_j,Q_k$ pairwise commute, one can replace $\lambda_i  c_{ijk}$ inside the sum
by the symmetrized expression $(1/3)(\lambda_i  c_{ijk} + \lambda_j c_{jki} + \lambda_k c_{kij})$ which vanishes
due to the divergence-free condition Eq.~(\ref{div_free_part2}).
This  implies $T=0$, that is, $C=-C^\dag$.

Let us prove the identity Eq.~(\ref{Csimplified}). Write
\[
C=\sum_{p\in \ZZ} C_p,
\]
where $C_p$ is the sum of all terms in Eq.~(\ref{Cexplicit}) that change the particle number by $p$.
From $C^\dag = -C$ one gets $C_{-p}=-C_p^\dag$ for all $p$.
Since any creation and annihilation operator changes the particle number by $+1$
and $-1$ respectively, Eq.~(\ref{Cexplicit}) gives
$C_p=0$ unless $p\in \{-3,-1,1\}$.
From $C_3=0$ one gets $C_{-3}=-C_3^\dag =0$.
The only terms in Eq.~(\ref{Cexplicit}) that change the particle number by $+1$ are those with two 
creation and one annihilation operator. Collecting all such terms gives
\[
C_1 =  (q/2)^{1/2}\sum_{i,j,k=1}^N   (\lambda_i\lambda_j  \lambda_k)^{-1/2}  \lambda_i  c_{ijk} a_j^\dag a_k^\dag a_i
\]
and $C_{-1}=-C_1^\dag$.
Thus $C = C_1 - C_1^\dag$ which 
proves Eq.~(\ref{Csimplified}).
\end{proof}

Our goal is to solve the Kolmogorov equation in the form Eq.~(\ref{Kolmogorov1d}). Since the vacuum state $|{\bf 0}\ra$ is annihilated by all
terms in the righthand side of Eq.~(\ref{Kolmogorov1d}), time evolution of the vacuum amplitude  is trivial, that is,
$\la {\bf 0}|\psi(t)\ra=\la {\bf 0}|\psi(0)\ra$ for all $t$.  Instead of keeping track of  this constant amplitude in all equations, we find it convenient
to work in the subspace of the Fock space $\calF$  orthogonal to the vacuum. Let this subspace be
\be
\calF_0 = \{ |\psi\ra \in \calF \, : \,  \la {\bf 0}|\psi\ra=0 \}.
\ee
Note that $\la \psi|A|\psi\ra \ge \lambda_1 \la \psi|\psi\ra$ for all $|\psi\ra \in \calF_0$.
We assume that the initial state $|\psi(0)\ra$ belongs to $\calF_0$. 
This can always be achieved by subtracting a multiple of $|{\bf 0}\ra$ from $|\psi(0)\ra$.

\subsection{Key technical lemmas}
\label{sec:technical}

In this  section we establish structural properties of the operators $A=\sum_{i=1}^N \lambda_i a_i^\dag a_i$ and $C$
that appear in the second-quantized version of the Kolmogorov equation. Our key technical lemmas are as follows (we postpone the proof until the end of this section).

\begin{lemma}
\label{lemma:AC1}
For any vectors $\phi,\psi\in\calF_0$ one has
\be
\label{C_overlap_upper}
|\la \phi|C|\psi\ra| \le \gamma  \sqrt{\la \psi|A|\psi\ra \la \phi|A^2|\phi\ra}.
\ee
where
\[
\gamma = \frac{3J \sqrt{qs}}{\lambda_1^2}.
\]
\end{lemma}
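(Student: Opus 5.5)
The plan is to split $C$ into its particle-creating and particle-annihilating parts, using Lemma~\ref{lemma:anti_hermitian}: $C=C_1-C_1^\dag$ with
\[
C_1=\sum_{i,j,k} g_{ijk}\, a_j^\dag a_k^\dag a_i,\qquad g_{ijk}=(q/2)^{1/2}(\lambda_i\lambda_j\lambda_k)^{-1/2}\lambda_i c_{ijk}.
\]
I will bound $|\la\phi|C_1|\psi\ra|$ and $|\la\phi|C_1^\dag|\psi\ra|$ separately, each by a multiple of $J\sqrt{qs}\,\lambda_1^{-2}\sqrt{\la\psi|A|\psi\ra\la\phi|A^2|\phi\ra}$, and check that the constants add up to at most $3$. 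The common strategy is to move operators onto $\phi$ and $\psi$ so that exactly one annihilator acts on $\psi$. Then $\sum_i\lambda_i\|a_i\psi\|^2=\la\psi|A|\psi\ra$ appears, and all remaining operators act on $\phi$ and are controlled by $A^2$.

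\emph{The $C_1$ term.} Write $\la\phi|C_1|\psi\ra=\sum_i\la\Phi_i|a_i\psi\ra$ with $\Phi_i=\sum_{j,k}\bar g_{ijk}\,a_ka_j\phi$. Cauchy--Schwarz with weights $\lambda_i$ gives the bound $\sqrt{\la\psi|A|\psi\ra}\,\sqrt{\sum_i\lambda_i^{-1}\|\Phi_i\|^2}$. Since $g_{ijk}/\sqrt{\lambda_i}=(q/2)^{1/2}c_{ijk}/\sqrt{\lambda_j\lambda_k}$, set $v_{jk}=a_ka_j\phi/\sqrt{\lambda_j\lambda_k}$. For each fixed $i$, the weighted Cauchy--Schwarz inequality
\[
\Bigl\|\sum_{jk}c_{ijk}v_{jk}\Bigr\|^2\le\Bigl(\sum_{jk}|c_{ijk}|^2\Bigr)\Bigl(\sum_{jk}\chi_{ijk}\|v_{jk}\|^2\Bigr)
\]
uses the slice bound $J$. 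Summing over $i$ and using the $s$-channel property $\sum_i\chi_{ijk}\le s$ gives $\sum_i\lambda_i^{-1}\|\Phi_i\|^2\le (q/2)J^2 s\sum_{jk}\|v_{jk}\|^2$. Finally,
\[
\sum_{jk}\lambda_j^{-1}\lambda_k^{-1}\la\phi|a_j^\dag a_k^\dag a_ka_j|\phi\ra\le\la\phi|M^2|\phi\ra,\qquad M=\sum_j\lambda_j^{-1}a_j^\dag a_j\le\lambda_1^{-2}A.
\]
Here normal ordering only subtracts a positive term, and $M$ commutes with $A$. This gives $\la\phi|M^2|\phi\ra\le\lambda_1^{-4}\la\phi|A^2|\phi\ra$. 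The $C_1$ contribution is therefore at most $\sqrt{qs/2}\,J\lambda_1^{-2}$.

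\emph{The $C_1^\dag$ term, the main obstacle.} Here $\la\phi|a_i^\dag a_ka_j|\psi\ra$ has two annihilators on $\psi$. Using $\sum\|a_ka_j\psi\|^2$ would require $A^2$ on $\psi$, which the statement does not allow. So I will commute one operator over, $\la\phi|a_i^\dag a_k a_j|\psi\ra=\la a_j^\dag a_i\phi|a_k\psi\ra$ (the operators $a_k$ and $a_j$ commute, so either index can be chosen). Cauchy--Schwarz with weight $\lambda_k$ again produces $\la\psi|A|\psi\ra$. The remaining factor involves $\sum_k\lambda_k^{-1}\|\sum_{ij}g_{ijk}a_j^\dag a_i\phi\|^2$. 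The coefficient now carries $\lambda_i/\sqrt{\lambda_i\lambda_j\lambda_k}$ in the wrong place, and I will tame it with Proposition~\ref{prop:coeff_upper_bound}, Eq.~(\ref{coeff_upper_bound}): $|g_{ijk}|\le\sqrt{q/\lambda_1}\,J_{ijk}$. Since $J_{ijk}$ is a maximum over permutations, I will bound it by the sum of the three permuted $|c|$'s and treat each permutation separately. Each piece uses the same channel/slice Cauchy--Schwarz as above, with the roles of the indices relabelled; the slice bound for a second index is also assumed in the definition of $J$.

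\emph{Finishing the $C_1^\dag$ term.} The $a_j^\dag a_i\phi$ norms give $\la\phi|\sum_{ij}w_{ij}a_i^\dag a_ja_j^\dag a_i|\phi\ra$, which is a quadratic form in number operators plus a linear correction. On $\calF_0$ the linear part is absorbed using $A\ge\lambda_1$, so that $A\le\lambda_1^{-1}A^2$. I expect the careful bookkeeping here, namely choosing which index to move and placing the $\lambda$ weights so that only $\lambda_1^{-2}$ overall survives, to be the delicate step. Its constant should come out at most $2$ or so, yielding the total $\gamma=3J\sqrt{qs}/\lambda_1^2$.
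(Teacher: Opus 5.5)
\paragraph{Gap in the $C_1^\dagger$ term.}
Your bound on $C_1$ is correct. It is the paper's estimate for $C^\uparrow$, with the same constant $\sqrt{qs/2}\,J\lambda_1^{-2}$. The gap is in the $C_1^\dagger$ term. Writing $\langle\phi|a_i^\dagger a_ka_j|\psi\rangle=\langle a_j^\dagger a_i\phi|a_k\psi\rangle$ puts a \emph{creation} operator on $\phi$, and $\|a_j^\dagger a_i\phi\|^2=\langle\phi|n_i(n_j+1)|\phi\rangle$ for $i\neq j$. After the triangle inequality inside $\Theta_k=\sum_{i,j}g_{ijk}a_j^\dagger a_i\phi$ and the slice/channel Cauchy--Schwarz, the ``$+1$'' produces your linear term $\sum_i\omega_i\langle\phi|n_i|\phi\rangle$ with
\[
\omega_i=\sum_{j,k}\chi_{ijk}\,\lambda_k^{-1}\sum_{i',j'}|g_{i'j'k}|^2 ,
\]
which is a sum over the entire support of the slice $c_{i\cdot\cdot}$. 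The hypotheses bound $\ell^2$ norms of slices, not their supports, so $\omega_i$ can be of order $N$. Using $A\ge\lambda_1$ only fixes the power of $A$, not this $N$-dependent prefactor.

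Better bookkeeping cannot repair this. Any estimate that passes through $\|\Theta_k\|\le\sum_{i,j}|g_{ijk}|\,\|a_j^\dagger a_i\phi\|$ is at least $\sum_{i,j,k}|g_{ijk}|\,\|a_j^\dagger a_i\phi\|\,\|a_k\psi\|$. Take all $\lambda_i$ equal and put entries of size $\sim J/\sqrt N$ on the triples of a Steiner triple system, with signs chosen so that $c_{ijk}+c_{jki}+c_{kij}=0$; this tensor is divergence-free with $s=1$. With $\phi=N^{-1/2}\sum_i a_i^\dagger|\mathbf 0\rangle$ and $\psi=N^{-1/2}\sum_k a_k^\dagger a_{k_0}^\dagger|\mathbf 0\rangle$, that quantity is $\sim N^{-1}\sum_{i,j,k}|g_{ijk}|$, i.e.\ about $\sqrt N$ times the right-hand side of the lemma.

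Separately, your ``relabelled'' channel step needs $\#\{k:c_{ijk}\neq 0\}$ to be bounded for fixed $(i,j)$. That is not a hypothesis: only the number of \emph{first} indices for fixed last two is bounded. For example, $c_{12k}=c_{1k2}=\alpha_k$ and $c_{21k}=c_{2k1}=-(\lambda_1/\lambda_2)\alpha_k$ for $k\ge 3$ is divergence-free with $s=1$, yet $c_{12k}\neq0$ for every $k$.

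\paragraph{The missing idea.}
Two annihilators on $\psi$ are harmless, contrary to your worry that they force $A^2$ onto $\psi$. The paper keeps the pairing $\langle a_i\phi|a_ka_j\psi\rangle$ and works with Fock-basis matrix elements. For fixed $(i,j,k)$, the map $\mathbf m\mapsto\mathbf n=\mathbf m-e^j-e^k+e^i$ is a partial bijection, and $|\langle\mathbf n|a_i^\dagger a_ka_j|\mathbf m\rangle|\le\sqrt{n_i\,m_j(m_k-\delta_{jk})}$. Cauchy--Schwarz therefore puts $\chi_{ijk}\,m_j(m_k-\delta_{jk})|\psi_{\mathbf m}|^2$ on the $\psi$ side, where summing over $i$ with $(j,k)$ fixed uses exactly the $s$-channel bound. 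It puts $|g_{ijk}|^2 n_i|\phi_{\mathbf n}|^2$ on the $\phi$ side, where summing over $(j,k)$ with $i$ fixed uses the slice bound. Only annihilators appear, so there is no ``$+1$''. Write $\phi_w,\psi_w$ for the $w$-particle components and $J_g^2=\max_i\sum_{j,k}|g_{ijk}|^2\le 3qJ^2/\lambda_1$ (your bounds via $J_{ijk}$ are correct). Since $C_1^\dagger$ maps sector $w$ to sector $w-1$, this gives
\[
|\langle\phi_{w-1}|C_1^\dagger|\psi_w\rangle|\le J_g\sqrt{s}\,(w-1)\sqrt{w}\,\|\phi_{w-1}\|\,\|\psi_w\|.
\]
A second Cauchy--Schwarz over $w$ then shifts the surplus particle-number factor onto $\phi$. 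One uses $\sum_w w\|\psi_w\|^2\le\lambda_1^{-1}\langle\psi|A|\psi\rangle$ and $\sum_w (w-1)^2\|\phi_{w-1}\|^2\le\lambda_1^{-2}\langle\phi|A^2|\phi\rangle$. This yields the bound $\sqrt{3qs}\,J\lambda_1^{-2}$ for the $C_1^\dagger$ term, and $\sqrt3+1/\sqrt2\le 3$.
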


\begin{lemma}
\label{lemma:AC2}
For any integer \(p\ge 1\) and any vector \(\psi\in\calF_0\), one has
\be
\label{commutator_bound_general}
\left|\langle \psi|[A^p,C]|\psi\rangle\right|
\le
\gamma_p
\sqrt{
\langle \psi|A^p|\psi\rangle
\langle \psi|A^{p+1}|\psi\rangle
},
\ee
where
\[
\gamma_p
=\frac{2p\sqrt{6qs}\,J}{\lambda_1}
\left(1+\frac{2\lambda_N}{\lambda_1}\right)^{(p-1)/2}.
\]
In particular, $\gamma_1$ does not depend on the largest dissipation rate $\lambda_N$.
\end{lemma}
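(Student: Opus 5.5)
The plan is to use the number-changing structure of $C$ established in Lemma~\ref{lemma:anti_hermitian}. Write $C=C_1-C_1^\dag$ with $C_1=\sum_{i,j,k} w_{ijk}\, a_j^\dag a_k^\dag a_i$ and $w_{ijk}=(q/2)^{1/2}\lambda_i c_{ijk}(\lambda_i\lambda_j\lambda_k)^{-1/2}$. Since $A^p$ is hermitian and $C_1^\dag$ is the adjoint of $C_1$, we have $[A^p,C_1^\dag]=-[A^p,C_1]^\dag$. Hence $[A^p,C]=[A^p,C_1]+[A^p,C_1]^\dag$, and
\[
|\la\psi|[A^p,C]|\psi\ra|\le 2|\la\psi|[A^p,C_1]|\psi\ra|.
\]
This produces the factor $2$ in $\gamma_p$.

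Each monomial $T_{ijk}=a_j^\dag a_k^\dag a_i$ maps an $A$-eigenvector with eigenvalue $E$ to one with eigenvalue $E+\delta_{ijk}$, where $\delta_{ijk}=\lambda_j+\lambda_k-\lambda_i$. So $A^pT_{ijk}=T_{ijk}(A+\delta_{ijk})^p$, and
\[
[A^p,T_{ijk}]=T_{ijk}\bigl((A+\delta_{ijk})^p-A^p\bigr)=\delta_{ijk}\,T_{ijk}\sum_{r=0}^{p-1}(A+\delta_{ijk})^rA^{p-1-r}.
\]
The sum has $p$ terms, which gives the factor $p$.

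The coefficient of $T_{ijk}$ is now $w_{ijk}\delta_{ijk}$, which I rewrite as
\[
w_{ijk}\delta_{ijk}=(q/2)^{1/2}\,\Bigl[c_{ijk}\frac{\delta_{ijk}}{\lambda_j\lambda_k}\Bigr]\sqrt{\lambda_i\lambda_j\lambda_k}.
\]
By Eq.~(\ref{coeff_upper_comm}) of Proposition~\ref{prop:coeff_upper_bound}, the bracket is at most $2J_{ijk}/\lambda_1$. The remaining weight $\sqrt{\lambda_i\lambda_j\lambda_k}$ is exactly what is needed to absorb the three mode operators into powers of $A$, using
\[
\sum_{j,k}\lambda_j\lambda_k\|a_ka_j\phi\|^2\le\la\phi|A^2|\phi\ra,\qquad \sum_i\lambda_i\|a_i\chi\|^2=\la\chi|A|\chi\ra .
\]

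For each $r$, I will split the power of $A$ symmetrically. Since $a_i$ lowers the energy by $\lambda_i$, the operator $(A+\delta_{ijk})^r$ can be moved so that it acts as $(A+\lambda_j+\lambda_k)^r$ on $a_k a_j\psi$ (bra side) or as a shifted power on $a_i\psi$ (ket side). Then I distribute the powers so that the matrix element becomes $\la \phi_{jk}|\chi_i\ra$ with
\[
\phi_{jk}\approx A^{(p-1)/2}a_ka_j\psi \quad\text{and}\quad \chi_i\approx A^{(p-1)/2}a_i\psi,
\]
up to shifts by at most $2\lambda_N$. The shifts are controlled by an operator inequality on $\calF_0$: since $A\ge\lambda_1$ there, $A+2\lambda_N\le(1+2\lambda_N/\lambda_1)A$. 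Each shifted factor then costs at most $(1+2\lambda_N/\lambda_1)^{1/2}$ per power after Cauchy--Schwarz, which accounts for the factor $(1+2\lambda_N/\lambda_1)^{(p-1)/2}$. For $p=1$ there are no shifted powers at all, so $\gamma_1$ is independent of $\lambda_N$. Pulling $A^{(p-1)/2}$ through the $a$'s commutes up to exactly these energy shifts. The two resulting norms are then $\la\psi|A^{p+1}|\psi\ra$ (two lowering operators plus $p-1$ powers) and $\la\psi|A^p|\psi\ra$ (one lowering operator plus $p-1$ powers).

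Finally, the tensor sum is handled with the $s$-channel structure. I bound
\[
\sum_{ijk}M_{ijk}x_{jk}y_i\le\|x\|\sqrt{\textstyle\sum_{jk}\bigl(\sum_iM_{ijk}y_i\bigr)^2},
\]
where $M_{ijk}=J_{ijk}$, $x_{jk}=\sqrt{\lambda_j\lambda_k}\|\phi_{jk}\|$ and $y_i=\sqrt{\lambda_i}\|\chi_i\|$. Since at most $s$ indices $i$ contribute for each fixed $(j,k)$, Cauchy--Schwarz gives
\[
\Bigl(\sum_iM_{ijk}y_i\Bigr)^2\le s\sum_iM_{ijk}^2y_i^2 .
\]
Each slice $\sum_{jk}J_{ijk}^2$ is at most the sum over the distinct index permutations of slice norms. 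This is at most $3J^2$ using $c_{ijk}=c_{ikj}$ and the two slice-norm bounds defining $J$, and yields the $\sqrt{6qs}$ together with $(q/2)^{1/2}$ and $2/\lambda_1$.

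The main obstacle is the operator-ordering step: rigorously moving $(A+\delta)^r A^{p-1-r}$ into a symmetric form across $T_{ijk}$, with $\delta$ possibly negative, while keeping each shift bounded by $2\lambda_N$. I would handle it mode-by-mode in the Fock basis, where everything is diagonal, and only at the end resum using the inequality $A+c\le(1+c/\lambda_1)A$ on $\calF_0$.
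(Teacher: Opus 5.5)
Your proposal is correct and is essentially the paper's proof. It uses the same reduction to $2|\langle\psi|[A^p,C^\uparrow]|\psi\rangle|$, the same factorization of $\lambda_\nn^p-\lambda_\mm^p$ with $\max(\lambda_\nn/\lambda_\mm,\lambda_\mm/\lambda_\nn)\le 1+2\lambda_N/\lambda_1$, and the same use of Eq.~(\ref{coeff_upper_comm}) together with $\sum_{j,k}J_{ijk}^2\le 3J^2$. The only change is the order of the Cauchy--Schwarz steps: you take per-$(i,j,k)$ norms first, whereas the paper does one joint Cauchy--Schwarz over $(i,j,k,\mm)$, and both give the same $\gamma_p$.

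When you make the ordering step precise, evaluate the powers at the outer energies: take $\chi_i=a_iA^{(p-1)/2}\psi$ and $\phi_{jk}=a_ka_jA^{(p-1)/2}\psi$, and restrict $M_{ijk}$ to the support of $c$ so that the $s$-channel count applies. This matters because $a_i\psi$ and $a_ka_j\psi$ can have vacuum components, on which $A=0$ and the inequality $A+2\lambda_N\le(1+2\lambda_N/\lambda_1)A$ fails.
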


{\em Comments:} Recall that $\lambda_1$ is the smallest and $\lambda_N$ is the largest dissipation rate. 
We only consider vectors $\psi,\phi$ such that the righthand side of Eqs.~(\ref{C_overlap_upper},\ref{commutator_bound_general}) is finite.
Note that $[A^p,C]$ is a  hermitian operator since $A^p$ is hermitian while $C$ is anti-hermitian. Thus $\la \psi| [A^p,C]|\psi\ra$  is a real number.
The bound Eq.~(\ref{commutator_bound_general}) with $p\ge 2$  is rather weak as it depends on the largest dissipation rate $\lambda_N$, which often scales as a positive power of $N$.
Luckily, we need this bound with $p\ge 2$ only to prove that the Kolmogorov equation has a well-defined solution.
The scaling of $\gamma_p$ with $p\ge 2$  is irrelevant for the runtime of our quantum algorithm, which depends on  $\gamma$ and $\gamma_1$.
We shall use the following corollary of Lemma~\ref{lemma:AC2}.
\begin{corol}
\label{corol:commutator_general}
\label{corol:commutator_bound_order1}
\label{corol:commutator_bound_order_p}
For any integer \(p\ge 1\) and any vector \(\psi\in\calF_0\), one has
\[
\left| \langle \psi|[A^p,C]|\psi\rangle\right|
\le
\langle \psi|A^{p+1}|\psi\rangle
+
\kappa_p\langle \psi|A^p|\psi\rangle,
\]
where
\[
\kappa_p=\frac{\gamma_p^2}{4}
=
\frac{6p^2J^2qs}{\lambda_1^2}
\left(1+\frac{2\lambda_N}{\lambda_1}\right)^{p-1}.
\]
In particular, $\kappa_1$ does not depend on the largest dissipation rate.
\end{corol}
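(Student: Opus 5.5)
The corollary follows from Lemma~\ref{lemma:AC2} by a single application of the arithmetic--geometric mean inequality. Fix $p\ge 1$ and $\psi\in\calF_0$, and set $a=\la\psi|A^{p+1}|\psi\ra$ and $b=\la\psi|A^p|\psi\ra$. Both numbers are nonnegative because $A$ is hermitian and positive semi-definite. We may assume both are finite; otherwise the right-hand side is infinite and there is nothing to prove. Lemma~\ref{lemma:AC2} then gives
\[
\left|\la\psi|[A^p,C]|\psi\ra\right|\le \gamma_p\sqrt{ab}.
\]

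To split $\sqrt{ab}$ so that $a$ carries coefficient one, we use the elementary inequality $\gamma\sqrt{ab}\le a+\frac{\gamma^2}{4}b$, valid for all $a,b\ge 0$ and $\gamma\ge0$. It holds because
\[
a+\frac{\gamma^2}{4}b-\gamma\sqrt{ab}=\Bigl(\sqrt a-\frac{\gamma}{2}\sqrt b\Bigr)^2\ge 0 .
\]
Applying it with $\gamma=\gamma_p$ yields
\[
\left|\la\psi|[A^p,C]|\psi\ra\right|\le \la\psi|A^{p+1}|\psi\ra+\frac{\gamma_p^2}{4}\la\psi|A^p|\psi\ra ,
\]
which is the claimed bound with $\kappa_p=\gamma_p^2/4$.

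It remains to check the explicit formula for $\kappa_p$ by squaring $\gamma_p$:
\[
\frac{\gamma_p^2}{4}=\frac{4p^2\cdot 6qs\,J^2}{4\lambda_1^2}\Bigl(1+\frac{2\lambda_N}{\lambda_1}\Bigr)^{p-1}
=\frac{6p^2J^2qs}{\lambda_1^2}\Bigl(1+\frac{2\lambda_N}{\lambda_1}\Bigr)^{p-1}.
\]
For $p=1$ the exponent $p-1$ vanishes, so $\kappa_1=6J^2qs/\lambda_1^2$ does not depend on $\lambda_N$, matching the value stated in the overview.

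There is no real obstacle here. All the analytic content is contained in Lemma~\ref{lemma:AC2}, and the corollary is only a rebalancing of its bound into an additive form.
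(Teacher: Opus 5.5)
Your proof is correct and follows the paper's argument: both apply Lemma~\ref{lemma:AC2} and then the weighted AM--GM inequality. The paper writes it as $\sqrt{\alpha_1\alpha_2}\le\frac12(\beta\alpha_2+\beta^{-1}\alpha_1)$ with $\beta=2/\gamma_p$, which is exactly your completed square, and your computation of $\kappa_p=\gamma_p^2/4$ (including $\kappa_1$ being independent of $\lambda_N$) is correct.
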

\begin{proof}
For any \(\beta>0\) one has
$\sqrt{\alpha_1\alpha_2}
\le
\frac12(\beta\alpha_2+\beta^{-1}\alpha_1)$.
Substituting 
$\alpha_1=\langle\psi|A^p|\psi\rangle$,
$\alpha_2=\langle\psi|A^{p+1}|\psi\rangle$,
 $\beta=2/\gamma_p$ and using Eq.~(\ref{commutator_bound_general})
gives
$|\langle \psi|[A^p,C]|\psi\rangle|
\le
\langle\psi|A^{p+1}|\psi\rangle
+ (\gamma_p^2/4)
\langle\psi|A^p|\psi\rangle$.
\end{proof}

Finally, we shall need the following very loose upper bound.
\begin{lemma}
\label{lemma:loose}
Let $p\ge 0$ be a real number. 
 There exist a real number $\omega_p<\infty$ such that  
\be
\label{ABCloose}
\| A^p C\psi\| \le \omega_p \| A^{p+3/2}\psi\|
\ee
for any vector $\psi \in \calF_0$. Here $\omega_p$ depends on $p$, $N$, $J$, $\lambda_1$, and $\lambda_N$. 
\end{lemma}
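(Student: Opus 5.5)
The bound is very loose, so I would avoid any use of the divergence-free structure and argue term by term in the Fock basis, using only that $N$ is finite. By Lemma~\ref{lemma:anti_hermitian}, $C=C_1-C_1^\dag$ with
\[
C_1=(q/2)^{1/2}\sum_{i,j,k}(\lambda_i\lambda_j\lambda_k)^{-1/2}\lambda_i c_{ijk}\,a_j^\dag a_k^\dag a_i .
\]
This is a finite sum of at most $2N^3$ monomials, each of the form $a_j^\dag a_k^\dag a_i$ or $a_i^\dag a_k a_j$, with coefficients bounded in modulus by $\beta:=(q/2)^{1/2}\lambda_N\lambda_1^{-3/2}J$. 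Here I use $|c_{ijk}|\le J$, which holds because every entry is bounded by the slice norm. By the triangle inequality it therefore suffices to show, for each fixed monomial $M$, that $\|A^pM\psi\|\le c_p\|A^{p+3/2}\psi\|$ with $c_p$ depending only on $p,\lambda_1,\lambda_N$. One then takes $\omega_p=2N^3\beta c_p$.

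Fix such a monomial $M$. It maps each Fock vector $|\mm\ra$ to a multiple of a single Fock vector $|\mm'\ra$, and the map $\mm\mapsto\mm'$ is injective on the vectors not annihilated, since it is a fixed shift of the occupation numbers. Hence $A^pM\psi=\sum_\mm \psi_\mm E(\mm')^p \alpha_\mm|\mm'\ra$ has orthogonal terms, where $E(\mm)=\sum_i\lambda_i m_i$ and $\alpha_\mm$ is the product of three square-root factors $\sqrt{m_\ell}$ or $\sqrt{m_\ell+1}$, with occupation numbers shifted by at most $2$. Therefore
\[
\|A^pM\psi\|^2=\sum_\mm|\psi_\mm|^2E(\mm')^{2p}|\alpha_\mm|^2 ,
\]
and the task reduces to a pointwise inequality $E(\mm')^{p}|\alpha_\mm|\le c_p E(\mm)^{p+3/2}$ for every $\mm\ne{\bf 0}$ with $\alpha_\mm\ne0$. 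This comparison against $\|A^{p+3/2}\psi\|^2=\sum_\mm|\psi_\mm|^2E(\mm)^{2p+3}$ is legitimate because $\psi\in\calF_0$ forces $\mm\ne{\bf 0}$ on its support.

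The pointwise bound comes from three elementary facts. First, $E(\mm')\le E(\mm)+2\lambda_N$, since at most two particles are added. Second, every factor satisfies $\sqrt{m_\ell+2}\le \sqrt{3\,n(\mm)}$, where $n(\mm)=\sum_\ell m_\ell\ge1$ is the particle number. Third, $n(\mm)\le E(\mm)/\lambda_1$. Combined with $E(\mm)\ge\lambda_1$, which allows $2\lambda_N\le 2(\lambda_N/\lambda_1)E(\mm)$, these give
\[
E(\mm')^p|\alpha_\mm|\le (1+2\lambda_N/\lambda_1)^p\,(3/\lambda_1)^{3/2}\,E(\mm)^{p+3/2}.
\]
This holds for real $p\ge0$. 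One caveat: the vacuum can appear as the output $\mm'$ (for example, $a_i^\dag a_ka_j$ does not create vacuum, but an annihilating term could). In that case $E(\mm')=0$, and the bound is trivial for $p>0$; for $p=0$ the bound is unaffected anyway.

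The only real subtlety is a domain issue. One should state the inequality for $\psi$ with $\|A^{p+3/2}\psi\|<\infty$, in which case the computation above also shows that $M\psi$ is well defined as an $\ell^2$ sum. If that norm is infinite, the inequality is vacuous. No other step is an obstacle, since the proof deliberately sacrifices all $N$-dependence.
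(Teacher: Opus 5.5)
Your proof is correct and is essentially the paper's argument. Both use the triangle inequality over the finitely many monomials $a_j^\dag a_k^\dag a_i$ and $a_i^\dag a_k a_j$, orthogonality of their images in the Fock basis (the paper phrases this as $X_{ijk}$ being diagonal), and a pointwise comparison via $E(\mm')\le E(\mm)+2\lambda_N$ and $m_\ell\le E(\mm)/\lambda_1$; your use of $E(\mm)\ge\lambda_1$ to absorb the $2\lambda_N$ shift gives a cleaner explicit $\omega_p$. Two small remarks: the form $C=C_1-C_1^\dag$ you start from (Lemma~\ref{lemma:anti_hermitian}) does rely on the divergence-free condition, and your vacuum caveat is unnecessary because no term of $C_1-C_1^\dag$ outputs the vacuum (the creating terms give $n+1\ge 2$ particles, and $a_i^\dag a_k a_j$ is nonzero only for $n\ge 2$, giving $n-1\ge 1$).
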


\subsection{Proof strategy}

The proof of Lemma~\ref{lemma:AC1} and~\ref{lemma:AC2} relies on the following
\(A\)-weighted bounds for cubic bosonic operators.
We shall need two versions of such bounds, for operators that change the particle number by $+1$ and $-1$.
\begin{lemma}
\label{lemma:master}
Consider operators
\be
\label{def_D}
D = \sum_{i,j,k=1}^N d_{ijk} (\lambda_i \lambda_j \lambda_k)^{1/2}  a_j^\dag a_k^\dag a_i
\ee
and
\be
\label{def_E}
E = \sum_{i,j,k=1}^N e_{ijk} a_i^\dag a_k a_j
\ee
where $d_{ijk}$ and $e_{ijk}$ are complex coefficients. Suppose for any fixed $i$ one has
\[
\left( \sum_{j,k=1}^N |d_{ijk}|^2 \right)^{1/2} \le J_d  \quad \mbox{and} \quad \left(\sum_{j,k=1}^N |e_{ijk}|^2 \right)^{1/2} \le J_e.
\]
Suppose for any fixed pair $j,k$ there are at most $s$ nonzero entries $d_{ijk}$ and at most $s$ nonzero entries $e_{ijk}$.
Then for any vectors $\phi,\psi\in\calF_0$ one has
\be
\label{master_upper_D}
|\la \phi|D|\psi\ra| \le J_d \sqrt{s} \sqrt{\la \psi|A|\psi\ra \la \phi|A^2|\phi\ra}
\ee
and
\be
\label{master_upper_E}
|\la \phi|E|\psi\ra| \le J_e \sqrt{s} \, \lambda_1^{-3/2}
 \sqrt{\la \psi|A|\psi\ra \la \phi|A^2|\phi\ra}.
\ee
\end{lemma}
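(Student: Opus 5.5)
We bound $|\langle\phi|D|\psi\rangle|$ by a Cauchy--Schwarz argument in which the $A$-weights are split asymmetrically between the two sides: the single annihilated mode $i$ goes to $\psi$, and the two created modes $j,k$ go to $\phi$. Then $E$ is reduced to the same form.

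\textbf{Step 1: rewrite the matrix element.} Write
\[
\langle\phi|D|\psi\rangle=\sum_{i,j,k} d_{ijk}\,\bigl\langle \sqrt{\lambda_j\lambda_k}\, a_k a_j\phi \,\big|\, \sqrt{\lambda_i}\, a_i\psi\bigr\rangle .
\]
Apply Cauchy--Schwarz in the Fock space to each term. Then apply Cauchy--Schwarz to the triple sum, distributing the coefficient as $|d_{ijk}|=|d_{ijk}|^{1/2}\cdot|d_{ijk}|^{1/2}$, or more simply as $|d_{ijk}|\cdot\chi_{ijk}$, where $\chi_{ijk}$ indicates $d_{ijk}\ne 0$. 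This gives
\[
|\langle\phi|D|\psi\rangle|\le \Bigl(\sum_{i,j,k}|d_{ijk}|^2\lambda_i\|a_i\psi\|^2\Bigr)^{1/2}\Bigl(\sum_{i,j,k}\chi_{ijk}\lambda_j\lambda_k\|a_ka_j\phi\|^2\Bigr)^{1/2}.
\]

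\textbf{Step 2: bound each factor.}
\begin{itemize}
\item In the first factor, sum over $j,k$ first. Since $\sum_{j,k}|d_{ijk}|^2\le J_d^2$, the factor is at most $J_d^2\sum_i\lambda_i\langle\psi|a_i^\dagger a_i|\psi\rangle=J_d^2\langle\psi|A|\psi\rangle$.
\item In the second factor, sum over $i$ first. The $s$-channel assumption gives $\sum_i\chi_{ijk}\le s$, so the factor is at most $s\sum_{j,k}\lambda_j\lambda_k\langle\phi|a_j^\dagger a_k^\dagger a_k a_j|\phi\rangle$.
\item Normal ordering gives $\sum_{j,k}\lambda_j\lambda_k a_j^\dagger a_k^\dagger a_ka_j = A^2-\sum_j\lambda_j^2 a_j^\dagger a_j\le A^2$. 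The subtracted operator is positive, and all these operators are diagonal in the Fock basis, so the operator inequality is immediate.
\end{itemize}
Combining the two factors yields Eq.~(\ref{master_upper_D}).

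\textbf{Step 3: the bound for $E$.} Here there is no $\lambda$-weighting, so we insert weights artificially:
\[
e_{ijk}=\bigl(e_{ijk}(\lambda_i\lambda_j\lambda_k)^{-1/2}\bigr)(\lambda_i\lambda_j\lambda_k)^{1/2}.
\]
Now $\langle\phi|E|\psi\rangle=\sum e_{ijk}\langle a_i\phi|a_ka_j\psi\rangle$. The roles are reversed relative to $D$: the two annihilation operators act on $\psi$ and the single one on $\phi$. Repeating Step 2 naively would give $\sqrt{\langle\phi|A|\phi\rangle\langle\psi|A^2|\psi\rangle}$, which has the wrong assignment of powers.

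The fix is to use $\langle\phi|E|\psi\rangle=\overline{\langle\psi|E^\dagger|\phi\rangle}$ with $E^\dagger=\sum_{ijk}\bar e_{ijk}\,a_j^\dagger a_k^\dagger a_i$. This has exactly the form of $D$ with $d_{ijk}=\bar e_{ijk}(\lambda_i\lambda_j\lambda_k)^{-1/2}$, and the bound now applies with $\phi$ and $\psi$ swapped.

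\textbf{Step 4: handling the swapped roles.} The swapped version gives $\sqrt{\langle\phi|A|\phi\rangle\langle\psi|A^2|\psi\rangle}$, so the $A$-weights still need to be shuffled back. I expect this to be the main obstacle, and I would instead redo Step 1 directly with a different split. Put all weight on $\phi$ by bounding $\|a_i\phi\|^2\le\lambda_i^{-1}\cdot$(terms of $A$) and $\|a_ka_j\psi\|$ via $\lambda_1^{-2}\lambda_j\lambda_k$, and then use $A\ge\lambda_1$ on $\mathcal F_0$ to trade powers:
\[
\langle\phi|A|\phi\rangle\le\lambda_1^{-1}\langle\phi|A^2|\phi\rangle,\qquad \sum_{j,k}\langle\psi|a_j^\dagger a_k^\dagger a_ka_j|\psi\rangle\le\lambda_1^{-2}\langle\psi|A^2|\psi\rangle .
\]
These produce only $\sqrt{\langle\phi|A^2|\phi\rangle\langle\psi|A^2|\psi\rangle}$, which is not yet the target. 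To reach $\langle\psi|A|\psi\rangle$, I would exploit particle-number structure: $E$ lowers the particle number by one, so the matrix element pairs $\psi$ in sector $n$ with $\phi$ in sector $n-1$. Sector-wise, $\|a_ka_j\psi_n\|^2$ summed over $j,k$ equals $n(n-1)\|\psi_n\|^2\le n\,\lambda_1^{-1}\langle\psi_n|A|\psi_n\rangle$, and $n\|\phi_{n-1}\|^2$ is controlled by $\lambda_1^{-1}\langle\phi_{n-1}|A|\phi_{n-1}\rangle\cdot(n/(n-1))$, which is absorbed into $\lambda_1^{-2}\langle\phi|A^2|\phi\rangle$ using $A^2\ge\lambda_1^2 N_{\rm op}^2$, where $N_{\rm op}$ is the total number operator. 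Tracking these factors is what should produce the claimed $\lambda_1^{-3/2}$. The delicate point is the low sectors ($n=1$, where the matrix element vanishes, and $n=2$).
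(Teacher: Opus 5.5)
Your argument for $D$ is correct and is the paper's. The paper bounds Fock-basis matrix elements by $(n_jn_k)^{1/2}m_i^{1/2}$, whereas you bound $\|a_ka_j\phi\|\,\|a_i\psi\|$ and use $\sum_{j,k}\lambda_j\lambda_k a_j^\dagger a_k^\dagger a_ka_j\le A^2$. The Cauchy--Schwarz split is the same: $|d_{ijk}|$ goes with mode $i$ on $\psi$, and $\chi_{ijk}$ with modes $j,k$ on $\phi$.

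For $E$, the adjoint trick and the first half of Step~4 are dead ends. No $\lambda$-weights need to be inserted; the whole factor $\lambda_1^{-3/2}$ comes from $N_{\mathrm{op}}\le\lambda_1^{-1}A$ and $N_{\mathrm{op}}^2\le\lambda_1^{-2}A^2$. The particle-number sector decomposition you finally reach is exactly the paper's key idea, but you leave it unfinished, and the allocation you sketch does not give the stated constant. After writing $n(n-1)\|\psi_n\|^2\le n\lambda_1^{-1}\langle\psi_n|A|\psi_n\rangle$, shifting the leftover $n$ onto $\phi_{n-1}$ leaves $n(n-1)\|\phi_{n-1}\|^2$. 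This must be compared with $(n-1)^2\|\phi_{n-1}\|^2\le\lambda_1^{-2}\langle\phi_{n-1}|A^2|\phi_{n-1}\rangle$, which has no slack for a flat spectrum. The ratio $n/(n-1)$ equals $2$ at $n=2$, which is exactly the ``delicate'' sector you flag. So this route only yields Eq.~(\ref{master_upper_E}) with an extra factor $\sqrt{2}$.

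The fix is to move $n-1$ rather than $n$. In sector $n$, apply Cauchy--Schwarz over $(i,j,k)$. Attach $\chi_{ijk}$ to $\|a_ka_j\psi_n\|$ and sum over $i$ first, using $s$. Attach $|e_{ijk}|$ to $\|a_i\phi_{n-1}\|$ and sum over $j,k$ first, using $J_e$. This gives
\[
|\langle\phi_{n-1}|E|\psi_n\rangle|\le J_e\sqrt{s}\,\sqrt{n(n-1)\|\psi_n\|^2\cdot(n-1)\|\phi_{n-1}\|^2}=J_e\sqrt{s}\,\bigl(\sqrt{n}\,\|\psi_n\|\bigr)\bigl((n-1)\|\phi_{n-1}\|\bigr).
\]
Sum over $n\ge 2$ and apply Cauchy--Schwarz once more. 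This yields $J_e\sqrt{s}\,\bigl(\sum_n n\|\psi_n\|^2\bigr)^{1/2}\bigl(\sum_n (n-1)^2\|\phi_{n-1}\|^2\bigr)^{1/2}\le J_e\sqrt{s}\,\lambda_1^{-3/2}\sqrt{\langle\psi|A|\psi\rangle\langle\phi|A^2|\phi\rangle}$.

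No special treatment of low sectors is needed: $n=1$ contributes zero automatically, and no ratio $n/(n-1)$ ever appears. This is how the paper proves Eq.~(\ref{master_upper_E}). The point is that particle number, unlike the eigenvalue of $A$, is a scalar on each sector and changes by exactly one under $E$. Hence the weight $n(n-1)^2$ can be regrouped as $n\cdot(n-1)^2$ across the two sides.
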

We can  apply Lemma~\ref{lemma:master}  to obtain $A$-weighted bounds for  $C$
and $[A,C]$.
Indeed, by Lemma~\ref{lemma:anti_hermitian}, 
\[
C=C^\uparrow-C^\downarrow,
\]
where
\[
C^\uparrow=
(q/2)^{1/2}\sum_{i,j,k=1}^N c_{ijk}   \sqrt{\frac{\lambda_i}{\lambda_j\lambda_k}}
a_j^\dag a_k^\dag a_i
\]
and $C^\downarrow=(C^\uparrow)^\dag$.
Since $A$ is diagonal in the Fock basis, it follows that 
$C^\uparrow$ and $[A,C^\uparrow]$
are sums of bosonic cubic terms that change
the particle number by $+1$. We shall bound matrix elements of $C^\uparrow$ and $[A,C^\uparrow]$
using Eq.~(\ref{master_upper_D}) of Lemma~\ref{lemma:master}.
Likewise, $C^\downarrow$ and $[A,C^\downarrow]$ are sums of bosonic cubic terms that change
the particle number by $-1$. We shall  bound matrix elements of $C^\downarrow$ and $[A,C^\downarrow]$
using Eq.~(\ref{master_upper_E}) of Lemma~\ref{lemma:master}.
The commutator bound Eq.~(\ref{commutator_bound_general}) with $p\ge 2$ is proved using similar arguments. 
Finally, Lemma~\ref{lemma:loose} is proved separately by a direct, crude estimate.

In the rest of the section we prove all above lemmas, starting from Lemma~\ref{lemma:master}.

\subsection{Proof of Lemma~\ref{lemma:master}}

The definition of creation and annihilation operators implies that 
 $\la \nn|a_j^\dag a_k^\dag a_i|\mm\ra=0$ unless $\nn=\mm-e^i + e^j +e^k$.
 Let $\calE_{ijk}\subseteq \calJ \times \calJ$ be the set of all pairs $(\nn,\mm)$ such that 
$\nn=\mm -e^i + e^j + e^k$.
Thus  $\la \nn|a_j^\dag a_k^\dag a_i|\mm\ra=0$ unless $(\nn,\mm)\in \calE_{ijk}$.
For any $\mm$ there is at most one $\nn$ such that $(\nn,\mm)\in \calE_{ijk}$ and vice verse.
Nonzero matrix elements of $a_j^\dag a_k^\dag a_i$ obey
 \be
\label{aaa_bound}
|\la \nn |a_j^\dag a_k^\dag a_i |\mm\ra|\le 
\| a_k a_j |\nn\ra\| \cdot \| a_i |\mm\ra\| \le
(n_j n_k)^{1/2} m_i^{1/2}.
\ee
Let $\chi_{ijk}\in \{0,1\}$ be the indicator function of $d_{ijk}$ such that $d_{ijk}=d_{ijk}\chi_{ijk}$ for all $i,j,k$.
The triangle inequality and Eq.~(\ref{aaa_bound}) give
\[
|\la \phi |D|\psi\ra |\le  \sum_{i,j,k=1}^N\;  \sum_{(\nn,\mm)\in \calE_{ijk}}\; 
 x_{ijk\mm} y_{ijk\nn},
\]
where
\[
x_{ijk\mm} =  |d_{ijk}| (\lambda_i m_i)^{1/2} |\psi_\mm|  \quad \mbox{and} \quad y_{ijk\nn} =   \chi_{ijk} (\lambda_j n_j \lambda_k n_k)^{1/2} 
 |\phi_{\nn}|.
\]
Cauchy-Schwarz inequality gives
\be
\label{lemmaAC1_eq1}
|\la \phi |D|\psi\ra | \le 
\sqrt{  \sum_{i,j,k=1}^N  \; \sum_{(\nn,\mm)\in \calE_{ijk}} \; x_{ijk\mm}^2}
\cdot \sqrt{   \sum_{i,j,k=1}^N  \; \sum_{(\nn,\mm)\in \calE_{ijk}} \; y_{ijk\nn}^2}.
\ee
Let us upper bound the sum of $x_{ijk\mm}^2$.
Since for any $\mm$ there is at most one $\nn$ such that $(\nn,\mm)\in \calE_{ijk}$, one has
\[
 \sum_{i,j,k=1}^N  \; \sum_{(\nn,\mm)\in \calE_{ijk}} \; x_{ijk\mm}^2 
 =  \sum_{i,j,k=1}^N  \; \sum_{\mm \in \calJ}\;
 x_{ijk\mm}^2 =
   \sum_{i,j,k=1}^N  \; \sum_{\mm \in \calJ}\;
   |d_{ijk}|^2 \lambda_i m_i |\psi_\mm|^2.
 \]
For any fixed index $i$ one has $\sum_{j,k=1}^N |d_{ijk}|^2 \le J_d^2$.
Hence
\be
\label{lemmaAC1_eq2}
 \sum_{i,j,k=1}^N  \; \sum_{(\nn,\mm)\in \calE_{ijk}} \; x_{ijk\mm}^2
  \le 
J_d^2 \sum_{\mm \in \calJ} \; \sum_{i=1}^N  \;  m_i \lambda_i|\psi_\mm|^2
 =J_d^2 \la \psi| A |\psi \ra.
\ee

Let us upper bound the sum of $y_{ijk\mm}^2$.
Since for any $\nn$ there is at most one $\mm$ such that $(\nn,\mm)\in \calE_{ijk}$,
one has
\[
\sum_{i,j,k=1}^N  \; \sum_{(\nn,\mm)\in \calE_{ijk}} \; y_{ijk\nn}^2
\le  \sum_{i,j,k=1}^N  \;  \sum_{\nn \in \calJ} \; y_{ijk\nn}^2
= \sum_{i,j,k=1}^N  \;  \sum_{\nn \in \calJ} \;
 \chi_{ijk} \lambda_j n_j \lambda_k n_k 
 |\phi_{\nn}|^2.
\]
For any fixed pair of indices $j,k$ one has $\sum_{i=1}^N \chi_{ijk}\le s_d$.
Thus
\be
\label{lemmaAC1_eq3}
\sum_{i,j,k=1}^N  \; \sum_{(\nn,\mm)\in \calE_{ijk}} \; y_{ijk\nn}^2
 \le s_d \sum_{j,k=1}^N \; \sum_{\nn \in \calJ}\; |\phi_\nn|^2 \lambda_j n_j \lambda_k n_k = s_d  \la \phi|A^2|\phi\ra.
\ee
Substituting Eqs.~(\ref{lemmaAC1_eq2},\ref{lemmaAC1_eq3}) into Eq.~(\ref{lemmaAC1_eq1}) gives
\[
|\la \phi |D|\psi\ra | \le J_d \sqrt{s_d}  \sqrt{\la \phi|A^2|\phi\ra \la \psi|A|\psi\ra }.
\]
This proves Eq.~(\ref{master_upper_D}).

Next let us upper bound $|\la \phi |E|\psi\ra |$. 
For each integer $w\ge 0$ let $P_w$ be the projector onto the subspace of the Fock space $\calF_0$ with exactly $w$ particles,
\[
P_w = \sum_{\mm \in \calJ\, : \, |\mm|=w}\; |\mm\ra\la \mm|.
\]
Here $|\mm|=\sum_{i=1}^N m_i$. Note that $P_0=0$ since $\calF_0$ excludes the vacuum state $|{\bf 0}\ra$.
Thus $I=\sum_{w\ge 1} P_w$ is the identity decomposition on $\calF_0$. Define vectors
\[
|\phi_w\ra = P_w |\phi\ra \quad \mbox{and} \quad |\psi_w\ra = P_w |\psi\ra.
\]
Then $\sum_{w\ge 1} |\phi_w\ra=|\phi\ra$ and $\sum_{w\ge 1} |\psi_w\ra=|\psi\ra$.
Since $E$ decreases the particle number by one, 
we have $\la \phi_{w'} |E |\psi_w\ra=0$ unless $w'=w-1$. Thus
\[
\la \phi|E|\psi\ra =  \sum_{w\ge 2} \la \phi_{w-1} |E|\psi_w\ra.
\]
We have $\la \nn|a_i^\dag a_k a_j|\mm\ra=0$ unless 
$(\mm,\nn)\in \calE_{ijk}$.
Nonzero matrix elements of $a_i^\dag a_k a_j$  obey
 \be
\label{aaa_bound_dag}
|\la \nn | a_i^\dag a_k a_j  |\mm\ra|\le 
\| a_i |\nn\ra\| \cdot \| a_k a_j |\mm\ra\| \le
n_i^{1/2} (m_j (m_k-\delta_{j,k}))^{1/2}.
\ee
Let $\chi_{ijk}\in \{0,1\}$ be the indicator function of $e_{ijk}$ such that $e_{ijk}=e_{ijk}\chi_{ijk}$ for all $i,j,k$.
The triangle inequality gives
\[
|\la \phi |E |\psi\ra |\le 
\sum_{w\ge 2}\; 
 \sum_{i,j,k=1}^N  \;
 \sum_{(\mm,\nn)\in \calE_{ijk}}\;
\; x_{wijk\mm} y_{wijk\nn},
\]
where
\[
x_{ijk\mm} =\delta_{w,|\mm|}
\chi_{ijk} (m_j (m_k-\delta_{j,k}))^{1/2}  |\psi_\mm| 
\quad \mbox{and} \quad y_{ijk\nn} = \delta_{w-1,|\nn|}|e_{ijk}| n_i^{1/2}  |\phi_{\nn}|.
\]
Applying Cauchy-Schwarz inequality for each value of $w$ gives
\be
\label{lemmaAC1_eq8}
|\la \phi |E |\psi\ra | \le  \sum_{w\ge 2}
\sqrt{  \sum_{i,j,k=1}^N \;     \sum_{(\mm,\nn)\in \calE_{ijk}} \; x_{wijk\mm}^2}
\cdot \sqrt{ \sum_{i,j,k=1}^N \;     \sum_{(\mm,\nn)\in \calE_{ijk}} \;  y_{wijk\nn}^2}.
\ee
Let us upper bound the sum of $x_{wijk\mm}^2$.
For any $\mm$ there is at most one $\nn$ such that $(\mm,\nn)\in \calE_{ijk}$.
Thus
\[
 \sum_{i,j,k=1}^N \;  \sum_{(\mm,\nn)\in \calE_{ijk}} \; x_{wijk\mm}^2
 \le  \sum_{i,j,k=1}^N \;  \sum_{\mm \in \calJ}  \; x_{wijk\mm}^2
 = \sum_{i,j,k=1}^N \;  \sum_{\mm \in \calJ}  \; 
 \delta_{w,|\mm|}
\chi_{ijk} m_j (m_k-\delta_{j,k})   |\psi_\mm|^2.
\]
Since any one-dimensional slice of $e$ has at most $s_e$ nonzeros, 
for any fixed pair of indices $j$ and $k$ one has
$\sum_{i=1}^N \chi_{ijk} \le s_e$.
It follows that 
\[
 \sum_{i,j,k=1}^N \;  \sum_{(\mm,\nn)\in \calE_{ijk}} \; x_{wijk\mm}^2
 \le 
 s_e\sum_{j,k=1}^N \;  \sum_{\mm \in \calJ}  \; 
 \delta_{w,|\mm|}
 m_j (m_k -\delta_{j,k})   |\psi_\mm|^2.
\]
For any $\mm$ with $|\mm|=w$ one has
$\sum_{j,k=1}^N  m_j (m_k -\delta_{j,k})  = w^2 - w = w(w-1)$. Thus
\be
\label{Ebound_part1}
 \sum_{i,j,k=1}^N \;  \sum_{(\mm,\nn)\in \calE_{ijk}} \; x_{wijk\mm}^2
 \le s_e w(w-1) \la\psi|P_w|\psi\ra.
\ee
Next us first bound the sum of $y_{wijk\nn}^2$. For any $\nn$ there is at most one $\mm$ such that $(\mm,\nn)\in \calE_{ijk}$.
Thus
\[
 \sum_{i,j,k=1}^N \;  \sum_{(\mm,\nn)\in \calE_{ijk}} \; y_{wijk\nn}^2
 \le  \sum_{i,j,k=1}^N \;  \sum_{\nn \in \calJ}  \; y_{wijk\nn}^2
 = \sum_{i,j,k=1}^N \;  \sum_{\nn \in \calJ}  \; 
  \delta_{w-1,|\nn|}|e_{ijk}|^2 n_i  |\phi_{\nn}|^2.
\]
For any fixed index $i$ one has  $\sum_{j,k=1}^N |e_{ijk}|^2 \le J_e^2$.
Thus
\be
\label{Ebound_part2}
 \sum_{i,j,k=1}^N \;  \sum_{(\mm,\nn)\in \calE_{ijk}} \; y_{wijk\nn}^2
 \le 
  J_e^2 \sum_{i=1}^N \;  \sum_{\nn \in \calJ}  \; 
  \delta_{w-1,|\nn|} n_i  |\phi_{\nn}|^2 = J_e^2 (w-1) \la \phi|P_{w-1}|\phi\ra.
 \ee
Finally, substituting Eqs.~(\ref{Ebound_part1},\ref{Ebound_part2}) into Eq.~(\ref{lemmaAC1_eq8}) one gets
\be
|\la \phi |E |\psi\ra | \le J_e \sqrt{s_e}  \sum_{w\ge 2}  (w-1)\sqrt{w} \sqrt{\la \psi|P_w|\psi\ra \la \phi|P_{w-1}|\phi\ra}.
\ee
Applying Cauchy-Schwartz again gives
\be
\label{lemmaAC1_eq11}
|\la \phi |E |\psi\ra | \le J_e \sqrt{s_e}
\sqrt{ \left(\sum_{w\ge 2}  (w-1)^2
\la \phi|P_{w-1}|\phi\ra \right)
\left( \sum_{w\ge 2} w\la \psi|P_w|\psi\ra\right)}
\ee
Note that 
\[
\sum_{w\ge 2} w P_w \le \sum_{w\ge 1} w P_w = \sum_{\mm \in \calJ} \sum_{i=1}^N m_i |\mm\ra\la \mm| \le \lambda_1^{-1}  \sum_{\mm \in \calJ} \sum_{i=1}^N m_i \lambda_i  |\mm\ra\la \mm| = \lambda_1^{-1} A.
\]
It follows that 
\be
\label{lemmaAC1_eq12}
 \sum_{w\ge 2} w\la \psi|P_w|\psi\ra  \le \lambda_1^{-1} \la \psi |A|\psi\ra.
\ee
Likewise,
\be
\label{lemmaAC1_eq13}
\sum_{w\ge 2} (w-1)^2  \la \phi|P_{w-1}|\phi\ra =
\sum_{w\ge 1}w^2  \la \phi|P_{w}|\phi\ra 
\le  \lambda_1^{-2} \la \phi|A^2|\phi\ra.
\ee
Substituting Eqs.~(\ref{lemmaAC1_eq12},\ref{lemmaAC1_eq13}) into Eq.~(\ref{lemmaAC1_eq11}) one gets
\be
\label{lemmaAC1_eq14}
|\la \phi |E|\psi\ra | \le  J_e \sqrt{s_e} \lambda_1^{-3/2}  \sqrt{\la \phi|A^2|\phi\ra \la \psi|A|\psi\ra }.
\ee 

\subsection{Proof of Lemma~\ref{lemma:AC1}}
The triangle inequality gives
\be
\label{Cupdown}
|\la \phi|C|\psi\ra|\le |\la \phi|C^\uparrow|\psi\ra| +|\la \phi|C^\downarrow|\psi\ra|.
\ee
Let us upper bound $|\la \phi|C^\uparrow|\psi\ra|$.
Consider  a tensor 
\[
d_{ijk} =(q/2)^{1/2} \frac{c_{ijk}}{\lambda_j \lambda_k}.
\]
Then $C^\uparrow$ coincides with the operator $D$ defined in Lemma~\ref{lemma:master}.
The lemma gives
\[
|\la \phi |C^\uparrow|\psi\ra |\le J_d \sqrt{s}  \sqrt{\la \psi|A|\psi\ra \la \phi|A^2|\phi\ra}.
\]
Here we noted that $c$ and $d$ have the same number of drift channels. 
Clearly,  $|d_{ijk}|\le (q/2)^{1/2} \lambda_1^{-2} |c_{ijk}|$.
By assumption, any two-dimensional slice of $c$ has $2$-norm at most $J$. 
Thus the parameter $J_d$ of Lemma~\ref{lemma:master}  is bounded as 
$J_d \le (q/2)^{1/2} \lambda_1^{-2} J$ and we get
\be
\label{Cup}
|\la \phi |C^\uparrow|\psi\ra | \le \frac1{\sqrt{2}} (qs)^{1/2} J  \lambda_1^{-2}  \sqrt{\la \phi|A^2|\phi\ra \la \psi|A|\psi\ra }.
\ee
Next let us upper bound $|\la \phi |C^\downarrow|\psi\ra |$. 
Consider  a tensor
\be
\label{e_from_Cdown}
e_{ijk} = (q/2)^{1/2} c_{ijk}  \sqrt{\frac{\lambda_i}{\lambda_j\lambda_k}}.
\ee
Then the operator $E$ of Lemma~\ref{lemma:master} coincides with $C^\downarrow$ and we get
\be
\label{Cdown_intermediate}
|\la \phi |C^\downarrow|\psi\ra |\le J_e \sqrt{s}  \lambda_1^{-3/2}\sqrt{\la \psi|A|\psi\ra \la \phi|A^2|\phi\ra}.
\ee
Here we noted that $c$ and $e$ have the same support so that $s_e=s$.
Combining Eq.~(\ref{e_from_Cdown}) and the first part of Proposition~\ref{prop:coeff_upper_bound}
gives
\[
|e_{ijk}|\le q^{1/2} \lambda_1^{-1/2} J_{ijk}.
\]
For any index $i$ one has 
\[
\sum_{j,k=1}^N |e_{ijk}|^2 \le q \lambda_1^{-1} \sum_{j,k=1}^N J_{ijk}^2.
\]
Furthermore, the definition of $J_{ijk}$ implies
\[
|J_{ijk}|^2\le |c_{ijk}|^2 + |c_{jki}|^2 + |c_{kij}|^2.
\]
Indeed, since the tensor $c$ is symmetric under swapping the last two indices,
a permutation of indices that maximizes $|c_{ijk}|$ can be chosen as a cyclic shift.
We arrive at
\[
\sum_{j,k=1}^N |e_{ijk}|^2 \le q \lambda_1^{-1} \sum_{j,k=1}^N  |c_{ijk}|^2 + |c_{jki}|^2 + |c_{kij}|^2 \le 3J^2 q \lambda_1^{-1} 
\]
since any two-dimensional slice of $c$ has $2$-norm at most $J$.
Thus $J_e \le J \sqrt{3q/\lambda_1}$. Substituting this into Eq.~(\ref{Cdown_intermediate}) gives
\be
\label{Cdown}
|\la \phi |C^\downarrow|\psi\ra | \le  (3qs)^{1/2}  J \lambda_1^{-2}  \sqrt{\la \phi|A^2|\phi\ra \la \psi|A|\psi\ra }.
\ee 
Combining Eqs.~(\ref{Cupdown},\ref{Cdown},\ref{Cup})
and the inequality $\sqrt{3} + 1/\sqrt{2}\le 3$
 proves the lemma.

\subsection{Proof of Lemma~\ref{lemma:AC2}}

We shall first prove $p=1$ case. 
The triangle inequality gives
\be
\label{comm_eq1}
|\la \psi|[A,C]|\psi\ra| \le |\la \psi|[A,C^\uparrow]|\psi\ra| + |\la \psi|[A,C^\downarrow]|\psi\ra| = 2|\la \psi|[A,C^\uparrow]|\psi\ra|.
\ee
Using the canonical commutation rules $[a_i,a_j^\dag]= \delta_{i,j} I$
and $[a_i,a_j]=0$ 
one gets 
\[
[A,a_j^\dag]=\lambda_j a_j^\dag \quad \mbox{and} \quad  [A,a_i]=-\lambda_i a_i.
\]
The chain rule for commutators then gives
\[
[A,a_j^\dag a_k^\dag a_i] = (\lambda_j+\lambda_k-\lambda_i)a_j^\dag a_k^\dag a_i.
\]
Thus
\[
[A,C^\uparrow] = (q/2)^{1/2}\sum_{i,j,k=1}^N
c_{ijk}
   \sqrt{\frac{\lambda_i}{\lambda_j\lambda_k}}(\lambda_j+\lambda_k-\lambda_i)
a_j^\dag a_k^\dag a_i.
\]
Consider a tensor
\[
d_{ijk} =  (q/2)^{1/2} c_{ijk}  \frac1{\lambda_j \lambda_k}(\lambda_j+\lambda_k-\lambda_i).
\]
Then $[A,C^\uparrow]=\sum_{i,j,k=1}^N d_{ijk} (\lambda_i \lambda_j \lambda_k)^{1/2} a_j^\dag a_k^\dag a_i$ coincides with the operator $D$
defined in Lemma~\ref{lemma:master}.
 Since the tensor $d$ has at most $s$
 drift channels
(note that the factor $\lambda_j+\lambda_k-\lambda_i$ can be zero, which can potentially 
eliminate some drift channels),
Lemma~\ref{lemma:master} gives
\be
\label{comm_eq2}
|\la \psi| [A,C^\uparrow]|\psi\ra| \le J_d \sqrt{s}  \sqrt{\la \psi|A|\psi\ra\,\la \psi|A^2|\psi\ra}.
\ee
Combining the definition of $d_{ijk}$ and the second part
of Proposition~\ref{prop:coeff_upper_bound} gives
\[
|d_{ijk}|\le (2q)^{1/2} J_{ijk} \lambda_1^{-1}.
\]
Repeating the same arguments as in the proof of Lemma~\ref{lemma:AC1} gives
\[
\sum_{j,k=1}^N |d_{ijk}|^2 \le 2q\lambda_1^{-2} \sum_{j,k=1}^N J_{ijk}^2 \le 6q \lambda_1^{-2} J^2.
\]
Thus $J_d \le \sqrt{6q} \lambda_1^{-1} J$. Combining Eqs.~(\ref{comm_eq1},\ref{comm_eq2})
gives
\[
|\la \psi|[A,C]|\psi\ra| \le 2\sqrt{6qs} J \lambda_1^{-1}  \sqrt{\la \psi|A|\psi\ra\,\la \psi|A^2|\psi\ra}
\]
which is the claimed bound.

Consider now the case $p\ge 2$.
The triangle inequality gives
\be
\label{high_order_eq1}
|\la \psi|[A^p,C]|\psi\ra| \le
|\la \psi|[A^p,C^\uparrow]|\psi\ra| + |\la \psi|[A^p,C^\downarrow]|\psi\ra| =
2|\la \psi|[A^p,C^\uparrow]|\psi\ra|. 
\ee
Let $\lambda_\mm =\la \mm|A|\mm\ra = \sum_{i=1}^N \lambda_i m_i$.
For any triple of indices $i,j,k$  let $\calE_{ijk}\subseteq \calJ\times \calJ$ be the set of pairs
$(\nn,\mm)$ such that $\nn=\mm-e^i+e^j+e^k$, $\nn\in \calJ$, and $\mm\in \calJ$.
Then
$\la \nn|a_j^\dag a_k^\dag a_i|\mm\ra=0$ unless $(\nn,\mm)\in\calE_{ijk}$.
Furthermore, if $(\nn,\mm)\in\calE_{ijk}$ then
$\lambda_\nn-\lambda_\mm=\lambda_j+\lambda_k-\lambda_i$.
Hence
\[
|\lambda_\nn-\lambda_\mm|\le 2\lambda_N \qquad \mbox{for all $(\nn,\mm)\in \calE_{ijk}$}.
\]
Define a parameter
\[
\kappa=1+\frac{2\lambda_N}{\lambda_1}.
\]
Then
\[
\max\!\left(\frac{\lambda_\nn}{\lambda_\mm},\frac{\lambda_\mm}{\lambda_\nn}\right)\le \kappa.
\]
Using the factorization
\[
\lambda_\nn^p-\lambda_\mm^p =
(\lambda_\nn-\lambda_\mm)\sum_{t=0}^{p-1}\lambda_\nn^{\,p-1-t}\lambda_\mm^{\,t},
\]
one gets
\be
\label{high_order_eq3}
|\lambda_\nn^p-\lambda_\mm^p|
\le
p\,\kappa^{(p-1)/2}\,
|\lambda_\nn-\lambda_\mm|\,
(\lambda_\nn\lambda_\mm)^{(p-1)/2}.
\ee
Thus
\[
|\la \psi|[A^p,C^\uparrow]|\psi\ra|
\le
(q/2)^{1/2}\sum_{i,j,k=1}^N \sum_{(\nn,\mm)\in\calE_{ijk}}
|c_{ijk}|\sqrt{\frac{\lambda_i}{\lambda_j\lambda_k}}
\,|\lambda_\nn^p-\lambda_\mm^p|\,
|\la \nn|a_j^\dag a_k^\dag a_i|\mm\ra|\,
|\psi_\nn|\,|\psi_\mm|.
\]
The second part of Proposition~\ref{prop:coeff_upper_bound} gives 
\[
|c_{ijk}|\frac{|\lambda_j+\lambda_k-\lambda_i|}{\lambda_j\lambda_k}
\le \frac{2J_{ijk}}{\lambda_1}.
\]
Combining this and Eq.~(\ref{high_order_eq3}) gives
\[
|\la \psi|[A^p,C^\uparrow]|\psi\ra|
\le
p\sqrt{2q}\,\lambda_1^{-1}\kappa^{(p-1)/2}
\sum_{i,j,k=1}^N \sum_{(\nn,\mm)\in\calE_{ijk}} x_{ijk\mm}y_{ijk\nn},
\]
where
\[
x_{ijk\mm}= J_{ijk} (\lambda_i m_i)^{1/2}\,\lambda_\mm^{(p-1)/2}\,|\psi_\mm|
\]
and
\[
y_{ijk\nn}= \chi_{ijk}(\lambda_j n_j\lambda_k n_k)^{1/2}\,\lambda_\nn^{(p-1)/2}\,|\psi_\nn|.
\]
Here $\chi_{ijk}\in \{0,1\}$ is the indicator function of $c$ such that $c_{ijk}=c_{ijk}\chi_{ijk}$ for all indices.
Applying Cauchy--Schwarz gives
\be
\label{high_order_eq4}
|\la \psi|[A^p,C^\uparrow]|\psi\ra|
\le
p\sqrt{2q}\,\lambda_1^{-1}\kappa^{(p-1)/2}
\sqrt{\sum_{i,j,k=1}^N\sum_{(\nn,\mm)\in\calE_{ijk}} x_{ijk\mm}^2}
\sqrt{\sum_{i,j,k=1}^N\sum_{(\nn,\mm)\in\calE_{ijk}} y_{ijk\nn}^2}.
\ee

Let us upper bound the sum of $x_{ijk\mm}^2$. From 
\[
\sum_{j,k=1}^N J_{ijk}^2
\le
\sum_{j,k=1}^N  |c_{ijk}|^2+|c_{jki}|^2+|c_{kij}|^2 \le 3J^2,
\]
one gets
\be
\label{high_order_eq5}
\sum_{i,j,k=1}^N\sum_{(\nn,\mm)\in\calE_{ijk}} x_{ijk\mm}^2
\le
3J^2 \sum_{\mm\in\calJ} \lambda_\mm^{p-1}
\sum_{i=1}^N \lambda_i m_i |\psi_\mm|^2
=
3J^2 \la \psi|A^p|\psi\ra.
\ee

Next let us upper bound the sum of $y_{ijk\mm}^2$. 
Using the  assumption
$\sum_{i=1}^N \chi_{ijk}\le s$ 
for every fixed pair $j,k$ one gets
\be
\label{high_order_eq6}
\sum_{i,j,k=1}^N\sum_{(\nn,\mm)\in\calE_{ijk}} y_{ijk\nn}^2
\le
s\sum_{\nn\in\calJ}\lambda_\nn^{p-1}
\sum_{j,k=1}^N \lambda_j n_j \lambda_k n_k |\psi_\nn|^2
=
s\la \psi|A^{p+1}|\psi\ra.
\ee
Substituting Eqs.~(\ref{high_order_eq5},\ref{high_order_eq6}) into Eq.~(\ref{high_order_eq4}) gives
\[
|\la \psi|[A^p,C^\uparrow]|\psi\ra|
\le
p\sqrt{6qs}\,J\,\lambda_1^{-1}\kappa^{(p-1)/2}
\sqrt{\la \psi|A^p|\psi\ra \la \psi|A^{p+1}|\psi\ra}.
\]
Finally, combining this with Eq.~(\ref{high_order_eq1})  proves the lemma.

\subsection{Proof of Lemma~\ref{lemma:loose}}

Recall that  $C$ is a linear combination of finitely many operators $a_j^\dag a_k^\dag a_i - a_i^\dag a_k a_j$.  We have
\[
\|A^p a_j^\dag a_k^\dag a_i \psi\|^2  = \la \psi| X_{ijk}|\psi\ra, \qquad X_{ijk}\equiv a_i^\dag a_k a_j A^{2p} a_j^\dag a_k^\dag a_i.
\]
Note that $X_{ijk}$ is diagonal in the Fock basis $|\mm\ra$. From Eq.~(\ref{ai_action}) one gets
From Eq.~(\ref{ai_action}) one gets
\begin{align*}
\la \mm| X_{ijk}|\mm\ra & = 2m_i (m_j+1)(m_k+1)  \la \mm+e^j+e^k-e^i|A^{2p} |\mm +e^j+e^k-e^i\ra \\ 
&\le 2^{2p+1} m_i (m_j+1)(m_k+1)
(\la \mm |A^{2p}|\mm\ra + (2\lambda_N)^{2p}).
\end{align*}
Here the inequality follows from $(y+z)^{2p}\le (2y)^{2p} + (2z)^{2p}$ which holds for all $y,z\ge 0$.
Using a bound $m_i \le \lambda_1^{-1} \la \mm|A|\mm\ra$ and the analogous bound for $m_j,m_k$ one gets
\[
X_{ijk} \le 2^{2p+1}\lambda_1^{-3}  A(A + \lambda_1I) (A+\lambda_1I) ( A^{2p} +  (2\lambda_N)^{2p} I).
\]
For any integer $1\le \ell \le 2p+3$ one has   $A^{\ell} \le \lambda_1^{-2p-3+\ell} A^{2p+3}$.
Thus
$X_{ijk}\le \omega_p' A^{2p+3}$, where $\omega_p'<\infty$ depends only on $p$, $\lambda_1$, and $\lambda_N$. 
Hence 
$\|A^p a_j^\dag a_k^\dag a_i \psi\|^2 \le \omega_p' \la \psi|A^{2p+3}|\psi\ra$.
Similar arguments show that  $\|A^p a_i^\dag a_k a_j\psi\|^2$ and $\|A^p  a_j^\dag a_i\psi\|^2$
are upper bounded by $\omega_p' \la \psi|A^{2p+3}|\psi\ra$.
We can now bound $\| A^p C\psi\|$ using the triangle inequality arriving at the claimed bound.

\section[Regularization]
{Regularization\protect\footnote{This section is largely based on the preprint~\cite{bravyi2025quantum}, but also includes new material, in particular Theorem~\ref{thm:regul-kp}.}}
\label{sec:regul-sz}

In this section we prove that the Kolmogorov equation has a well-defined solution. Our proof is constructive in the sense that it provides 
a regularization procedure approximating the exact solution 
along with  rigorous bounds on the approximation error. The regularized Kolmogorov equation can be simulated on a quantum computer using Hamiltonian simulation methods. Define a set of multi-indices
\[
\calJ = \{\mm =(m_1,\ldots,m_N) \in \ZZ_{\ge 0}^N \, : \, m_1+\ldots+m_N\ge 1\}.
\]
By definition, any vector in the Fock space $\calF_0$ is a linear combination of basis states $|\mm\ra$ with $\mm \in \calJ$.
Given a real cutoff parameter $k>0$,  let
\[
\calJ_k = \{\mm \in \calJ \, : \, \sum_{i=1}^N \lambda_i m_i  \le k\}.
\]
In other words, $\mm \in \calJ_k$ iff $\la \mm|A|\mm\ra\le k$.
Note that  $\calJ_k$ is a finite set (possibly empty) for any $k>0$. Indeed, $\mm \in \calJ_k$ implies $|\mm|\equiv \sum_{i=1}^N m_i \le \lceil k/\lambda_1\rceil$.
It is well known that the number of partitions of an integer $L$  into a sum of $N$ non-negative integers is ${L+N-1\choose L}$.
Substituting $L=0,1,\ldots, \lceil k/\lambda_1 \rceil$ gives $|\calJ_k|\le O(N^{k/\lambda_1})$. Define a projector
\[
\Pi_k = \sum_{\mm \in \calJ_k} |\mm\ra\la \mm|
\]
and a subspace
\[
\calH_k = \mathrm{span}(|\mm\ra\, : \, \mm \in \calJ_k) \subseteq \calF_0.
\]
We shall refer to $\calH_k$ as a {\em low-dissipation subspace}.
Define the regularized Kolmogorov equation as 
\be
\label{KE_k}
\frac{d}{dt} |\psi_k(t)\ra = (-A + \Pi_k C\Pi_k ) |\psi_k(t)\ra
\ee
with the initial condition $|\psi_k(0)\ra = |\psi(0)\ra$, where $|\psi(0)\ra\in \calF_0$ is the initial condition for the original unregularized equation. 
We emphasize that all regularized Kolmogorov equations are defined on the same Hilbert space, namely the infinite-dimensional Fock space $\calF_0$.
However the operator $C$ is ``turned on" only on the low-dissipation subspace $\calH_k$, which is finite-dimensional.
The orthogonal complement of $\calH_k$ evolves only under the diagonal operator $-A$ 
so that $\la \mm |\psi_k(t)\ra = e^{-\lambda_\mm t } \la \mm|\psi(0)\ra$ for all $\mm \notin \calJ_k$, where $\lambda_\mm = \sum_{i=1}^N \lambda_i m_i$.
  Using these observations
one can easily check that Eq.~(\ref{KE_k}) has an infinitely differentiable solution $|\psi_k(t)\ra \in \calF_0$.
The main result of this section is the regularization theorem providing an upper bound on the error up to which the regularized solution $\psi_k(t)$ approximates the original unregularized 
solution $\psi(t)$.
We prove two versions of the theorem: a weak and a strong one. The former makes the minimal assumptions about the initial condition $\psi(0)$ and the SDE. The resulting bound on the regularization error scales as $1/k^{1/2}$. The latter proves an upper bound $1/k^p$ for any constant $p$ but it requires stronger assumptions about $\psi(0)$ and dissipation rates in the SDE. 

\subsection{Weak regularization theorem}

\begin{theorem}
\label{thm:regul}
Consider any vector $\psi_0\in \calF_0$ such that 
 $\la \psi_0|A^8|\psi_0\ra<\infty$. There exists a differentiable function $\psi\, : \, \RR_{\ge 0} \to \calF_0$ 
 with $\psi(0)=\psi_0$  such that for any evolution time $t\ge 0$ 
 one has $\la \psi(t)|A^3|\psi(t)\ra<\infty$ and $\psi(t)$ solves the Kolmogorov equation\footnote{More formally,
 $\| \psi(t+\delta) - \psi(t) - \delta (-A+C)\psi(t)\| = O(\delta^2)$ in the limit $\delta\to 0$, where the norm is the usual $2$-norm of the Fock
 space $\calF_0$.}
 \[
 \frac{d}{dt}\psi(t)=(-A+C)\psi(t).
 \]
For any regularization cutoff $k>0$ and for all $t\ge 0$
\be
\label{regul_bound1}
\| \psi_k(t) - \psi(t)\|^2 \le  \frac{12 \gamma}{ k^{1/2}} \left(\la \psi_0|A|\psi_0\ra  + \kappa_1 \la\psi_0|\psi_0\ra\right)
\ee
with
\[
\gamma = \frac{3J \sqrt{qs}}{\lambda_1^2} \quad \mbox{and} \quad \kappa_1=\frac{6J^2 qs}{\lambda_1^2}.
\]
Furthermore, 
\[
\| (I-\Pi_k)\psi(t)\|^2 \le \frac{\la \psi_0|A|\psi_0\ra + \kappa_1 \la \psi_0|\psi_0\ra }{ k}.
\]
% Furthermore, let $\lambda_N$ be the largest dissipation rate. Then 
%  %if $\frac{\lambda_N}{\lambda_1}$ is independent of $N$ the following improved estimate holds
% \begin{align}\label{eq:kp-reg-error}
%   \| \Pi_k (\psi_k(t) - \psi(t)) \|^2  \le \frac{\gamma^2}{k^p}\sum_{m=0}^{p+1} \calM_m(0) \kappa_1^{p+1-m} \frac{(p+1)!^2}{m!^2} \left(1+\frac{2\lambda_N}{\lambda_1}\right)^{\frac{p(p+1)-m(m-1)}{2}}
% %\left[\left(\frac{6qs\,J^2}{\lambda_1^2}\right)^{p+1-m} \left(\frac{(p+1)!}{m!}\right)^2 \left(1+\frac{2\lambda_N}{\lambda_1}\right)^{\frac{p(p+1)-m(m-1)}{2}}\right]\calM_m(0)
% \end{align}
\end{theorem}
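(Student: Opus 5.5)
The plan is to work entirely with the regularized solutions $\psi_k(t)$, which exist for every $k$. First I derive a priori moment bounds that are uniform in $k$. Then I show that $\{\psi_k(t)\}$ is Cauchy as $k\to\infty$ with an explicit rate, and define $\psi(t)$ as the limit. The last step checks that this limit solves the unregularized equation.

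\emph{Step 1 (energy estimate).} Write $M_p(t)=\la\psi_k(t)|A^p|\psi_k(t)\ra$. Since $C^\dag=-C$, the operator $\Pi_kC\Pi_k$ is anti-hermitian, so $\frac{d}{dt}M_0=-2M_1$. Since $A$ commutes with $\Pi_k$, one has $[A,\Pi_kC\Pi_k]=\Pi_k[A,C]\Pi_k$, and hence
\[
\frac{d}{dt}M_1=-2M_2+\la\Pi_k\psi_k|[A,C]|\Pi_k\psi_k\ra .
\]
Corollary~\ref{corol:commutator_general} with $p=1$, applied to $\Pi_k\psi_k$, bounds the last term by $M_2+\kappa_1M_1$, using $\Pi_kA^p\Pi_k\le A^p$. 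Set $\calE(t)=M_1+\kappa_1M_0$. Then
\[
\frac{d}{dt}\calE\le -M_2-\kappa_1M_1\le 0 .
\]
Consequently $\calE(t)\le\calE(0)=\la\psi_0|A|\psi_0\ra+\kappa_1\la\psi_0|\psi_0\ra$, and $\int_0^\infty M_2(\tau)\,d\tau\le\calE(0)$, uniformly in $k$. Because $(I-\Pi_k)\le A/k$, this already gives the tail bound $\|(I-\Pi_k)\psi_k(t)\|^2\le\calE(0)/k$. The same bound for $\psi(t)$ will follow once the limit exists. Repeating the argument with $[A^p,C]$ for $p=2,3,\dots$ via Corollary~\ref{corol:commutator_general} gives bounds on $\la\psi_k|A^p|\psi_k\ra$ that are uniform in $k$ but $\lambda_N$-dependent; this is where $\la\psi_0|A^8|\psi_0\ra<\infty$ enters. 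These bounds are used only in Step 3.

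\emph{Step 2 (Cauchy estimate).} Take $k<k'$ and set $\delta=\psi_k-\psi_{k'}$. Write the difference of generators applied to the two solutions as $(-A+\Pi_kC\Pi_k)\delta+R\psi_{k'}$, where
\[
R=\Pi_kC\Pi_k-\Pi_{k'}C\Pi_{k'}=-(\Pi_{k'}-\Pi_k)C\Pi_{k'}-\Pi_kC(\Pi_{k'}-\Pi_k).
\]
The anti-hermitian part drops out, so $\frac{d}{dt}\|\delta\|^2\le 2|\la\delta|R\psi_{k'}\ra|$. Each of the two pieces pairs one factor supported outside $\calH_k$ with $C$. I bound each piece with Lemma~\ref{lemma:AC1}, moving $C$ to the other side via $C^\dag=-C$ where needed, so that the factor carrying $A$ is the high-dissipation part, which satisfies $\la(I-\Pi_k)\chi|A|(I-\Pi_k)\chi\ra\le\la\chi|A^2|\chi\ra/k$. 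The factor carrying $A^2$ is bounded using $\la\delta|A^2|\delta\ra\le 2(M_2+M_2')$. This gives
\[
\frac{d}{dt}\|\delta\|^2\lesssim \gamma k^{-1/2}\sqrt{M_2'(M_2+M_2')},
\]
where $M_2'$ denotes the second moment of $\psi_{k'}$. Integrating in time and applying Cauchy--Schwarz with the integrability of $M_2$ from Step~1 yields $\|\delta(t)\|^2\le c\,\gamma k^{-1/2}\calE(0)$ for all $t$. Tracking constants should give $c=12$. Hence $\psi_k(t)$ converges uniformly in $t$, and letting $k'\to\infty$ gives \eqref{regul_bound1}.

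\emph{Step 3 (the limit solves the equation).} This is where I expect the main difficulty, since $A$ and $C$ are unbounded. I would use the higher-moment bounds from Step 1 together with weak lower semicontinuity to conclude $\la\psi(t)|A^p|\psi(t)\ra<\infty$ for $p$ up to about $3$, and uniform bounds on $\|A^{3/2}\psi_k\|$. Lemma~\ref{lemma:loose} with $p=0$ then gives $\|C\psi_k\|\le\omega_0\|A^{3/2}\psi_k\|$ uniformly in $k$, and $\|(\Pi_kC\Pi_k-C)\psi_k\|\to0$ by the tail bounds. Next I write the integral form
\[
\psi_k(t)=\psi_0+\int_0^t(-A+\Pi_kC\Pi_k)\psi_k
\]
and pass to the limit using dominated convergence, with the uniform higher-moment bounds controlling $A\psi_k$ and $C\psi_k$. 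This shows $\psi(t)=\psi_0+\int_0^t(-A+C)\psi$. The $O(\delta^2)$ differentiability then follows from Lipschitz continuity of $(-A+C)\psi(\tau)$ in $\tau$. That continuity is obtained by applying Lemma~\ref{lemma:loose} with larger $p$ and using the finiteness of higher moments, which is why eight powers of $A$ are assumed on $\psi_0$.
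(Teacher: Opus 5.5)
Your proof is correct. Steps 1--2 follow the paper's skeleton: $k$-uniform moment bounds, then a Cauchy estimate obtained by isolating the parts of $C$ that touch the range of $\Pi_{k'}-\Pi_k$ and applying Lemma~\ref{lemma:AC1} with the $A$-factor on the high-dissipation side. You add two tidy improvements:
\begin{itemize}
\item The Lyapunov functional $M_1+\kappa_1M_0$ replaces the Jensen-inequality argument and the recursive integral bounds of Lemma~\ref{lemma:moment_bounds}. For higher $p$, a combination of $M_p,\dots,M_0$ with products of the $\kappa_j$ as coefficients is likewise nonincreasing.
\item Merging the paper's terms $(\Pi_{k'}-\Pi_k)C\Pi_k$ and $(\Pi_{k'}-\Pi_k)C(\Pi_{k'}-\Pi_k)$ into $(\Pi_{k'}-\Pi_k)C\Pi_{k'}$ leaves two error terms instead of three, so you get the constant $8$ in place of $12$.
\end{itemize}

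The real divergence is Step~3. The paper upgrades convergence by proving Cauchy bounds for $h_p=\la e|A^p|e\ra$ with $p\ge1$ via Gronwall, with constants $e^{2\kappa_pt}$, applied at $p=3$. Since $\frac{d}{dt}h_p$ involves $f_{2p+2}$, this needs moments up to $8$; that is the actual source of the $A^8$ hypothesis. The paper then passes to the limit in a $k$-uniform $O(\delta^2)$ Taylor bound for $\psi_k$.

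You use only norm convergence plus $k$-uniform higher moments. For this you must say why the integrand converges: uniform bounds on $\|A\psi_k\|$ and $\|C\psi_k\|$ alone do not give $(-A+C)\psi_k(\tau)\to(-A+C)\psi(\tau)$. The missing line is $\|A^{3/2}x\|^2=\la x|A^3x\ra\le\|x\|\,\|A^3x\|$ for $x=\psi_k(\tau)-\psi(\tau)$. Combine this with $\sup_{k,\tau}\la\psi_k(\tau)|A^6|\psi_k(\tau)\ra<\infty$ and lower semicontinuity for $\psi$. Together with Lemma~\ref{lemma:loose}, this yields convergence of the integrand uniformly in $\tau$, so dominated convergence is not even needed.

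Your route therefore skips the $h_p$ estimates and gives time-uniform convergence in the $A^{3/2}$-norm. It needs only $\la\psi_0|A^6|\psi_0\ra<\infty$, so your remark about why eight powers are needed does not apply to your argument. The paper's route instead gives explicit $k^{-1/2}$ rates in the stronger norms, at the cost of exponential-in-$t$ constants and two extra moments.
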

{\em Comment:}
  Note that the bound $\la \psi(t)|A^3|\psi(t)\ra<\infty$ implies $\| C\psi(t)\|<\infty$ due to Lemma~\ref{lemma:loose} and thus
$(-A+C)\psi(t)\in \calF_0$, that is, $\psi(t)$ belongs to the domain of the operator $-A+C$.

In the rest of this section we prove the theorem.
For a fixed cutoff parameter $k>0$ define moment functions
\[
f_p(t) =\la \psi_k(t)|A^p|\psi_k(t)\ra,
\]
where $p$ is a non-negative integer. 
We shall need uniform (independent of $k$) upper bounds on $f_p(t)$ 
established
in the following lemma.
\begin{lemma}[\bf Moment bounds]
\label{lemma:moment_bounds}
For any integer $p\ge 0$ and evolution time $t\ge 0$ one has 
\be
\label{f_moment_p}
f_p(t) \le  \max{(f_p(0), (\kappa_p)^p f_0(0))},
\ee
where $\kappa_0\equiv 1$ and $\kappa_p$ with $p\ge 1$ are the coefficients defined in Corollary~\ref{corol:commutator_general}.
Furthermore, 
\be
\label{f_moment_p_integral}
\int_0^t d\tau\, f_p(t) \le f_{p-1}(0) + \kappa_{p-1} f_{p-2}(0) + \kappa_{p-2} \kappa_{p-1} f_{p-3}(0) + \ldots + (\kappa_1 \kappa_2 \cdots \kappa_{p-1}) f_0(0)
\ee
for $p\ge 1$.
\end{lemma}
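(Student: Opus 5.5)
The plan is to differentiate the moment functions $f_p(t)=\la\psi_k(t)|A^p|\psi_k(t)\ra$ along the regularized flow Eq.~(\ref{KE_k}) and reduce everything to Corollary~\ref{corol:commutator_general}, using that $A$ commutes with $\Pi_k$.

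\textbf{Setup and regularity.} First I would note that the generator $-A+\Pi_k C\Pi_k$ leaves both $\calH_k$ and its orthogonal complement in $\calF_0$ invariant. On the complement the evolution is the explicit diagonal decay $e^{-\lambda_\mm t}$. Hence $f_p=f_p^{\rm in}+f_p^{\rm out}$, where $f_p^{\rm out}(t)=\sum_{\mm\notin\calJ_k}\lambda_\mm^p e^{-2\lambda_\mm t}|\la\mm|\psi_0\ra|^2$. This piece is manifestly nonincreasing and satisfies $\frac{d}{dt}f_p^{\rm out}=-2f_{p+1}^{\rm out}$ wherever the latter is finite. The piece $f_p^{\rm in}$ is a smooth function on a finite-dimensional space. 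If $f_p(0)=\infty$ there is nothing to prove, so I assume all relevant moments at $t=0$ are finite; if needed I would argue for $t>0$, where $f^{\rm out}_{p+1}$ is finite, and pass to $t\to0$ by continuity.

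\textbf{Differential inequality.} Using $C_k^\dag=-C_k$ for $C_k=\Pi_kC\Pi_k$ (Lemma~\ref{lemma:anti_hermitian}) and $[A,\Pi_k]=0$, I expect
\[
f_p'(t)=-2f_{p+1}(t)+\la\psi_k|[A^p,C_k]|\psi_k\ra,\qquad [A^p,C_k]=\Pi_k[A^p,C]\Pi_k .
\]
Applying Corollary~\ref{corol:commutator_general} to $\Pi_k\psi_k\in\calF_0$ gives
\[
|\la\psi_k|\Pi_k[A^p,C]\Pi_k|\psi_k\ra|\le \la\psi_k|\Pi_kA^{p+1}\Pi_k|\psi_k\ra+\kappa_p\la\psi_k|\Pi_kA^p\Pi_k|\psi_k\ra\le f_{p+1}+\kappa_pf_p .
\]
Hence $f_p'\le -f_{p+1}+\kappa_p f_p$ for $p\ge1$, while $f_0'=-2f_1\le0$, so $f_0(t)\le f_0(0)$.

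\textbf{Pointwise bound.} The spectral (H\"older/Jensen) moment inequality $f_p^{(p+1)/p}\le f_{p+1}f_0^{1/p}$ will be used; it holds since $A\ge0$. Combined with $f_0(t)\le f_0(0)$, this gives
\[
f_p'\le f_p\left(\kappa_p-f_p^{1/p}f_0(0)^{-1/p}\right).
\]
The right-hand side is negative whenever $f_p>\kappa_p^pf_0(0)$. A standard barrier argument then shows that $f_p$ can never rise above $\max(f_p(0),\kappa_p^pf_0(0))$, which is Eq.~(\ref{f_moment_p}).

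\textbf{Integral bound.} For Eq.~(\ref{f_moment_p_integral}) I would integrate $f_p\le -f_{p-1}'+\kappa_{p-1}f_{p-1}$ and drop $f_{p-1}(t)\ge0$. This gives
\[
\int_0^t f_p\le f_{p-1}(0)+\kappa_{p-1}\int_0^t f_{p-1},
\]
and I would induct down to the base case $\int_0^t f_1\le f_0(0)/2$, which follows from $f_0'=-2f_1$.

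\textbf{Main obstacle.} I expect the main obstacle to be the rigor of differentiating $f_p$ when only finitely many moments of $\psi_0$ are finite. The decoupling into the $\calH_k$ part and its complement is what I would rely on to handle it.
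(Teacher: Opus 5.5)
Your proposal is correct and follows the paper's proof essentially step for step. It uses the same derivative identity $f_p'=-2f_{p+1}+\la\psi_k|\Pi_k[A^p,C]\Pi_k|\psi_k\ra$, the commutator bound of Corollary~\ref{corol:commutator_general} applied to $\Pi_k\psi_k$, the same H\"older/Jensen moment inequality combined with $f_0(t)\le f_0(0)$ for the barrier argument, and the same downward recursion ending in $\int_0^t f_1\le f_0(0)/2$. The only addition is that you split the state into its $\calH_k$ part and its explicitly decaying complement to justify differentiating $f_p$; the paper takes that differentiability for granted.
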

\begin{proof}
From the equation of motion Eq.~(\ref{KE_k}) one gets
\[
\frac{d}{dt} f_p(t) = -2f_{p+1}(t) + \la \psi_k(t)| [A^p, \Pi_k C\Pi_k  ] |\psi_k(t)\ra = -2f_{p+1}(t) +\la \psi_k(t)|\Pi_k [A^p,C] \Pi_k  |\psi_k(t)\ra.
\]
Here we used the fact that $A$ is hermitian while $C$ is anti-hermitian. 
Also we used the identity $A\Pi_k=\Pi_k A$. 
Suppose $p=0$. Then the term with the commutator disappears, that is, $(d/dt)f_0(t) = -2 f_1(t)$.
Since $f_1(t)\ge 0$, this implies $f_0(t)\le f_0(0)$, proving Eq.~(\ref{f_moment_p}) with $p=0$.
Integrating $(d/dt)f_0(t) = -2 f_1(t)$ gives
\[
\int_0^t f_1(t) = \frac12( f_0(0)-f_0(t)) \le \frac12 f_0(0) \le f_0(0)
\]
proving Eq.~(\ref{f_moment_p_integral}) with $p=1$.
Suppose $p \ge 1$.
Applying the commutator bound of Corollary~\ref{corol:commutator_bound_order1} and Lemma~\ref{lemma:AC2} with $\psi = \Pi_k \psi_k(t)$ one gets
\[
|\la \psi_k(t)|\Pi_k [A^p,C] \Pi_k  |\psi_k(t)\ra| \le \la \psi_k(t)| A^{p+1} + \kappa_p A^p |\psi_k(t)\ra \le f_{p+1}(t) + \kappa_p f_p(t).
\]
Hence
\be
\label{f_p_derivative_eq1}
\frac{d}{dt} f_p(t) \le  -f_{p+1}(t)  + \kappa_p f_p(t).
\ee
Jensen's inequality gives
\[
f_{p+1}(t) \ge (f_0(t))^{-1/p} (f_p(t))^{(p+1)/p} \ge (f_0(0))^{-1/p} (f_p(t))^{(p+1)/p},
\]
where the  second inequality follows from $f_0(t)\le f_0(0)$.
Thus
\[
\frac{d}{dt} f_p(t) \le -  (f_0(0))^{-1/p} (f_p(t))^{(p+1)/p} + \kappa_p f_p(t).
\]
The derivative of $f_p(t)$ is  negative whenever $f_p(t)> (\kappa_p)^p f_0(0)$.
Hence, if  $f_p(0)\le (\kappa_p)^p f_0(0)$ then $f_p(t)\le (\kappa_p)^p f_0(0)$ for all $t\ge 0$. Otherwise, if $f_p(0)>(\kappa_p)^p f_0(0)$, then
$f_p(t)$ is monotone decreasing until the first time moment $t$ such that  $f_p(t)\le (\kappa_p)^p f_0(0)$ and this bound holds for all subsequent times.
This proves Eq.~(\ref{f_moment_p}).
From Eq.~(\ref{f_p_derivative_eq1}) one gets
\[
\int_0^t d\tau\, f_{p+1}(\tau) \le f_p(0) +  \kappa_p \int_0^t d\tau \, f_p(\tau).
\]
Applying this recursively proves Eq.~(\ref{f_moment_p_integral}).
\end{proof}
As a corollary of Lemma~\ref{lemma:moment_bounds} one infers that solutions of the regularized bKE  have most of their mass
on the low-dissipation subspace, as formally stated below. 
\begin{corol}[\bf Projection error]
\label{corol:projection_error}
For any cutoff parameters $k,\ell >0$ and evolution time $t\ge 0$ one has
\be
\label{projection_error}
\| (I-\Pi_k)\psi_\ell(t)\|^2 \le \frac{\la \psi(0)|A|\psi(0)\ra + \kappa_1 \la \psi(0)|\psi(0)\ra }{ k}.
\ee
%Here $\kappa_1$ is the coefficient defined in Section~\ref{sec:ABCstructural}.
\end{corol}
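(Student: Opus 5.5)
The plan is a Markov-type inequality combined with the first-moment bound of Lemma~\ref{lemma:moment_bounds}, applied to the regularized solution $\psi_\ell(t)$ with cutoff $\ell$.

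First, $I-\Pi_k$ projects onto Fock basis vectors $|\mm\ra$ with $\lambda_\mm=\la \mm|A|\mm\ra>k$. Since $A$ is diagonal in the Fock basis, this gives the operator inequality $I-\Pi_k\le k^{-1}A$ on $\calF_0$. Hence
\[
\|(I-\Pi_k)\psi_\ell(t)\|^2=\la \psi_\ell(t)|(I-\Pi_k)|\psi_\ell(t)\ra\le k^{-1}\la\psi_\ell(t)|A|\psi_\ell(t)\ra = k^{-1} f_1(t),
\]
where $f_1$ is the moment function defined with cutoff $\ell$.

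Second, I bound $f_1(t)$ uniformly in $t$ and $\ell$. Lemma~\ref{lemma:moment_bounds} with $p=1$ gives $f_1(t)\le\max(f_1(0),\kappa_1 f_0(0))$. Both entries of the max are nonnegative, so the max is at most their sum $f_1(0)+\kappa_1 f_0(0)$. Since $\psi_\ell(0)=\psi(0)$, this equals $\la\psi(0)|A|\psi(0)\ra+\kappa_1\la\psi(0)|\psi(0)\ra$. Combining the two steps yields Eq.~(\ref{projection_error}).

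The main point to check is that the moment bound of Lemma~\ref{lemma:moment_bounds} really holds for the cutoff-$\ell$ solution. It does: that lemma was proved for an arbitrary cutoff and never used the relation between the cutoff and the projector in the statement. One technical item is that $f_1(0)$ must be finite. If it is infinite, the right-hand side of Eq.~(\ref{projection_error}) is infinite and the claim is trivial, so finiteness only matters for a nontrivial bound. In the theorem it is guaranteed by $\la\psi_0|A^8|\psi_0\ra<\infty$ together with $A\ge\lambda_1 I$ on $\calF_0$.
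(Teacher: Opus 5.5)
Your proposal is correct and follows the paper's proof exactly. The operator inequality $I-\Pi_k\le k^{-1}A$ reduces the claim to bounding $f_1(t)$ for the cutoff-$\ell$ solution, and Lemma~\ref{lemma:moment_bounds} with $p=1$ then gives $f_1(t)\le\max(f_1(0),\kappa_1 f_0(0))\le f_1(0)+\kappa_1 f_0(0)$; your side remarks on using the lemma with cutoff $\ell$ and on finiteness of $f_1(0)$ are accurate.
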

\begin{proof}
An operator inequality $I-\Pi_k \le k^{-1} A$ gives 
\be
\la \psi_\ell(t) | I- \Pi_k|\psi_\ell(t)\ra \le   k^{-1} \la \psi_\ell(t) |A|\psi_\ell(t)\ra \le  k^{-1} \cdot \max{(f_1(0), \kappa_1f_0(0))}.
\ee
Here the second inequality follows from Lemma~\ref{lemma:moment_bounds} with $p=1$.
This implies Eq.~(\ref{projection_error}).
\end{proof}
Fix a pair of cutoff parameters  $0<k<\ell$ and define  an ``error state"
\be
|e(t)\ra = |\psi_k(t)\ra - |\psi_\ell(t)\ra.
\ee
Our strategy is to prove that the norm of $e(t)$ goes to zero in the limit $k,\ell\to \infty$.
For technical reasons, we shall also need an upper bound on the norm of $A^{p/2}e(t)$ for integers $p=O(1)$.
To this end define moments
\be
h_p(t) = \la e(t)|A^p|e(t)\ra.
\ee
We shall be primarily concerned with $h_0(t)$ which controls the rate of convergence of solutions $\psi_k(t)$ in the limit $k\to \infty$.
Higher order moments $h_p(t)$ with $p\ge1$ are only needed to prove that 
$\psi_k(t)$ ``converges to the right thing" in the sense that the limiting point of the sequence $\psi_k(t)$ solves the Kolmogorov equation (the unregularized one).
 In particular, the runtime of our quantum algorithm only depends on the upper bound on $h_0(t)$.
\begin{lemma}
\label{lemma:regularization_p}
For  any  cutoff parameters $0< k<\ell$ and evolution time $t\ge 0$ one has 
\be
\label{h0_upper_bound}
h_0(t)\le 
\frac{12 \gamma}{k^{1/2}} (\la \psi(0)|A|\psi(0)\ra  + \kappa_1 \la\psi(0)|\psi(0)\ra).
\ee
For any integer $p\ge 1$ 
\be
\label{hp_upper_bound}
h_p(t) \le \frac{e^{2\kappa_p t}}{k^{1/2}} poly(s,J,\lambda_1^{-1},\lambda_N)\max_{0\le p'\le 2p+2} \la \psi(0)|A^{p'}|\psi(0)\ra.
\ee
Here  $\gamma$ and $\kappa_p$ are the coefficients defined in Lemma~\ref{lemma:AC1} and Corollary~\ref{corol:commutator_general}.
\end{lemma}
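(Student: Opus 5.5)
The plan is to differentiate $h_0(t)$ along the two regularized flows and control the resulting cross term with Lemma~\ref{lemma:AC1} and the moment bounds of Lemma~\ref{lemma:moment_bounds}. Since $\psi_k,\psi_\ell$ solve Eq.~(\ref{KE_k}) with cutoffs $k,\ell$, the error state obeys
\[
\frac{d}{dt}e(t) = (-A+\Pi_\ell C\Pi_\ell)e(t) + (\Pi_k C\Pi_k-\Pi_\ell C\Pi_\ell)\psi_k(t).
\]
Because $\Pi_\ell C\Pi_\ell$ is anti-hermitian and $e(0)=0$, differentiating $h_0=\langle e|e\rangle$ gives
\[
\frac{d}{dt}h_0 = -2h_1 + 2\,\mathrm{Re}\,\langle e|(\Pi_kC\Pi_k-\Pi_\ell C\Pi_\ell)|\psi_k\rangle .
\]
Since $k<\ell$, we have $\Pi_k\Pi_\ell=\Pi_k$. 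Writing $\Pi_\ell C\Pi_\ell-\Pi_kC\Pi_k = \Pi_\ell C(\Pi_\ell-\Pi_k)+(\Pi_\ell-\Pi_k)C\Pi_k$ and using $(\Pi_\ell-\Pi_k)\psi_k = (\Pi_\ell-\Pi_k)(I-\Pi_k)\psi_k$, the cross term splits into two pieces. Each piece involves $C$ sandwiched between vectors, and one side always carries a factor $(I-\Pi_k)$.

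Next I would bound each piece with Lemma~\ref{lemma:AC1}, putting the $A^2$ weight on whichever vector can absorb it. For the term $\langle e|\Pi_\ell C(\Pi_\ell-\Pi_k)|\psi_k\rangle$, I would bound its absolute value by $\gamma\sqrt{\langle e|A^2|e\rangle\langle\psi_k|(I-\Pi_k)A|\psi_k\rangle}$. This is problematic because it puts $A^2$ on $e$, which the $-2h_1$ damping does not control. So instead I would use the anti-hermiticity $C^\dagger=-C$ to move the weights: $A^2$ goes on the high-energy part $(I-\Pi_k)\psi_k$, and $A$ goes on $e$.

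On the range of $I-\Pi_k$ we have $A\ge k$, so $\langle\phi|A^2|\phi\rangle$ for $\phi=(I-\Pi_k)\psi_k$ is not small by itself. I would therefore aim for the estimate
\[
|\text{cross}|\le \gamma\sqrt{h_1\,\langle\psi_k|(I-\Pi_k)A^2|\psi_k\rangle}
\]
and then use $(I-\Pi_k)A^2\le k^{-1}A^3$ only if a moment bound on $f_3$ were available. That bound depends on $\lambda_N$, which the clean statement (\ref{h0_upper_bound}) does not allow. The better route is a Cauchy--Schwarz split of the form $2\gamma\sqrt{h_1 X}\le h_1+\gamma^2X$, which would give an unwanted $\gamma^2$.

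Matching the stated form $12\gamma k^{-1/2}(f_1(0)+\kappa_1 f_0(0))$ suggests the correct approach. One should not absorb into $h_1$; instead, bound the cross term directly as $\gamma\sqrt{\langle\cdot|A^2|\cdot\rangle\langle\cdot|A|\cdot\rangle}$ with both vectors being $\psi_k$ or $\psi_\ell$ (via $e=\psi_k-\psi_\ell$), one of them projected by $I-\Pi_k$. Then integrate in time with Cauchy--Schwarz:
\[
\int_0^t \sqrt{f_2^{(\ell)}\cdot k^{-1}f_2^{(k)}}\;\le\; k^{-1/2}\int_0^t(f_2^{(k)}+f_2^{(\ell)})/2 .
\]
Here I use $\langle(I-\Pi_k)\psi|A|(I-\Pi_k)\psi\rangle\le k^{-1}\langle\psi|A^2|\psi\rangle$, and bound each time integral by Eq.~(\ref{f_moment_p_integral}) with $p=2$, i.e.\ by $f_1(0)+\kappa_1f_0(0)$. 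Counting the handful of such terms gives the constant $12$. Dropping $-2h_1\le 0$ then yields (\ref{h0_upper_bound}), uniformly in $t$.

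For (\ref{hp_upper_bound}) I would repeat the argument with $\frac{d}{dt}h_p$. Corollary~\ref{corol:commutator_general} handles the $[A^p,\Pi_\ell C\Pi_\ell]$ term as $h_{p+1}+\kappa_p h_p$, and the first is absorbed into $-2h_{p+1}$. The source term is bounded as above using Lemma~\ref{lemma:loose} and the higher moment bounds (\ref{f_moment_p}), which produces moments up to $A^{2p+2}$ and polynomial dependence on $\lambda_N$. Gronwall then gives the factor $e^{2\kappa_p t}$.

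The main obstacle is the bookkeeping of the cross term. The task is to arrange the decomposition so that the $A^2$ weight never falls on $e$ but always on a full solution whose $A^2$ moment is time-integrable via Lemma~\ref{lemma:moment_bounds}, while the factor $I-\Pi_k$ supplies the $k^{-1/2}$.
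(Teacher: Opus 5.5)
Your final route is essentially the paper's proof: apply Lemma~\ref{lemma:AC1} to each source term with the $A$-weight on the vector carrying $I-\Pi_k$ (then use $A(I-\Pi_k)\le k^{-1}A^2$), split $e=\psi_k-\psi_\ell$ by the triangle inequality, and close with Eq.~(\ref{f_moment_p_integral}) for $h_0$, or with Corollary~\ref{corol:commutator_general}, Eq.~(\ref{f_moment_p}) and Gronwall for $h_p$. The differences are cosmetic: you keep $\Pi_\ell C\Pi_\ell$ acting on $e$ with $\psi_k$ in the source, whereas the paper keeps $\Pi_kC\Pi_k$ on $e$ with $\psi_\ell$ in three source terms, and your two-term split in fact gives the constant $8\le 12$.

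One correction: for $p\ge1$ do not invoke Lemma~\ref{lemma:loose}. Its constant $\omega_p$ depends on $N$ and would spoil the $N$-free polynomial in Eq.~(\ref{hp_upper_bound}). Lemma~\ref{lemma:AC1}, used exactly as for $h_0$, already bounds the source by a multiple of $\gamma k^{-1/2}\sqrt{f_{2p+2}f_2}$.
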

\begin{proof}
From the equation of motion Eq.~(\ref{KE_k}) 
and the analogous equation for $\psi_\ell(t)$
one gets
\[
\frac{d}{dt} |e(t)\ra =-A |e(t)\ra  + \Pi_k C\Pi_k |\psi_k(t)\ra - \Pi_\ell C\Pi_\ell |\psi_\ell(t)\ra
\]
and
\[
\frac{d}{dt} \la e(t)| =- \la e(t)|A  - \la \psi_k(t)| \Pi_k C\Pi_k + \la \psi_\ell(t)| \Pi_\ell C\Pi_\ell.
\]
Here we noted that $A$ is hermitian while $C$ is anti-hermitian. 
It follows that
\[
\frac12 \frac{d}{dt} h_p(t) = -h_{p+1}(t) +  \la e(t) | A^p  \Pi_k C\Pi_k |\psi_k(t)\ra - \la e(t) | A^p  \Pi_\ell C\Pi_\ell |\psi_\ell(t)\ra.
\]
Clearly, $k<\ell$ implies $\calJ_k\subseteq \calJ_\ell$. Thus one can write
\[
\Pi_\ell = \Pi_k + \Pi_k^\perp, \qquad \Pi_k^\perp \equiv  \sum_{\mm \in \calJ_\ell\setminus \calJ_k} \; |\mm\ra\la \mm|.
\]
Then
\begin{align*}
\frac12 \frac{d}{dt} h_p(t) & = -h_{p+1}(t)  +  \la e(t) | A^p  \Pi_k C\Pi_k |e(t)\ra
-   \la e(t) | A^p  \Pi_k^\perp  C\Pi_k |\psi_\ell(t)\ra \\
& -  \la e(t) | A^p  \Pi_k  C\Pi_k^\perp |\psi_\ell(t)\ra  -  \la e(t) | A^p  \Pi_k^\perp  C\Pi_k^\perp |\psi_\ell(t)\ra.
\end{align*}
Note that
\[
 \la e(t) | A^p  \Pi_k C\Pi_k |e(t)\ra =\frac12  \la e(t) | \Pi_k  [A^p,C] \Pi_k |e(t)\ra
 \]
for $p\ge 1$. Indeed,  $(1/2)  \Pi_k  [A^p,C] \Pi_k$ is the real (hermitian) part of the operator $A^p  \Pi_k C\Pi_k$.
Applying the commutator bound of Corollary~\ref{corol:commutator_bound_order1} with $\psi=\Pi_k e(t)$ one gets
\[
  \la e(t) | \Pi_k  [A^p,C] \Pi_k |e(t)\ra \le \la e(t)| A^{p+1} + \kappa_p A^p |e(t)\ra.
 \]
 We arrive at 
\begin{align}
\frac12 \frac{d}{dt} h_p(t) &  \le 
\kappa_p h_p(t)+
  |\la e(t) | A^p  \Pi_k^\perp  C\Pi_k |\psi_\ell(t)\ra|\nonumber \\ 
  & + 
|\la e(t) | A^p  \Pi_k  C\Pi_k^\perp |\psi_\ell(t)\ra| 
  +  |\la e(t) | A^p  \Pi_k^\perp  C\Pi_k^\perp |\psi_\ell(t)\ra|. \label{h_p_derivative}
\end{align}
Here the term $\kappa_p h_p(t)$ disappears for $p=0$ since $[A^0,C]=0$.
 We shall bound each term that depends on $C$ 
 using Lemma~\ref{lemma:AC1} and  the moment bounds of Lemma~\ref{lemma:moment_bounds}.
Applying Lemma~\ref{lemma:AC1} with $|\psi\ra = \Pi_k^\perp A^p |e(t)\ra$ and $|\phi\ra = \Pi_k |\psi_\ell(t)\ra$
gives 
\[
|\la e(t) | A^p  \Pi_k^\perp  C\Pi_k |\psi_\ell(t)\ra| \le \gamma \sqrt{\la e(t)| \Pi_k^\perp A^{2p+1} |e(t)\ra
\la \psi_\ell(t)|A^2 |\psi_{\ell}(t)\ra}.
\]
The restriction of $A$ onto the range of $\Pi_k^\perp$ has eigenvalues at least $k$ (and at most $\ell$).
Hence 
\[
\la e(t)| \Pi_k^\perp A^{2p+1} |e(t)\ra \le k^{-1}  \la e(t) | A^{2p+2}|e(t)\ra.
\]
Using the definition $e(t) = \psi_k(t)-\psi_\ell(t)$  and
 triangle inequality one gets
\[
 \sqrt{\la e(t)|  A^{2p+2} |e(t)\ra} \le
 \sqrt{\la \psi_k(t)|  A^{2p+2} |\psi_k(t)\ra} 
+\sqrt{\la \psi_\ell(t)|  A^{2p+2} |\psi_\ell(t)\ra}.
\]
We arrive at 
\[
|\la e(t) | A^p  \Pi_k^\perp  C\Pi_k |\psi_\ell(t)\ra| \le  2\gamma k^{-1/2} \cdot  \sqrt{f_{2p+2}(t) f_2(t)}.
\]
By a slight abuse of notations, here we ignore the dependence of the moment functions $f_{2p+2}(t)$ and $f_2(t)$ on the
cutoff  parameters $k$ and $\ell$
(this is justified since the upper bound of Lemma~\ref{lemma:moment_bounds} is uniform in $k$). 
Applying exactly the same argument to the last term in Eq.~(\ref{h_p_derivative})  gives
\[
 |\la e(t) | A^p  \Pi_k^\perp  C\Pi_k^\perp |\psi_\ell(t)\ra|\le 2\gamma k^{-1/2} \cdot  \sqrt{f_{2p+2}(t) f_2(t)}.
\]
It remains to bound the term $|\la e(t) | A^p  \Pi_k  C\Pi_k^\perp |\psi_\ell(t)\ra|$ 
 in Eq.~(\ref{h_p_derivative}).
Applying Lemma~\ref{lemma:AC1} with $|\phi\ra = \Pi_k  A^p |e(t)\ra$ and $|\psi\ra = \Pi_k^\perp |\psi_\ell(t)\ra$
gives 
\[
|\la e(t) | A^p  \Pi_k  C\Pi_k^\perp |\psi_\ell(t)\ra|\le \gamma \sqrt{ \la e(t)|A^{2p+2} |e(t)\ra \la \psi_\ell(t)| A \Pi_k^\perp |\psi_\ell(t)\ra}.
\]
The same arguments as above give $ \la \psi_\ell(t)| A \Pi_k^\perp |\psi_\ell(t)\ra\le k^{-1} f_2(t)$ and
$ \sqrt{\la e(t)|A^{2p+2} |e(t)\ra} \le 2\sqrt{f_{2p+2}(t)}$. Hence 
\[
|\la e(t) | A^p  \Pi_k  C\Pi_k^\perp |\psi_\ell(t)\ra|\le2\gamma k^{-1/2} \cdot  \sqrt{f_{2p+2}(t) f_2(t)}.
\]
Combining the above bounds and Eq.~(\ref{h_p_derivative}) gives
\be
\label{h_0_derivative1}
\frac{d}{dt} h_0(t) \le 12 \gamma k^{-1/2} f_2(t)
\ee
and
\be
\label{h_p_derivative1}
\frac{d}{dt} h_p(t)\le 2\kappa_p h_p(t) + 12   \gamma k^{-1/2} \sqrt{f_{2p+2}(t) f_2(t)}
\ee
for $p\ge 1$. Consider first the case $p=0$. Integrating Eq.~(\ref{h_0_derivative1}), recalling that $h_0(0)=0$,  and using the second part of Lemma~\ref{lemma:moment_bounds}
gives
\begin{align}
h_0(t) & \le  12 \gamma k^{-1/2} \int_0^t d\tau\,  f_2(\tau)
\le  12 \gamma k^{-1/2} (f_1(0) + \kappa_1 f_0(0)) \nonumber \\
& =  12 \gamma k^{-1/2} (\la \psi(0)|A|\psi(0)\ra  + \kappa_1 \la\psi(0)|\psi(0)\ra). 
\end{align}
Suppose $p\ge 1$. Lemma~\ref{lemma:moment_bounds} implies that
$ \sqrt{f_{2p+2}(t) f_2(t)}\le poly(J,s,\lambda_1^{-1},\lambda_N)\max_{0\le p'\le 2p+2} f_{p'}(0)$ for all $t\ge0$.
Hence Eq.~(\ref{h_p_derivative1}) and Gronwall inequality gives
\be
h_p(t) \le \frac{e^{2\kappa_p t}}{k^{1/2}} poly(s,J,\lambda_1^{-1},\lambda_N)\max_{0\le p'\le 2p+2} f_{p'}(0).
\ee
This is equivalent to the statement of the lemma.
\end{proof}

From Lemma~\ref{lemma:regularization_p} we derive the following. 
\begin{corol}
\label{corol:limit}
Let $p\ge 1$ be an integer such that $\la \psi(0)|A^{p'}|\psi(0)\ra<\infty$ for all $p'\in [0,2p+2]$.
Then for any $t\ge 0$ there exists a vector function  $\psi \, : \, [0,t]\to   \calF_0$ such that 
\be
\label{convergence_bound_with_A}
\lim_{k\to \infty} \,\max_{\tau \in [0,t]} \| A^{p/2} (\psi_k(\tau) - \psi(\tau))\|=0.
\ee
Furthermore, for all $t\ge 0$
\be
\label{regularization_error}
\| \psi_k(t) - \psi(t)\|^2 \le  \frac{12 \gamma}{ k^{1/2}} (\la \psi(0)|A|\psi(0)\ra  + \kappa_1 \la\psi(0)|\psi(0)\ra)
\ee
%Here $\gamma$ and $\kappa_1$ are the coefficients defined in Section~\ref{sec:ABCstructural}.
\end{corol}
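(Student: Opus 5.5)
The plan is to show that, for each fixed $t$, the family $\{\psi_k(\cdot)\}_{k>0}$ is Cauchy as $k\to\infty$ in the norm $\max_{\tau\in[0,t]}\|A^{p/2}\,\cdot\,\|$. We then define $\psi$ as the limit and pass the $h_0$ bound to the limit.

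\textbf{Step 1: Cauchy property.} Fix $0<k<\ell$ and set $e=\psi_k-\psi_\ell$. By Lemma~\ref{lemma:regularization_p}, applied with the given $p$ and with the hypothesis that $\la\psi(0)|A^{p'}|\psi(0)\ra<\infty$ for $p'\le 2p+2$, we get for every $\tau\in[0,t]$
\[
\|A^{p/2}e(\tau)\|^2=h_p(\tau)\le \frac{e^{2\kappa_p t}}{k^{1/2}}\,M,
\]
where $M<\infty$ is the polynomial prefactor times $\max_{p'\le 2p+2}\la\psi(0)|A^{p'}|\psi(0)\ra$. Two points need care. First, the bound is monotone in $\tau$ through $e^{2\kappa_p\tau}\le e^{2\kappa_p t}$. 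Second, it is uniform in $\ell>k$. Taking the maximum over $\tau$ therefore gives $\sup_{\ell>k}\max_{\tau\in[0,t]}\|A^{p/2}(\psi_k(\tau)-\psi_\ell(\tau))\|\to 0$ as $k\to\infty$.

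\textbf{Step 2: defining the limit.} Since $A\ge\lambda_1 I$ on $\calF_0$, we have $\|\phi\|\le\lambda_1^{-p/2}\|A^{p/2}\phi\|$. Hence $\psi_k$ is also Cauchy in $C([0,t],\calF_0)$ with the sup norm, which is complete. Let $\psi$ denote the uniform limit. The vectors $A^{p/2}\psi_k(\tau)$ are likewise uniformly Cauchy, so they converge to some $\chi(\tau)$. Because $A^{p/2}$ is self-adjoint, hence closed, it follows that $\psi(\tau)\in\mathrm{Dom}(A^{p/2})$ and $\chi(\tau)=A^{p/2}\psi(\tau)$. Alternatively one can argue componentwise in the Fock basis, where $A$ is diagonal, and apply Fatou's lemma to the sums $\sum_\mm\lambda_\mm^p|\cdot|^2$. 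Either way this gives \eqref{convergence_bound_with_A}. The limit is independent of the choice of sequence $k\to\infty$, and for different $t$ the limits agree on overlapping intervals, so $\psi$ is well defined on all of $[0,\infty)$.

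\textbf{Step 3: the error bound.} For fixed $k$ and $t$, apply \eqref{h0_upper_bound}:
\[
\|\psi_k(t)-\psi_\ell(t)\|^2\le 12\gamma k^{-1/2}\bigl(\la\psi(0)|A|\psi(0)\ra+\kappa_1\la\psi(0)|\psi(0)\ra\bigr)
\]
for every $\ell>k$. This is the time-independent bound, with no $e^{2\kappa_p t}$ factor. Letting $\ell\to\infty$ and using the norm convergence $\psi_\ell(t)\to\psi(t)$ from Step 2 together with continuity of the norm yields \eqref{regularization_error} for all $t\ge0$.

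The main obstacle is bookkeeping rather than any new estimate. In Step 2 one must justify that the limit lies in the domain of $A^{p/2}$ and that $A^{p/2}\psi_k\to A^{p/2}\psi$; closedness of $A^{p/2}$ handles this. One must also keep the bound of Lemma~\ref{lemma:regularization_p} uniform in $\ell$ and in $\tau\in[0,t]$. The fact that the $h_p$ constant grows exponentially in $t$ is harmless here, because convergence is only claimed on compact time intervals.
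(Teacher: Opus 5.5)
Your proposal is correct and follows essentially the paper's argument: the $\tau$-uniform Cauchy estimate from the $h_p$ bound of Lemma~\ref{lemma:regularization_p}, identification of the limit, and passage to $\ell\to\infty$ in the time-independent $h_0$ bound. The only cosmetic difference is in identifying the limit: the paper sets $\psi=A^{-p/2}\theta_p$ with $\theta_p=\lim A^{p/2}\psi_k$, using that $A^{-1/2}$ is bounded on $\calF_0$, whereas you take the limit of $\psi_k$ first and invoke closedness of $A^{p/2}$; both work equally well.
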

\begin{proof}
Consider  a sequence of vector functions $\{A^{p/2} \psi_k(\tau) \}$ mapping $[0,t]$ to $\calF_0$.
For simplicity here we restrict ourselves to integer cutoff  parameters $k\ge 1$.
We claim that the sequence $\{A^{p/2} \psi_k(\tau) \}$  is uniformly  Cauchy.
Indeed,  for any $1\le k<\ell$ one has
\[
\| A^{p/2}\psi_k(\tau) - A^{p/2}\psi_\ell(\tau)\|^2 = \la e(\tau)|A^p |e(\tau)\ra=h_p(\tau),
\]
 where $e(\tau)=\psi_k(\tau)-\psi_\ell(\tau)$
is the error state and $h_p(\tau)$ decays as $1/\sqrt{k}$ in the limit $k \to \infty$ due to  Eq.~(\ref{hp_upper_bound}) of Lemma~\ref{lemma:regularization_p}.
Moreover, the upper bound on $h_p(\tau)$ is uniform with respect to $\tau$ (it depends only on $t$).
Thus $\{A^{p/2} \psi_k(\tau) \}$  is uniformly  Cauchy.
Since $\calF_0$ is the complete metric space, the sequence $\{A^{p/2} \psi_k(\tau)\}$
must have a limit 
\[
\theta_p(\tau)=\lim_{k\to \infty} A^{p/2} \psi_k(\tau) \in \calF_0.
\]
For each $\tau \in [0,t]$ define
\[
\psi(\tau) = A^{-p/2} \theta_p(\tau) \in \calF_0.
\]
This is well-defined  since $A^{-1/2}$ is a bounded operator on $\calF_0$ (note that eigenvalues of $A^{-1/2}$ are at most $\lambda_1^{-1/2}$).
Then $\theta_p(\tau)=A^{p/2} \psi(\tau)$ and 
Eq.~(\ref{convergence_bound_with_A}) is equivalent to
\[
\lim_{k\to \infty} \,\max_{\tau \in [0,t]} \| A^{p/2} \psi_k(\tau) - \theta_p(\tau)\|=0
\]
which follows from the fact that $\{A^{p/2} \psi_k(\tau) \}$  is uniformly  Cauchy and $\theta_p(\tau)$ is its limiting point. 
Since $A^{-1/2}$ is a bounded operator on $\calF_0$, one gets
\[
\psi(\tau) = A^{-p/2} \lim_{k\to \infty} A^{p/2} \psi_k(\tau) = \lim_{k\to \infty} \psi_k(\tau).
\]
For any $k\ge 1$ one has
\[
\| \psi(\tau)-\psi_k(\tau)\|^2 = \lim_{\ell \to \infty} \| \psi_\ell(\tau) - \psi_k(\tau)\|^2 = h_0(\tau).
\]
Thus the bound Eq.~(\ref{regularization_error}) follows from Eq.~(\ref{h0_upper_bound}) of Lemma~\ref{lemma:regularization_p}.
\end{proof}
\begin{lemma}
\label{lemma:KEsolution}
Suppose $\la \psi(0)|A^8|\psi(0)\ra<\infty$. Then
the vector function $\psi\, : \, \RR_{\ge 0} \to \calF_0$ constructed in Corollary~\ref{corol:limit} is differentiable and solves the Kolmogorov equation, that is,
\be
\label{KErestated}
\frac{d}{dt} \psi(t) = (-A+C)\psi(t)
\ee
for all $t\ge 0$.
\end{lemma}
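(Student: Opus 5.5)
The plan is to pass to the limit in the integral form of the regularized equation. For each cutoff $k$, Eq.~(\ref{KE_k}) gives
\[
\psi_k(t)=\psi(0)+\int_0^t d\tau\,\bigl(-A\psi_k(\tau)+\Pi_k C\Pi_k\psi_k(\tau)\bigr).
\]
I will show that the integrand converges, uniformly in $\tau\in[0,t]$, to $(-A+C)\psi(\tau)$, and that this limit is continuous in $\tau$. Then $\psi(t)=\psi(0)+\int_0^t(-A+C)\psi(\tau)d\tau$ holds with a continuous integrand, so the fundamental theorem of calculus gives differentiability and Eq.~(\ref{KErestated}).

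First apply Corollary~\ref{corol:limit} with $p=3$. Its hypothesis requires finite moments up to order $2p+2=8$, which is exactly the assumption $\la\psi(0)|A^8|\psi(0)\ra<\infty$. This yields $\max_{\tau\le t}\|A^{3/2}(\psi_k(\tau)-\psi(\tau))\|\to 0$, so also $\|A(\psi_k-\psi)\|\to0$ uniformly (since $A^{-1/2}$ is bounded on $\calF_0$). In particular $\la\psi(\tau)|A^3|\psi(\tau)\ra<\infty$, being a uniform limit of the bounded moments $f_3$ from Lemma~\ref{lemma:moment_bounds}.

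Next split $\Pi_kC\Pi_k\psi_k-C\psi$ into three pieces: $\Pi_kC\Pi_k(\psi_k-\psi)$, $\Pi_kC(\Pi_k-I)\psi$, and $(\Pi_k-I)C\psi$.
\begin{itemize}
\item By Lemma~\ref{lemma:loose} with $p=0$, the first piece is bounded by $\omega_0\|A^{3/2}\Pi_k(\psi_k-\psi)\|\le\omega_0\|A^{3/2}(\psi_k-\psi)\|$, which tends to $0$ uniformly.
\item The second piece is at most $\omega_0\|A^{3/2}(I-\Pi_k)\psi(\tau)\|$. This tends to $0$ pointwise by dominated convergence, since $\|A^{3/2}\psi(\tau)\|<\infty$.
\item The third piece tends to $0$ because $C\psi(\tau)\in\calF_0$ by Lemma~\ref{lemma:loose}.
\end{itemize}
Pointwise convergence together with a uniform bound, namely $\omega_0\sup_\tau\|A^{3/2}\psi_k(\tau)\|$ controlled by Lemma~\ref{lemma:moment_bounds}, lets dominated convergence pass the limit inside $\int_0^t$. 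The $A$-term is handled the same way via $\|A(\psi_k-\psi)\|\to0$.

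The main obstacle is continuity of $\tau\mapsto(-A+C)\psi(\tau)$, which is needed to upgrade the integral identity to a genuine derivative with $O(\delta^2)$ remainder. My plan is to show that $A^{3/2}\psi(\tau)$ is continuous, as a uniform limit of the continuous functions $A^{3/2}\psi_k(\tau)$. Each of these is continuous because $\psi_k$ is smooth on $\calH_k$ and evolves under $e^{-A\tau}$ on its complement, and $A^{3/2}\psi(0)$ has finite norm. Continuity of $(-A+C)\psi$ then follows from Lemma~\ref{lemma:loose}, since $\|C(\psi(\tau)-\psi(\tau'))\|\le\omega_0\|A^{3/2}(\psi(\tau)-\psi(\tau'))\|$. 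For the $O(\delta^2)$ remainder in the footnote, I would repeat the argument one order higher: use $p=5$-type control if available, or else bound $\frac{d}{d\tau}$ of the integrand via Lemma~\ref{lemma:loose} with $p=1$ and the moment bounds. I expect this bookkeeping of which moment orders are needed, with eight being the budget, to be the delicate part.
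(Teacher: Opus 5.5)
Your proof is correct, but it is organized differently from the paper's. You use the Duhamel/integral identity, pass to the limit by dominated convergence, prove continuity of $\tau\mapsto A^{3/2}\psi(\tau)$ as a uniform limit, and finish with the fundamental theorem of calculus.

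The paper never uses the integral identity. For each $k$ it bounds the first-order Taylor remainder of $\psi_k$ by a double integral of $\|(-A+\Pi_kC\Pi_k)^2\psi_k\|$. It bounds this second derivative by $\sqrt{\eta f_6}$ uniformly in $k$, using Lemma~\ref{lemma:loose} at orders $0$ and $3/2$. It then lets $k\to\infty$ at a single fixed time, which only requires the pointwise limit $(-A+\Pi_kC\Pi_k)\psi_k(t)\to(-A+C)\psi(t)$. Proposition~\ref{prop:limit2} gets this limit by splitting $C-\Pi_kC\Pi_k=C\Pi_{>k}+\Pi_{>k}C\Pi_k$ acting on $\psi_k$, with an explicit $O(1/k)$ rate from $f_5$.

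What each route buys: yours needs only the uniform $f_3$ bound in the limit step and no convergence rate. It yields ordinary differentiability, which is what the lemma asserts. By itself it gives only an $o(\delta)$ remainder, not the $O(\delta^2)$ remainder in the footnote to Theorem~\ref{thm:regul}. The paper's route produces $O(\delta^2)$ directly.

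To recover $O(\delta^2)$ in your framework, show that $g(\tau)=(-A+C)\psi(\tau)$ is Lipschitz. Start from
$\|A^{3/2}(\psi_k(\tau)-\psi_k(\tau'))\|\le|\tau-\tau'|\sup_s\bigl(\|A^{5/2}\psi_k(s)\|+\omega_{3/2}\|A^3\psi_k(s)\|\bigr)$.
The right-hand side is bounded uniformly in $k$ by $f_5$ and $f_6$ via Lemma~\ref{lemma:moment_bounds}. Pass to the limit using Corollary~\ref{corol:limit} with $p=3$. Then use $\|\psi(t+\delta)-\psi(t)-\delta g(t)\|\le\int_t^{t+\delta}\|g(s)-g(t)\|\,ds$.

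Note that for the $C$-part of the integrand the required order in Lemma~\ref{lemma:loose} is $3/2$, not $1$. This is exactly the paper's second-derivative estimate. It stays within the moment budget, since the eight moments are used up only by Corollary~\ref{corol:limit} with $p=3$.
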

\begin{proof}
We can apply Corollary~\ref{corol:limit} with $p=3$ since $\la \psi(0)|A^8|\psi(0)\ra<\infty$ (which implies $\la \psi(0)|A^{p'}|\psi(0)\ra<\infty$ for all $0\le p'\le 8=2p+2$).
From Eq.~(\ref{convergence_bound_with_A}) one gets
\[
\la \psi(t)|A^3|\psi(t)\ra =\lim_{k\to \infty} \la \psi_k(t)|A^3|\psi_k(t)\ra \le f_3(t),
\]
where $f_3(t)$ is the moment function upper bounded by Lemma~\ref{lemma:moment_bounds}.
Thus $\la \psi(t)|A^3|\psi(t)\ra<\infty$ which implies $\| (-A+C)\psi(t)\|<\infty$ due to Lemma~\ref{lemma:loose}.
Hence $(-A+C)\psi(t)\in \calF_0$.
We need to show that 
\be
\label{time_step_delta_eq1}
\| \psi(t+\delta) - \psi(t) - \delta (-A+C)\psi(t)\| = O(\delta^2)
\ee
in the limit $\delta\to 0$.
Consider any  $k\ge 1$. Since $\psi_k(t)$ is infinitely differentiable, one can write
\begin{align}
\label{time_step_delta_eq2}
\| \psi_k(t+\delta) - \psi_k(t) - \delta (-A+\Pi_k C\Pi_k)\psi_k(t)\| & =\left\| 
\int_t^{t+\delta} d\tau_1 \,  \left( \frac{d}{d\tau_1} \psi_k(\tau_1) - \frac{d}{dt}\psi_k(t)\right) \right\| \nonumber \\
&\le  \int_t^{t+\delta} d\tau_1 \, \int_t^{\tau_1} d\tau_2 \left\| \frac{d^2}{d\tau_2^2} \psi_k(\tau_2)\right\|
\end{align}
From the equation of motion Eq.~(\ref{KE_k}) one gets
\[
\frac{d^2}{d\tau_2^2} \psi_k(\tau_2) = \left(-A+\Pi_k C\Pi_k\right)^2 \psi_k(\tau_2).
\]
Let $\varphi \equiv \left(-A+\Pi_k C\Pi_k\right)\psi_k(\tau_2)$. Then
\[
\| \left( -A+\Pi_k C\Pi_k \right) \varphi \| \le \| A\varphi\| + \|C\Pi_k \varphi\| \le \|A \varphi\| + \omega_0 \| A^{3/2} \varphi\|
\le (\lambda_1^{-1/2} + \omega_0) \| A^{3/2} \varphi\|.
\]
Here the second inequality uses Lemma~\ref{lemma:loose}.
The definition of $\varphi$  gives
\[
 \| A^{3/2} \varphi\| \le \| A^{5/2} \psi_k(\tau_2)\| + \| A^{3/2} C \Pi_k \psi_k(\tau_2)\|.
\]
The last term can be upper bounded using Lemma~\ref{lemma:loose} which gives
\[
\| A^{3/2} C \Pi_k \psi_k(\tau_2)\|\le \omega_{3/2} \| A^3 \psi_k(\tau_2)\|.
\]
Combining the above bounds results in
\[
 \left\| \frac{d^2}{d\tau_2^2} \psi_k(\tau_2)\right\|^2 = 
\| \left(-A+\Pi_k C\Pi_k\right)^2 \psi_k(\tau_2)\|^2  \le \eta  \la \psi_k(\tau)|A^6|\psi_k(\tau)\ra
\]
for some real number $\eta<\infty$ independent of $k$ and $\delta$ (although $\eta$ depends on $N$ and other parameters of the original ODE). 
Lemma~\ref{lemma:moment_bounds} provides a uniform (independent of $k$ and $\delta$) upper bound on $\la \psi_k(\tau)|A^6|\psi_k(\tau)\ra$
since $\la \psi(0)|A^6|\psi(0)\ra<\infty$. Substituting this into Eq.~(\ref{time_step_delta_eq2}) and taking the double integeral one gets
\be
\label{time_step_delta_eq3}
\| \psi_k(t+\delta) - \psi_k(t) - \delta (-A+\Pi_k C\Pi_k)\psi_k(t)\|  \le O(\delta^2),
\ee
where the constant  hidden in $O(\delta^2)$ is independent of $k$. Suppose we can prove that 
\be
\label{time_step_delta_eq4}
\lim_{k\to \infty}  (-A+\Pi_k C \Pi_k)\psi_k(t) = (-A+C)\psi(t).
\ee
We have already established that $\lim_{k\to \infty} \psi_k(t)=\psi(t)$ and $\lim_{k\to \infty} \psi_k(t+\delta)=\psi(t+\delta)$.
Taking the limit $k\to \infty$ in Eq.~(\ref{time_step_delta_eq3}) and using Eq.~(\ref{time_step_delta_eq4}) proves
Eq.~(\ref{time_step_delta_eq1}), thereby proving the lemma. Hence it remains to prove Eq.~(\ref{time_step_delta_eq4}).
We divide the proof into two parts.
\begin{prop}
\label{prop:limit1}
\be
\label{time_step_delta_eq5}
 (-A+C)\psi(t) = \lim_{k\to \infty}  (-A+C) \psi_k(t).
\ee
\end{prop}
\begin{prop}
\label{prop:limit2}
\be
\label{time_step_delta_eq6}
\lim_{k\to \infty}  \| C\psi_k(t) -\Pi_k C\Pi_k \psi_k(t)\|=0.
\ee
\end{prop}
Replacing the righthand side of Eq.~(\ref{time_step_delta_eq4})
with $\lim_{k\to \infty}  (-A+C) \psi_k(t)$ using Proposition~\ref{prop:limit1},
cancelling the  terms $-A\psi_k(t)$ that appear in both sides, and using Proposition~\ref{prop:limit2} proves Eq.~(\ref{time_step_delta_eq4}).
\begin{proof}[\bf Proof of Proposition~\ref{prop:limit1}]
Applying the triangle inequality and Lemma~\ref{lemma:loose} one gets
\[
\|  (-A+C)\psi(t)  - (-A+C) \psi_k(t)\| \le \| A(\psi(t)-\psi_k(t))\| + \omega_0 \| A^{3/2} (\psi(t)-\psi_k(t)) \|,
\]
Both terms go to zero in the limit $k\to \infty$
due to Corollary~\ref{corol:limit}.
\end{proof}
\begin{proof}[\bf Proof of Proposition~\ref{prop:limit2}]
Let $\Pi_{>k}= I-\Pi_k$. We have
\[
C - \Pi_k  C\Pi_k  = C (\Pi_k+\Pi_{>k}) - \Pi_k  C \Pi_k = C\Pi_{>k}  + \Pi_{>k} C \Pi_{k}.
\]
Thus it suffices to prove that $\| C\Pi_{>k} \psi_k(t)\|$ and $\|  \Pi_{>k} C \Pi_{k}\psi_k(t)\|$ go to zero in the limit $k\to \infty$.
From Lemma~\ref{lemma:loose} one gets
\[
\| C\Pi_{>k} \psi_k(t)\| \le  \omega_0 \| A^{3/2} \Pi_{>k} \psi_k(t)\|.
\]
Since $\Pi_{>k}\le k^{-1} A \Pi_{>k}$, one gets
\[
\| C\Pi_{>k} \psi_k(t)\| \le   \frac{\omega_0}{k} \| A^{5/2}\psi_k(t)\| = \frac{\omega_0 \sqrt{f_5(t)}}{k},
\]
where $f_5(t)=\la \psi_k(t)|A^5|\psi_k(t)\ra$ has a uniform (independent of $k$)  upper bound of  Lemma~\ref{lemma:moment_bounds}.
Hence 
\[
\lim_{k\to \infty} \| C\Pi_{>k} \psi_k(t)\|=0.
\]
Similar  arguments give
\begin{align*}
\| \Pi_{>k} C \Pi_{k} \psi_k(t)\| &  \le k^{-1} \| A \Pi_{>k} C\Pi_k \psi_k(t)\| =  k^{-1} \|  \Pi_{>k}  A C\Pi_k \psi_k(t)\|  \\
& \le k^{-1} \| A C\Pi_k \psi_k(t)\| 
\le \omega_1 k^{-1}  \| A^{5/2}\Pi_k \psi_k(t)\| \le  \frac{\omega_1 \sqrt{f_5(t)}}{k}
\end{align*}
Here the third   inequality uses Lemma~\ref{lemma:loose} with $p=1$. Hence
\[
\lim_{k\to \infty} \| \Pi_{>k} C \Pi_{k} \psi_k(t)\|=0.
\]
This proves Eq.~(\ref{time_step_delta_eq6}).
\end{proof}

The first claim of Theorem~\ref{thm:regul} (existence of solution) is a direct consequence of Corollary~\ref{corol:limit}, combined with Lemma~\ref{lemma:KEsolution} for $p\ge3$. The regularization error Eq.~\eqref{regul_bound1} follows from Eq.~(\ref{regularization_error}).
Finally, the upper bound on 
$\|(I-\Pi_k)\psi(t)\|$
follows from Eq.~(\ref{projection_error}) by taking the limit $\ell\to \infty$.
\end{proof}
For any integer $p\ge 0$ define a moment
\[
\calM_p(t)=\la \psi(t)|A^p|\psi(t)\ra.
\]

\subsection{Strong regularization theorem}

\begin{theorem}
\label{thm:regul-kp}
Let $\lambda_N$ be the largest dissipation rate in SDE Eq.~\eqref{SDE}. Take any integer $p\ge3$ and vector $\psi_0\in \calF_0$ such that $\la \psi_0|A^{p'}|\psi_0\ra<\infty$ for all $p'\in[0,2p+6]$. Let $\psi\, : \, \RR_{\ge 0} \to \calF_0$ with $\psi(0)=\psi_0$ denote  solution of Kolmogorov equation \(
 \frac{d}{dt}\psi(t)=(-A+C)\psi(t).
 \)
For any regularization cutoff $k>0$, projectors $\Pi_k$ and $\Pi_k^\perp = I - \Pi_k$, and for all $t\ge 0$ and all $1\le p'\le p$ we have: 
\begin{align}
  &\| \Pi_k (\psi_k(t) - \psi(t)) \|^2  \le \frac{\gamma^2}{k^{p'}}\calW, \label{eq:kp-reg-error}\\
  &\| \Pi_k^\perp(\psi_k(t) - \psi(t)) \|^2  \le \frac{4\gamma}{k^{\frac{p'}{2}}}\left(\la \psi_0|A|\psi_0\ra  + \kappa_1 \la\psi_0|\psi_0\ra\right)^{\frac12} \calW^{\frac12}, \label{eq:ekperp}\\  
  &\calW=\sum_{m=0}^{p'+1}  \kappa_1^{p'+1-m}\frac{(p'+1)!^2}{m!^2} \la \psi_0|A^{m}|\psi_0\ra \left(1+\frac{2\lambda_N}{\lambda_1}\right)^{\frac{p'(p'+1)-m(m-1)}{2}}, 
\end{align}
Here $\gamma = \frac{3J \sqrt{qs}}{\lambda_1^2}$ and $\kappa_1=\frac{6J^2 qs}{\lambda_1^2}$. Specifically for $p'=1$ we get: 
\begin{align*}
    &\| \Pi_k (\psi_k(t) - \psi(t)) \|^2  \le \frac{\gamma^2}{k}\calW\\
    &\| \Pi_k^\perp (\psi_k(t) - \psi(t)) \|^2  \le \frac{4\gamma}{k^{\frac 12}}
    \left(\la \psi_0|A|\psi_0\ra  + \kappa_1 \la\psi_0|\psi_0\ra\right)^{\frac12}\calW^{\frac12}\\
    &\calW=\left(4 \calM_0(0) \kappa_1^2\bigl(1+\frac{2\lambda_N}{\lambda_1}\bigr)  + 4\calM_1(0)\kappa_1 \bigl(1+\frac{2\lambda_N}{\lambda_1}\bigr)+ \calM_2(0) \right)
\end{align*}

which improves upon Eq.~\eqref{regul_bound1} quadratically provided $\frac{\lambda_N}{\lambda_1}$ stays bounded as $N\to\infty$.
\end{theorem}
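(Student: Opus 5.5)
The plan is to compare the regularized solution $\psi_k(t)$ directly with the exact solution $\psi(t)$ of Theorem~\ref{thm:regul}, rather than with $\psi_\ell$, and to split the error $e(t)=\psi_k(t)-\psi(t)$ into $\Pi_k e$ and $\Pi_k^\perp e$.

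\textbf{Step 1: equation for $\Pi_k e$.} Since $A$ commutes with $\Pi_k$, projecting the two equations of motion gives
\[
\tfrac{d}{dt}\Pi_k e = (-A+\Pi_k C\Pi_k)\Pi_k e - \Pi_k C\Pi_k^\perp\psi(t).
\]
The generator $-A+\Pi_kC\Pi_k$ restricted to $\calH_k$ is dissipative, because $\Pi_kC\Pi_k$ is anti-hermitian. Differentiating $\|\Pi_k e\|^2$ therefore gives
\[
\tfrac{d}{dt}\|\Pi_k e\|^2 \le -2\la \Pi_k e|A|\Pi_k e\ra + 2|\la \Pi_k e|C|\Pi_k^\perp\psi\ra|.
\]

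\textbf{Step 2: absorb the source term.} Apply Lemma~\ref{lemma:AC1} with $\phi=\Pi_k^\perp\psi$ carrying the $A^2$ weight and $\psi\to\Pi_k e$ carrying the $A$ weight; this orientation comes from $C^\dagger=-C$. This bounds the source by $\gamma\sqrt{\la \Pi_ke|A|\Pi_ke\ra\,\la\psi|\Pi_k^\perp A^2|\psi\ra}$. By AM–GM the $A$-part is absorbed into the $-2\la A\ra$ term, which leaves
\[
\tfrac{d}{dt}\|\Pi_k e\|^2\le \tfrac{\gamma^2}{2}\la\psi(t)|\Pi_k^\perp A^2|\psi(t)\ra .
\]
Since $\Pi_k^\perp\le k^{-p'}A^{p'}$, we get $\|\Pi_ke(t)\|^2\le \tfrac{\gamma^2}{2k^{p'}}\int_0^t \calM_{p'+2}(\tau)\,d\tau$.

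\textbf{Step 3: the integrated moment bound (main obstacle).} The task is now to bound $\int_0^\infty\calM_{p'+2}$ for the \emph{exact} solution by $\calW$, uniformly in $t$. I would pass the moment estimates of Lemma~\ref{lemma:moment_bounds} to the limit $k\to\infty$ using Corollary~\ref{corol:limit}; the hypothesis of finitely many moments up to order $2p+6$ is what licenses this. Then I would run the recursion $\int f_{m+1}\le f_m(0)+\kappa_m\int f_m$, which gives
\[
\int_0^t\calM_{p'+2}\le \sum_m \Big(\prod\kappa_j\Big)\calM_m(0).
\]
The difficulty is that this naive recursion has $\kappa_j\propto j^2(1+2\lambda_N/\lambda_1)^{j-1}$ and ends at $\int \calM_{p'+2}$, not $\calM_{p'+1}$.

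To get the stated form of $\calW$, with $(p'+1)!^2/m!^2$, powers of $\kappa_1$ only, and the exponent $(p'(p'+1)-m(m-1))/2$, I would instead write $\kappa_j=j^2\kappa_1(1+2\lambda_N/\lambda_1)^{j-1}$. Telescoping $\prod_{j=m}^{p'}\kappa_j$ then produces exactly $\kappa_1^{p'+1-m}(p'+1)!^2/m!^2\cdot(1+2\lambda_N/\lambda_1)^{\sum(j-1)}$, up to index shifts. One probably also needs to handle the top term via $\tfrac{d}{dt}\calM_{p'+1}\le -\calM_{p'+2}+\kappa_{p'+1}\calM_{p'+1}$ together with a slightly sharper AM–GM in Corollary~\ref{corol:commutator_general}, so that the leading coefficient is $1$ rather than $2$ and the factor $\gamma^2/2$ becomes $\gamma^2$. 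Checking that the index bookkeeping lands precisely on $\calW$ is the step I expect to be fiddly.

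\textbf{Step 4: the $\Pi_k^\perp$ part.} Write
\[
\|\Pi_k^\perp e\|^2=\|\Pi_k^\perp\psi_k\|^2-2\,\mathrm{Re}\la\Pi_k^\perp\psi_k|\Pi_k^\perp\psi\ra+\|\Pi_k^\perp\psi\|^2 .
\]
Alternatively, differentiate $\|\Pi_k^\perp e\|^2$: on $\mathrm{ran}\,\Pi_k^\perp$ the regularized flow is pure $-A$, and the coupling is $\la\Pi_k^\perp e|C|\psi\ra$ terms. These couplings I would bound with Lemma~\ref{lemma:AC1} and Cauchy–Schwarz in time as
\[
\Big(\int\la \Pi_k^\perp\cdots A\cdots\ra\Big)^{1/2}\Big(\int\la\cdots A^2\Pi_k^\perp\cdots\ra\Big)^{1/2}.
\]
The first factor is at most $f_1(0)+\kappa_1f_0(0)$ by Lemma~\ref{lemma:moment_bounds} and Corollary~\ref{corol:projection_error}. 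The second is at most $k^{-p'}\calW$ from Step~3. This yields the bound $4\gamma k^{-p'/2}(\la\psi_0|A|\psi_0\ra+\kappa_1\la\psi_0|\psi_0\ra)^{1/2}\calW^{1/2}$. Finally, specializing to $p'=1$ gives the displayed explicit $\calW$.
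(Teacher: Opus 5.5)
Your plan is essentially the paper's proof: you use the same split of $e=\psi_k-\psi$ into $\Pi_k e$ and $\Pi_k^\perp e$, Lemma~\ref{lemma:AC1} oriented so the $A^2$ weight sits on the $\Pi_k^\perp$ factor, the $k\to\infty$ limit of \eqref{f_moment_p_integral} licensed by the $2p+6$ moments, and Cauchy--Schwarz in time for $\Pi_k^\perp e$. The fixes you anticipate in Step~3 are unnecessary, because the plain recursion $\int_0^t\calM_{p'+2}\le\sum_{m=0}^{p'+1}\bigl(\prod_{r=m+1}^{p'+1}\kappa_r\bigr)\calM_m(0)$ is exactly $\calW$ with leading coefficient $1$ and your $\gamma^2/2$ already beats $\gamma^2$; in Step~4, however, $\int_0^t\calM_1$ is bounded by $\tfrac12\la\psi_0|\psi_0\ra$ via $\tfrac{d}{dt}f_0=-2f_1$, not by Corollary~\ref{corol:projection_error}, so the argument gives \eqref{eq:ekperp} with $\tfrac12\la\psi_0|\psi_0\ra$ in place of $\la\psi_0|A|\psi_0\ra+\kappa_1\la\psi_0|\psi_0\ra$, a looseness the paper's own proof shares.
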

\begin{proof}
By theorem~\ref{thm:regul} there exists a differentiable function $\psi\, : \, \RR_{\ge 0} \to \calF_0$ with $\psi(0)=\psi_0$ such that $\psi$ solves the Kolmogorov equation \(
 \frac{d}{dt}\psi(t)=(-A+C)\psi(t)
 \). Corollary~\ref{corol:limit} implies that for any $\tilde p\in[0,p+2]$ one has $\calM_{\tilde p}(t)=\la \psi(t)|A^{\tilde p}|\psi(t)\ra<\infty$.  With this in mind let us prove Eq.~\eqref{eq:kp-reg-error}. Take $k>0$ and recall that \[
\Pi_k^\perp = I - \Pi_k=  \sum_{\mm \in \calJ\setminus \calJ_k} \; |\mm\ra\la \mm|.
\]
Take $\psi\in\calF_0$ such that $\calM_{\tilde p(t)}<\infty$ for $\tilde p\in[0,p+2]$. Then for any $t\ge 0$ one has:
\be
\label{projection_error1}
\| (I-\Pi_k)\psi(t)\|^2 \le \frac{\calM_{\tilde p}(t)}{k^{\tilde p}},\quad \tilde p\in[0,p+2] % \frac{\la \psi(0)|A|\psi(0)\ra + \kappa_1 \la \psi(0)|\psi(0)\ra }{ k}.
\ee
%Here $\kappa_1$ is the coefficient defined in Section~\ref{sec:ABCstructural}.
Indeed, noting that $\mm\in\calJ_k$ implies $\lambda^{\tilde p}_{\mm}=\la\mm|A^{\tilde p}|\mm\ra<k^{\tilde p}$ and hence $\lambda^{\tilde p}_{\bm m}>k^{\tilde p}$ if $\mm\notin\calJ_k$ we get:  \[
    \|(I-\Pi_k)\psi\|^2 = \sum_{\mm\notin\calJ_k} \frac{\lambda^{\tilde p}_{\bm m}}{\lambda^{\tilde p}_{\bm m}}\la \psi|\mm\ra^2
    \le \sum_{\mm\notin\calJ_k} \frac{\lambda^{\tilde p}_{\bm m}}{k^{\tilde p}}
    \la \psi|\mm\ra^2 \le \la\psi |A^{\tilde p}|\psi\ra k^{-\tilde p}=\calM_{\tilde p}(t)k^{-\tilde p},\quad \forall \tilde p\in[0,p+2]
  \]
Now fix $k>0$ and define an ``error state"
\be
|e(t)\ra = |\psi_k(t)\ra - |\psi(t)\ra.
\ee
From the equation of motion Eq.~(\ref{KE_k}) and the analogous equation for $\psi(t)$ one gets
\begin{align}
  \label{eq:error-eq}
\frac{d}{dt} |e(t)\ra =-A |e(t)\ra  + \Pi_k C\Pi_k |\psi_k(t)\ra - C|\psi(t)\ra, \quad |e(0)\ra = 0
\end{align}
Here we noted that $A$ is hermitian while $C$ is anti-hermitian. Multiplying Eq.~\eqref{eq:error-eq} by $ e_k(t)=  \Pi_k  e(t)$ we get:
\begin{align*}
  \frac12 \frac{d}{dt}  \la e_k(t)| e_k(t)\ra &= -\la e_k(t)| A| e_k(t)\ra  +  \la e_k(t) | \Pi_k C\Pi_k |\psi_k(t)\ra - \la e_k(t) | C |\psi\ra  \\
&= -\la e_k(t)| A| e_k(t)\ra  +  \la e_k(t) | \Pi_k C\Pi_k |\psi_k(t)\ra - \la e_k(t) | (\Pi_k+\Pi_k^\perp) C (\Pi_k+\Pi_k^\perp) |\psi\ra\\
                                              &=-\la e_k(t)| A| e_k(t)\ra  -  \la e_k(t) | \Pi_k C |\Pi^\perp_k |\psi\ra = -\la e_k(t)| A| e_k(t)\ra +\la \psi(t) | \Pi^\perp_k C \Pi_k |e_k\ra
\end{align*}
Applying Lemma~\ref{lemma:AC1} with $|\phi\ra = \Pi^\perp_k |\psi(t)\ra$ and $|\psi\ra = |e_k(t)\ra$ gives for any $1\le p'\le p$: 
\[
|\la e_k(t) | \Pi_k C\Pi^\perp_k |\psi(t)\ra| \le \gamma \sqrt{\la e_k(t)|  A |e_k(t)\ra
\la \psi(t)|\Pi^\perp_k A^2 \Pi_k^\perp |\psi (t)\ra}\overset{\text{Eq.~\eqref{projection_error1}}}{\le} \gamma \la e_k(t)| A |e_k(t)\ra^\frac12 \calM_{p'+2}^{\frac12}(t)k^{-p'/2}
\]
and the latter bound is valid since for any $\tilde p\in[0,p+2]$ one has $\calM_{\tilde p}(t)=\la \psi(t)|A^{\tilde p}|\psi(t)\ra<\infty$. With this bound in mind we write:
\begin{align*}
  \frac{d}{dt}  \la e_k(t)| e_k(t)\ra &\le -2\la e_k(t)| A | e_k(t)\ra  + 2\gamma \la e_k(t)| A| e_k(t)\ra^{\frac12}\calM_{p'+2}^{\frac12}(t)k^{-p'/2}\\
   &=-\la e_k(t)| A | e_k(t)\ra  - \bigl( \la e_k(t)| A| e_k(t)\ra^{\frac12} - \gamma \calM_{p'+2}^{\frac12}(t)k^{-p'/2}\bigr)^2 + \gamma^2 \calM_{p'+2}(t)k^{-p'}\\
&\le -\la e_k(t)| A| e_k(t)\ra +  \gamma^2 \calM_{p'+2}(t)k^{-p'}
\end{align*}
which, after integrating w.r.t. time and recalling that $e(0)=0$ becomes
 \[
  \la e_k(t)| e_k(t)\ra + \int_0^t \la e_k(s)|A| e_k(s)\ra ds  \le  \frac{\gamma^2}{k^{p'}} \int_0^t \calM_{p'+2}(s)ds, \quad 1\le p'\le p
\]
Let us recall Eq.~\eqref{f_moment_p_integral}: taking the limits w.r.t. $k\to\infty$ in Eq.~\eqref{f_moment_p_integral} we derive: for any $p'\in[1,p]$ 
\[
\int_0^t d\tau\, \calM_{p'+2}(\tau) \le \sum_{m=0}^{p'+1} \left( \prod_{r=m+1}^{p'+1} \kappa_r \right) \calM_{m}(0), \qquad \kappa_r=\frac{\gamma_r^2}{4},\quad \gamma_r = 2r\sqrt{6qs}\,J\,\lambda_1^{-1}
\left(1+\frac{2\lambda_N}{\lambda_1}\right)^{(r-1)/2}
%+ \kappa_{p+1} f_{p}(0) + \kappa_{p} \kappa_{p+1} f_{p-1}(0) + \cdots + (\kappa_1 \kappa_2 \cdots \kappa_{p+1}) f_0(0)
\]
Define \(
a:=1+\frac{2\lambda_N}{\lambda_1}
\)
so that
\(
\kappa_r=r^2\,\kappa_1 \,a^{r-1}.
\)
We have
\[
\prod_{j=m+1}^{p'+1}\kappa_j
= \kappa_1^{p'+1-m}\left(\prod_{j=m+1}^{p'+1}j^2\right) a^{\sum_{j=m+1}^{p'+1}(j-1)} =
\kappa_1^{p'+1-m}\left(\frac{(p'+1)!}{m!}\right)^2
a^{\frac{p'(p'+1)-m(m-1)}{2}}
\]
since \(
\sum_{j=m+1}^{p'+1}(j-1)=\sum_{r=m}^{p'} r
=\frac{p'(p'+1)-m(m-1)}{2}
\) and \(\prod_{j=m+1}^{p'+1}j^2=\left(\frac{(p'+1)!}{m!}\right)^2\). Therefore,
\begin{align*}
  \la e_k(t)| e_k(t)\ra + \int_0^t \la e_k(s)| A | e_k(s)\ra ds
  \le \frac{\gamma^2}{k^{p'}}\sum_{m=0}^{p'+1}
\left[
\left(\frac{6qs\,J^2}{\lambda_1^2}\right)^{p'+1-m}
\left(\frac{(p'+1)!}{m!}\right)^2
\left(1+\frac{2\lambda_N}{\lambda_1}\right)^{\frac{p'(p'+1)-m(m-1)}{2}}
\right]
\calM_m(0)
\end{align*}
This proves Eq.~\eqref{eq:kp-reg-error}. %The statement of  Theorem~\ref{thm:regul} is equivalent to 
%Corollary~\ref{corol:limit} with $p=3$, Lemma~\ref{lemma:KEsolution}, and Eq.~(\ref{regularization_error}).

Multiplying Eq.~\eqref{eq:error-eq} by $ e^\perp_k(t)=  \Pi_k^\perp  e(t)$ we get:
\begin{align*}
  \frac12 \frac{d}{dt}  \la e^\perp_k(t)| e^\perp_k(t)\ra &= -\la e^\perp_k(t)| A| e^\perp_k(t)\ra  - \la e^\perp_k(t) | C |\psi\ra  \\
&= -\la e^\perp_k(t)| A| e^\perp_k(t)\ra  - \la  \psi_k | \Pi_k^\perp C |\psi\ra + \la  \psi | \Pi_k^\perp C |\psi\ra
\end{align*}
Applying Lemma~\ref{lemma:AC1} with $|\phi\ra = \Pi^\perp_k |\psi(t)\ra$ and $|\psi\ra = |\psi(t)\ra$ gives for any $1\le p'\le p$: \[
|\la  \psi | \Pi_k^\perp C |\psi\ra| \le \gamma \sqrt{\la \psi(t)|  A |\psi(t)\ra
\la \psi(t)|\Pi^\perp_k A^2 \Pi_k^\perp |\psi (t)\ra}\overset{\text{Eq.~\eqref{projection_error1}}}{\le} \gamma \la \psi(t)| A |\psi(t)\ra^\frac12 \calM_{p'+2}^{\frac12}(t)k^{-p'/2}
\] similarly we get \[
|\la  \psi_k | \Pi_k^\perp C |\psi\ra| \le \gamma \sqrt{\la \psi(t)|  A |\psi(t)\ra
\la \psi_k(t)|\Pi^\perp_k A^2 \Pi_k^\perp |\psi_k (t)\ra}\overset{\text{Eq.~\eqref{projection_error1}}}{\le} \gamma \la \psi(t)| A |\psi(t)\ra^\frac12 \calM_{p'+2}^{\frac12}(t)k^{-p'/2}
\]
so that \begin{align*}
  \frac{d}{dt}  \la e^\perp_k(t)| e^\perp_k(t)\ra &= -2\la e^\perp_k(t)| A| e^\perp_k(t)\ra  +4  \gamma \la \psi(t)| A |\psi(t)\ra^\frac12 \calM_{p'+2}^{\frac12}(t)k^{-p'/2}  \\
\la e^\perp_k(t)| e^\perp_k(t)\ra &+ 2 \int_0^t \la e^\perp_k(s)| A| e^\perp_k(s)\ra ds \le  4 \gamma k^{-p'/2} \left(\int_0^t \la \psi(s)| A |\psi(s)\ra ds\right)^{\frac12}\left(\int_0^t \calM_{p'+2}(s)ds\right)^{\frac12}
\end{align*}
This implies~\eqref{eq:ekperp}.  
\end{proof}

\section[Readout state]
{Readout state\protect\footnote{This section is largely based on the preprint~\cite{bravyi2025quantum}.}}
\label{sec:readout}

Let $|\psi(t)\ra\in \calF_0$ be the solution of the second-quantized Kolmogorov equation  constructed in Section~\ref{sec:Kolmogorov}. 
By construction, the original Kolmogorov equation  Eq.~(\ref{Kolmogorov1}) has a solution
\be
\label{u_via_Phi}
u(t,x) 
= \frac1{\sqrt{\mu(x)}} \sum_{\mm \in \calJ} \la \mm|\psi(t)\ra \Phi_\mm(x),
\ee
where
$\Phi_\mm(x)$ are normalized eigenfunctions for a system of $N$ quantum harmonic oscillators such that  the $i$-th oscillator has mass,
$q^{-1}$, spring constant $q^{-1}\lambda_i^2$, and frequency $\lambda_i$. 
A well-known expression for these eigenfunctions is 
\[
 \Phi_\mm(x)=\sqrt{\mu(x)} \HH_\mm(x),
\]
where 
$\HH_\mm(x)$ are (rescaled) $N$-variate Hermite polynomials defined as follows.
Suppose first $x\in \RR$ is a single variable.
We use probabilist's Hermite polynomials defined as
\[
\herm_n(x) = (-1)^n e^{x^2/2} \frac{d^n}{dx^n} e^{-x^2/2},
\]
where $n\ge 0$ is the degree of $\herm_n(x)$. 
For example, $\herm_0(x)=1$, $\herm_1(x)=x$, $\herm_2(x)=x^2-1$, and
$\herm_3(x)=x^3-3x$. Define
\be
\label{Hermite_normalized}
\mathbb{H}_{\multi{m}}(x) = \prod_{i=1}^N  \frac1{\sqrt{(m_i)!}} \herm_{m_i} \left(x_i\sqrt{2\lambda_i/q}\right),
\ee
where $x\in \RR^N$. The rescaling of variables in $\HH_\mm(x)$ ensures that the eigenfunctions $\Phi_\mm(x)$ form an orthonormal family. Equivalently,
\be
\label{orthogonality_relations}
\int_{\RR^N} dx\, \mu(x) \HH_\nn(x)  \HH_\mm(x) = \delta_{\nn,\mm}.
\ee
 From Eq.~(\ref{u_via_Phi}) one gets
\be
\label{u_via_H}
u(t,x) =  \sum_{\mm \in \calJ} \la \mm|\psi(t)\ra \HH_\mm(x)
\ee

\subsection{Noiseless initial conditions}
\label{sec:readout_challenges}

Suppose first that the SDE Eq.~(\ref{SDE}) has noiseless initial conditions, $X(0)=x$.
 For the sake of argument, suppose a quantum computer can efficiently prepare the exact
solution $|\psi(t)\ra$ of the Kolmogorov equation (up to the normalization). Can one efficiently extract $u(t,x)$ from $|\psi(t)\ra$ ? 
From Eq.~(\ref{u_via_H}) one gets
\be
\label{readout_failed}
u(t,x) = \la \varphi_{out}(x)|\psi(t)\ra,
\ee
where $|\varphi_{out}(x)\ra$ is a ``state" with amplitudes $\la \mm|\varphi_{out}(x)\ra = \HH_\mm(x)$. One may hope to construct a quantum
circuit that prepares a normalized version of $|\varphi_{out}(x)\ra$ and estimates the inner product in Eq.~(\ref{readout_failed}) using the Hadamard test. 
Unfortunately, this approach fails since the ``state" $|\varphi_{out}(x)\ra$  has an infinite norm. 
For example, suppose $x$ is the all-zero vector, $x=0^N$.
By definition, $\HH_\mm(0^N)=\prod_{i=1}^N (1/\sqrt{m_i!}) \herm_{m_i}(0)$. 
Using the identity $\herm_n(0)=(-1)^{n/2} (n-1)!!$ for even $n$ and $\herm_n(0)=0$ for odd $n$, see Ref.~\cite{andrews1999special}, p.~282, one gets
\[
\frac1{n!} \herm_n(0)^2  = \frac{(n-1)!!}{n!!}  \sim 1/\sqrt{n}.
\]
Thus 
\[
\|\varphi_{out}(0^N)\|^2 =  \sum_{\mm\in \ZZ_{\ge 0}^N} \;  \HH_\mm(0^N)^2\sim\left( \sum_{n=0}^\infty 1/\sqrt{n}\right)^N=\infty.
\]
Using  Mehler's formula one can check that the sum $\sum_{\mm\in \ZZ_{\ge 0}^N} \HH_\mm(x)^2$ diverges for any $x\in \RR^N$.

\subsection{Noisy initial conditions}
\label{sec:readout_solution}

Recall that our quantum algorithm aims to approximate the expected value $v(t,x)=\EE_z u(t,x+z)$, where $z\in \RR^N$ is a random  vector
that models noise in the initial conditions. 
Each variable $z_i$ is drawn from $\calN(0,q/(2\lambda_i))$. Equivalently, 
$z$ is drawn from the steady-state distribution $\mu$.
We claim that 
for any $x\in \RR^N$ there there exists a state 
$|\psi_{out}(x)\ra\in \calF$ such that 
\be
\label{readout_eq1}
v(t,x) = \la \psi_{out}(x)|\psi(t)\ra
\ee
and
\be\begin{aligned}
\label{readout_state_norm}
    \|\psi_{out}(x)\|&=  \exp{\left[ q^{-1} 
    \sum_{i=1}^N \lambda_i x_i^2 \right]}.
\end{aligned}\ee
We shall refer to $|\psi_{out}(x)\ra$ as the {\em readout state}.
Indeed, from Eq.~(\ref{u_via_H}) one gets
\[
v(t,x)=
\int_{\RR^N}dz\,  \mu(z) u(t,x+z)
=\sum_{\mm\in \calJ} \la \mm|\psi(t)\ra \int_{\RR^N}dz\,  \mu(z)  \HH_\mm(x+z).
\]
The well-known  umbral identity for  single-variate Hermite polynomials reads as 
\[
\herm_n(x+z) = \sum_{k=0}^n {n \choose k} x^{n-k} \herm_k(z)
\]
for any $x,z\in \RR$. Using this identity and the orthogonality relations Eq.~(\ref{orthogonality_relations})
 gives
\[
\int_\RR dz \, e^{-z^2/2} \herm_n(x+z)  = \int_\RR dz\,  e^{-z^2/2} \herm_n(x+z)   \herm_0(z) = (2\pi)^{1/2} x^n
\]
for any $x\in \RR$.
As a consequence,
\[
\int_{\RR^N} dz \, \mu(z) \HH_\mm(x+z) = \prod_{i=1}^N \frac1{\sqrt{(m_i !)}} \left( x_i \sqrt{\frac{2 \lambda_i}{q}}\right)^{m_i}
\]
for any $x\in \RR^N$. We arrive at
\[
v(t,x) = \sum_{\mm\in \calJ} \psi_\mm(t)  \prod_{i=1}^N \frac1{\sqrt{(m_i !)}} \left( x_i \sqrt{\frac{2 \lambda_i}{q}}\right)^{m_i} = \la \psi_{out}(x)|\psi(t)\ra,
\]
where
\be
\label{psi_out}
|\psi_{out}(x)\ra = \sum_{\mm \in \ZZ_{\ge 0}^N}\; 
|\mm\ra \cdot  \prod_{i=1}^N \frac1{\sqrt{(m_i !)}} \left( x_i \sqrt{\frac{2 \lambda_i}{q}}\right)^{m_i} \in \calF.
\ee
We note that $\psi_{out}(x)$ coincides with the coherent state embedding of $x$
discussed in~\cite{engel2021linear}. 
Finally, the norm of $\psi_{out}$ can be 
computed by observing that $\psi_{out}$ is a tensor product of  coherent states associated with each variable $x_i$, the norm is multiplicative under the tensor product,
and $\|\psi_{out}(x)\|^2= e^{2q^{-1} \lambda_1 x_1^2}$
in the case $N=1$. This gives Eq.~(\ref{readout_state_norm}).

In Section~\ref{sec:algorithm} we approximate $v(t,x)$ by the inner product
 $\la \psi_{out}(x)|\Pi_k|\psi_k(t)\ra$, where
  $\psi_k(t)$ solves the  regularized Kolmogorov equation
  with a suitable cutoff $k$. 
 We shall see that normalized versions of the states $\Pi_k|\psi_{out}(x)\ra$ and $\Pi_k|\psi_k(t)\ra$ can be
efficiently approximated by quantum circuits. 

As a side remark, we note that the expected value $u(t,x)$ with a noise-free initial condition also admits a representation in terms of the readout state $\psi_{out}(x)$ defined in Eq.~(\ref{psi_out}).
Using the identity\footnote{This identity follows from
the well-known formula $\herm_n(x) = \EE_z (x+iz)^n$, where  $z\sim \calN(0,1)$, see for instance Ref.~\cite{andrews1999special}, p.~280.}
\[
\HH_\mm(x) = \int_{\RR^N} dz\, \mu(z)   \prod_{j=1}^N  \frac1{\sqrt{(m_j)!}} \left((x_j+ i z_j)\sqrt{2\lambda_j/q}\right)^{m_j}
\]
with $i=\sqrt{-1}$ one gets
\be
\label{u_coherent_state_rep}
u(t,x) =\int_{\RR^N} dz\, \mu(z) \la \psi_{out}(x+iz)|\psi(t)\ra.
\ee
Unfortunately, we were not able to convert Eq.~(\ref{u_coherent_state_rep}) into an efficient readout procedure capable of approximating $u(t,x)$. 

\section{Block-encoding of the projected Kolmogorian}
\label{sec:block_encoding}

Let $k$ be the regularization cutoff. Our quantum algorithm simulates  time evolution under the Kolmogorian 
projected onto 
the low-dissipation subspace 
\[
\calH_k = \mathrm{span}(|\mm\ra \, : \, \mm \in \calJ_k),
\]
where 
\[
\calJ_k =\{ \mm \in \ZZ_{\ge 0}^N \, : \,  \mm\ne 0^N \quad \mbox{and} \quad \sum_{i=1}^N \lambda_i m_i\le k\}.
\]
Let $\Pi_k = \sum_{\mm \in \calJ_k} |\mm\ra\la \mm|$ be the projector onto $\calH_k$
and $C_k=\Pi_k C\Pi_k$ be the projected Kolmogorian $C$.
In this section we prove the following. 
\begin{theorem}[\bf block encoding]
\label{thm:block_encoding}
There exists an encoding of $\calH_k$ by $Q\approx k\lambda_1^{-1}\log_2{N}$ qubits
and a quantum circuit $U$ of size $poly(Q)$ 
such that for all $\nn,\mm\in \calJ_k$ 
\be
\la \nn|C_k|\mm\ra = \alpha \la \nn,0_{anc}|U|\mm,0_{anc}\ra
\ee
where
\be
\alpha =  4(qs)^{1/2} J k^3  \lambda_1^{-7/2}.
\ee
Here $|0_{anc}\ra$ is the all-zero state of an ancillary register of $poly(Q)$ qubits. 
The circuit $U$ makes one query to the
oracles $\voracle{c}$, $\poracle{c}$  and $O(k\lambda_1^{-1})$ queries to the oracle $\voracle{\lambda}$.
\end{theorem}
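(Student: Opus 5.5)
We encode each Fock basis vector $|\mm\ra$ with $\mm\in\calJ_k$ by the sorted list of occupied modes with multiplicity. Since $|\mm|\le L\equiv\lceil k/\lambda_1\rceil$, $\mm$ is represented as a nondecreasing sequence $(i_1\le i_2\le\dots\le i_w)$ of indices in $[N]$, padded with the blank symbol $0$ up to length $L$. This uses $L\lceil\log_2(N+1)\rceil\approx k\lambda_1^{-1}\log_2 N$ qubits, and valid codewords are recognized by a $poly(Q)$ reversible circuit. Membership in $\calJ_k$ requires $\sum_r\lambda_{i_r}\le k$. We check it by querying $\voracle{\lambda}$ once per slot ($O(k\lambda_1^{-1})$ queries), adding, comparing with $k$, and uncomputing.

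The circuit $U$ is built by splitting $C_k=\Pi_k C^\uparrow\Pi_k-\Pi_k C^\downarrow\Pi_k$ and block-encoding each piece. The two pieces are then combined with one ancilla qubit via a linear combination of unitaries, and the final $\Pi_k$ on the output side is enforced by a flag computed from the $\lambda$-sum check. Since $C^\downarrow=(C^\uparrow)^\dag$, it suffices to block-encode $\Pi_k C^\uparrow\Pi_k$ and take the inverse circuit. (Alternatively, encode $C^\downarrow$ directly from the coefficients $e_{ijk}$, to use the tighter bound from Proposition~\ref{prop:coeff_upper_bound}.) For $C^\uparrow=(q/2)^{1/2}\sum c_{ijk}\sqrt{\lambda_i/(\lambda_j\lambda_k)}\,a_j^\dag a_k^\dag a_i$ we use the standard "state-preparation pair" approach.

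\emph{Right state.} Starting from $|\mm\ra$, we prepare $\sum_i\sqrt{m_i/|\mm|}\,|i\ra$. This is done by choosing a uniformly random slot $r\in[w]$ and copying $i_r$, which realizes the amplitude $\sqrt{m_i}$ up to $1/\sqrt{w}$. We then query $\voracle{c}$ on $i$ to get $J^{-1}\sum_{jk}c_{ijk}|i,j,k\ra$.

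\emph{Left state.} Starting from $|\nn\ra$, we analogously choose a pair of slots to produce $\sum_{j,k}\sqrt{n_j(n_k-\delta_{jk})}\,|j,k\ra/\sqrt{w'(w'-1)}$, then apply $\poracle{c}$ to get $s^{-1/2}\sum_i\chi_{ijk}|i,j,k\ra$.

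\emph{Transition.} In between, a reversible "remove $i$, insert $j,k$, re-sort" circuit maps $(\mm,i,j,k)\mapsto(\nn,i,j,k)$ with $\nn=\mm-e^i+e^j+e^k$. It is controlled on validity and costs $poly(Q)$ via sorting networks.

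The overlap of the two prepared states then reproduces the matrix element $\sqrt{m_i n_j n_k}\,c_{ijk}$ up to a global normalization $J\sqrt{s}\cdot w\sqrt{w'(w'-1)}$, with $w,w'\le L+1$.

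The main obstacle is handling the $\lambda$-dependent weights and bounding the normalization. The factor $\sqrt{\lambda_i/(\lambda_j\lambda_k)}$ is not given by an oracle. We implement it by computing $\lambda_i,\lambda_j,\lambda_k$ with three extra $\voracle{\lambda}$ queries and applying a controlled rotation with amplitude $\sqrt{\lambda_i/(\lambda_j\lambda_k)}/\Lambda$. This requires $\sqrt{\lambda_i/(\lambda_j\lambda_k)}\le\Lambda$, and we bound $\Lambda\le\sqrt{k}\,\lambda_1^{-1}$ using $\lambda_i\le k$ on $\calJ_k$ and $\lambda_j,\lambda_k\ge\lambda_1$. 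Collecting factors gives a normalization $(q/2)^{1/2}J\sqrt{s}\,(k/\lambda_1+1)^{3/2}\sqrt{k}\,\lambda_1^{-1}$, which is at most $2(qs)^{1/2}Jk^3\lambda_1^{-7/2}$ after crude bounding (treating $k\ge\lambda_1$, and absorbing constants into powers of $k/\lambda_1$). Adding the $\uparrow$ and $\downarrow$ parts yields the stated $\alpha=4(qs)^{1/2}Jk^3\lambda_1^{-7/2}$.

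Two query-count issues remain. First, because $U$ uses each of $\voracle{c}$ and $\poracle{c}$ only once, the down-part must reuse them via the LCU with an inverse controlled on the selector qubit; this counts as one query under the stated conventions. Second, the junk branches of the oracles and invalid codewords must land orthogonal to $|0_{anc}\ra$. Both are checked directly from the oracle definitions.
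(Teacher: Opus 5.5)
Your architecture is the paper's: the sorted-multiset encoding, a Gram-matrix block encoding of $\Pi_kC^\uparrow\Pi_k$ with $\voracle{c}$ in the right isometry and $\poracle{c}$ in the left, a transition map in between, $C^\downarrow$ from the adjoint, then an LCU. What differs is how the bosonic factor is produced.

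\emph{The paper's route.} It prepares uniform superpositions over $\mathrm{supp}(\mm)$ and over pairs in $\mathrm{supp}(\nn)$. It then loads the whole weight $\omega_{ij\ell}(\mm)\propto\rho_{ij\ell}(\mm)\sqrt{\lambda_i/(\lambda_j\lambda_\ell)}$ by one arithmetic rotation on the right, and cancels $|\nn|_0$ by a rotation on the left.

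\emph{Your route.} You use $\la\nn|a_j^\dag a_k^\dag a_i|\mm\ra=\sqrt{m_i}\cdot\sqrt{n_j(n_k-\delta_{jk})}$ and generate each square-root factor as a slot multiplicity on its own side. Only $\sqrt{\lambda_i/(\lambda_j\lambda_k)}$ is loaded arithmetically, and the in-place transition removes the need for the backward map $T'$ on the left.

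\emph{What this buys.} Your combinatorial overhead is $\sqrt{w\,w'(w'-1)}\le(k/\lambda_1)^{3/2}$, against the paper's $\sqrt{|\mm|_0}\,(p+2)\,|\nn|_0\sim(k/\lambda_1)^{5/2}$. So you actually get $\alpha=O(\sqrt{qs}\,Jk^2\lambda_1^{-5/2})$, a factor of order $k/\lambda_1$ below the stated value. You reach the stated $\alpha$ only by padding, and the smaller value would directly shrink the $\alpha_H t$ term in the LCHS cost.

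Two steps need repair.

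(i) The factor $\sqrt{w\,w'(w'-1)}$ (you wrote $w\sqrt{w'(w'-1)}$) depends on $|\mm|$, so it is not a global normalization. As written, the overlap is $\Pi_kC^\uparrow\Pi_k$ only up to an $\mm$-dependent rescaling, which is not a block encoding. One fix is to draw slots from all $L$ positions and send blank slots to junk. This gives $\sqrt{m_i/L}$ and $\sqrt{n_j(n_k-\delta_{jk})/(L(L-1))}$ with constant denominators. The other fix is compensating rotations, as in Step~4 of the paper's $U_L$.

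(ii) The slot labels must be uncomputed before the overlap is taken. The right state carries one slot label of $\mm$ and the left state two slot labels of $\nn$. This garbage does not match, so the Gram overlap would not reproduce the matrix element. Since the encoding is sorted, you can fix this as follows. Replace each slot by its offset inside the block of its index. Then un-prepare the uniform superposition over $[m_i]$, or over ordered distinct pairs in $[n_j]\times[n_k]$, controlled on the occupation numbers. Alternatively, compute the square roots and load them by rotations, as the paper does.

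Two smaller bookkeeping points.

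\emph{Junk flags.} Killing junk--junk overlaps needs separate flag registers for the left and right isometries, as with the paper's two ancilla registers. Orthogonality to $|0_{anc}\ra$ in a shared register is not enough.

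\emph{Query count.} The LCU of $U$ with $U^\dag$ applies each of $\voracle{c}$ and $\poracle{c}$ twice, once in each half of the Gram product. So ``one query'' should be read as $O(1)$; the paper's own proof is equally loose here.
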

We construct the qubit encoding of $\calH_k$ in Section~\ref{app:encoding}.
This section is mostly based on~\cite{engel2021linear,tanaka2023polynomial}.
Our block encoding of $C_k$ relies on certain transition operators for cubic bosonic terms.
These operators  are constructed in Section~\ref{app:transition}.
Finally, we combine these ingredients to prove Theorem~\ref{thm:block_encoding} in Section~\ref{app:block_encoding_proof}.

\subsection{Encoding of multi-indices by bit strings}
\label{app:encoding}

A  multi-index $\mm \in \calJ$ can be considered as a multiset\footnote{Recall that a multiset is
a set that may contain multiple copies of each element. The cardinality of a multiset is 
the sum of the multiplicities of all its elements.}
 $M \subseteq \{1,2,\ldots,N\}$
of cardinality $|M|=|\mm|=\sum_{i=1}^N m_i$ that contains $m_i$ copies of an integer $i$,  
\[
M = \{ \underbrace{1,1,\ldots,1}_{m_1},  \underbrace{2,2,\ldots,2}_{m_2}, \ldots,  \underbrace{N,N,\ldots,N}_{m_N}\}.
\]
Recall that $\calJ_k=\{\mm \in \calJ\, : \, \sum_{i=1}^N \lambda_i m_i \le k\}$, where $k$ is the cutoff parameter of
our regularization scheme. 
From $\lambda_1=\min_i \lambda_i>0$ one gets
\[
|\mm|=\sum_{i=1}^N m_i \le \lambda_1^{-1} \sum_{i=1}^N \lambda_i m_i \le \frac{k}{\lambda_1}.
\]
Thus
\be
\label{max_number_particles}
p \equiv  \max_{\mm \in \calJ_k}\; |\mm| \le \frac{k}{\lambda_1}.
\ee
We shall encode a multi-index $\mm \in\calJ_k$ by a 
$p$-tuple $(M_1,\ldots,M_p)$ such that 
$M_j \in \{1,2,\ldots,N\} \cup \{ \infty\}$ is the $j$-th smallest  element of  $M$ 
if $1\le j\le |\mm|$, and $M_j=\infty$ if $|\mm|<j\le p$. 
We shall use the convention $i< \infty$ for any integer $i$. 
As a concrete example, suppose $N=7$ and $p=4$. Consider a multi-index
$\mm = (2,0,0,0,0,0,1)$. The corresponding multiset $M=\{1,1,7\}$
has cardinality $|M|=3$. Thus 
$M_1=M_2=1$, $M_3=7$,  and $M_4=\infty$. The tuple $(M_1,\ldots,M_p)$ uniquely identifies $\mm$ since
$m_i$ is equal to the number of times an integer $i$ appears in $(M_1,\ldots,M_p)$.
Using the binary encoding of integers
one can identify $M_j$ with a bit string of length $\ell$, 
where  $\ell$ is the smallest integer such that $2^\ell \ge N+1$ (the symbol $\infty$
is encoded by the all-zero bit string).
Then $(M_1,\ldots,M_p)$ is a bit string of length $\ell p$
that uniquely identifies $\mm$.  The total number of qubits used to express the projected subspace
 $\calH_k$ is
\[
Q = \ell p \approx  k\lambda_1^{-1}  \log_2{(N)}.
\]
In the above example $N=7$ and thus $\ell=3$.
A multi-index $\mm = (2,0,0,0,0,0,1)$  is encoded by a qubit basis state
\[
 |100\ra \otimes |100\ra \otimes|111\ra \otimes  |000\ra
\]
where $p=4$ registers store the binary encodings of  $M_1=1$,  $M_2=1$, $M_3=7$, and $M_4=\infty$
 respectively (here the least significant bit is leftmost). In general, if $(M_1,\ldots,M_p)$ is an encoding of a multi-index
 $\mm\in \calJ_k$ then
\be
\label{encoded_lambda}
\sum_{i=1}^N \lambda_i m_i = \sum_{a=1}^p \lambda_{M_a}
\ee
if we use the convention $\lambda_\infty=0$.

\subsection{Transition operators for cubic bosonic terms}
\label{app:transition}

Consider  a Fock basis state $|\mm\ra$ with some multi-index  $\mm\in\calJ_k$
and a  cubic bosonic operator
\[
a_j^\dag a_\ell^\dag a_i.
\]
Here some of the indices $i,j,\ell$ may coincide. 
We have
\[
a_i|\mm\rangle=\sqrt{m_i}\,|\mm-e^i\rangle
\]
if $m_i\ge 1$ and $a_i|\mm\rangle=0$ if $m_i=0$. Suppose $m_i\ge 1$. Then 
\[
a_\ell^\dag|\mm-e^i\rangle
=
\sqrt{m_\ell+1-\delta_{i\ell}}\,
|\mm-e^i+e^\ell\rangle,
\]
and finally
\[
a_j^\dag|\mm-e^i+e^\ell\rangle
=
\sqrt{m_j+1-\delta_{ij}+\delta_{j\ell}}\,
|\mm-e^i+e^\ell+e^j\rangle.
\]
The final state $a_j^\dag a_\ell^\dag a_i|\mm\ra$ is proportional to  $|\tau_{ij\ell}(\mm)\ra$,
where $\tau_{ij\ell} \, : \, \ZZ^N \to \ZZ^N$ is the forward  transition map defined as
\[
\tau_{ij\ell}(\mm) = \mm - e^i + e^j + e^\ell.
\]
Thus  for all   $\nn,\mm \in \calJ$  one has 
\[
\langle \nn| \Pi_k a_j^\dag a_\ell^\dag a_i\Pi_k|\mm\rangle=
\rho_{ij\ell}(\mm)
\Delta_{ij\ell}(\nn,\mm),
\]
where
\[
\rho_{ij\ell}(\mm)
=\left\|a_j^\dag a_\ell^\dag a_i|\mm\rangle\right\|=\sqrt{
m_i\,
\bigl(m_\ell+1-\delta_{i\ell}\bigr)\,
\bigl(m_j+1-\delta_{ij}+\delta_{j\ell}\bigr)}
\]
and
\[
\Delta_{ij\ell}(\nn,\mm)=
\begin{cases}
1 & \mbox{if } m_i\ge 1, \ \nn=\tau_{ij\ell}(\mm), \
\mbox{ and }\nn,\mm\in\calJ_k,\\
0  & \mbox{otherwise.}
\end{cases}
\]
Define a transition operator $T\, : \, \calH_k \otimes (\CC^N)^{\otimes 3} \to \calH_k^{\otimes 2} \otimes  (\CC^N)^{\otimes 3} \otimes \CC^2$ such that 
\[
T |\mm,i,j,\ell\ra = \begin{cases}
|\tau_{ij\ell}(\mm),\mm,i,j,\ell,0\ra & \mbox{if} \quad m_i\ge 1 \quad \mbox{and} \quad \tau_{ij\ell}(\mm)\in \calJ_k,\\
|\mathsf{junk}(\mm), \mm,i,j,\ell,1\ra & \mbox{otherwise}.\\
\end{cases}
\]
Here $\mathsf{junk}(\mm)\in \calJ_k$ is an arbitrary basis state and the last single-qubit register stores a flag indicating whether the transition is valid.
Using the encoding of multi-indices by qubits from the previous section it is straightforward to implement
$T$ by a quantum circuit that uses $poly(k,\log{N})$ gates and ancillary qubits, and makes at most
$p$ queries to the oracle $\voracle{\lambda}$, where $p$ is the maximum number of particles defined in Eq.~(\ref{max_number_particles}).
The latter are needed to compute the  eigenvalue $\lambda_\nn = \sum_{i=1}^N \lambda_i n_i$
for $\nn=\tau_{ij\ell}(\mm)$ which determines whether $\nn \in \calJ_k$ (depending on whether $\lambda_\nn \le k$).

One can similarly implement a backward transition operator
$T'\,:\, \calH_k\otimes(\CC^N)^{\otimes 3}
\to
\calH_k^{\otimes 2}\otimes(\CC^N)^{\otimes 3}\otimes\CC^2$
defined as follows. Let
\[
\mm=\tau^{-1}_{ij\ell}(\nn)=\nn+e^i-e^j-e^\ell .
\]
Then
\[
T'|\nn,i,j,\ell\rangle
=
\begin{cases}
|\nn,\mm,i,j,\ell,0\rangle,
&
\mbox{if } \mm\in\calJ_k \mbox{ and } m_i\ge 1,\\
|\nn,\mathsf{junk}(\nn),i,j,\ell,1\rangle,
&
\mbox{otherwise}.
\end{cases}
\]
Here \(\mathsf{junk}(\nn)\in\calJ_k\) is an arbitrary basis state.

\subsection{Proof of Theorem~\ref{thm:block_encoding}}
\label{app:block_encoding_proof}

By definition,
\[
C_k=C_k^\uparrow-C_k^\downarrow,
\]
where
\[
C_k^\uparrow=
(q/2)^{1/2}
\sum_{i,j,\ell=1}^N
(\lambda_i\lambda_j\lambda_\ell)^{-1/2}
\lambda_i c_{ij\ell}\,
\Pi_k a_j^\dag a_\ell^\dag a_i\Pi_k
\]
and $C_k^\downarrow=(C_k^\uparrow)^\dag$.

It suffices to construct a  quantum circuit block encoding  $C_k^\uparrow$.
Then the conjugate circuit block encodes $C_k^\downarrow$.
The block encodings of $C_k^\uparrow$ and $C_k^\downarrow$ can be combined
to get a block encoding of $C_k=C_k^\uparrow- C_k^\downarrow$ using the standard techniques, 
see Lemma~52 in~\cite{gilyen2019quantum}.
Define
\[
\omega_{ij\ell}(\mm)=
(q/2)^{1/2}
(\lambda_i\lambda_j\lambda_\ell)^{-1/2}
\lambda_i\,
\rho_{ij\ell}(\mm).
\]
Then
\be
\label{Cplus_matrix_elem}
\langle \nn|C_k^\uparrow|\mm\rangle
=
\sum_{i,j,\ell=1}^N
c_{ij\ell}\,
\omega_{ij\ell}(\mm)\,
\Delta_{ij\ell}(\nn,\mm).
\ee

Our proof of Theorem~\ref{thm:block_encoding} is based on block encoding of Gram matrices, see Lemma~47 of ~\cite{gilyen2019quantum}.
Let $\calH_{anc}$ be an ancillary register of $O(1)$ qubits. 
We shall construct isometries 
\[
U_L,U_R \, : \, \calH_k \otimes \calH_{\mathrm{anc}}^{\otimes 2} \to \calH_k^{\otimes 2} \otimes (\CC^N)^{\otimes 3}\otimes \calH_{\mathrm{anc}}^{\otimes 2}
\]
such that  for any Fock basis vectors $|\nn\ra,|\mm\ra\in \calH_k$ one has
\be
\label{ULUR}
\la \nn|C_k^\uparrow|\mm\ra =\alpha_L \alpha_R \la \nn,0_{\mathrm{anc}},0_{\mathrm{anc}}| U_L^\dag U_R |\mm,0_{\mathrm{anc}},0_{\mathrm{anc}}\ra,
\ee
for certain coefficients   $\alpha_L,\alpha_R>0$.
The isometry $U_R$ must obey 
\be
\label{U_R}
U_R|\mm,0_{\mathrm{anc}},0_{\mathrm{anc}}\ra = \frac1{\alpha_R} \sum_{\nn\in \calJ_k} \sum_{i,j,\ell} c_{ij\ell} \omega_{ij\ell}(\mm) \Delta_{ij\ell}(\nn,\mm)
|\nn,\mm,i,j,\ell,0_{\mathrm{anc}},0_{\mathrm{anc}}\ra+ |\mathsf{junk}_R(\mm)\ra
\ee
for all $\mm \in\calJ_k$. Here
$|\mathsf{junk}_R(\mm)\ra$ are some irrelevant junk states. We require that $|\mathsf{junk}_R(\mm)\ra$ belongs to the subspace
where  the first ancilla register is orthogonal to $0_{\mathrm{anc}}$ and 
 the second ancilla register remains in the state $0_{\mathrm{anc}}$.
The isometry $U_L$ must obey
\be
\label{U_L}
U_L|\nn,0_{\mathrm{anc}},0_{\mathrm{anc}}\ra = \frac1{\alpha_L} \sum_{\mm \in \calJ_k} \sum_{i,j,\ell} \chi_{ij\ell}\Delta_{ij\ell}(\nn,\mm)
|\nn,\mm,i,j,\ell,0_{\mathrm{anc}},0_{\mathrm{anc}}\ra  + |\mathsf{junk}_L(\nn)\ra
\ee
for all $\nn\in \calJ_k$. Here  $|\mathsf{junk}_L(\nn)\ra$ are some irrelevant junk states. 
We require that  $|\mathsf{junk}_L(\nn)\ra$ belongs to the subspace where 
the first ancilla register remains in the state $0_{\mathrm{anc}}$  and the second ancilla register is orthogonal to $0_{\mathrm{anc}}$.
Recall that $\chi_{ij\ell}\in \{0,1\}$ is the indicator function of $c_{ij\ell}$ such that $\chi_{ij\ell}=1$ if $c_{ij\ell}\ne 0$ and $\chi_{ij\ell}=0$ if $c_{ij\ell}=0$.
The assumptions about the junk states ensure that 
\be
\label{left_right_junk}
\la \mathsf{junk}_L(\nn)|\mathsf{junk}_R(\mm)\ra = 0
\ee
for all $\nn,\mm$ and the junk states are orthogonal to any state in which both ancillary registers are in the state $0_{\mathrm{anc}}$.
It follows that the junk states do not contribute to the matrix element in Eq.~(\ref{ULUR}). Thus we get
\[
\la \nn,0_{\mathrm{anc}},0_{\mathrm{anc}}| U_L^\dag U_R |\mm,0_{\mathrm{anc}},0_{\mathrm{anc}}\ra
=\frac1{\alpha_R\alpha_L} \sum_{i,j,\ell} c_{ij\ell} \omega_{ij\ell}(\mm) \Delta_{ij\ell}(\nn,\mm) = \frac1{\alpha_R\alpha_L}  \la \nn|C_k^\uparrow|\mm\ra.
\]
Here the second equality follows from  Eq.~(\ref{Cplus_matrix_elem}). This proves Eq.~(\ref{ULUR}).
Combining block encodings of $C_k^\uparrow$ and its conjugate using Lemma~52 of~\cite{gilyen2019quantum}
gives the desired block encoding of $C_k$ with $\alpha=2\alpha_R\alpha_L$.

We  construct the desired  isometries $U_R$ and $U_L$ in the following two lemmas.
\begin{lemma}[\bf Isometry \(U_R\)]
Let
\[
\alpha_R = O\!\left(q^{1/2}J k^2\lambda_1^{-5/2}\right).
\]
Then the isometry $U_R$ satisfying Eq.~(\ref{U_R})
can be implemented by  a circuit of size $poly(p,\log{N})$ that makes one query to the oracle $\voracle{c}$
and $O(p)$ queries to the oracle $\voracle{\lambda}$.
Here $p$ is the maximum number of particles defined in Eq.~(\ref{max_number_particles}).
\end{lemma}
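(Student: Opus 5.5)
We build $U_R$ as a composition of standard state-preparation steps acting on $|\mm,0_{\mathrm{anc}},0_{\mathrm{anc}}\ra$. First, we need an index $i$ weighted so that the amplitude carries the factor $\sqrt{m_i}$. To get it, we use the multiset encoding $(M_1,\ldots,M_p)$ of Section~\ref{app:encoding}. We prepare a uniform superposition over positions $a\in\{1,\ldots,p\}$ and copy $M_a$ into an index register. Positions with $M_a=\infty$ are flagged as junk in the first ancilla. This produces $\sum_i \sqrt{m_i/p}\,|i\ra$, with the amplitude $\sqrt{m_i}$ arising from the $m_i$ positions that contain $i$. The garbage position register must be uncomputed. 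We do this by sorting the encoding, so that position $a$ can be replaced by its rank among equal entries, and then applying a controlled uniformization that returns it to a fixed state; the cost is $poly(p,\log N)$.

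Next, we query $\voracle{c}$ on $|i,0,0\ra$, which appends $J^{-1}\sum_{j,\ell}c_{ij\ell}|j,\ell\ra$, with anything outside the $0_{\mathrm{anc}}$ sector counted as junk in the first ancilla. We then apply the forward transition operator $T$ of Section~\ref{app:transition}, which writes $|\tau_{ij\ell}(\mm)\ra$ and a validity flag; invalid transitions are pushed into the junk sector. This step uses $O(p)$ queries to $\voracle{\lambda}$ to test whether $\lambda_\nn\le k$.

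The remaining step supplies the amplitude $(q/2)^{1/2}\lambda_i^{1/2}(\lambda_j\lambda_\ell)^{-1/2}(m_\ell+1-\delta_{i\ell})^{1/2}(m_j+1-\delta_{ij}+\delta_{j\ell})^{1/2}$. We compute the ratio $\omega_{ij\ell}(\mm)/(\sqrt{m_i}\,\beta)$ into a register using $\voracle{\lambda}$, apply a controlled rotation by this value divided by a constant $\beta$, and uncompute. Here $\beta$ is chosen so that the ratio never exceeds $1$. With $\alpha_R$ the product of all normalizations, the target state Eq.~(\ref{U_R}) is reached.

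The main obstacle is the normalization $\beta$, because the crude bound $\lambda_i^{1/2}/(\lambda_j\lambda_\ell)^{1/2}$ depends on $\lambda_N$. We would absorb $\lambda_i^{1/2}$ using Proposition~\ref{prop:coeff_upper_bound}, replacing the bound on $|c_{ij\ell}|\sqrt{\lambda_i/(\lambda_j\lambda_\ell)}$ by $J_{ij\ell}\sqrt{2/\lambda_1}$. If this is awkward inside a coherent rotation, we would instead use $\lambda_i m_i\le k$ for the occupied mode $i$. For the other factors, $m_j+2\le 2k/\lambda_1$ and $\lambda_j,\lambda_\ell\ge\lambda_1$ give $\beta\lesssim q^{1/2}k^{3/2}\lambda_1^{-2}$ per term. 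Collecting the $\sqrt{p}\le\sqrt{k/\lambda_1}$ from the uniform position superposition, the factor $J$ from $\voracle{c}$, and $\beta$ gives $\alpha_R=O(q^{1/2}Jk^2\lambda_1^{-5/2})$, as claimed.
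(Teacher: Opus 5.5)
Your construction is correct and is essentially the paper's. It uses the same pipeline: an index superposition, one query to $\voracle{c}$, the forward transition operator $T$, and a controlled $Y$-rotation. The normalization count is also the same, giving $\alpha_R=O(q^{1/2}Jk^2\lambda_1^{-5/2})$.

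The only difference is where $\sqrt{m_i}$ comes from. You generate it from slot multiplicities in the multiset encoding, which is why you need the rank-and-uniformize uncomputation. The paper instead takes a uniform superposition over $\mathrm{supp}(\mm)$, puts $\sqrt{m_i}$ and the compensating $|\mm|_0^{1/2}$ into the rotation, and bounds $\sqrt{\lambda_i m_i}\le\sqrt{k}$.

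Your fallback $\lambda_i m_i\le k$ is the right tool. The route through Proposition~\ref{prop:coeff_upper_bound} is not available here, because $c_{ij\ell}$ enters only as an amplitude of $\voracle{c}$ and cannot steer the rotation angle.
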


\begin{proof}
Fix an input basis state $|\mm\ra\in \calH_k$. We shall write $i\in \mm$ if $m_i\ne 0$.
Let $\mathrm{supp}(\mm) \subseteq [N]$ be the support of $\mm$ defined as the union
of all indices $i\in \mm$. 
Let $|\mm|_0=|\mathrm{supp}(\mm)|$.
The desired isometry $U_R$ is constructed in several steps.

\noindent
{\em  Step 1: Prepare a superposition over the support of $\mm$.}
Given the tuple $(M_1,\ldots,M_p)$ encoding $\mm$, 
the support of $\mm$  is simply the union of $M_1,\ldots,M_p$, ignoring the $\infty$ symbol.
One can compute the size of the support $|\mm|_0$ and prepare the uniform superposition of indices $i\in \mm$ by a circuit
of size $poly(p,\log{N})$. This gives a state 
\[
|\psi_1(\mm)\ra = |\mm|_0^{-1/2} \sum_{i\in \mm } |\mm\ra |i\ra.
\]
Note that $|\mm|_0\ne 0$ since the input Hilbert space $\calH_k$ excludes the vacuum state.

\noindent
{\em  Step 2: Query the oracle $\voracle{c}$.}
Apply $\voracle{c}$ to the register storing $|i\ra$.  This gives a state
\[
|\psi_2(\mm)\ra=
|\mm|_0^{-1/2} J^{-1} 
\sum_{i\in \mm }
\sum_{j,\ell=1}^N
c_{ij\ell} |\mm,i,j,\ell,0\ra
+
|\mathsf{junk}(\mm)\ra,
\]
where $|\mathsf{junk}(\mm)\ra$ are  junk states that belong to the subspace where the ancillary qubit is $|1\ra$.

\noindent
{\em Step 3:  Apply the transition operator. }
Applying the transition operator $T$ constructed above to the registers storing $|\mm,i,j,\ell\ra$ gives a state
\[
|\psi_3(\mm)\ra=|\mm|_0^{-1/2} J^{-1}
\sum_{i\in \mm}
\sum_{\nn \in \calJ_k} 
\sum_{j,\ell=1}^N
c_{ij\ell} \Delta_{ij\ell}(\nn,\mm)  |\nn,\mm,i,j,\ell,0,0\ra
+
|\mathsf{junk}(\mm)\ra.
\]
Here $|\mathsf{junk}(\mm)\ra$ are junk states that belong to the subspace where the ancillary two-qubit register  is orthogonal to  $|0,0\ra$.
By definition, $\Delta_{ij\ell}(\nn,\mm)=0$ if $m_i=0$. Hence we can extend the sum over $i\in \mm$ to the sum over all indices $i$ arriving at
\[
|\psi_3(\mm)\ra=|\mm|_0^{-1/2} J^{-1}
\sum_{\nn \in \calJ_k} 
\sum_{i,j,\ell=1}^N
c_{ij\ell} \Delta_{ij\ell}(\nn,\mm)  |\nn,\mm,i,j,\ell,0,0\ra
+
|\mathsf{junk}(\mm)\ra.
\]

\noindent
{\em Step 4:  Coherently load the weight $\omega_\tau(\mm)$.}
On the non-junk branch,  use the encoding of $\mm$
to compute the occupation numbers
$m_i,m_j,m_\ell$  together with the support size $|\mm|_0$,
and then compute the number
\[
\beta_{ij\ell}(\mm) =\alpha_R^{-1} J  |\mm|_0^{1/2} \omega_{ij\ell}(\mm).
\]
We shall see that for the chosen value of $\alpha_R$ one has $|\beta_{ij\ell}(\mm) |\le 1$ for all $i,j,\ell$ and $\mm\in \calJ_k$.
Apply a controlled $Y$-rotation to a fresh ancilla qubit  prepared in $|0\ra$. It implements a map
\[
|0\ra \mapsto \beta_{ij\ell}(\mm)   |0\ra + \sqrt{1-( \beta_{ij\ell}(\mm))^2} |1\ra.
\]
Including the state $|1\ra$ into the junk branch results in a state
\[
|\psi_4(\mm)\ra=
\frac1{\alpha_R}
\sum_{\nn \in \calJ_k} 
\sum_{i,j,\ell=1}^N
c_{ij\ell} \omega_{ij\ell}(\mm) \Delta_{ij\ell}(\nn,\mm)  |\nn,\mm,i,j,\ell,0,0,0\ra
+
|\mathsf{junk}(\mm)\ra.
\]
Here $|\mathsf{junk}(\mm)\ra$ are junk states that belong to the subspace where the ancillary three-qubit register  is orthogonal to  $|0,0,0\ra$.
The state $|\psi_4(\mm)\ra$ is the desired output of $U_R$  for a suitable bookkeeping on the ancilla registers
that satisfies Eq.~(\ref{left_right_junk}).

It remains to check that  $|\beta_{ij\ell}(\mm) |\le 1$ for all $i,j,\ell$ and $\mm\in \calJ_k$.
Using the definition of \(\omega_{ij\ell}(\mm)\), one gets
\[
\beta_{ij\ell}(\mm)
=
\alpha_R^{-1}J|\mm|_0^{1/2}
(q/2)^{1/2}
\sqrt{\frac{\lambda_i}{\lambda_j\lambda_\ell}}\,
\rho_{ij\ell}(\mm).
\]
Since \(\mm\in\calJ_k\), one has
\[
\lambda_i m_i\le k,
\qquad
|\mm|_0\le p,
\]
where \(p\) is the maximum number of particles. Also, by the definition of
\(\rho_{ij\ell}(\mm)\),
\[
\rho_{ij\ell}(\mm)
\le
\sqrt{m_i(p+1)(p+2)}.
\]
Therefore
\[
\omega_{ij\ell}(\mm)
\le
(q/2)^{1/2}
\sqrt{\frac{\lambda_i m_i}{\lambda_j\lambda_\ell}}\,
\sqrt{(p+1)(p+2)}
\le
(q/2)^{1/2}\sqrt{k}\,(p+2)\lambda_1^{-1}.
\]
Thus
\[
|\beta_{ij\ell}(\mm)|
\le
\alpha_R^{-1}J\sqrt{p}\,
(q/2)^{1/2}\sqrt{k}\,(p+2)\lambda_1^{-1}.
\]
If \(k\ge\lambda_1\), then \(p=\lceil k/\lambda_1\rceil\le 2k/\lambda_1\), and hence
\[
|\beta_{ij\ell}(\mm)|
\le
O\!\left(
\alpha_R^{-1}Jq^{1/2}k^2\lambda_1^{-5/2}
\right).
\]
Choosing  large enough  constant in the definition $\alpha_R=O\!\left(q^{1/2}Jk^2\lambda_1^{-5/2}\right)$
ensures \(|\beta_{ij\ell}(\mm)|\le 1\).
\end{proof}

\begin{lemma}[\bf Isometry $U_L$]
\label{lemma:UL}
Let
\[
\alpha_L =s^{1/2} k  \lambda_1^{-1}.
\]
Then the isometry $U_L$ satisfying Eq.~(\ref{U_L})
can be implemented by  a circuit of size $poly(p,\log{N})$ that makes one query to the oracle $\poracle{c}$
and $O(p)$ queries to the oracle $\voracle{\lambda}$. Here $p$ is the maximum number of particles defined in Eq.~(\ref{max_number_particles}).
\end{lemma}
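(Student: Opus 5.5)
We construct $U_L$ in the same spirit as $U_R$, but driven by the output multi-index $\nn$, and we use the position oracle $\poracle{c}$ instead of the value oracle. Fix an input basis state $|\nn\ra\in\calH_k$. The target state is, up to normalization $\alpha_L^{-1}$, a superposition over all triples $(i,j,\ell)$ with $\chi_{ij\ell}=1$ and all $\mm\in\calJ_k$ with $\nn=\tau_{ij\ell}(\mm)$ and $m_i\ge 1$. For fixed $(i,j,\ell)$ there is at most one such $\mm$, namely $\mm=\tau^{-1}_{ij\ell}(\nn)=\nn+e^i-e^j-e^\ell$. This forces $n_j\ge 1$ and $n_\ell\ge 1+\delta_{j\ell}$, so $j,\ell$ must lie in $\mathrm{supp}(\nn)$. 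The plan is therefore to enumerate ordered pairs $(j,\ell)$ from the support of $\nn$, let $\poracle{c}$ generate the channel index $i$, and finally compute $\mm$ with the backward transition operator $T'$.

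The steps are as follows.

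\textbf{Step 1.} From the tuple $(M_1,\ldots,M_p)$ encoding $\nn$, compute $|\nn|_0\le p$ and prepare the uniform superposition $|\nn|_0^{-1}\sum_{j,\ell\in\nn}|\nn\ra|0,j,\ell\ra$. This costs a circuit of size $poly(p,\log N)$.

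\textbf{Step 2.} Apply $\poracle{c}$ to the index registers $|0,j,\ell\ra$ together with a fresh ancilla. This yields $|\nn|_0^{-1}s^{-1/2}\sum_{j,\ell\in\nn}\sum_i\chi_{ij\ell}|\nn,i,j,\ell,0_{anc}\ra$ plus junk orthogonal to $0_{anc}$.

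\textbf{Step 3.} Apply $T'$, which appends $|\mm\ra$ with $\mm=\tau^{-1}_{ij\ell}(\nn)$ and sets a flag qubit to $|0\ra$ exactly when $\mm\in\calJ_k$ and $m_i\ge 1$. Failed branches go to junk. On the successful branch $\Delta_{ij\ell}(\nn,\mm)=1$. Whenever $\Delta_{ij\ell}(\nn,\mm)=1$ we automatically have $j,\ell\in\mathrm{supp}(\nn)$, so the restricted sum over $j,\ell\in\nn$ can be extended to all $j,\ell$ at no cost. Deciding whether $\lambda_\mm\le k$ uses equation~(\ref{encoded_lambda}) and $O(p)$ queries to $\voracle{\lambda}$.

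The result has the required form with effective normalization $|\nn|_0\sqrt{s}$. This normalization depends on $\nn$, whereas equation~(\ref{U_L}) requires a uniform constant $\alpha_L$. This is the main point needing care. The fix is the same as Step~4 of $U_R$: a controlled $Y$-rotation on one more fresh ancilla with amplitude $\beta(\nn)=|\nn|_0/p\le 1$, where $p\le k/\lambda_1$ is the bound from equation~(\ref{max_number_particles}). The $|1\ra$ branch is sent to junk. The overall normalization then becomes $p\sqrt{s}\le s^{1/2}k\lambda_1^{-1}=\alpha_L$; if $p<k/\lambda_1$, use $k/\lambda_1$ in place of $p$ in the rotation.

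The remaining work is bookkeeping. All junk produced in Steps 2--4 must be routed so that the first ancilla register stays in $0_{anc}$ while the second ancilla register is orthogonal to $0_{anc}$. This is the opposite convention from $U_R$, and it is what makes equation~(\ref{left_right_junk}) hold. It is achieved by assigning the $\poracle{c}$ ancilla, the $T'$ flag and the rotation qubit all to the second register. The final register ordering $|\nn,\mm,i,j,\ell\ra$ must also be arranged to match the ordering used in $U_R$ (its registers hold $\tau(\mm)=\nn$ and $\mm$); this is a qubit relabeling. The resource count is one query to $\poracle{c}$, $O(p)$ queries to $\voracle{\lambda}$, and $poly(p,\log N)$ gates, as claimed.
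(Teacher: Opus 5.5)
Your proposal is correct and follows the paper's proof step for step. The steps are the uniform superposition over ordered pairs in $\mathrm{supp}(\nn)$, one query to $\poracle{c}$, the backward transition $T'$ (using that $\Delta_{ij\ell}(\nn,\mm)=1$ forces $j,\ell\in\mathrm{supp}(\nn)$, so the sum extends to all pairs), and a controlled $Y$-rotation with amplitude $\beta(\nn)=\alpha_L^{-1}s^{1/2}|\nn|_0=|\nn|_0\lambda_1/k\le 1$. Your explicit ancilla assignment for Eq.~(\ref{left_right_junk}) and your register-ordering remark (in practice a cheap SWAP of the two $\calH_k$ registers in one of the circuits) just spell out bookkeeping the paper leaves implicit.
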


\begin{proof}
Fix an input basis state $|\nn\ra\in \calH_k$. The desired isometry $U_L$ is constructed in several steps.

\noindent
{\em  Step 1: Prepare  a superposition over the support of $\nn$.}
Given the tuple $(N_1,\ldots,N_p)$ encoding $\nn$ (using the encoding of Section~\ref{app:encoding}),
the support of $\nn$  is simply the union of $N_1,\ldots,N_p$, ignoring the $\infty$ symbol.
One can compute the size of the support $|\nn|_0$ and prepare two copies of  the uniform superposition of indices in $\nn$. 
This gives a state 
\[
|\psi_1(\nn)\ra = |\nn|_0^{-1} \sum_{j,\ell \in \nn} \;  |\nn,j,\ell\ra
\]
that can be prepared by a circuit of size $poly(p,\log{N})$. 
Note that $|\nn|_0\ne 0$ since the input Hilbert space $\calH_k$ excludes the vacuum state.

\noindent
{\em  Step 2: Query the oracle $\poracle{c}$.}
Apply $\poracle{c}$ to the register storing $|j,\ell\ra$.  This gives a state
\[
|\psi_2(\nn)\ra=|\nn|_0^{-1}
 s^{-1/2} \sum_{i=1}^N \sum_{j,\ell \in \nn}  \chi_{ij\ell}  |\nn,i,j,\ell,0\ra
+
|\mathsf{junk}(\nn)\ra,
\]
where $|\mathsf{junk}(\nn)\ra$ are  junk states that belong to the subspace where the ancillary qubit is $|1\ra$.

\noindent
{\em Step 3:  Apply the backward transition operator.}
Applying the backward transition operator $T'$ constructed above to the registers storing $|\nn,i,j,\ell\ra$
gives a state
\[
|\psi_3(\nn)\ra=|\nn|_0^{-1}
 s^{-1/2} \sum_{\mm \in \calJ_k} 
 \sum_{i=1}^N \sum_{j,\ell \in \nn}  
 \chi_{ij\ell}  \Delta_{ij\ell}(\nn,\mm) |\nn,\mm,i,j,\ell,0,0\ra
+
|\mathsf{junk}(\nn)\ra.
\]
Here $|\mathsf{junk}(\nn)\ra$ are junk states that belong to the subspace where the ancillary two-qubit register  is orthogonal to  $|0,0\ra$.
If $j\notin \nn$ or $\ell \notin \nn$ then  $\Delta_{ij\ell}(\nn,\mm)=0$
since $\nn=\tau_{ij\ell}(\mm)$ implies $j\in \nn$ and $\ell\in \nn$.
Hence we can extend the sum over indices $j,\ell \in \nn$
to all pairs of indices arriving at
\[
|\psi_3(\nn)\ra=|\nn|_0^{-1}
 s^{-1/2} \sum_{\mm \in \calJ_k} 
 \sum_{i,j,\ell=1}^N 
 \chi_{ij\ell}  \Delta_{ij\ell}(\nn,\mm) |\nn,\mm,i,j,\ell,0,0\ra
+
|\mathsf{junk}(\nn)\ra.
\]

\noindent
{\em Step 4:  Cancel the $\nn$-dependent normalization factor.}
On the non-junk branch,  use the encoding of $\nn$
to compute the support size $|\nn|_0$
and then compute the number
\[
\beta(\nn) =\alpha_L^{-1} s^{1/2} |\nn|_0
\]
We shall see that for the chosen value of $\alpha_L$ one has $|\beta(\nn) |\le 1$ for all  $\nn\in \calJ_k$.
Apply a controlled $Y$-rotation to a fresh ancilla qubit  prepared in $|0\ra$. It implements a map
\[
|0\ra \mapsto \beta(\nn)   |0\ra + \sqrt{1-( \beta(\nn))^2} |1\ra.
\]
Including the state $|1\ra$ into the junk branch results in a state
\[
|\psi_4(\nn)\ra=
\frac1{\alpha_L}
 \sum_{\mm \in \calJ_k} \sum_{i,j,\ell=1}^N \chi_{ij\ell}  \Delta_{ij\ell}(\nn,\mm) |\nn,\mm,i,j,\ell,0,0,0\ra
+
|\mathsf{junk}(\nn)\ra.
\]
Here $|\mathsf{junk}(\nn)\ra$ are junk states that belong to the subspace where the ancillary three-qubit register  is orthogonal to  $|0,0,0\ra$.
The state $|\psi_4(\nn)\ra$ is the desired output of $U_L$   for a suitable bookkeeping on the ancilla registers that satisfies Eq.~(\ref{left_right_junk}).

It remains to check that  $|\beta(\nn) |\le 1$ for all $\nn\in \calJ_k$.
Since $|\nn|_0\le p\le k\lambda_1^{-1}$, one has
\[
\beta(\nn)  =\alpha_L^{-1} s^{1/2} |\nn|_0  \le \alpha_L^{-1} s^{1/2}  p \le \alpha_L^{-1} s^{1/2} k \lambda_1^{-1} = 1
\]
for the chosen $\alpha_L$. 
\end{proof}

The above shows that the circuit $U_L^\dag U_R$ block encodes $C_k^\uparrow$ with a normalization  $\alpha_R \alpha_L$.
As a consequence,  the circuit $U_R^\dag U_L$ block encodes $C_k^\downarrow=(C_k^\uparrow)^\dag$ with a normalization  $\alpha_R \alpha_L$.
Combining block encodings of $C_k^\uparrow$ and $C_k^\downarrow$ using  Lemma~52 of~\cite{gilyen2019quantum}
gives a block encoding of $C_k=C_k^\uparrow- C_k^\downarrow$ with a normalization 
\[
\alpha=2\alpha_R \alpha_L =  4(qs)^{1/2} J k^3  \lambda_1^{-7/2}
\]
and the same query complexity (provides that a controlled oracle unitary counts as a single query).

\section{Quantum algorithm}
\label{sec:algorithm}

Let $k$ be a fixed regularization cutoff. 
Our  algorithm simulates the regularized Kolmogorov equation projected onto the low-dissipation subspace $\calH_k$.
Let $\Pi_k$ be the projector onto $\calH_k$.
Define a quantity
\be
\label{v1_approximation}
v_1(t,x) =  \la  \psi_{out}(x)|\Pi_k|\psi_k(t)\ra=\la \psi_{out}(x)|\Pi_ke^{t(-A + \Pi_kC\Pi_k)} \Pi_k|\psi(0)\ra.
\ee
In this section 
we construct quantum circuits $U_{init}$, $U_{sim}$, and $U_{out}$ 
such that 
$U_{init}$ prepares a normalized state $\Pi_k |\psi(0)\ra/\alpha_{init}$,
$U_{sim}$ block encodes a normalized time evolution operator $e^{t(-A+\Pi_k C\Pi_k)}/\alpha_{sim}$ restricted to $\calH_k$,
and $U_{out}$ prepares a normalized state $\Pi_k |\psi_{out}(x)\ra/\alpha_{out}$.
Here $\alpha_{init},\alpha_{sim}$, $\alpha_{out}>0$ are normalization factors defined below.
These circuits act on the Hilbert space $\calH_k$
expressed using $O(k\lambda_1^{-1} \log{N})$ qubits, as described in Section~\ref{sec:block_encoding},
and use  $poly(k,\lambda_1^{-1},\log{N})$ ancillary qubits.
Then 
\[
v_1(t,x) \approx \alpha \la {\bf 0}| U_{out}^{-1} U_{sim} U_{init}|{\bf 0}\ra, \qquad \alpha= \alpha_{init} \alpha_{sim} \alpha_{out}.
\]
Here $|{\bf 0}\ra$ is the all-zero qubit state.
The matrix element $\la {\bf 0}| U_{out}^{-1} U_{sim} U_{init}|{\bf 0}\ra$ can be estimated with a small additive error
using the Hadamard test. 
Finally, the desired expected value of the observable can be expressed as 
\begin{align}
v(t,x) & = \la \psi_{out}(x)|\psi(t)\ra =   \la \psi_{out}(x)|\Pi_k|\psi(t)\ra +  \la \psi_{out}(x)|\Pi_k^\perp |\psi(t)\ra  \nonumber \\
&= v_1(t,x) +   \la \psi_{out}(x)|\Pi_k|\psi(t)-\psi_k(t)\ra +   \la \psi_{out}(x)|\Pi_k^\perp |\psi(t)\ra. \label{v_approximation_eq1}
\end{align}
We shall choose $k$ large enough that the error introduced by the last two terms is negligible.
This error can be upper bounded using Theorem~\ref{thm:regul}.
For convenience of the reader, the following table collects all parameters affecting the runtime of our algorithm.
\begin{center}
\begin{tabular}{|c|c|}
\hline
$N$ & number of variables \\
\hline
$q$ & noise rate \\
\hline
$s$ & number of drift channels  \\
\hline
$J$ & drift strength \\
\hline
$k$ & regularization cutoff \\
\hline$\lambda_1$ & smallest dissipation rate\\
\hline
$\epsilon$ & approximation error \\
\hline
\end{tabular}
\end{center}

\subsection{Initialization}
\label{sec:init}

Below we assume that $\Pi_k|\psi(0)\ra\ne 0$ since otherwise $v_1(t,x)=0$ and there is nothing to do.
Recall that amplitudes of the initial state $\psi(0)$ in the Fock basis are determined by the expansion of the centered observable function $u_0(x)-\overline{u}_0$ 
in the Hermite basis: 
\be
\label{u0_expansion}
u_0(x) -\overline{u}_0= \sum_{\mm \in \calJ} \la \mm|\psi(0)\ra \HH_{\mm}(x),
\ee
where 
\[
\overline{u}_0=\int_{\RR^N} dx\, \mu(x) u_0(x)
\]
is the mean value of the observable.
It will be convenient to define parameters
\[
\sigma_i=\sqrt{\frac{q}{2\lambda_i}}.
\]
Then  the Hermite basis is 
\be
\label{hermit_poly_restated}
\mathbb{H}_{\multi{m}}(x) = \prod_{i=1}^N  \frac1{\sqrt{(m_i)!}} \herm_{m_i} \left(x_i/\sigma_i\right)
\ee
with $\mm \in \calJ=\ZZ_{\ge 0}^N \setminus {\{0^N\}}$.

First, let us explain how to compute the mean value $\overline{u}_0$.
By definition, the steady distribution $\mu(x)$ is a product of  $N$ normal distributions 
with the zero mean and standard deviations $\sigma_1,\ldots,\sigma_N$.
If $u_0(x)=\sum_{i=1}^N b_i x_i$ is a linear function then obviously  $\overline{u}_0=0$.
If $u_0(x)=\prod_{i=1}^N x_i^{d_i}$ is a monomial then
Wick's theorem gives
\be
\label{Wick}
\overline{u}_0=
\int_{\RR^N} dx\, \mu(x) \prod_{i=1}^N x_i^{d_i} =  \prod_{i=1}^N \frac{(d_i)! \sigma_i^{d_i}}{2^{d_i/2} (d_i/2) !} 
\ee
if $d_1,\ldots,d_N$ are even and $\overline{u}_0=0$  otherwise. 

Next let us explain how to prepare the initial state $\Pi_k|\psi(0)\ra$.
Consider first the linear observable.
From Eq.~(\ref{hermit_poly_restated})  one gets
$x_i=\sigma_i \HH_{e^i}(x)$ and thus
\[
u_0(x)=\sum_{i=1}^N b_i x_i =\sum_{i=1}^N b_i\sigma_i\HH_{e^i}(x).
\]
Substituting this into Eq.~(\ref{u0_expansion}) gives
\be
\label{init_linear_observable}
|\psi(0)\ra = \sum_{i=1}^N b_i\sigma_i |e^i\ra \quad \mbox{and} \quad
\Pi_k |\psi(0)\ra=\sum_{i\, : \, \lambda_i\le k} b_i\sigma_i |e^i\ra.
\ee
Below we write $|0_{anc}\ra$ for the all-zero state of the relevant ancilla register.
The size of the ancilla register may change after each step.
We write $|\mathsf{junk}\ra$ for any state in which the ancilla register is orthogonal to $|0_{anc}\ra$.
Such states will not contribute to the final outcome.
Let $\CC^N$ be the data register initialized in the state $|0\ra$.
Applying the oracle $\voracle{u}$ to the initial state $|0,0_{anc}\ra\in \CC^N \otimes \calH_{anc}$ gives a state 
\[
|\phi_1\ra=
\voracle{u}|0,0_{anc}\ra
=\frac1{J_u}\sum_{i=1}^N b_i |i,0_{anc}\ra+|\mathsf{junk}\ra,
\]
Tensor in ancillary qubits used by the oracle $\voracle{\lambda}$ to store
the binary representation of $\lambda_i$. 
Applying the oracle $\voracle{\lambda}$ gives a state 
\[
|\phi_2\ra =\frac1{J_u} \sum_{i=1}^N b_i |i,\lambda_i,0_{anc}\ra+|\mathsf{junk}\ra.
\]
Introduce a fresh ancilla qubit $|0\ra$ and 
perform the controlled rotation
\[
|\lambda_i,0\ra
\mapsto
\frac{\sigma_i}{\sigma_1}\,|\lambda_i,0\ra + \sqrt{1-\frac{\sigma_i^2}{\sigma_1^2}}\,|\lambda_i,1\ra.
\]
Note that  $\sigma_1=\max_i \sigma_i$ since $\lambda_1=\min_i \lambda_i$.
This gives a state
\[
|\phi_3\ra =\frac1{\sigma_{1} J_u}  \sum_{i=1}^N  b_i\sigma_i |i,\lambda_i,0_{anc}\ra+|\mathsf{junk}\ra.
\]
Introduce a fresh ancilla qubit $|0\ra$ and apply the controlled bit flip
\[
|\lambda_i,0\ra \mapsto \left\{
\ba{rcl}
|\lambda_i,0\ra &\mbox{if}& \lambda_i\le k,\\
|\lambda_i,1\ra &\mbox{if}& \lambda_i>k.\\
\ea
\right.
\]
This gives a state
\[
|\phi_4\ra =\frac1{\sigma_{1} J_u}  \sum_{i\, : \, \lambda_i\le k}  b_i\sigma_i |i,\lambda_i,0_{anc}\ra+|\mathsf{junk}\ra.
\]
Finally, use $\voracle{\lambda}^{-1}$ to uncompute $\lambda_i$ and apply a map $i\mapsto e^i$ that embeds
the data register $\CC^N$ into the Fock space $\calH_k$. This gives a state
\[
|\phi_5\ra = \frac1{\sigma_{1} J_u}  \Pi_k |\psi(0)\ra \otimes |0_{anc}\ra + |\mathsf{junk}\ra \equiv \frac1{\alpha_{init}} U_{init} |\bf{0}\ra,
\]
where
\[
\alpha_{init} = \sigma_1 J_u = J_u \sqrt{\frac{q}{2\lambda_1}}.
\]
%Postselecting all-zero state of the ancilla register prepares the normalized version of $\Pi_k|\psi(0)\ra$.
%The postselection success probability is 
%\[
%P_{init}
%=\frac1{\sigma_1^2 J_u^2} \|\Pi_k \psi(0)\|^2 =\Omega(\lambda_1 q^{-1} \|\Pi_k \psi(0)\|^2).
%\]
%Here we used the assumption $J_u=O(1)$.
%The value of $P_{init}$ can be estimated with an additive error $\delta$  by repeating the circuit $O(\delta^{-2})$ times and computing the fraction of shots
%that resulted in the successful postselection. Since   $\sigma_1=\sqrt{q/(2\lambda_1)}$ is known exactly,
%the norm  $\|\Pi_k \psi(0)\|^2$ can be estimated with an additive error $\delta$ by repeating the circuit $O(\sigma_1^{4} \delta^{-2})=O(q^2\lambda_1^{-2}\delta^{-2})$
%times.

To apply the regularization theorem we need upper bounds on
$\|\psi(0)\|^2$ and $\la \psi(0)|A|\psi(0)\ra$. From the above one easily gets
\[
\|\psi(0)\|^2
=
\sum_{i=1}^N b_i^2\frac{q}{2\lambda_i}
\le
\frac{q}{2\lambda_1}J_u^2  = O(q\lambda_1^{-1})
\]
and 
\[
\la\psi (0)|A|\psi(0)\ra
=
\sum_{i=1}^N b_i^2\frac{q}{2}
\le
\frac q2 J_u^2  = O(q).
\]

Next, consider a  monomial observable $u_0(x)=\prod_{i=1}^N x_i^{d_i}$. We shall use a well-known identity
\[
y^{n} = n! \sum_{\substack{0\le \ell \le n \\ \ell = n {\pmod 2}\\}}\;  \frac1{2^{(n-\ell)/2} ((n-\ell)/2)! \ell !} \herm_{\ell}(y).
\]
Substituting $n=d_i$ and  $y=x_i/\sigma_i$ in the above identity, taking the product over $i=1,\ldots,N$
and using Eq.~(\ref{hermit_poly_restated}) 
 one gets
\[
u_0(x)-\overline{u}_0 = \sum_{\mm \in \calS} \alpha(\mm) \HH_\mm(x),  \qquad \alpha(\mm)=\prod_{i=1}^N \frac{\sigma_i^{d_i}(d_i)!}{2^{(d_i-m_i)/2} ((d_i-m_i)/2)! \sqrt{m_i !}},
\]
where
\[
\calS=\{ \mm \in \calJ\, : \, m_i\le d_i \quad \mbox{and} \quad m_i=d_i {\pmod 2} \quad \mbox{for all $i=1,\ldots,N$}\}.
\]
As a consequence,
\be
\label{monomial_psi0}
|\psi(0)\ra = \sum_{\mm \in \calS} \alpha(\mm) |\mm\ra \quad \mbox{and} \quad \Pi_k|\psi(0)\ra =  \sum_{\mm \in \calS \cap \calJ_k} \alpha(\mm) |\mm\ra.
\ee
Note that $|\calS|=O(1)$ since we assumed $d=\sum_{i=1}^N d_i=O(1)$. Thus $\Pi_k|\psi(0)\ra$ is a superposition of $O(1)$ Fock basis vectors $|\mm\ra$
with easy-to-compute amplitudes $\alpha(\mm)$. 
The norm of $\Pi_k |\psi(0)\ra$ can be computed directly from Eq.~(\ref{monomial_psi0}) by summing $|\alpha(\mm)|^2$ over $\mm \in \calS\cap \calJ_k$.

It is known~\cite{mao2024toward} that any $Q$-qubit  superposition of $\chi$ basis states
can be prepared by a quantum circuit of size linear in $Q + Q \chi/\log{(Q)}$. Recall that the Hilbert space $\calH_k$ is expressed
using $Q=O(k\lambda_1^{-1}\log{N})$ qubits.
Since $\Pi_k|\psi(0)\ra \in \calH_k$ is a superposition of  $O(1)$ basis states,
the normalized version of $\Pi_k|\psi(0)\ra$ can be prepared by a quantum circuit $U_{init}$  of size $O(k\lambda_1^{-1}\log{N})$, such that 
\[
U_{init} |{\bf 0}\ra = \frac1{\alpha_{init}} \Pi_k |\psi(0)\ra \otimes |0_{anc}\ra + |\mathsf{junk}\ra
\]
for any given normalization  $\alpha_{init}\ge \|\Pi_k \psi(0)\|$.
The above shows that 
\[
\la \psi(0)|\psi(0)\ra \le |\calS|  \max_{\mm \in \calS} |\alpha(\mm)|^2 \quad \mbox{and} \quad \la \psi(0)|A|\psi(0)\ra \le |\calS| \max_{\mm \in \calS} \lambda_\mm |\alpha(\mm)|^2,
\]
with $|\calS|=O(1)$ and $\lambda_\mm = \sum_{i=1}^N \lambda_i m_i$. 
Since $\sigma_i\le \sigma_1=\sqrt{q/(2\lambda_1)}$, one gets
\[
|\alpha(\mm)|^2 \le O(\sigma_1^{2d})=O(q^d \lambda_1^{-d}).
\]
As a consequence, $\la \psi(0)|\psi(0)\ra \le O(q^d \lambda_1^{-d})$ and we can choose
\[
\alpha_{init} = O(q^d \lambda_1^{-d}).
\]
To upper bound $\lambda_\mm |\alpha(\mm)|^2$ note that $|\alpha(\mm)|^2$ contains a positive integer power of $\sigma_i^2 = q/(2\lambda_i)$ for each $m_i\ge 1$.
Hence $\lambda_i  m_i |\alpha(\mm)|^2 \le O(q \sigma_1^{2d-2}m_i)$ for all $i$ and we get
\[
\lambda_\mm  |\alpha(\mm)|^2 = \sum_{i=1}^N \lambda_i m_i |\alpha(\mm)|^2 \le O(q\sigma_1^{2d-2}) \sum_{i=1}^N m_i = |\mm| \cdot poly(q,\lambda_1^{-1}).
\]
Furthermore, $|\mm|=O(1)$ for all $\mm \in \calS$. Thus 
\[
\la \psi(0)|A|\psi(0)\ra \le O(1) \max_{\mm \in \calS} \lambda_\mm |\alpha(\mm)|^2= poly(q,\lambda_1^{-1}).
\]
%%%%%%%%%%%%%%%%%%%%%%%%%%%%%

Finally, let us prove that the chosen observables are robust to noise in the initial condition in the sense that their variance
vanishes at least linearly  with the noise rate $q$ and is independent of the dimension $N$. 
Let 
\be
\nu(x) = \mathrm{Var}_{z\sim \mu} u_0(x+z),
\ee
where $z\in \RR^N$ is a random vector sampled from the steady distribution $\mu$. 
For the linear observable one has $u_0(x+z)=u_0(x)+u_0(z)$ and 
\[
\nu(x)=  \mathrm{Var}_{z\sim \mu} u_0(z) = \EE_{z\sim \mu} u_0(z)^2 = \sum_{i=1}^N \sigma_i^2 b_i^2 \le \sigma_1^2 J_u^2 = O(q\lambda_1^{-1}).
\]
We claim that for the monomial degree-$d$ observable 
\[
\nu(x) 
\le
O\left( q \lambda_1^{-1} (\|x\|+\sqrt{q/\lambda_1})^{2d-2}\right),
\]
where the constant factor hidden in the big-$O$ notation may depend on $d$.
Indeed, Gaussian Poincare inequality asserts that for any smooth function $F\, : \, \RR^N\to \RR$ one has 
\[
\mathrm{Var}_{z\sim \mu}F(z)
\le
\sum_{i=1}^N \sigma_i^2\,
\EE_{z\sim\mu}
\left(\partial_i F(z)\right)^2.
\]
Choose $F(z)=u_0(x+z)$. Then
$\partial_i F(z)=0$ unless $d_i\ge 1$.
Below we only consider indices $i$ with $d_i\ge 1$. Then
\[
\partial_i F(z)
=
d_i (x_i+z_i)^{d_i-1}
\prod_{j\ne i}(x_j+z_j)^{d_j}.
\]
For any integer $r\in [0,d]$ Gaussian moments formula gives
\[
\EE_{z\sim N(0,\sigma_j^2)}  (x_j+z_j)^{2r}
\le
c_r\left(|x_j|+\sigma_j\right)^{2r}
\le
c_r\left(|x_j|+\sigma_1\right)^{2r} \le c_r \left(\|x\|+\sigma_1\right)^{2r} 
\]
where $c_r$ is a constant that depends only on $r$.
By definition, the function $F(z)$ depends only on $O(1)$ variables of $z$.
Thus the sum over $i$ in the Gaussian Poincare inequality contains $O(1)$ terms.
Each coefficient $\sigma_i^2$ is upper bounded by $\sigma_1^2=q/(2\lambda_1)$.
This gives the claimed upper bound. 

\subsection{Time evolution}
\label{sec:time_evolution}

Time evolution generated by the projected Kolmogorian  can be simulated using
the Linear Combination of Hamiltonian Simulations (LCHS) method of An, Liu, and Lin~\cite{an2023linear,an2026quantum}.
Here we use an improved version of LCHS due to Low and Somma~\cite{low2025optimal}.
Given Hermitian operators $L$ and $H$ such that $L$ is positive semi-definite, LCHS 
 approximates  a non-unitary time evolution operator $e^{-(L+iH)t}$  with $t \ge 0$ by a linear combination of unitary time evolutions, 
 \be
 \label{LCHS_eq1}
e^{-t(L+iH)}\approx \frac1{\sqrt{2\pi}}\int_{-\infty}^\infty dk\, \hat{f}(k) e^{-t(ikL + iH)} 
\ee
for a suitable kernel  function $\hat{f}\, : \, \RR \to \CC$, where the approximation error  is controlled by Theorem~1 of~\cite{low2025optimal}.
 The integral is then approximated by a finite sum 
 and block encoded by a quantum circuit
using the 
 linear combination of unitaries method~\cite{berry2015simulating,low2019hamiltonian}.
More precisely,  suppose we have access to oracles implementing
block encodings of $L$ and $H$ with some normalization constants $\alpha_L$ and $\alpha_H$.
Let us denote these oracles $\mathrm{BE}(L/\alpha_L)$ and $\mathrm{BE}(H/\alpha_H)$.
Suppose the oracles act on at most $m$ qubits. By combining Theorems~2,3,4 and Lemma~16 of~\cite{low2025optimal} one obtains
a quantum circuit $U_{sim}$ that block encodes an $\epsilon$-approximation of $e^{-t(L+iH)}$ with a normalization constant $\alpha_{sim}=O(1)$ such that 
\[
\| e^{-t(L+iH)} - \alpha_{sim}  \la 0_{anc}|U_{sim}|0_{anc}\ra \| \le \epsilon_{sim},
\]
where $|0_{anc}\ra$ is the all-zero state of all ancillary qubits.
The circuit $U_{sim}$ makes at most
\be
\label{LCHS_query_complexity}
n_{query} = O(\alpha_H t)  + O(\alpha_L t  \log{(1/\epsilon_{sim})}) +  O(\log{(1/\epsilon_{sim})})
\ee
queries to the oracles $\mathrm{BE}(L/\alpha_L)$ and $\mathrm{BE}(H/\alpha_H)$ and uses
\[
n_{gate}=O(m n_{query}) + \tilde{O}(n_{query})
\]
two-qubit gates. For simplicity, we shall ignore the second term in the gate count (it has logarithmic scaling with $t$, $\|L\|$, and doubly logarithmic scaling with $1/\epsilon_{sim}$).
Then
\be
\label{LCHS_gate_complexity}
n_{gate} = O(m n_{query}).
\ee

Let us specialize this general result  to simulate time evolution $e^{t(-A_k+C_k)}$ generated by the projected Kolmogorian acting on the low-dissipation subspace $\calH_k$.
Choose  $L=A_k$ and $H=iC_k$. Then
$L$ and $H$ are Hermitian, $L$ is positive definite. LCHS gives a quantum circuit $U_{sim}$ such that 
\be
\label{LCHS_eq2}
\| e^{t(-A_k+C_k)} - \alpha_{sim}  \la 0_{anc}|U_{sim}|0_{anc}\ra \| \le \epsilon_{sim}, \qquad \alpha_{sim}=O(1).
\ee
Our Theorem~\ref{thm:block_encoding} gives an implementation of the oracle $\mathrm{BE}(H/\alpha_H)$ with
$\alpha_H=4(qs)^{1/2} J k^3  \lambda_1^{-7/2}$. This requires  $m=poly(k\lambda_1^{-1},\log{N})$ qubits. 
By definition, $A_k$ is a diagonal operator with diagonal entries $\lambda_\mm = \sum_{i=1}^N \lambda_i m_i$ for $\mm \in \calJ_k$.
In particular, $\lambda_\mm \in [\lambda_1,k]$.
Thus the oracle $\mathrm{BE}(L/\alpha_L)$ with a normalization constant  $\alpha_L=k$ can be implemented by querying the value oracle
 $\voracle{\lambda}$ at most $O(k\lambda_1^{-1})$ times (since any multi-index $\mm \in \calJ_k$ has at most $k\lambda_1^{-1}$ nonzero indices $m_i\ne 0$).
  From Eq.~(\ref{LCHS_query_complexity}), one infers
 that the  query complexity of the circuit $U_{sim}$ 
in terms of the oracles $\voracle{c}$, $\voracle{\lambda}$, and $\poracle{c}$ is at most
\be
\label{LCHS_query_complexity_KE}
n_{query} = O(t(qs)^{1/2} J k^3  \lambda_1^{-7/2}) + O(kt  \log{(1/\epsilon_{sim})})  + O(\log{(1/\epsilon_{sim})}).
\ee
From Eq.~(\ref{LCHS_gate_complexity}) one infers that 
the gate count of $U_{sim}$ scales as
\be
\label{LCHS_gate_complexity_KE}
n_{gate} = poly(k\lambda_1^{-1}, \log{N}) n_{query}.
\ee

\subsection{Readout state preparation}
\label{sec:readout_state_prep}

Let $x\in \RR^N$ be the target initial condition for the SDE. Let 
\[
\mathrm{supp}(x)=\{i\in [N] \,:\, x_i\ne 0\}
\]
be the support of $x$. Recall that we assumed 
\[
r_x\equiv |\mathrm{supp}(x)|=O(1).
\]
The vector $x$ is specified by a list of pairs $(i,x_i)$ with $i \in \mathrm{supp}(x)$.
By definition of the readout state, 
\[
\Pi_k |\psi_{out}(x)\ra = \sum_{\mm \in \calS}\;  
|\mm\ra \cdot  \prod_{i=1}^N \frac{a_i^{m_i}}{\sqrt{(m_i)!}}, \qquad a_i =x_i \sqrt{\frac{2 \lambda_i}{q}},
\]
\[
\calS=
\left\{
\mm\in \calJ_k\, :\, 
\mathrm{supp}(\mm)\subseteq \mathrm{supp}(x)
\right\}.
\]
Let $p$ be the smallest integer such that $p\ge k\lambda_1^{-1}$.
Then  $\max_i m_i \le p$ for any $\mm \in \calJ_k$ and 
\[
|\calS|\le (p+1)^{r_x} = poly(k,\lambda_1^{-1}).
\]
Indeed, $\mm \in \calS$ implies $m_i=0$ for all $i\notin \mathrm{supp}(x)$
and $m_i \in \{0,1,\ldots,p\}$ for $i \in \mathrm{supp}(x)$.
Then the norm $\|\Pi_k \psi_{out}(x)\|$ can be computed classically by the brute force method
in time $poly(k,\lambda_1^{-1},\log{N})$ with  at most $r_x=O(1)$ queries to   the oracle $\voracle{\lambda}$.
The latter is needed to learn dissipation rates $\lambda_i$ for all $i \in \mathrm{supp}(x)$.

Recall that any $Q$-qubit  superposition of $\chi$ basis states
can be prepared by a quantum circuit of size linear in $Q + Q \chi/\log{(Q)}$, see~\cite{mao2024toward}.
Since $\calH_k$ is expressed using $Q=O(k\lambda_1^{-1}\log{N})$ qubits
and $\chi=|\calS|=poly(k,\lambda_1^{-1})$, this gives a circuit $U_{out}$ such that 
\[
U_{out}|{\bf 0}\ra
=
\frac1{\alpha_{out}}\Pi_k|\psi_{out}(x)\ra\otimes |0_{anc}\ra
\]
with 
\[
\alpha_{out}=\|\Pi_k \psi_{out}(x)\| \le \| \psi_{out}(x)\| =  \exp{\left[ q^{-1}\sum_{i=1}^N \lambda_i x_i^2\right]}.
\]
The circuit $U_{out}$ uses  $poly(k,\lambda_1^{-1},\log{N})$ gates and makes $O(1)$ queries to $\voracle{\lambda}$
at the classical preprocessing step.

\subsection{Putting everything together}
\label{sec:runtime}

It remains to choose the regularization cutoff $k$, simulation error tolerance $\epsilon_{sim}$,
and the error tolerance of the Hadamard test such that the resulting approximation of the desired 
expected value $v(t,x)$ has error at most $\epsilon$.

First, let us choose $k$ large enough that the last two terms in Eq.~(\ref{v_approximation_eq1}) have magnitude at most $\epsilon/2$, that is,
\[
|\la \psi_{out}(x)|\Pi_k|\psi(t)-\psi_k(t)\ra|\le \frac{\epsilon}4 \quad \mbox{and}  \quad  | \la \psi_{out}(x)|\Pi_k^\perp |\psi(t)\ra|\le \frac{\epsilon}4.
\]
This is the case whenever
\[
\|\psi(t)-\psi_k(t)\|^2 \le \frac{\epsilon^2}{16\|\psi_{out}(x)\|^2}
\]
and
\[
\| \Pi_k^\perp \psi(t)\|^2 \le \frac{\epsilon^2}{16 \| \psi_{out}\|^2}.
\]
By Theorem~\ref{thm:regul}, this is the case whenever 
\[
\frac{12 \gamma}{ k^{1/2}} \left(\la \psi(0)|A|\psi(0)\ra  + \kappa_1 \la\psi(0)|\psi(0)\ra\right)\le  \frac{\epsilon^2}{16\| \psi_{out}(x)\|^2}
\]
and  
\[
\frac1k  \left(\la \psi(0)|A|\psi(0)\ra  + \kappa_1 \la\psi(0)|\psi(0)\ra\right)\le  \frac{\epsilon^2}{16 \| \psi_{out}\|^2}.
\]
We have already established that the quantities $\la \psi(0)|\psi(0)\ra$ and $\la \psi(0)|A|\psi(0)\ra$ are upper bounded by
$poly(q,\lambda_1^{-1})$, see Section~\ref{sec:init}. 
Recalling that $\gamma$ and $\kappa_1$ are upper bounded by $poly(J,q,\lambda_1^{-1})$,
it suffices to choose
\be
\label{our_choice_k}
k = \| \psi_{out}(x)\|^4 \cdot poly(q,J,\lambda_1^{-1},\epsilon^{-1}).
\ee
This ensures that 
\be
|v(t,x)-v_1(t,x)|\le \frac{\epsilon}2.
\ee
Thus it suffices to approximate $v_1(t,x)$ within error $\epsilon/2$.
Define our approximation of $v_1(t,x)$ as 
\[
v_2(t,x) =\alpha \la {\bf 0}| U_{out}^{-1} U_{sim} U_{init}|{\bf 0}\ra, \qquad \alpha= \alpha_{init} \alpha_{sim} \alpha_{out}.
\]
Choose the error tolerance $\epsilon_{sim}$ of Hamiltonian simulation such that 
\be
\label{error_v1}
|v_1(t,x)-v_2(t,x)|\le \frac{\epsilon}{4}.
\ee
This is the case whenever 
$\epsilon_{sim} \le \frac{\epsilon}{\alpha}$.
Using Eqs.~(\ref{LCHS_query_complexity_KE},\ref{LCHS_gate_complexity_KE},\ref{our_choice_k}) one concludes that 
the gate and query complexity of $U_{sim}$ is upper bounded as
\[
n_{gate}+n_{query} \le t \cdot poly(\log{N},q,J,\lambda_1^{-1}, \epsilon^{-1}, \|\psi_{out}\|) \cdot \log{(\alpha/\epsilon)}.
\]
Collecting the values of $\alpha_{init},\alpha_{sim},\alpha_{out}$ defined above one gets
\[
\alpha \le poly(q,\lambda_1^{-1}) \cdot \|\psi_{out}(x)\|.
\]
Hence
\[
n_{gate}+n_{query} \le t \cdot poly(\log{N},q,J,\lambda_1^{-1}, \epsilon^{-1}, \|\psi_{out}\|) .
\]
Finally, approximating $v_2(t,x)$ within error $\epsilon/4$ using the Hadamard test requires $O(\alpha^2 \epsilon^{-2})$ circuit repetitions.
This introduces another factor $poly(q,\lambda_1^{-1},\epsilon^{-1},\|\psi_{out}(x)\|)$.
Using the exponential scaling of $\|\psi_{out}(x)\|$ established in Eq.~(\ref{readout_state_norm}) gives the simulation cost claimed in Theorem~1.

%\begin{rem}
%  Suppose  $\lambda_1=\dots=\lambda_N$ and $\la \psi(0)|A^{p'}|\psi(0)\ra<\infty$ for all $p'\in[0,2p+6]$, $p\ge3$. Then the scaling Eq.~(\ref{our_choice_k}) can be improved using Eq.~(\ref{eq:kp-reg-error}) of Theorem~\ref{thm:regul-kp}. Indeed, for any $1\le p'\le p$ the inequality $\epsilon_{reg}\le \epsilon/6$ is satisfied for  
%\begin{align}
%k^{p'/2}\ge  \frac{6\gamma}{\epsilon}\left(\sum_{m=0}^{p'+1}  \kappa_1^{p'+1-m}\frac{(p'+1)!^2}{m!^2} \la \psi_0|A^{m}|\psi_0\ra 3^{\frac{p'(p'+1)-m(m-1)}{2}}\right)^{\frac12}\|\psi_{out}(x)\| 
%\end{align}
%with $\gamma = \frac{3J \sqrt{qs}}{\lambda_1^2}$ and $ \kappa_1=\frac{6J^2 qs}{\lambda_1^2}$. Above we showed~\eqref{Apsi0_bound}: in a similar way, employing~\eqref{psi0_norm_bound} and~\eqref{eq:lambdam-bound} we show that $\la \psi_0|A^{m}|\psi_0\ra\le\operatorname{poly}(\lambda_1^{-m},q^m)$. Hence 
%\[
%(6\gamma)^{\frac2{p'}}\left(\kappa_1^{p'+1-m}\frac{(p'+1)!^2}{m!^2} \la \psi_0|A^{m}|\psi_0\ra 3^{\frac{p'(p'+1)-m(m-1)}{2}}\right) \le \operatorname{poly}(\lambda_1^{-m},q^m,J,q,s,\lambda_1)
%\] so that 
%\begin{align}
%    k&\ge \frac{1}{ \epsilon^{\frac2{p'}}}
%    \left(\sum_{m=0}^{p'+1} \operatorname{poly}(\lambda_1^{-m},q^m,J,q,s,\lambda_1) \right)^{\frac1{p'}}
%  \exp\left[\frac{2\lambda_1\| x\|^2}{q p'}\right],\quad 1\le p'\le p, \quad p\ge 3
%\end{align}
%\end{rem}

\begin{rem}
\label{remark:mild_exponential}
If we additionally assume that the distribution of dissipation rates is nearly flat such that $\lambda_N=O(\lambda_1)$ then
the cost of our quantum algorithm can be reduced.
In particular,
for any integer $\ell \ge 1$ the required qubit count $Q$ scales as
 \[
 Q\le  \log{(N) }
    \operatorname{poly}_\ell(q,J,\lambda_1^{-1},\epsilon^{-1})
    \|\psi_{out}\|^{1/\ell}
\]
The gate and query count per quantum circuit in a single Hadamard test scale as
\[
n_{gate}+n_{query}
  \le
    t\,\operatorname{poly}_\ell(
        q,J,\lambda_1^{-1},\epsilon^{-1},\log N) \|\psi_{out}\|^{1/\ell}.
\]
Here 
$\operatorname{poly}_\ell$ are finite-degree polynomials that depend on $\ell$.
However, the number of samples (circuit repetitions) in the Hadamard test scales as 
$\epsilon^{-2} \operatorname{poly}(q,\lambda_1^{-1})\|\psi_{out}\|^2$ regardless of $\ell$.
\end{rem}
Below we sketch the proof for the special case when the dissipation rates are perfectly flat, 
$0<\lambda_1=\ldots=\lambda_N\le O(1)$.
Then $A=\lambda_1\sum_{i=1}^N a_i^\dag a_i$ is proportional to the particle number operator.
One can easily check that the initial state $\psi(0)$ in the Kolmogorov equation for the chosen linear and monomial observables $u_0$ obeys
\[
\la \psi(0)|A^m|\psi(0)\ra \le \operatorname{poly}_m(q,\lambda_1^{-1})
\]
for any integer $m\ge 0$. Here $\operatorname{poly}_m$ is some finite-degree polynomial that depends on $m$.
For example, if $u_0$ is a linear observable, Eq.~(\ref{init_linear_observable})
gives 
\[
\la \psi(0)|A^m|\psi(0)\ra=q\lambda_1^{m-1} J_v^2/2= O(q\lambda_1^{m-1})
\]
for all $m\ge 0$.
Fix an integer $r$ with $1\le r\le p$ and apply
Theorem~\ref{thm:regul-kp} with $p'=r$. Since $\lambda_N=\lambda_1$, the quantity
$\mathcal{W}$ in that theorem becomes
\[
\mathcal{W}_r =
\sum_{m=0}^{r+1}
\kappa_1^{r+1-m}
\frac{(r+1)!^2}{m!^2}
\langle \psi(0)|A^m|\psi(0)\rangle
3^{\frac{r(r+1)-m(m-1)}{2}} \le \operatorname{poly}_r(q,J,\lambda_1^{-1}).
\]
The projected regularization error bound gives
\[
    \|\Pi_k(\psi_k(t)-\psi(t))\|
    \le
    \frac{\gamma \mathcal{W}_r^{1/2}}{k^{r/2}} .
\]
Consequently,
\be
\label{improved_term1}
\left|
\langle \psi_{out}(x)|\Pi_k|\psi_k(t)-\psi(t)\rangle
\right|
\le
\|\psi_{out}(x)\|
\frac{\gamma \mathcal{W}_r^{1/2}}{k^{r/2}} .
\ee
The moment
bounds of Lemma~\ref{lemma:moment_bounds} give
\be
\label{improved_term2}
    \|\Pi_k^\perp\psi(t)\|^2
    \le
    \frac1{k^r} \la \psi(t) |A^r|\psi(t)\ra \le 
    \frac{\mathcal{B}_r}{k^r},
\ee
where 
$\mathcal{B}_r$ is a bound scaling linearly with the initial moments 
$\la \psi(0)|A^\ell|\psi(0)\ra$ with $\ell \le r$ 
and polynomially with the 
parameters $q,J,\lambda_1^{-1}$.
In particular,
\[
\calB_r \le  \operatorname{poly}_r(q,J,\lambda_1^{-1}).
\]

%%%%%%
Using Eqs.~\eqref{improved_term1} and~\eqref{improved_term2} in place of the
bounds based on the weak regularization theorem, it suffices to choose
\[
k
=
\operatorname{poly}_{r}(q,J,\lambda_1^{-1},\epsilon^{-1})
\|\psi_{out}(x)\|^{2/r}.
\]
The number of qubits then scales as
\[
Q=O(k\lambda_1^{-1}\log N).
\]
As shown above,
the gate/query complexity of a single Hadamard test circuit is linear in $t$
and 
 polynomial in
 $k,\log{(N)}, q, J,\lambda_1^{-1},\epsilon^{-1}$
  for some fixed finite-degree polynomial.
Therefore,
for any fixed $\ell$, we may choose $r=O(\ell)$ such that 
\[
n_{gate}+n_{query}\le  t\cdot
\operatorname{poly}_{\ell}
(q,J,\lambda_1^{-1},\epsilon^{-1},\log N)
\|\psi_{out}(x)\|^{1/\ell},
\]
as claimed. The number of samples scales as
$\epsilon^{-2}\alpha^2=\epsilon^{-2}\operatorname{poly}(q,\lambda_1^{-1})\|\psi_{out}\|^2$.

\section{Damped Euler-Bardina model} 
\label{sec:examples}
In this section we describe a symmetrization of the damped Euler-Bardina (dEB) model~\cite{BardinaFerzigerReynolds1980SGS,LaytonLewandowski2006WellPosedTurbulenceModel} for an incompressible fluid confined to a two-dimensional box $\TT^2=[0,1]^2$ with periodic boundary conditions. In section~\ref{sec:contDEB} we begin with the continuous model and prove that the symmetrization error is controlled by the choice of regularization parameters. In section~\ref{subsec:discrete-deb} we discretize the dEB model in space using divergence-free sine basis functions and demonstrate that the resulting quadratic ODE satisfies all the conditions of our quantum algorithm: divergence-free condition, and a small number of drift channels. 

\subsection{Continuous dEB model}\label{sec:contDEB}

\emph{Preliminaries.} We introduce the following functional spaces following the standard theory of fluid dynamics (see, e.g., \cite[p.~45-48]{Foias_Manley_Rosa_Temam_2001}). Unless otherwise stated, we let lowercase variables denote scalar quantities and uppercase variables denote vector quantities. We denote the components of an $N$-dimensional $V$ by $V_1,...,V_N$ and demote the spatial domain $\Omega = \mathbb{T}^2$.

\begin{itemize}
    \item $\mathbb{R}^N$ -- Euclidean space of $N$-dimensional vectors $U,V$ with inner product $\ip{U}{V}_{\mathbb{R}^N} = \sum_{i=1}^N U_i V_i$ and norm $\norm{U}_{\mathbb{R}^N}^2 = \ip{U}{U}_{\mathbb{R}^N}$.
    \item $L^2(\Omega)$ -- space of $\Omega$-periodic functions $u,v:\Omega\subset \mathbb{R}^2\to\mathbb{R}$ with inner product $\ip{u}{v}_{L^2}=\int_\Omega u (\xi_1,\xi_2) v (\xi_1,\xi_2) \, d\xi_1 d\xi_2$ and norm $\norm{u}_{L^2}^2 = \ip{u}{u}_{L^2}$.
    \item $L^2(\Omega)^2$ -- space of $\Omega$-periodic vector functions $U,V:\Omega\subset \mathbb{R}^2\to\mathbb{R}^2$ with inner product $\ip{U}{V}_{L^2}=\ip{U_1}{V_1}_{L^2}+\ip{U_2}{V_2}_{L^2}$ and norm $\norm{U}_{L^2}^2 =\ip{U}{U}_{L^2}$.
    \item $H^1(\Omega) = \{ u \in L^2(\Omega) \, : \, \norm{\D u}_{\mathbb{R}^2} \in L^2(\Omega)\}$.
    \item $H^1(\Omega)^2 = \{ U \in L^2(\Omega)^2 \, : \, U_{1,2}\in H^1(\Omega)\}$ with norm $\norm{U}_{H^1}^2 = \norm{U}_{L^2}^2 + \norm{\D U}_{L^2}^2$.
    \item $H^2(\Omega)^2 = \{ U \in L^2(\Omega)^2 \, : \, U_{1,2}\in H^2(\Omega)\}$ with norm $\norm{U}_{H^2}^2 = \norm{U}_{L^2}^2 + \norm{\D U}_{L^2}^2 + \norm{\D^2 U}_{L^2}^2$.
    \item $\mathring{\mathcal{V}}$ -- space of divergence-free functions $U\in H^1 (\Omega)^2$ with zero-mean components; $\int_\Omega U_i  \, d\xi_1 d\xi_2=0$ for $i=1,2$.
    \item $\mathcal{H}$ -- closure of $\mathring{\mathcal{V}}$ in $L^2(\Omega)^2$ with inner product and norm denoted by $\ip{\cdot}{\cdot}$ and $\abs{\cdot}$, respectively.
    \item $\mathcal{V}$ -- closure of $\mathring{\mathcal{V}}$ in $H^1(\Omega)^2$ with inner product and norm denoted by $\ipp{\cdot}{\cdot}$ and $\norm{\cdot}$, respectively.
\end{itemize} 

We consider the following symmetrization of the damped Euler-Bardina (dEB) model for an incompressible fluid
\be
\label{eq:dEB}
\frac{\partial}{\partial t} V +\overline{(\overline V \cdot \nabla)\overline V} +\lambda V  = -\nabla P, \qquad\quad \D \cdot V = 0
\ee
on the 2D domain $\Omega=\TT^2=[0,1]^2$ with periodic boundary conditions (BC) and initial condition $V(0)\in H^2(\Omega)^2$. Here $t\ge 0$ is time, $V=V(t,\xi)\in \RR^2$ is the fluid velocity at $\xi\in \Omega$, $P=P(t,\xi)\in\mathbb{R}$ is the scalar pressure field enforcing incompressibility, and $\lambda>0$ is a damping strength. The quantity $\overline V = \calF_{\alpha\beta}V$ is a filtered version of $V$ with filter $\calF_{\alpha\beta}=(I+\alpha S^\beta)^{-1}$ given by the Green's function of the operator $I+\alpha S^\beta$ for $S = \frac{1}{2\pi}(-\Delta)^{\frac{1}{2}}$. Here $-\Delta$ is the Stokes operator which coincides with the Laplacian in the case of periodic BC~\cite[p.~52]{Foias_Manley_Rosa_Temam_2001}. For $\alpha=\beta=\lambda=0$, the dEB model Eq.~\eqref{eq:dEB} reduces to the classical Euler equation. For $\lambda=0$ and $\alpha,\beta>0$, the dEB model becomes the classical inviscid Euler-Bardina model, provided one bar is dropped in the nonlinear advection term, ${(\overline V \cdot \nabla)\overline V}$. We note that dEB is related to the $\alpha$-Navier-Stokes regularization models~\cite{CaoLunasinTiti2006Bardina}. 

Before establishing the convergence properties of the dEB model, we prove the following useful Lemma (adapted from \cite[Lemma 10]{Cao23122009}).

\begin{lemma}\label{lemma:alpha-norm-bound}
    Let $\varphi \in \calH$ and $\delta \in \calV$. Then, for $\beta\geq1$,
    \begin{equation}\label{eq:alpha-norm-bound}
        \left| \langle (I - \calF_{\alpha\beta}) \varphi , \delta \rangle \right| \leq \alpha^{\frac{1}{\beta}} \abs{\varphi} \, \norm{\delta}.
    \end{equation}
\end{lemma}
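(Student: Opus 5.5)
The plan is to diagonalize everything in the Fourier basis on $\TT^2$ and reduce the claim to an elementary scalar inequality for the Fourier multiplier of $I-\calF_{\alpha\beta}$. Since $\varphi\in\calH$ and $\delta\in\calV$ are zero-mean, divergence-free and periodic, I will expand
\[
\varphi=\sum_{p\ne 0}\hat\varphi_p e^{2\pi i p\cdot\xi},\qquad \delta=\sum_{p\ne0}\hat\delta_p e^{2\pi i p\cdot\xi},
\]
with $\hat\varphi_p,\hat\delta_p\in\CC^2$. The wave vector $p=0$ is absent by the zero-mean condition. With periodic boundary conditions, $-\Delta$ acts by $(2\pi\|p\|)^2$, so $S=\frac1{2\pi}(-\Delta)^{1/2}$ acts by $\|p\|$. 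Hence $\calF_{\alpha\beta}$ has symbol $(1+\alpha\|p\|^\beta)^{-1}$, and $I-\calF_{\alpha\beta}$ has symbol
\[
m(p)=\frac{\alpha\|p\|^\beta}{1+\alpha\|p\|^\beta}\in[0,1).
\]
Parseval then gives $\langle (I-\calF_{\alpha\beta})\varphi,\delta\rangle=\sum_{p\ne0} m(p)\,\hat\varphi_p\cdot\overline{\hat\delta_p}$. Because this multiplier is scalar, it preserves the divergence-free structure, so no Leray projection issue arises.

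The key step, and the only place where $\beta\ge1$ enters, is the pointwise bound
\[
m(p)\le \alpha^{1/\beta}\|p\|.
\]
To prove it I set $y=\alpha\|p\|^\beta\ge0$, so that $\alpha^{1/\beta}\|p\|=y^{1/\beta}$. The claim becomes $y/(1+y)\le y^{1/\beta}$, which I will check in two cases.
\begin{itemize}
\item If $y\le1$, then $y/(1+y)\le y\le y^{1/\beta}$, since $1/\beta\le1$.
\item If $y\ge1$, then $y/(1+y)\le 1\le y^{1/\beta}$.
\end{itemize}
I expect this to be the only nontrivial point. It is also where the rate $\alpha^{1/\beta}$ comes from, and it explains why $\beta\ge1$ is needed: for $\beta<1$ the small-$y$ case would fail.

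Finally, I will apply Cauchy--Schwarz:
\[
\left|\langle (I-\calF_{\alpha\beta})\varphi,\delta\rangle\right|\le\alpha^{1/\beta}\sum_{p\ne0}|\hat\varphi_p|\,\|p\|\,|\hat\delta_p|\le \alpha^{1/\beta}\Big(\sum_p|\hat\varphi_p|^2\Big)^{1/2}\Big(\sum_p\|p\|^2|\hat\delta_p|^2\Big)^{1/2}.
\]
The first factor is $\abs{\varphi}$. The second factor equals $\frac1{2\pi}\norm{\nabla\delta}_{L^2}$, which is at most $\norm{\delta}$ because the $H^1$ norm dominates the gradient seminorm. (The $1/(2\pi)$ factor gives some slack.) For general $\delta\in\calV$, which is the $H^1$-closure of $\mathring{\calV}$, the identities hold directly via Fourier series or by density, and this yields \eqref{eq:alpha-norm-bound}.
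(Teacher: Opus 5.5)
Your proposal is correct and is essentially the paper's argument written in Fourier coordinates. The paper moves one factor of $S$ onto $\delta$ and bounds the remaining multiplier $\alpha\sigma^{\beta-1}/(1+\alpha\sigma^\beta)$ by $\alpha^{1/\beta}$; this is exactly your pointwise bound $m(p)\le\alpha^{1/\beta}\|p\|$ followed by Cauchy--Schwarz. The only difference is that you verify $y/(1+y)\le y^{1/\beta}$ by an elementary two-case check, whereas the paper computes the exact supremum of $y^{1-1/\beta}/(1+y)$ by differentiation.
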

\begin{proof}
Since $\calF_{\alpha \beta}$ is invertible we can write
\begin{equation}\label{eq:lemma10-1}
\begin{aligned}
    I - \calF_{\alpha\beta}
    &= ( \calF_{\alpha\beta}^{-1} - I) \calF_{\alpha\beta} \\
    &= \alpha S^\beta ( I + \alpha S^\beta)^{-1}. 
\end{aligned}
\end{equation}
Since $-\Delta$ is self-adjoint and non-negative \cite[p.52]{Foias_Manley_Rosa_Temam_2001}, so too is the operator $S$. Denote the eigenpairs of $S$ by $S e_j = \sigma_j e_j$ for $e_j\in \mathcal{V}$ and $0<\sigma_1\leq\sigma_2\leq...\,$. Using Eq.~(\ref{eq:lemma10-1}), self-adjointess of $S$ and the Cauchy-Schwarz inequality, we have
\begin{equation*}
\begin{aligned}
    \left| \langle (I - \calF_{\alpha\beta}) \varphi , \delta \rangle \right| &= \left| \ip{ \alpha S^\beta ( I + \alpha S^\beta)^{-1} \varphi} {\delta} \right| \\
    &= \left| \ip{ \alpha S^{\beta-1} ( I + \alpha S^\beta)^{-1} \varphi}{S \delta} \right| \\
    &\leq \abs{\alpha S^{\beta-1} ( I + \alpha S^\beta)^{-1} \varphi} \, \norm{\delta}
\end{aligned}
\end{equation*}
where the last line follows from $\abs{(-\Delta)^{\frac{1}{2}}V} = \norm{V}$ for all $V\in\calV$ \cite[p.51]{Foias_Manley_Rosa_Temam_2001}. Denote the operator norm of an operator $O:\calH\to \calH$ by $\norm{O}_{\text{op}}$. Then
\begin{equation}\label{eq:discrete-sup-bound}
\begin{aligned}
    \left| \langle (I - \calF_{\alpha\beta}) \varphi , \delta \rangle \right| &\leq \norm{\alpha S^{\beta-1} ( I + \alpha S^\beta)^{-1}}_{\text{op}} \, \abs{\varphi} \, \norm{\delta} \\
    &= \abs{\varphi} \, \norm{\delta}  \sup_{k=1,2,...} \frac{\alpha \sigma_k^{\beta- 1}}{1 + \alpha \sigma_k^\beta}. 
\end{aligned}
\end{equation}

The change of variables $z_k = \alpha \sigma_k^\beta$ yields
\begin{equation*}
\begin{aligned}
    \sup_{k=1,2,...} \frac{\alpha \sigma_k^{\beta- 1}}{1 + \alpha \sigma_k^\beta} &= \sup_{k=1,2,...} \frac{\alpha^{\frac{1}{\beta}} z_k^{1 - \frac{1}{\beta}}}{1 + z_k} \\
    &= \alpha^{\frac{1}{\beta}} \sup_{0\leq y < \infty} \frac{y^p}{1+y}
\end{aligned}
\end{equation*}
where $p=1 - 1/\beta$. We treat the cases $\beta>1$ and $\beta=1$ separately. For $\beta>1$, it follows from direct differentiation of $h(y) = y^p / (1+y)$ that
\begin{equation}\label{eq:h-y-star}
    y^* \coloneq \argsup_{0\leq y < \infty} h(y) = \frac{p}{1-p} \qquad \Rightarrow \qquad h(y^*) = \frac{(\beta - 1)^{1 - \frac{1}{\beta}}}{\beta} < 1.
\end{equation}
Using Eq.~\eqref{eq:h-y-star} to upper bound the expression in Eq.~\eqref{eq:discrete-sup-bound}, we obtain the result in Eq.~\eqref{eq:alpha-norm-bound}. Finally, for $\beta=1$ we have $h(y) = 1/ (1+y)$ so
\begin{equation*}
    \sup_{0\leq y < \infty} h(y) = \sup_{0\leq y < \infty} \frac{1}{1+y}=1
\end{equation*}
and we obtain Eq.~\eqref{eq:alpha-norm-bound} similarly to the case $\beta>1$.
\end{proof}

The following proposition collects the key properties of the continuous dEB model. We make use of several standard inequalities from the analysis of PDEs in fluid dynamics, e.g., Ladyzhenskaya's inequality and the Brezis-Gallouet inequality. For more details, we refer the reader to \cite[p.99-101]{Foias_Manley_Rosa_Temam_2001}. 

\begin{prop}\label{prop:deb-convergence}
 Let $\beta\geq1$. The unique solution $V_\alpha$ of Eq.~\eqref{eq:dEB} converges to the unique solution $V$ of the classical damped-Euler model, 
 \be
\label{eq:dEBc}
\frac{\partial}{\partial t} V + ( V \cdot \nabla) V +\lambda V  = -\nabla P, \qquad\quad V(0) \in H^2(\Omega)^2,
\ee
with convergence rate $\|V_\alpha- V\|_{L^2}=O(\alpha^{\frac{1}{\beta}})$. Furthermore, for fixed $\alpha>0$, the operator $V\mapsto \overline{(\overline V \cdot \nabla)\overline V}$ is a bounded nonlinear operator inducing the skew-symmetric bilinear form
\begin{equation}\label{eq:reg-nonlinearity-skewsym}
    \int_{\Omega} \overline{(\overline V \cdot \nabla)\overline V} \cdot U d\xi = - \int_{\Omega} \overline{(\overline V \cdot \nabla)\overline U} \cdot V d\xi
\end{equation}
defined for divergence-free vector functions $U,V$. 
\end{prop}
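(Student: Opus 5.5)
I would prove the two claims of Proposition~\ref{prop:deb-convergence} in reverse order, since the skew-symmetry identity Eq.~\eqref{eq:reg-nonlinearity-skewsym} drives both well-posedness and the convergence estimate.

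\emph{Skew-symmetry and boundedness.} The filter $\calF_{\alpha\beta}=(I+\alpha S^\beta)^{-1}$ is a function of the self-adjoint operator $S$, so it is self-adjoint on $L^2(\Omega)^2$. It commutes with derivatives on the torus and maps divergence-free fields to divergence-free fields. Moving the outer bar onto the test field gives
\[
\int_\Omega \overline{(\overline V\cdot\nabla)\overline V}\cdot U\,d\xi=\int_\Omega (\overline V\cdot\nabla)\overline V\cdot \overline U\,d\xi .
\]
Since $\nabla\cdot\overline V=0$, integration by parts yields the classical identity $b(W,X,Y)=-b(W,Y,X)$ with $W=\overline V$. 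This gives $-\int (\overline V\cdot\nabla)\overline U\cdot\overline V\,d\xi$, and moving the bar back produces the right-hand side of Eq.~\eqref{eq:reg-nonlinearity-skewsym}. Taking $U=V$ shows that the nonlinearity does no work, so the energy identity reads $\frac{d}{dt}|V|^2=-2\lambda|V|^2$.

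For boundedness, the filter is smoothing of order $\beta\ge1$: $\overline V\in H^{\beta}$ whenever $V\in L^2$. The 2D Ladyzhenskaya and Brezis--Gallouet inequalities then control $\|(\overline V\cdot\nabla)\overline V\|_{L^2}$ by a polynomial in $|V|$, up to a logarithmic factor when $\beta=1$. Applying the filter once more keeps the result in $\calH$. Well-posedness of Eq.~\eqref{eq:dEB} then follows from a local-Lipschitz Picard argument combined with the a priori energy bound. Well-posedness of the classical damped Euler equation Eq.~\eqref{eq:dEBc} with $H^2$ data in 2D is taken from the literature (Yudovich / Beale--Kato--Majda type results), which also gives $\|\nabla V\|_{L^\infty}$ bounded on $[0,t]$.

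\emph{Convergence.} Set $\delta=V_\alpha-V$ and subtract the two equations. Test with $\delta$; the pressure drops out because $\delta$ is divergence-free. The damping contributes $-\lambda|\delta|^2$. The nonlinear difference is
\[
\overline{(\overline{V_\alpha}\cdot\nabla)\overline{V_\alpha}}-(V\cdot\nabla)V .
\]
I would telescope it into (i) $\overline{(\overline{V_\alpha}\cdot\nabla)\overline{V_\alpha}}-\overline{(\overline{V}\cdot\nabla)\overline{V}}$ and (ii) $\overline{(\overline{V}\cdot\nabla)\overline{V}}-(V\cdot\nabla)V$.

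For term (i), Eq.~\eqref{eq:reg-nonlinearity-skewsym} and trilinearity cancel the quadratic-in-$\overline\delta$ piece. What remains is a term of the form $\int(\overline\delta\cdot\nabla)\overline V\cdot\overline\delta$, which is bounded by $\|\nabla V\|_{L^\infty}|\delta|^2$ because the filter is a contraction on $L^2$ and commutes with $\nabla$.

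Term (ii) is the consistency error. I would write it as $-(I-\calF_{\alpha\beta})$ applied to $(\overline V\cdot\nabla)\overline V$, plus $(\overline V\cdot\nabla)\overline V-(V\cdot\nabla)V$. The latter is expanded as $((\overline V-V)\cdot\nabla)\overline V+(V\cdot\nabla)(\overline V-V)$. Each piece is paired with $\delta$ and bounded by Lemma~\ref{lemma:alpha-norm-bound}, which gives $\alpha^{1/\beta}|\varphi|\,\|\delta\|$, or by $|(I-\calF_{\alpha\beta})V|\le\alpha^{1/\beta}\|V\|$, which is the same lemma applied by duality. The regularity $V\in H^2$ is what makes the relevant $\varphi$ lie in $\calH$ with norm bounded on $[0,t]$.

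This yields
\[
\frac{d}{dt}|\delta|^2\le c\,|\delta|^2+C\alpha^{1/\beta}\,(\text{bounded})\,(|\delta|+\|\delta\|),
\]
and Gronwall's inequality with $\delta(0)=0$ closes the estimate.

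\emph{Main obstacle.} The hardest step is the $\|\delta\|$ ($H^1$) factor that Lemma~\ref{lemma:alpha-norm-bound} produces. With only damping and no viscosity, there is no $\|\delta\|^2$ term to absorb it. I would first try to avoid the lemma's $\norm{\delta}$ form by transferring all derivatives onto $V$ through integration by parts, using the skew-symmetry again and the $H^2$ regularity of $V$. The resulting pairings are then bounded as $\alpha^{1/\beta}\|V\|_{H^2}|\delta|$, which gives
\[
\frac{d}{dt}|\delta|\le c|\delta|+C\alpha^{1/\beta},
\]
and hence $|\delta(t)|=O(\alpha^{1/\beta})$. If that fails, the fallback is to derive a uniform-in-$\alpha$ $H^1$ bound on $V_\alpha$ from the vorticity equation and use it to bound $\|\delta\|$.
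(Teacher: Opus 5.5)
The gap is in your stability term (i). You bound $\int_\Omega(\overline\delta\cdot\nabla)\overline V\cdot\overline\delta\,d\xi$ by $\|\nabla V\|_{L^\infty}|\delta|^2$, and you assert that Yudovich or Beale--Kato--Majda theory gives $\sup_{[0,t]}\|\nabla V\|_{L^\infty}<\infty$ for $V(0)\in H^2(\Omega)^2$. That assertion is false.
\begin{itemize}
\item $H^2(\TT^2)\not\subset W^{1,\infty}$.
\item Kato/Beale--Kato--Majda theory needs $V(0)\in H^s$ with $s>2$; $s=2$ is exactly the borderline case in 2D.
\item Yudovich theory requires bounded vorticity, which $H^1$ vorticity need not have, and even then it yields only a log-Lipschitz velocity.
\end{itemize}
So, even granting (as you and the paper both do) that $V$ stays bounded in $H^2$ on $[0,t]$, the constant $c$ in your inequality $\frac{d}{dt}|\delta|\le c|\delta|+C\alpha^{1/\beta}$ is not justified.

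Your consistency term (ii) is sound when you move the derivatives onto $V$, which is your primary plan. The multiplier bound $\alpha|k|^\beta/(1+\alpha|k|^\beta)\le\alpha^{1/\beta}|k|$ behind Lemma~\ref{lemma:alpha-norm-bound}, together with $L^4$--$L^4$--$L^2$ and $L^\infty$--$L^2$--$L^2$ H\"older estimates, makes all three pieces $O(\alpha^{1/\beta}\|V\|_{H^2}^2|\delta|)$. This organisation around the limit solution genuinely differs from the paper's. The paper builds $J_2$ and $J_3$ on $V_\alpha$, so it needs $H^2$ bounds on $V_\alpha$ that are uniform in $\alpha$.

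Your fallback does not close the gap.
\begin{itemize}
\item For (ii) the fallback is unnecessary, and using it would give only $|\delta|^2=O(\alpha^{1/\beta})$, i.e.\ half the exponent.
\item For (i), the paper's device for $J_1\le c_1|e|\,\|e\|\,\|V\|$ is to control $\|e\|$ through the enstrophy decay $\|V(t)\|,\|V_\alpha(t)\|\le e^{-\lambda t}\|V_0\|$. The resulting source term is of order $|e|$ with no factor $\alpha^{1/\beta}$; it is the $\alpha$-independent term $k(t)$ in the paper's Gronwall inequality. So by itself it gives no rate in $\alpha$.
\end{itemize}

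There are two honest repairs.
\begin{itemize}
\item Assume $V(0)\in H^s$ with $s>2$. Then $\|\nabla\overline V\|_{L^\infty}\le C\|\overline V\|_{H^s}\le C\|V\|_{H^s}$, and your argument closes as written.
\item Keep $H^2$ data and exploit the fact that (i) involves the filtered field. Apply Brezis--Gallouet to $\nabla\overline V$ and use $\|\overline V\|_{H^3}\lesssim\alpha^{-1/\beta}\|V\|_{H^2}$. This gives $\|\nabla\overline V\|_{L^\infty}\lesssim\|V\|_{H^2}(1+\log(1/\alpha))^{1/2}$, hence $|\delta(t)|\lesssim\alpha^{1/\beta}\,t\,e^{Ct(1+\log(1/\alpha))^{1/2}}$. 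That is $O(\alpha^{1/\beta-\epsilon})$ for every $\epsilon>0$, but not the stated $O(\alpha^{1/\beta})$.
\end{itemize}

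Separately, your boundedness argument fails for $\beta=1$. For $V\in L^2$ you only have $\overline V\in H^1$, so $(\overline V\cdot\nabla)\overline V$ need not lie in $L^2$, and the Brezis--Gallouet logarithm involves an $H^2$ norm of $\overline V$ that $|V|$ does not control. You must use the outer filter by duality, as the paper does:
$|\overline{(\overline V\cdot\nabla)\overline V}|=\sup_{|W|=1}|b(\overline V,\overline W,\overline V)|\le c\,\|\nabla\calF_{\alpha\beta}\|_{\mathrm{op}}^2\,|V|^2$.
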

\begin{rem}
    The operator $V\mapsto \overline{(\overline V \cdot \nabla)\overline V}$ is a bounded modification of the unbounded nonlinear advection operator $V\mapsto ( V \cdot \nabla) V$. 
\end{rem}

\begin{proof}
Here we prove Proposition \ref{prop:deb-convergence}. For divergence-free $U,V,W$, define the trilinear form
\begin{align}\label{eq:def-trilinear-form}
    b(U,V,W) = \int_{\Omega} {\big(( U \cdot \nabla) V}\big) \cdot W d\xi.
\end{align}
Integrating by parts and applying the divergence-free condition gives
\begin{equation}\label{eq:b-skew-symmetric}
    b(U,V,W) = - b(U,W,V).
\end{equation}
Using self-adjointness of $\calF_{\alpha\beta}$, the right-hand side of Eq.~\eqref{eq:reg-nonlinearity-skewsym} can be written
\begin{equation*}
    \int_{\Omega} \overline{(\overline V \cdot \nabla)\overline V}  \cdot U dx =\int_{\Omega} {(\overline V \cdot \nabla)\overline V}  \cdot \overline{U} dx
    = b(\overline V,\overline V,\overline U)
\end{equation*}
and so the skew-symmetry claim follows from Eq.~\eqref{eq:b-skew-symmetric}. 

We proceed to bound the nonlinear operator $\Phi (V) \coloneq \overline{(\overline V \cdot \nabla)\overline V}$. Using the dual characterization of the $\calH$-norm \cite[p.90-92]{Foias_Manley_Rosa_Temam_2001}, self-adjointness of $\calF_{\alpha\beta}$, and Eq.~(A.46e) in \cite[p.100]{Foias_Manley_Rosa_Temam_2001}, we get
\begin{equation}\label{eq:phi_bound_1}
\begin{aligned}
    \norm{\Phi (V)}_{L^2} &= \sup_{\substack{{W}\in \calH \\ \abs{W}=1}} \ip{\Phi (V)}{{W}} \\
    &= \sup_{\substack{\abs{W}\in \calH \\ {W}=1}} \ip{{(\overline V \cdot \nabla)\overline V}}{\overline{W}}\\
    &\leq c_1  \sup_{\substack{W\in \calH \\ \abs{W}=1}} \abs{\overline{V}}^{\frac{1}{2}} \, \norm{\overline{V}}^{\frac{1}{2}} \, \abs{\overline{V}}^{\frac{1}{2}} \, \norm{\overline{V}}^{\frac{1}{2}} \norm{\overline{W}}.
\end{aligned}
\end{equation}
It follows from the definition of the operator norm $\norm{\cdot}_{\text{op}}$ that, for $\beta\geq 1$, $\alpha>0$,
\begin{equation}\label{eq:D-S-op-norm}
    \norm{\D \calF_{\alpha\beta}}_{\text{op}} = \sup_{k=1,2,...} \abs{k} \left( 1 + \alpha \abs{k}^{\beta} \right)^{-1} < \infty. 
\end{equation}
Thus, if $W\in \calH$ then $\overline{W}\in \calV$ since 
\begin{equation}\label{eq:K-delta}
    \norm{\overline{W}} = \abs{\nabla \overline{W}} \leq \norm{\D \calF_{\alpha\beta}}_{\text{op}} \abs{W} < \infty.
\end{equation}
Take finite $K>0$ satisfying $\norm{\D \calF_{\alpha\beta}}_{\text{op}} \leq K$. We further note that $\norm{\calF_{\alpha\beta}}_{\text{op}}\leq 1$ so $\abs{\overline{V}}\leq \norm{\calF_{\alpha\beta}}_{\text{op}}\abs{{V}}$ for all $V\in\calH$. Putting everything together with Eq.~\eqref{eq:phi_bound_1} yields
\begin{equation*}
    \norm{\Phi (V)}_{L^2} \leq c_1 K^2 \abs{{V}}^2.
\end{equation*}

We move on to the convergence result. Let $V$ solve Eq.~\eqref{eq:dEBc} and let $V_\alpha$ solve Eq.~\eqref{eq:dEB}, with the same initial condition,
\begin{equation*}
    V(t=0) = V_\alpha(t=0) = V_0 \in H^2(\Omega)^2.
\end{equation*}
Define the error $e:= V - V_\alpha$. From Eqs.~(\ref{eq:dEBc},~\ref{eq:dEB}), $e$ satisfies
\begin{align}\label{eq:error-eq-deb}
\partial_t e + \lambda e + (V\cdot\D)V - \overline{(\overline{V}_\alpha \cdot \D)\overline{V}_\alpha}
= - \D p + \D p_\alpha.
\end{align}
Define the bilinear form $B(U,V) := (U\cdot\D)V$ which, by bilinearity, satisfies
\begin{align}\label{eq:bilinearity}
B(V-V_\alpha,V) + B(V,V-V_\alpha) - B(V-V_\alpha,V-V_\alpha)
&= B(V,V) - B(V_\alpha,V_\alpha).
\end{align}
Adding and subtracting $B(V_\alpha,V_\alpha)$ to the left-hand side of Eq.~\eqref{eq:error-eq-deb} and using Eq.~\eqref{eq:bilinearity}, we get
\begin{equation}\label{eq:error-eq-0}
    \partial_t e + \lambda e + B(e,V) + B(V,e) - B(e,e) + B(V_\alpha, V_\alpha) - \overline{(\overline{V}_\alpha \cdot \D)\overline{V}_\alpha}
= - \D p + \D p_\alpha .
\end{equation}

Denote the Stokes operator~\cite[p.~52]{Foias_Manley_Rosa_Temam_2001} by $A=-\Delta$ with eigenpairs
\begin{equation}\label{eq:stokes}
A e_j = \eta_j e_j \qquad\text{s.t.}\qquad 0<\eta_1\leq \eta_2\leq... \,\,\, .
\end{equation}
For periodic BC on $\Omega=\mathbb{T}^2$, it is well-known that $\eta_1 = 4\pi^2$ and the trilinear form in Eq.~\eqref{eq:def-trilinear-form} satisfies
\begin{equation}
    b(U,U, A U) = 0
\end{equation}
for all $U\in D(A)$ \cite[p.52,101]{Foias_Manley_Rosa_Temam_2001}. We note that it also holds
\begin{equation}\label{eq:b-filter-skew-sym}
    b(\overline{U},\overline{U}, A \overline{U}) = 0.
\end{equation}
Taking the inner product of Eq.~\eqref{eq:error-eq-0} with $e$ and applying Eq.~(\ref{eq:b-filter-skew-sym}) and Eq.~\eqref{eq:b-skew-symmetric}, we get
\begin{equation}\label{eq:error-eq-1}
    \ip{\partial_t e}{e} + \lambda |e|^2 + \ip{B(e,V)}{e} + \ip{B(V_\alpha, V_\alpha)}{e} - \ip{\overline{(\overline{V}_\alpha \cdot \D)\overline{V}_\alpha}}{e} = 0.
\end{equation}
Adding and subtracting $\ip{\overline{B(V_\alpha, V_\alpha)}}{e}$ to Eq.~\eqref{eq:error-eq-1} and grouping terms gives
\begin{equation*}
    \ip{\partial_t e}{e} + \lambda |e|^2  + \ip{B(e,V)}{e} + \ip{B(V_\alpha,V_\alpha) - \overline{B(V_\alpha,V_\alpha)}}{e} + \ip{\overline{B(V_\alpha,V_\alpha)} - \overline{B(\overline{V}_\alpha,\overline{V}_\alpha)}}{e}.
\end{equation*}
Thus,
\begin{equation}\label{eq:error-eq-Ji}
    \frac{1}{2} \partial_t \abs{e}^2 + \lambda |e|^2 \leq J_1 + J_2 + J_3
\end{equation}
where
\begin{equation}
\begin{aligned}
    J_1 &= \left|\ip{B(e,V)}{e}\right|, \\
    J_2 &= \left|\ip{B(V_\alpha,V_\alpha) - \overline{B(V_\alpha,V_\alpha)}}{e}\right|,\\
    J_3 &= \left| \ip{\overline{B(V_\alpha,V_\alpha)} - \overline{B(\overline{V}_\alpha,\overline{V}_\alpha)}}{e} \right|.
\end{aligned}
\end{equation}

We now upper-bound the terms $J_1,J_2,J_3$. Using Eq.~(A.46e) in \cite[p.100]{Foias_Manley_Rosa_Temam_2001},
\begin{equation*}
\begin{aligned}
    J_1 &\leq c_1 \abs{e}^{\frac{1}{2}} \, \norm{e}^{\frac{1}{2}} \, \abs{e}^{\frac{1}{2}} \, \norm{e}^{\frac{1}{2}} \norm{V} \\
    &= c_1 \abs{e} \, \norm{e} \, \norm{V}
\end{aligned}
\end{equation*}
for some constant $c_1>0$. By self-adjointness of $\calF_{\alpha\beta}$, Lemma~\ref{lemma:alpha-norm-bound}, and the Brezis-Gallouet inequality \cite[p.99-101]{Foias_Manley_Rosa_Temam_2001}, we have
\begin{equation*}
\begin{aligned}
    J_2 &\leq \alpha^{\frac{1}{\beta}} \abs{e} \, \norm{\D B(V_\alpha,V_\alpha)} \\
    &\leq \alpha^{\frac{1}{\beta}} \abs{e} \, 
    \left( \abs{\D V_\alpha}^2 + \norm{V_\alpha}_{L^\infty} \abs{\D^2 V_\alpha} \right) \\
    &\leq \alpha^{\frac{1}{\beta}} \abs{e} \, \left( \norm{V_\alpha}^2 +c_2 \norm{V_\alpha} \, \left(1 + \log \left( \frac{\abs{A V_\alpha}^2}{\eta_1 \norm{V_\alpha}^2}\right) \right)^{\frac{1}{2}} \abs{A V_\alpha} \right)
\end{aligned} 
\end{equation*}
where $\eta_1$ is defined in Eq.~\eqref{eq:stokes} and $c_2>0$ is a constant. Using self-adjointness of $\calF_{\alpha\beta}$ and performing a simple algebra, we have
\begin{equation}\label{eq:J3-term1}
\begin{aligned}
    J_3 &= \left| \ip{{B(V_\alpha,V_\alpha)} - {B(\overline{V}_\alpha,\overline{V}_\alpha)}}{\overline{e}} \right| \\
    &= \left| \ip{{B(V_\alpha,V_\alpha)} - {B(\overline{V}_\alpha,{V}_\alpha)} + {B(\overline{V}_\alpha,{V}_\alpha)} - B(\overline{V}_\alpha,\overline{V}_\alpha)}{\overline{e}} \right| \\
    &\leq \left| \ip{B(V_\alpha - \overline{V}_\alpha, V_\alpha)}{ \overline{e}} \right| + \left| \ip{B(\overline{V}_\alpha, V_\alpha - \overline{V}_\alpha)}{\overline{e}} \right|. 
\end{aligned}
\end{equation}
By Holder's inequality, Ladyzhenskaya's inequality, Lemma~\ref{lemma:alpha-norm-bound}, Poincaré's inequality, and Young's inequality, the first term on the right-hand side of Eq.~\eqref{eq:J3-term1} can be written
\begin{equation*}
\begin{aligned}
    \left| \ip{B(V_\alpha - \overline{V}_\alpha, V_\alpha)}{\overline{e}} \right| &\leq \abs{V_\alpha - \overline{V_\alpha}} \, \norm{\D V_\alpha}_{L^4} \, \norm{\overline{e}}_{L^4} \\
    & \leq \alpha^{\frac{1}{\beta}} \abs{V_\alpha} c_2^2 \abs{\D V_\alpha}^{\frac{1}{2}} \, \abs{A V_\alpha}^{\frac{1}{2}} \, \abs{\overline{e}}^{\frac{1}{2}} \, \abs{\D \overline{e}}^{\frac{1}{2}} \\
    &\leq \alpha^{\frac{1}{\beta}} c_2^2 \eta_1^{-\frac{1}{4}} \abs{V_\alpha} \abs{A V_\alpha} \abs{\overline{e}}^{\frac{1}{2}} \abs{\D \overline{e}}^{\frac{1}{2}} \\
    & \leq \alpha^{\frac{1}{\beta}} c_2^2 K^{\frac{1}{2}} \eta_1^{-\frac{1}{4}}  \abs{V_\alpha} \abs{A V_\alpha} \abs{\overline{e}}^{\frac{1}{2}}  \abs{e}^{\frac{1}{2}}
\end{aligned}
\end{equation*}
where the last line has used the definition of $K$ from Eq.~\eqref{eq:K-delta}. Similarly, the second term on the right-hand side of Eq.~\eqref{eq:J3-term1} satisfies
\begin{equation*}
\begin{aligned}
    \left| \ip{B(\overline{V}_\alpha, V_\alpha - \overline{V}_\alpha)}{ \overline{e}} \right| &\leq c_2^2 K^{\frac{1}{2}} \eta_1^{-\frac{1}{2}} \alpha^{\frac{1}{\beta}} \norm{V_\alpha} \abs{A V_\alpha}  \abs{e}^{\frac{1}{2}}
\end{aligned}
\end{equation*}

Inserting the upper bounds on $J_1,J_2,J_3$ into Eq.~\eqref{eq:error-eq-Ji} yields
\begin{equation}\label{eq:error-eq-Ji-bounded}
\begin{aligned}
    \frac{1}{2} \partial_t \abs{e}^2 + \lambda |e|^2  &\leq
    c_1 \abs{e} \, \norm{V} \, \norm{e}\\
    &\qquad+
    \alpha^{\frac{1}{\beta}} \abs{e} \, \Gamma(V_\alpha)\\
    &\qquad+ 
    \alpha^{\frac{1}{\beta}} c_2^2 \eta_1^{-\frac{1}{4}} K^{\frac{1}{2}}  \abs{V_\alpha} \abs{A V_\alpha} \abs{\overline{e}}^{\frac{1}{2}} \abs{e}^{\frac{1}{2}}\\
    &\qquad+ \alpha^{\frac{1}{\beta}}
    c_2^2 \eta_1^{-\frac{1}{2}} K^{\frac{1}{2}}  \norm{V_\alpha} \abs{A V_\alpha}  \abs{e}^{\frac{1}{2}}
\end{aligned}
\end{equation}
where we have defined
\begin{equation*}
    \Gamma(V_\alpha) \coloneq \norm{V_\alpha}^2 +c_2 \norm{V_\alpha} \, \left(1 + \log \left( \frac{\abs{A V_\alpha}^2}{\eta_1 \norm{V_\alpha}^2}\right) \right)^{\frac{1}{2}} \abs{A V_\alpha}
\end{equation*}

We wish to write the right-hand side of Eq.~\eqref{eq:error-eq-Ji-bounded} in terms of the error $\abs{e(t)}^2$. First, notice that $\norm{V}$ can be bounded by taking the inner product of Eq.~\eqref{eq:dEB} with $AV$, using Eq.~\eqref{eq:b-skew-symmetric} and applying Gronwall's inequality \cite{gronwall1919}, giving
\begin{equation}\label{eq:V-abs-bound}
    \norm{V(t,\cdot)}^2 \le  e^{-2 \lambda t} \norm{V_0}^2.
\end{equation}
Similarly, using Eq.~\eqref{eq:b-filter-skew-sym}, it holds
\begin{equation}\label{eq:Valpha-abs-bound}
    \norm{V_\alpha(t,\cdot)}^2 \le  e^{-2 \lambda t} \norm{V_0}^2.
\end{equation}
The triangle inequality then gives $\norm{e}\leq 2 e^{- \lambda t} \norm{V_0}$. Inserting this bound back into Eq.~\eqref{eq:error-eq-Ji-bounded} and using $\norm{\overline{e}} \leq \norm{e}$ and the AM-GM inequality with $\varepsilon_{0,1,2,3}>0$, we get
\begin{equation}\label{eq:pre-wt}
\begin{aligned}
\frac{1}{2}\partial_t \abs{e} + \lambda \abs{e}^{2}
&\le 4 \, \varepsilon_0 \, c_1^2 \, e^{- 2\lambda t} \norm{V_0}^2 \abs{V}^2  + \frac{\abs{e}^2}{\varepsilon_0} \\
&\quad
+ \Gamma(V_\alpha) \, \frac{\abs{e}^{2}}{\varepsilon_{1}}
+ \varepsilon_{1}
\Gamma(V_\alpha) \, \alpha^{\frac{2}{\beta}} \\
&\quad
+ \frac{\abs{e}^{2}}{\varepsilon_{2}}
+ \varepsilon_{2}\,\alpha^{\frac{2}{\beta}}\,
\eta_1^{-\frac{1}{2}} c_2^4 K \abs{V_\alpha}^{2}\,
\abs{A{V_\alpha}}^{2}  \\
&\quad
+ \frac{\abs{e}^{2}}{\varepsilon_{3}}
+ \varepsilon_{3}\,\alpha^{\frac{2}{\beta}}\,
\eta_1^{-1} c_2^4 K\,
\norm{V_\alpha}^{2}\,
\abs{A V_\alpha}^{2}.
\end{aligned}
\end{equation}
Setting $w(t) = \abs{e(t,\cdot)}^2$ and rearranging Eq.~\eqref{eq:pre-wt} yields
\begin{equation}\label{eq:w-dynamics}
\begin{aligned}
\frac{{d}}{{d}t} w(t)
&\le{}\;
2\left(
-\lambda
 + \frac{1}{\varepsilon_0}
 + \frac{\Gamma(V_\alpha)}{\varepsilon_1}
+ \frac{1}{\varepsilon_2}
+ \frac{1}{\varepsilon_3}
\right) w(t)
\\
&\quad
+ 2\,\alpha^{\frac{2}{\beta}}
\Bigg(
\varepsilon_1
\Gamma(V_\alpha)
+ \varepsilon_{2} \eta_1^{-\frac{1}{2}} c_2^4 K
\abs{V_\alpha}^{2}\,
\abs{A{V_\alpha}}^{2}  
+ \varepsilon_{3}\,
\eta_1^{-1} c_2^4 K
\norm{V_\alpha}^{2}\,
\abs{A V_\alpha}^{2} 
\Bigg)
\\
&\quad + 8 \, \varepsilon_0 \, c_1^2 \, \norm{V_0}^2 \, e^{- 2\lambda t} \, \abs{V}^2 \\
&= h(t)\,w(t) + \alpha^{\frac{2}{\beta}} g(t) + k(t).
\end{aligned}
\end{equation}
Applying Gronwall's Lemma as stated in \cite[p.125]{Foias_Manley_Rosa_Temam_2001} to Eq.~\eqref{eq:w-dynamics} gives
\begin{equation}\label{eq:err-upper-bound-int}
    |e(t)|^2 = w(t) \leq \alpha^{\frac{2}{\beta}} \int_{0}^t g(s) e^{\int_s^t h(\sigma) d\sigma} ds + \int_{0}^t k(s) e^{\int_s^t h(\sigma) d\sigma} ds
\end{equation}
where we have used $w(0)=0$. It follows from Eq.~\eqref{eq:Valpha-abs-bound} that
\begin{equation}\label{eq:gamma-bound}
\begin{aligned}
    \frac{1}{T}&\int_{t}^{T+t} \Gamma(V_\alpha(s)) ds \leq \norm{V_0}^2
    \frac{1}{T} \int_{t}^{T+t} e^{-2 \lambda s}ds+c_2
    \norm{V_0} \, \frac{1}{T} \int_{t}^{T+t} e^{- \lambda s} \left(1 + \log \left( \frac{\abs{A V_\alpha(s)}^2}{\eta_1 \norm{V_\alpha(s)}^2}\right) \right)^{\frac{1}{2}} \abs{A V_\alpha(s)} ds \\
    &\quad\leq \norm{V_0}^2 e^{-2 \lambda t}
    (e^{-2 \lambda T} - 1)
    +c_2
    \norm{V_0} \left( \frac{1}{T} \int_{t}^{T+t} \left(1 + \log \left( \frac{\abs{A V_\alpha(s)}^2}{\eta_1 \norm{V_\alpha(s)}^2}\right) \right) ds
    \frac{1}{T} \int_{t}^{T+t} \abs{A V_\alpha(s)}^2 ds \right)^{\frac{1}{2}} 
\end{aligned}
\end{equation}
where the second line follows from the Cauchy-Schwarz inequality. It is well-known that, if the initial velocity $V_0$ lies in $H^2(\Omega)^2$ then it remains in $H^2(\Omega)^2$ for all times \cite{OLIVER1997471}. Furthermore, it follows from \cite[Sec.3]{OLIVER1997471} that $\sup_{t\in[0,T]} \abs{A V_\alpha}^2 \leq C_T$ and so, applying Jensen's inequality to Eq.~\eqref{eq:gamma-bound}, we get $\frac{1}{T}\int_{t}^{T+t} \Gamma(V_\alpha(s)) ds \leq C_\Gamma$ for some finite $C_\Gamma>0$. In turn, there exists $\varepsilon_{0,1,2,3}>0$ such that
\begin{equation*}
\begin{aligned}
    \frac{1}{T} \int_{t}^{T+t} h(s) ds = 
-\lambda
 + \frac{1}{\varepsilon_0}
 +\frac{1}{\varepsilon_1} \frac{1}{T}\int_{t}^{T+t} \Gamma(V_\alpha(s)) ds
+ \frac{1}{\varepsilon_2}
+ \frac{1}{\varepsilon_3}
< 0.
\end{aligned}
\end{equation*}
Following similar steps to above, one can easily show $\frac{1}{T}\int_{t}^{T+t} g(s) ds$ is bounded. Finally, 
\begin{equation*}
\begin{aligned}
    \frac{1}{T} \int_{t}^{T+t} k(s) ds &=  8 \, \varepsilon_0  \, c_1^2 \, \norm{V_0}^2 \, \frac{1}{T} \int_{t}^{T+t}  e^{-2\lambda s} \abs{V(s)}^2 ds \\
    &\leq \frac{\varepsilon_0  \, c_1^2 \, \norm{V_0}^4}{\lambda T} e^{-4\lambda t} \left(1 - e^{-4\lambda T} \right)
\end{aligned}
\end{equation*}
where the last line follows from Eq.~\eqref{eq:V-abs-bound}. For sufficiently large $T>0$, the second term on the right-hand side of Eq.~\eqref{eq:err-upper-bound-int} may be made arbitrarily small, and the result follows from Gronwall's Lemma \cite[p.125]{Foias_Manley_Rosa_Temam_2001}.

\end{proof}

\subsection{Discrete dEB model}\label{subsec:discrete-deb}

We shall discretize Eq.~(\ref{eq:dEB}) in the basis of divergence-free sine waves
\be
\label{eq:sine-basis}
e_p(\xi)=\sqrt{2}\,\frac{p^\perp}{\|p\|}\sin(2\pi p\cdot \xi), \qquad \xi\in \TT^2.
\ee
Here $p=(p^1,p^2)\in \ZZ^2$ is a nonzero integer wave vector and $p^\perp=(-p^2,p^1)$. Note that $p$ and $-p$ can be identified since $e_p(\xi)=e_{-p}(\xi)$.
Choose a momentum cutoff $\Lambda$ and let $p_1,\ldots,p_N\in \ZZ^2$ be the list of all nonzero wave vectors $p$ satisfying $\|p\|\le \Lambda$
with one representative picked from each pair $p$ and $-p$. The following proposition collects all the key properties of the discrete dEB model:  
\begin{prop}
 Galerkin projection of Eq.~(\ref{eq:dEB}) onto the span of $e_{p_1},\ldots,e_{p_N}$ yields an $N$-dimensional ODE
\be
\label{Euler2}
\dfrac{dX_i(t)}{dt} = -\lambda X_i(t)  + \sum_{j,k=1}^N c_{ijk} X_j(t) X_k(t), 
\ee
with coefficients
\be\label{Euler_cijk}
% c_{ijk} =\tilde{c}_{ijk}w(p_i) w(p_j) w(p_k), \qquad w(p) = \frac1{1 + \alpha \|p\|^\beta},
c_{ijk} =
\frac{\pi}{\sqrt{2}}\,
\frac{(p_j^\perp\cdot p_k)(\|p_j\|^2-\|p_k\|^2)}
{\|p_i\|\,\|p_j\|\,\|p_k\|}\,
\delta_{ijk} \, w(p_i) w(p_j) w(p_k),
\ee
where
\be 
\label{delta_ijk}
w(p) = \frac1{1 + \alpha \|p\|^\beta}, \qquad 
\delta_{ijk}=
\left\{
\ba{rcl}
1 & \mbox{if} & p_j+p_k\in\{p_i,-p_i\},\\
-1 & \mbox{if} & p_j-p_k\in\{p_i,-p_i\},\\
0 & \mbox{otherwise}.
\ea
\right.
\ee
The tensor $c_{ijk}$ is symmetric: $c_{ijk}=c_{ikj}$. It has $s=4$ drift channels. It obeys the divergence-free conditions with $\lambda_1=\ldots=\lambda_N=\lambda$. 
In particular,
\be
\label{Euler_div_free1}
c_{ijk}=0 \quad \mbox{unless all indices $i,j,k$ are distinct}
\ee
and
\be
\label{Euler_div_free2}
c_{ijk} + c_{jki} + c_{kij} =0
\ee
for any triple $i,j,k$. Finally, define \[
K_{\alpha,\beta}:=\sup_{r\ge 0}\frac{r^2}{(1+\alpha r^\beta)^2}
\]
and
\[
S_i^{(1)}(N):=\sum_{j,k=1}^N |c_{ijk}|^2,
\qquad
S_i^{(2)}(N):=\sum_{j,k=1}^N |c_{jik}|^2.
\]
Then, for every fixed $i$,
\begin{align}
S_i^{(1)}(N)&\le 
8\pi^2 K_{\alpha,\beta}\left(9+\frac{2^{2\beta+1}\pi^2(\beta-2)}{\alpha^{2/\beta}\beta^2 \sin(2\pi/\beta)}\right) ,&\beta>1\label{eq:1beta>1}\\
S_i^{(1)}(N)&\le 8\pi^2K_{\alpha,\beta}\left(9+
\frac{8\pi}{\alpha^2}\log(1+\alpha (2\sqrt{N}+1))\right),&\beta=1\label{eq:1beta=1}\\
S_i^{(2)}(N)&\le 32\pi^2 K_{\alpha,\beta}\left(9+\frac{2^{2\beta+1}\pi^2(\beta-2)}{\alpha^{2/\beta}\beta^2 \sin(2\pi/\beta)}\right) ,&\beta>1\label{eq:2beta>1}\\
S_i^{(2)}(N)&\le 32\pi^2K_{\alpha,\beta}\left(9+
\frac{8\pi}{\alpha^2}\log(1+\alpha (2\sqrt{N}+1))\right),&\beta=1\label{eq:2beta=1}
\end{align}
so that the drift strength $J$ of the tensor $c$ is: $J=O(K_{\alpha,\beta})$ if $\beta>1$ and $J=O(\log(N))$ for $\beta=1$. In addition, $\sum_{i,j,k=1}^N |c_{ijk}|$ diverges to infinity as $N\to\infty$ polynomially in $N$ if $1<\beta<5/3$, and only logarithmically if $\beta=5/3$, and it stays bounded independent of $N$ if $\beta>5/3$. 
\end{prop}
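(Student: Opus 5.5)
The plan is to obtain every claim from one explicit computation of the Galerkin coefficients, followed by elementary lattice sums. Write $w_j=w(p_j)$, so that $\overline V=\sum_j w_j X_j e_{p_j}$ and, since $\calF_{\alpha\beta}$ is self-adjoint and diagonal in the sine basis, $\ip{\overline{(\overline V\cdot\nabla)\overline V}}{e_{p_i}}=w_i\, b(\overline V,\overline V,e_{p_i})$, with $b$ the trilinear form of Eq.~\eqref{eq:def-trilinear-form}. Projecting Eq.~\eqref{eq:dEB} onto $e_{p_i}$ removes the pressure (the basis is divergence free) and gives $\dot X_i=-\lambda X_i-w_i\sum_{j,k}w_jw_k\,b(e_{p_j},e_{p_k},e_{p_i})X_jX_k$. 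I would then compute $b(e_{p_j},e_{p_k},e_{p_i})$ directly.
\begin{itemize}
\item Directly, $(e_{p_j}\cdot\nabla)e_{p_k}=4\pi\frac{(p_j^\perp\cdot p_k)}{\|p_j\|\|p_k\|}\,p_k^\perp\sin(2\pi p_j\cdot\xi)\cos(2\pi p_k\cdot\xi)$.
\item The product-to-sum formula $\sin a\cos b=\tfrac12(\sin(a+b)+\sin(a-b))$, together with orthogonality of sines on $\TT^2$, shows that the integral against $\sin(2\pi p_i\cdot\xi)$ is nonzero only when $p_j\pm p_k=\pm p_i$. The sign is recorded by $\delta_{ijk}$ and by the oddness of sine.
\item The remaining inner product is $p_k^\perp\cdot p_i^\perp=p_k\cdot p_i$.
\end{itemize}
Symmetrizing in $(j,k)$, and substituting $p_i=\pm(p_j\pm p_k)$ together with $p_j^\perp\cdot p_k=-p_k^\perp\cdot p_j$, the two terms combine into the factor $(\|p_j\|^2-\|p_k\|^2)$. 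This gives Eq.~\eqref{Euler_cijk} after tracking constants, and the symmetry $c_{ijk}=c_{ikj}$ holds by construction.

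The structural properties then follow quickly.
\begin{itemize}
\item Channels: for fixed $(j,k)$ the condition $p_i\in\{\pm(p_j+p_k),\pm(p_j-p_k)\}$, with one representative per $\pm$ pair, leaves at most $4$ (in fact $2$) indices $i$, so $s\le 4$.
\item Distinct indices, Eq.~\eqref{Euler_div_free1}: if $j=k$ the factor $\|p_j\|^2-\|p_k\|^2$ vanishes. If $i=j$, the triad condition forces $p_k\in\{0,\pm2p_j\}$, so $p_j^\perp\cdot p_k=0$; the case $i=k$ is the same by symmetry.
\item Divergence-free conditions: the skew-symmetry Eq.~\eqref{eq:reg-nonlinearity-skewsym} of Proposition~\ref{prop:deb-convergence} says $\sum_i X_i f_i(X)=0$. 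This is exactly $\sum_i\lambda_i x_i f_i(x)=0$ with equal rates, so by the argument of Lemma~\ref{lemma:div_free} it yields Eq.~\eqref{Euler_div_free2}. Also $\sum_i c_{iik}=0$ holds trivially by Eq.~\eqref{Euler_div_free1}.
\end{itemize}

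For the drift strength I would use two elementary bounds. First, $|p_j^\perp\cdot p_k|\le\|p_j\|\|p_k\|$. Second, $|\|p_j\|^2-\|p_k\|^2|\le\|p_i\|(\|p_j\|+\|p_k\|)$, which follows because $\big|\|p_j\|-\|p_k\|\big|\le\|p_j\pm p_k\|=\|p_i\|$. Together they give $|c_{ijk}|\le\frac{\pi}{\sqrt2}w_iw_jw_k(\|p_j\|+\|p_k\|)$. Since $w\|p\|\le\sqrt{K_{\alpha,\beta}}$ and $w\le1$, this becomes $|c_{ijk}|\le\frac{\pi}{\sqrt2}\sqrt{K_{\alpha,\beta}}\,(w_k+w_j)$. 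For fixed $i$ and $j$ there are at most $4$ admissible $k$, so $S^{(1)}_i\lesssim K_{\alpha,\beta}\sum_{p\in\ZZ^2\setminus 0}w(p)^2$. The lattice sum is split into a few small shells, which give the additive constant $9$, plus a comparison integral $\int r\,dr/(1+\alpha r^\beta)^2$. This integral is a Beta-type integral giving the $\sin(2\pi/\beta)$ expression for $\beta>1$ (my reading is that the $\beta-2$ and $\sin(2\pi/\beta)$ factors combine to a positive constant for all $\beta>1$, but this needs checking, particularly near $\beta=2$). For $\beta=1$ the same integral gives the logarithm, cut off at $\|p\|\le\Lambda\sim\sqrt N$. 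The bound on $S^{(2)}_i$ is identical, except that the free index is the first one; with $i$ now in the $j$ slot, both $k$ and the first index range over the triad, which costs the extra factor $4$.

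The main obstacle is the $1$-norm trichotomy with the sharp exponent $5/3$. I would bound $\sum_{ijk}|c_{ijk}|$ above and below by a dyadic-shell decomposition of the triads. When all three wave vectors have size about $R$, there are about $R^2\cdot R\,dR$ triads per shell (choose $p_j$ in the disk, then $p_k$ on the shell), and the summand is about $R\cdot(\alpha R^\beta)^{-3}$. The total is therefore about $\int^{\Lambda}R^{4-3\beta}dR$, with $\Lambda\sim\sqrt N$. This integral is finite iff $\beta>5/3$, logarithmic at $\beta=5/3$, and polynomial in $N$ below $5/3$. I would then check that unbalanced triads, where one wave vector is small, contribute no worse. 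For the divergence I would need a matching lower bound, obtained from a positive-density set of triads with non-degenerate angles, on which $|p_j^\perp\cdot p_k|$ and $|\|p_j\|^2-\|p_k\|^2|$ are both of order $R^2$.
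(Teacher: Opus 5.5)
Your proposal is correct, and for most claims it follows the paper's proof. This covers the Galerkin computation through $\gamma_{ijk}=4\int\sin(2\pi p_i\cdot\xi)\sin(2\pi p_j\cdot\xi)\cos(2\pi p_k\cdot\xi)\,d\xi$, with symmetrization in $(j,k)$ and the two triad cases. You make the filter step slightly more explicit via $\langle\overline{g},e_{p_i}\rangle=w_i\langle g,e_{p_i}\rangle$, where the paper just notes $\overline{e}_{p_i}=w(p_i)e_{p_i}$. It also covers the case analysis for Eq.~\eqref{Euler_div_free1}, and the pointwise bound $|c_{ijk}|\le\frac{\pi}{\sqrt2}(\|p_j\|+\|p_k\|)w_iw_jw_k$ followed by the lattice-to-integral comparison and the Beta integral. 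Finally, your dyadic-shell count for the $1$-norm matches the paper, which also only asserts it.

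The genuinely different step is the cyclic identity Eq.~\eqref{Euler_div_free2}. The paper checks it directly on the closed form. Writing the triad constraint as $\sigma_ip_i+\sigma_jp_j+\sigma_kp_k=0$, it shows $p_k^\perp\cdot p_i=\sigma_i\sigma_j\,p_j^\perp\cdot p_k$ and $p_i^\perp\cdot p_j=\sigma_i\sigma_k\,p_j^\perp\cdot p_k$. With $\delta_{ijk}=\sigma_j\sigma_k$ (and cyclically), the three terms share the prefactor $\sigma_j\sigma_k\,p_j^\perp\cdot p_k$, and the differences of squared norms cancel cyclically. The weights $w_iw_jw_k$ are set aside because they are permutation-symmetric.

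You instead read off $\sum_iX_if_i(X)=-b(\overline V,\overline V,\overline V)=0$ from the skew-symmetry Eq.~\eqref{eq:reg-nonlinearity-skewsym}. You then invoke the polynomial-identity part of Lemma~\ref{lemma:div_free}: $d_{ijk}$ is fully symmetric given $c_{ijk}=c_{ikj}$, so $\sum_i\lambda x_if_i\equiv0$ forces $d_{ijk}=0$. This is valid, shorter and more conceptual. It exhibits the identity as the discrete trace of energy conservation, which is exactly why the extra bar is added. It also handles the filter weights automatically and works verbatim for any self-adjoint Fourier-multiplier filter and any orthonormal divergence-free Galerkin basis, with no triad bookkeeping. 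What it gives up is independence from the derivation. It certifies the identity only for the true Galerkin coefficients, so a slip in the relative signs of $\delta_{ijk}$ or in the geometric factor of Eq.~\eqref{Euler_cijk} would go unnoticed. The paper's direct check tests the stated formula itself.

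Three smaller remarks.

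\textbf{The bound on $S^{(2)}_i$.} Your route incurs no extra factor here. You have $|c_{jik}|\le\frac{\pi}{\sqrt2}w_jw_iw_k(\|p_i\|+\|p_k\|)\le\frac{\pi}{\sqrt2}\sqrt{K_{\alpha,\beta}}\,(w_j+w_k)$. For fixed $i$, each $k$ admits at most four $j$, and vice versa. Hence $S^{(2)}_i\le 8\pi^2K_{\alpha,\beta}M_N$, which implies the stated $32\pi^2K_{\alpha,\beta}M_N$. The paper's factor $4$ arises only because it eliminates $\|p_i\|$ via $\|p_i\|\le\|p_j\|+\|p_k\|$ instead of absorbing it into $w_i$.

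\textbf{The point you flag at $\beta=2$ is harmless.} The expression comes from $\Gamma(2/\beta)\Gamma(2-2/\beta)=(1-2/\beta)\,\pi/\sin(2\pi/\beta)$. So $(\beta-2)/\sin(2\pi/\beta)$ is positive on $(1,\infty)$, with a removable singularity at $\beta=2$ (limit $2/\pi$). It blows up only as $\beta\to1^+$, consistent with the logarithm at $\beta=1$.

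\textbf{Unbalanced triads in the $1$-norm count.} These behave as you expect. Take the short side at dyadic scale $r$ and the long sides at scale $R\gg r$. If the short vector is $p_j$ or $p_k$, the shell pair contributes $\lesssim r^{2-\beta}R^{3-2\beta}$. If it is $p_i$, using $p_j^\perp\cdot p_k=\pm\,p_j^\perp\cdot p_i$, it contributes $\lesssim r^{3-\beta}R^{2-2\beta}$. Summing over $r\le R\le\Lambda$ reproduces the same $\Lambda^{5-3\beta}$, $\log\Lambda$, $O(1)$ trichotomy as the balanced triads.
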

\begin{proof}
    We approximate the velocity field by the Galerkin ansatz 
\be
\label{Vansatz}
V(t,\xi)=\sum_{i=1}^N X_i(t)e_{p_i}(\xi),\quad X_i(t)\text{ are coordinates of } V\text{ in the basis }e_{p_i}. 
\ee
%where $X_i(t)$ are  real variables. 
% \be
% \label{Euler_cijk}
% \tilde{c}_{ijk}=
% \frac{\pi}{\sqrt{2}}\,
% \frac{(p_j^\perp\cdot p_k)(\|p_j\|^2-\|p_k\|^2)}
% {\|p_i\|\,\|p_j\|\,\|p_k\|}\,
% \delta_{ijk},
% \ee

%\begin{proof}[\bf Proof of Lemma~\ref{lemma:Euler}]
Let us prove Eq.~(\ref{Euler_cijk}). Let us first do a formal derivation for classical Euler equation. 
\be
\label{eq:dEB1}
\frac{\partial}{\partial t} V +(V \cdot \nabla) V +\lambda V  = -\nabla P.
\ee
Let us take the inner product between 
both sides of Eq.~(\ref{eq:dEB1}) and $e_{p_i}\equiv e_{p_i}(\xi)$. We get
\[
\frac{\partial}{\partial t} (e_{p_i} \cdot V) +e_{p_i}\cdot (V \cdot \nabla)V = -(e_{p_i} \cdot \nabla) P.
\]
Next  take the integral over $\xi\in \TT^2$.
Integration by parts and the identity $\nabla \cdot e_{p_i}=0$ give
\[
\int_{\xi \in \TT^2} d\xi \, (e_{p_i}\cdot \nabla) P= -\int_{\xi \in \TT^2} d\xi \, P (\nabla \cdot e_{p_i})=0.
\]
Thus the pressure term does not contribute to the integral. Using the identity $\int_{\xi \in \TT^2} d\xi\, e_{p_i} \cdot e_{p_j} = \delta_{i,j}$ one gets
\[
\dot{X}_i = -\int_{\xi \in \TT^2} d\xi\, e_{p_i} \cdot (V \cdot \nabla)V,
\]
where $\dot{X}_i\equiv dX_i(t)/dt$ and 
$V$ is determined by the Galerkin ansatz Eq.~(\ref{Vansatz}). Expanding the product gives
\[
\dot{X}_i = -2\sum_{j,k=1}^N X_j X_k \frac1{\|p_j\|  \|p_k\|} \int_{\xi \in \TT^2} d\xi\, (e_{p_i} \cdot p_k^\perp) \sin{(2\pi p_j\cdot \xi)}  (p_j^\perp \cdot \nabla)\sin{(2\pi p_k\cdot \xi)}
\]
Taking the gradient of $\sin{(2\pi p_k\cdot \xi)}$, substituting the explicit formula for the sine wave $e_{p_i}$ and rearranging the terms one gets
\be
\label{dotX_i_explicit}
\dot{X}_i = -\pi \sqrt{2} \sum_{j,k=1}^N X_j X_k \frac1{\|p_i\| \|p_j\|  \|p_k\|} (p_i^\perp \cdot p_k^\perp) (p_j^\perp \cdot p_k) \gamma_{ijk},
\ee
where
\[
\gamma_{ijk}=
4 \int_{\xi \in \TT^2} d\xi\,  \sin{(2\pi p_i\cdot \xi)}  \sin{(2\pi p_j\cdot \xi)}  \cos{(2\pi p_k\cdot \xi)}.
\]
The integral can be taken using the  identities  $2\sin{(a)}\cos{(b)} = \sin{(a+b)} + \sin{(a-b)}$
and 
\[
2\int_{\xi \in \TT^2} d\xi\, \sin(2\pi p \cdot \xi)\sin{(2\pi q\cdot \xi)} = \delta_{p,q} - \delta_{p,-q}.
\]
It gives
\[
\gamma_{ijk} = \delta_{p_i,p_j+p_k} - \delta_{p_i,-p_j-p_k} + \delta_{p_i,p_j-p_k} - \delta_{p_i,p_k-p_j}.
\]
Let $\Gamma_{ijk}$
be the factor
$(p_i^\perp \cdot p_k^\perp) (p_j^\perp \cdot p_k) \gamma_{ijk}$
that appears in Eq.~(\ref{dotX_i_explicit})
symmetrized under the swap of indices $j$ and $k$, that is, 
\[
\Gamma_{ijk}=\frac12 \left[(p_i^\perp \cdot p_k^\perp) 
(p_j^\perp \cdot p_k)\gamma_{ijk}
+(p_i^\perp \cdot p_j^\perp) 
(p_k^\perp \cdot p_j)\gamma_{ikj}\right].
\]
Then Eq.~(\ref{dotX_i_explicit}) is equivalent to 
\be
\label{dotX_i_explicit1}
\dot{X}_i = -\pi \sqrt{2} \sum_{j,k=1}^N X_j X_k \frac1{\|p_i\| \|p_j\|  \|p_k\|} \Gamma_{ijk}.
\ee
Note that $p_k^\perp \cdot p_j=-p_j^\perp \cdot p_k$ and $p^\perp \cdot q^\perp = p\cdot q$ for any vectors $p,q\in \RR^2$. Thus
\[
\Gamma_{ijk}=\frac12 (p_j^\perp \cdot p_k)
\left[(p_i \cdot p_k ) \gamma_{ijk}
- (p_i  \cdot p_j) 
\gamma_{ikj}\right].
\]
We are led to consider three cases.

\noindent
{\em Case 1:} $p_i=\pm (p_j + p_k)$. Then $\gamma_{ijk}=\gamma_{ikj}=\pm 1$ and
\begin{align*}
\Gamma_{ijk}= \frac12 (p_j^\perp \cdot p_k)   (\|p_k\|^2 - \|p_j\|^2).
\end{align*}

\noindent
{\em Case 2:} $p_i=\pm (p_j -p_k)$. Then $\gamma_{ijk}=-\gamma_{ikj} =\pm 1$
and
\begin{align*}
\Gamma_{ijk}=- \frac12 (p_j^\perp \cdot p_k)   (\|p_k\|^2 - \|p_j\|^2).
\end{align*}

\noindent
{\em Case 3:} None of the above. Then $\gamma_{ijk}=\gamma_{ikj}=0$ and thus $\Gamma_{ijk}=0$.

Now, combining all these cases and using Eq.~(\ref{dotX_i_explicit1}) 
 one arrives at Eq.~(\ref{Euler_cijk}) by noting that $\bar e_{p_i} = w(p_i) e_{p_i}$. 
 
Clearly the tensor $c_{ijk}$ has $s=4$ drift channels thanks to the presence of $\delta_{ijk}$ in its definition. Let us now check the symmetry and the divergence-free conditions. Clearly, the factors $w(p_i)w(p_j)w(p_k)$ do not impact neither of those properties: i.e. if $c_{ijk}$ is symmetric and divergence free for $\alpha=0$ then so it is for any $\alpha,\beta>0$. Therefore, wlog, we set $\alpha=0$ in the following derivations. By definition,
\be
c_{ijk} \sim \gamma_{ijk}, \qquad 
\gamma_{ijk}=(p_j^\perp\cdot p_k)(\|p_j\|^2-\|p_k\|^2)\delta_{ijk},
\ee
where $\sim$ hides the factor symmetric under all permutations of indices $i,j,k$.
We have $\gamma_{ijk}=\gamma_{ikj}$ since $p_j^\perp \cdot p_k = - p_j \cdot p_k^\perp$ and $\delta_{ijk}=\delta_{ikj}$.
This proves the symmetry $c_{ijk}=c_{ikj}$.

To prove Eq.~(\ref{Euler_div_free1}), it suffices to check
that $\gamma_{ijk}=0$ unless all three indices $i,j,k$ are distinct. Consider three cases.

\noindent
{\em Case~1:} $j=k$. Then $\gamma_{ijk}=0$ since $\|p_j\|=\|p_k\|$.

\noindent
{\em Case~2:} $i=j$. Recall that $\delta_{ijk}=0$ unless $p_j+p_k \in \{p_i,-p_i\}$
or $p_j -p_k \in \{p_i,-p_i\}$, see Eq.~(\ref{delta_ijk}). Since all wave vectors are nonzero, this is possible only if $p_k=2p_i$.
Then the factor $p_j^\perp \cdot p_k=p_i^\perp \cdot p_k$ in $\gamma_{ijk}$ vanishes and thus $\gamma_{ijk}=0$.

\noindent
{\em Case~3:} $i=k$. This is the same as Case~2 since $\gamma_{ijk}=\gamma_{ikj}$. 

This proves Eq.~(\ref{Euler_div_free1}). 
From the definition Eq.~(\ref{delta_ijk}) one infers that  $\delta_{ijk}=0$ unless 
\be
\label{delta_nonzero_condition}
\sigma_i p_i + \sigma_j p_j + \sigma_k p_k = (0,0) \qquad \mbox{for some $\sigma_i,\sigma_j,\sigma_k\in \{1,-1\}$}.
\ee
In the latter case $\delta_{ijk} = \sigma_j \sigma_k$. It suffices to prove  Eq.~(\ref{Euler_div_free2}) for  any triple of distinct indices $i,j,k$ satisfying Eq.~(\ref{delta_nonzero_condition})
since otherwise all three terms in Eq.~(\ref{Euler_div_free2})  are zero. 
We have
\[
p_k^\perp \cdot p_i = -\sigma_i p_k^\perp \cdot  (\sigma_j p_j + \sigma_k p_k) = - \sigma_i \sigma_j p_k^\perp \cdot p_j = \sigma_i \sigma_j p_j^\perp \cdot p_k
\]
and
\[
p_i^\perp \cdot p_j = -\sigma_i( \sigma_j p_j^\perp + \sigma_k p_k^\perp) \cdot p_j = -\sigma_i \sigma_k p_k^\perp \cdot p_j  = \sigma_i \sigma_k p_j^\perp \cdot p_k.
\]
Noting that $\delta_{ijk}=\sigma_j \sigma_k$, $\delta_{jki}=\sigma_k \sigma_i$, and $\delta_{kij}=\sigma_i\sigma_j$ one gets
\[
\gamma_{ijk} + \gamma_{jki} + \gamma_{kij} = (p_j^\perp \cdot p_k) \sigma_j \sigma_k \left( \| p_j\|^2 - \|p_k\|^2  + \|p_k\|^2 - \|p_i\|^2 + \| p_i\|^2 - \|p_j\|^2 \right) =0.
\]

Let us  demonstrate the inequalities Eqs.~(\ref{eq:1beta>1}--\ref{eq:2beta>1}). Recall from the definition of $\delta_{ijk}$ that 
\begin{equation}\label{eq:pjpkpi}
    p_j+p_k\in\{p_i,-p_i\} \text{ or } p_j-p_k\in\{p_i,-p_i\}
\end{equation}
Equation~\eqref{eq:pjpkpi} implies that %Considering $p_j+p_k\in\{p_i,-p_i\}$ or $p_j-p_k\in\{p_i,-p_i\}$ it is evident that
\[
\bigl|\|p_j\|^2-\|p_k\|^2\bigr|
=
\bigl|\left((p_j-p_k)\cdot (p_j+p_k)\right)\bigr|
\leq \|p_j-p_k\|\|p_j+p_k\|
\leq \|p_i\|(\|p_j\|+\|p_k\|)
\]
Then, by using 
\(
|p_j^\perp\!\cdot p_k|\le \|p_j\|\,\|p_k\|,
\)
we get
\[
    \left|
    \frac{(p_j^\perp\!\cdot p_k)(\|p_j\|^2-\|p_k\|^2)}
    {\|p_i\|\,\|p_j\|\,\|p_k\|}
    \right|
    \leq
    \|p_j\|+\|p_k\|
\]

% To estimate \(S_i^{(1)}(N)\) we set 
% \(
%     c:=p_i,\; a:=p_j,\; b:=p_k,
% \)
% and use the last bound 
so that
\[
|c_{ijk}|^2
\le
\frac{\pi^2}{2}\,(\|p_j\|+\|p_k\|)^2\,w(p_i)^2w(p_j)^2w(p_k)^2.
\]
Since $w(p_i)\le 1$, and
\(
(\|p_j\|+\|p_k\|)^2\le 2\|p_j\|^2+2\|p_k\|^2,
\) 
we obtain
\[
|c_{ijk}|^2
\le
\pi^2\bigl(\|p_j\|^2 w(p_j)^2w(p_k)^2+\|p_k\|^2w(p_j)^2w(p_k)^2\bigr).
\]
By the definition of $K_{\alpha,\beta}$,
\[
\|p_j\|^2w(p_j)^2\le K_{\alpha,\beta},
\qquad
\|p_k\|^2w(p_k)^2\le K_{\alpha,\beta},
\]
so
\[
|c_{ijk}|^2
\le
\pi^2K_{\alpha,\beta}\bigl(\psi(p_j)+\psi(p_k)\bigr).
\]
Hence 
\begin{align}
S_i^{(1)}(N)&=\sum_{j,k=1}^N c_{ijk}^2 \le 
\pi^2 K_{\alpha,\beta}
\sum_{k,j=1}^N 
\bigl(\psi(p_j) + \psi(p_k)\bigr)\delta_{ijk}\\ 
&= \pi^2 K_{\alpha,\beta} \sum_{j=1}^N \bigl(4\psi(p_j)+ 2\psi(p_j-p_i)+2\psi(p_j+p_i)\bigr)\\
&\le \pi^2 K_{\alpha,\beta} 8 M_N \label{eq:S1-via-MN-both}
\end{align}
where \[
    M_N:=\max_{\substack{A\subset\mathbb Z^2\\ |A|\le N}}\sum_{p\in A}\psi(p),\qquad \psi(p):=w(p)^2,\qquad  A_N=\{p_1,\dots,p_N\}, p_i\in\mathbb Z^2.
\]
and $|A|$ denotes the number of elements of the set $A$.
% Summing over all admissible pairs, and using the fact that for each fixed \(i\) and \(a\) there are at most four admissible values of \(b\), we get
% \[
% S_i^{(1)}(N)
% \le
% \pi^2K_{\alpha,\beta}
% \sum_{a\in A_N}\sum_{\ell=1}^4
% \bigl(\psi(a)+\psi(b_\ell(a))\bigr),
% \]
% where each $b_\ell(a)$ is one of
% \(
% c-a,\; -c-a,\; a-c,\; a+c.
% \)
% Each map \(a\mapsto b_\ell(a)\) is injective on \(\mathbb Z^2\), so
% \[
% \sum_{a\in A_N}\psi(a)\le M_N,
% \qquad
% \sum_{a\in A_N}\psi(b_\ell(a))\le M_N.
% \]
% Hence
% \be
% S_i^{(1)}(N)\le 8\pi^2K_{\alpha,\beta}\,M_N.
% \label{eq:S1-via-MN-both}
% \ee
%     c:=p_i,\; a:=p_j,\; b:=p_k,
To estimate \(S_i^{(2)}(N)\) we note that 
\[
|c_{jik}|^2
\le\frac{\pi^2}{2}\,(\|p_i\|+\|p_k\|)^2\,w(p_j)^2 w(p_i)^2 w(p_k)^2.
\]
To eliminate $\|p_i\|$ in the latter we recall Eq.~\eqref{eq:pjpkpi} which implies that
\[
\|p_i\|\le \|p_j\|+\|p_k\|,
\]
and so
\[
|c_{jik}|^2
\le
\frac{\pi^2}{2}\,4(\|p_j\|+\|p_k\|)^2\,w(p_j)^2w(p_i)^2w(p_k)^2.
\]
Now, following similar logic as for \(S_i^{(1)}(N)\) we obtain
\be
    S_i^{(2)}(N)\le 32\pi^2K_{\alpha,\beta}\,M_N.
\label{eq:S2-via-MN-both}    
\ee
Now Eqs.~(\ref{eq:1beta>1}--\ref{eq:2beta>1}) are obtained by taking estimates of $M_N$ provided below by Lemma~\ref{lem:MN-by-coarea}, and plugging them into Eq.~\eqref{eq:S1-via-MN-both} and Eq.~\eqref{eq:S2-via-MN-both}. 
\begin{lemma}
\label{lem:MN-by-coarea}
\begin{align}
M_N&\le 
%9+ \frac{2^{2\beta+1}\pi}{\beta}\alpha^{-2/\beta}\Gamma\!\left(\frac{2}{\beta}\right)\Gamma\!\left(2-\frac{2}{\beta}\right)=
9+\frac{2^{2\beta+1}\pi^2(\beta-2)}{\alpha^{2/\beta}\beta^2 \sin(2\pi/\beta)},&\beta>1\label{eq:beta>1}\\
M_N&\le 9+
% 2^{2\beta}I(2\sqrt N+1)
\frac{8\pi}{\alpha^2}\log(1+\alpha (2\sqrt{N}+1)),&\beta=1\label{eq:beta=1}
\end{align}
\end{lemma}
\begin{proof}
% We first estimate $M_N$ in terms of the integral $I(R)$.
% \textbf{Step 1: A maximal weighted lattice sum.}
Since $\psi(p)$ is radial and non-increasing in $\|p\|$, the maximum in the definition of $M_N$ is attained at a set $A_N$ composed of $N$ lattice points $p_i$ which are near the origin. Thus, taking $R_N:=2\sqrt N$ we get
\be
M_N\le \sum_{\substack{p\in\mathbb Z^2\\ \|p\|\le 2\sqrt N}}\psi(p).
\label{eq:MN-disk-both}
\ee
For each lattice point $p\in\mathbb Z^2$, define the unit square
\[
Q_p:=p+\left[-\frac12,\frac12\right]^2.
\]
The squares $Q_p$ are pairwise disjoint and have area $1$. If $\|p\|\ge 2$ and $x\in Q_p$, then
\[
\|x\|\le \|p\|+\frac{\sqrt2}{2}\le 2\|p\|.
\]
so that \( 1+\alpha \|p\|^\beta \ge 1+\alpha 2^{-\beta}\|x\|^\beta \).
Using the inequality
\[
\frac{1}{(1+\lambda t)^2}\le \lambda^{-2}\frac{1}{(1+t)^2},
\qquad 0<\lambda\le 1,\quad t\ge 0,
\]
with \(\lambda=2^{-\beta}\) and \(t=\alpha\|x\|^\beta\), we obtain
\begin{equation}
    \label{eq:psipx}
    \psi(p)= \frac{1}{(1+\alpha\|p\|^\beta)^2}\le 2^{2\beta}\frac{1}{(1+\alpha\|x\|^\beta)^2}.
\end{equation}
Integrating both sides of Eq.~\eqref{eq:psipx} over $Q_p$ and recalling that $\int_{Q_p}dx=1$ yields
\[
\psi(p)\le
2^{2\beta}\int_{Q_p}\frac{dx}{(1+\alpha\|x\|^\beta)^2}.
\]
Summing over all lattice points with \(2\le \|p\|\le 2\sqrt N\), and using
\[
Q_p\subset \{|x|\le 2\sqrt N+1\}\Rightarrow \cup_{2\le \|p\|\le 2\sqrt N} Q_p \subset \{|x|\le 2\sqrt N+1\}
\]
we obtain
\begin{align*}
    \sum_{\substack{p\in\mathbb Z^2\\ 2\le \|p\|\le 2\sqrt N}}
    \psi(p)
&\le 2^{2\beta}\int_{\bigcup_{\substack{2\le \|p\|\le 2\sqrt N}} Q_p}\frac{dx}{(1+\alpha\|x\|^\beta)^2}\\
&\le 
2^{2\beta}\int_{|x|\le 2\sqrt N+1}\frac{dx}{(1+\alpha\|x\|^\beta)^2}
=
2^{2\beta}I(2\sqrt N+1).
\end{align*}
where we defined 
\[
I(R):=\int_{x_1^2+x_2^2\le R^2}
\frac{dx_1\,dx_2}{\bigl(1+\alpha(x_1^2+x_2^2)^{\beta/2}\bigr)^2}
=
\int_{|x|\le R}\frac{dx}{(1+\alpha\|x\|^\beta)^2}.
\]

Since $\sum_{\substack{p\in\mathbb Z^2\\\|p\|<2}} \psi(p) \le 9$ we get: 
\be
\sum_{\substack{p\in\mathbb Z^2\\ \|p\|\le 2\sqrt N}}\psi(p)
\le
9+2^{2\beta}I(2\sqrt N+1).
\label{eq:lattice-by-I-both}
\ee
Combining Eq.~\eqref{eq:MN-disk-both} and Eq.~\eqref{eq:lattice-by-I-both} yields
\be
M_N\le 9+2^{2\beta}I(2\sqrt N+1).
\label{eq:MN-by-I-both}
\ee

% \textbf{Step 4: Explicit estimates for \(I(R)\).}
Now we bound the integral $I(R)$, for which we first use the coarea formula (equivalently, polar coordinates) to get
\[
I(R)=2\pi\int_0^R \frac{r\,dr}{(1+\alpha r^\beta)^2}.
\]

Then, consider two cases:

\smallskip
\emph{Case \(\beta>1\).}
With the substitution \(t=\alpha r^\beta\), we get
\[
r\,dr=\frac1\beta \alpha^{-2/\beta} t^{\frac{2}{\beta}-1}\,dt,
\]
and therefore
\be
    I(R)
    =
    \frac{2\pi}{\beta}\alpha^{-2/\beta}
    \int_0^{\alpha R^\beta}\frac{t^{\frac{2}{\beta}-1}}{(1+t)^2}\,dt
    % \le
    % \frac{2\pi}{\beta}\alpha^{-2/\beta}
    % \Gamma\!\left(\frac{2}{\beta}\right)
    % \Gamma\!\left(2-\frac{2}{\beta}\right).
    % \label{eq:I-beta>1-both}
\ee
Using the substitution
\[
    u=\frac{t}{1+t} \iff t=\frac{u}{1-u} 
\]
we get (here $\Gamma$ denotes the standard Euler function): 
\be\begin{aligned}
    I(R)
    &=
    \frac{2\pi}{\beta}\alpha^{-2/\beta}
    \int_0^{\frac{\alpha R^\beta}{1+\alpha R^\beta}}u^{\frac{2}{\beta}-1}(1-u)^{1-\frac{2}{\beta}}\,du \\
    &\le
    \frac{2\pi}{\beta}\alpha^{-2/\beta}
    \int_0^{1}u^{\frac{2}{\beta}-1}(1-u)^{1-\frac{2}{\beta}}\,du \\
    &=
    \frac{2\pi}{\beta}\alpha^{-2/\beta}
    \Gamma\!\left(\frac{2}{\beta}\right)
    \Gamma\!\left(2-\frac{2}{\beta}\right)\\
    &=\frac{2\pi}{\beta}\alpha^{-2/\beta}(1-2/\beta)\Gamma\!\left(\frac{2}{\beta}\right)
    \Gamma\!\left(1-\frac{2}{\beta}\right) = \frac{2\pi}{\beta}\alpha^{-2/\beta}(1-2/\beta)\frac{\pi}{\sin(2\pi/\beta)}
    \label{eq:I-beta>1-both}
\end{aligned}\ee

% \\
% &\le
% \frac{2\pi}{\beta}\alpha^{-2/\beta}
% \Gamma\!\left(\frac{2}{\beta}\right)
% \Gamma\!\left(2-\frac{2}{\beta}\right) \\
% &=
% \frac{2\pi}{\beta}\alpha^{-2/\beta}
% \int_0^{\alpha R^\beta}\frac{t^{\frac{2}{\beta}-1}}{(1+t)^2}\,dt
% \le
% \frac{2\pi}{\beta}\alpha^{-2/\beta}
% \Gamma\!\left(\frac{2}{\beta}\right)
% \Gamma\!\left(2-\frac{2}{\beta}\right).

\smallskip
\emph{Case \(\beta=1\).}
Here
\[
    I(R)=2\pi\int_0^R \frac{r\,dr}{(1+\alpha r)^2}.
\]
With \(u=1+\alpha r\), we find
\be
    I(R)=
    \frac{2\pi}{\alpha^2}
    \left(
        \log(1+\alpha R)+\frac{1}{1+\alpha R}-1
    \right)
    \le
    \frac{2\pi}{\alpha^2}\log(1+\alpha R).
\label{eq:I-beta=1-both}
\ee
Substituting Eq.~\eqref{eq:I-beta>1-both} and Eq.~\eqref{eq:I-beta=1-both} into the bound Eq.~\eqref{eq:MN-by-I-both} gives the stated estimates.
\end{proof}
Finally define
\[
\Lambda_R:=\{p\in\mathbb Z^2\setminus\{0\}:\|p\|\le R\}.
\] Partitioning integer lattice points into dyadic shells it is not hard to show that there exists a constant $c_{\alpha,\beta}>0$ such that, for all
sufficiently large $R$,
\[
\sum_{p_i,p_j,p_k\in \Lambda_R}|c_{ijk}|
\ge c_{\alpha,\beta} R^{5-3\beta}.
\]
which diverges to infinity as $R\to\infty$ whenever $1<\beta<5/3$. Moreover \[
\sum_{p_i,p_j,p_k\in \Lambda_R}|c_{ijk}|\ge \log(R)\,\qquad \beta=5/3.
\]
On the other hand \[
\sum_{p_i,p_j,p_k\in \Lambda_R} |c_{ijk}|
\le
\begin{cases}
C_{\alpha,\beta}, & \beta>\frac53,\\[0.7ex]
C_{\alpha}\log R, & \beta=\frac53.
\end{cases}
\]
This completes the proof. 
\end{proof}

Finally, by adding Wiener noise we convert the ODE Eq.~\eqref{Euler2} to an SDE of the form Eq.~(\ref{SDE})
that can be simulated by our quantum algorithm:
\be
\label{eq:deb-sde}
dX_i(t) = -\lambda X_i(t)dt  + \sum_{j,k=1}^N c_{ijk} X_j(t) X_k(t) dt + \sqrt{q} \, dW_i(t).
\ee

\subsection{Numerical experiment}\label{subsec:monte-carlo-experiment}

Here we provide a complete description of the numerical experiment from Section~\ref{sec:methods}. 
For the convenience of the reader, we recall the definitions and results therein.  We classically simulate the dynamics of
\begin{equation}\label{eq:numerics-v(t,x)}
v(t,x)=\EE_{z} u(t,x+z) = \EE_z(\EE_W\,u_{0,\xi^*}(X(t) \,)
\end{equation}
for $X(t)$ solving the 2D discrete dEB model Eq.~\eqref{eq:deb-sde} over $N=220$ basis functions with initial condition $X(0)=x$, tensor $c$ defined by Eq.~\eqref{Euler_cijk}, and an observable function $u_{0,\xi^*}$.

\paragraph{Euler-Maruyama method.} For classical simulation of the dEB model, we use the Euler-Maruyama (EM) method \cite{Maruyama1955ContinuousMP}, which is a generalization of the Euler method for SDEs. We outline the method as follows. Consider a general $N$-dimensional SDE of the form
\begin{equation}
\label{eq:general-sde}
    {d} X(t) = a(X(t),t) {d}t + b(X(t),t) {d}W(t), \qquad\quad a,b:\mathbb{R}^N\to\mathbb{R}^N,
\end{equation}
with initial condition $X(t_0) = x$. The functions $a$ and $b$ constitute the deterministic and stochastic parts of the SDE, respectively. Fix some final evolution time $T>0$ and consider $D$ equally-spaced timepoints
\begin{equation*}
    t_0< t_1 < t_2 < \hdots < t_{D} = T, \qquad \Delta t = \frac{T}{D}.
\end{equation*}
Following the EM-method, we set $Z_0 = x$ and obtain approximations $Z^{(i)}$ of $X(t_i)$ via
\begin{equation}\label{eq:euler-maruyama}
    Z^{(i)} = Z^{(i-1)} + a\left(Z^{(i-1)},t_{i-1}\right) \Delta t + b\left(Z^{(i-1)},t_{i-1}\right) \Delta W^{(i-1)},
\end{equation}
for $i=1,...,D$, where $\Delta W^{(i)}\sim\mathcal{N}(0, I_N \Delta t)$ are Brownian increments. Notice if $b\equiv 0$ then Eq.~\eqref{eq:general-sde} is fully deterministic and the time-stepping scheme Eq.~\eqref{eq:euler-maruyama} reduces to the classical Euler method for ODEs. For the dEB model Eq.~\eqref{eq:deb-sde}, $a$ and $b$ are given component-wise by
\begin{equation*}
    a_i(X(t),t) = - \lambda X_i(t) + \sum_{j,k=1}^N c_{ijk} X_j(t) X_k(t), \qquad \quad
    b_i(X(t),t) = \sqrt{q},
\end{equation*}
for $c_{ijk}$ defined in Eq.~\eqref{Euler_cijk}.

The EM-method is weakly first-order convergent \cite{KloedenPlaten1992}, namely
\begin{equation}\label{eq:weak-convergence}
    \left| \mathbb{E}[ \varphi(X(T)) ] - \mathbb{E}[\varphi(Z^{(D)})] \right| = O\left({\Delta t}\right)
\end{equation}
for continuous and bounded $\varphi: \mathbb{R}^N\to \mathbb{R}$. However, in practice, the expectation $\mathbb{E}[\varphi(Z^{(D)})]$ is approximated by the sample mean
\begin{equation*}
    \widehat{\varphi}_P = \frac{1}{P} \sum_{i=1}^P \varphi(Z^{(D,i)})
\end{equation*}
where $Z^{(D,1)},...,Z^{(D,P)}$ are obtained from $P$ runs of the EM-method over different noise realizations. Following the standard Monte Carlo procedure \cite{fishman1995montecarlo}, for $Z^{(D)}$ with sample variance $s^2$, it holds
\begin{equation*}
    \left| \widehat{\varphi}_P - \mathbb{E}[\varphi(Z^{(D)})] \right| \leq \frac{s z}{\sqrt{P}},
\end{equation*}
with probability determined by the selected $z$-score. Applying the triangle inequality and taking timestep $\Delta t = C_1/ \sqrt{P}$ we get the error estimate
\begin{equation}\label{eq:em-mcmc-error}
\begin{aligned}
    \big| \mathbb{E}[ \varphi(X(T)) ] -  \widehat{\varphi}_P  \big| &\leq \big| \mathbb{E}[ \varphi(X(T)) ] - \mathbb{E}[\varphi(Z^{(D)})] \big| + \big| \mathbb{E}[\varphi(Z^{(D)})] - \widehat{\varphi}_P \big| \\
    &\leq  \frac{C_1 C_2  + s z}{\sqrt{P}} \\
    &=:\varepsilon_P.
\end{aligned}
\end{equation}
The constant $C_1>0$ is chosen such that the EM-method remains stable and the constant $C_2>0$ arises from Eq.~\eqref{eq:weak-convergence} and depends on the underlying SDE in Eq.~\eqref{eq:general-sde}. 

\paragraph{Turbulent initial condition.} To illustrate the growing simulation complexity due to the impact of a strong quadratic nonlinearity, we generate a turbulent initial condition (IC) $x$ by simulating the dEB model with deterministic Kolmogorov forcing. The Kolmogorov forcing term, defined by
\begin{equation}\label{eq:k-forcing}
    F(t,\xi)
    = 
    F_0
    \begin{pmatrix}
     \sin \left(2 \pi k_f \, \xi_2\right), & 0 
\end{pmatrix}
^T
, \qquad 
F_0 \in \mathbb{R}, \,\, k_f \in \mathbb{Z},
\end{equation}
was originally introduced in the context of the Navier-Stokes equation (NSE) \cite{meshalkin1961investigation} and serves as a prototypical system for the study of turbulent dynamics. 

In the noiseless case ($q=0$), the dEB model with forcing Eq.~\eqref{eq:k-forcing} and zero initial condition $V(t=0,\xi)=(0,0)^T$ has the analytical solution
\be
\label{eq:deb-analytical-sol}
V(t,\xi) =
\frac{F_0}{\lambda} ( 1 - e^{-\lambda t})
\,
\begin{pmatrix}
     \sin \left(2 \pi k_f \, \xi_2\right),
    & 0
\end{pmatrix}
^T
,
\ee
which can be viewed as the analogue of ``Kolmogorov flow" in the 2D NSE \cite{meshalkin1961investigation}. The solution Eq.~\eqref{eq:deb-analytical-sol} appears qualitatively as horizontal ``bands'', which evolve towards the steady state
\[
\lim_{t\to\infty} V(t,\xi) = \frac{F_0}{\lambda} \big( \sin (2 \pi k_f \, \xi_2), 0 \big)^T.
\]
We note further that the solution in Eq.~\eqref{eq:deb-analytical-sol} satisfies $\overline{(\overline V\cdot \nabla)\overline V} = 0$, and so dEB simplifies to a linear model in this case. However, in the noisy case ($q>0$), the nonlinearity $\overline{(\overline V\cdot \nabla)\overline V}$ becomes increasingly pronounced over time due to the Wiener noise until it eventually dominates the linear part of dEB. 

\begin{figure}[htbp]
\centering\includegraphics[width=\textwidth]{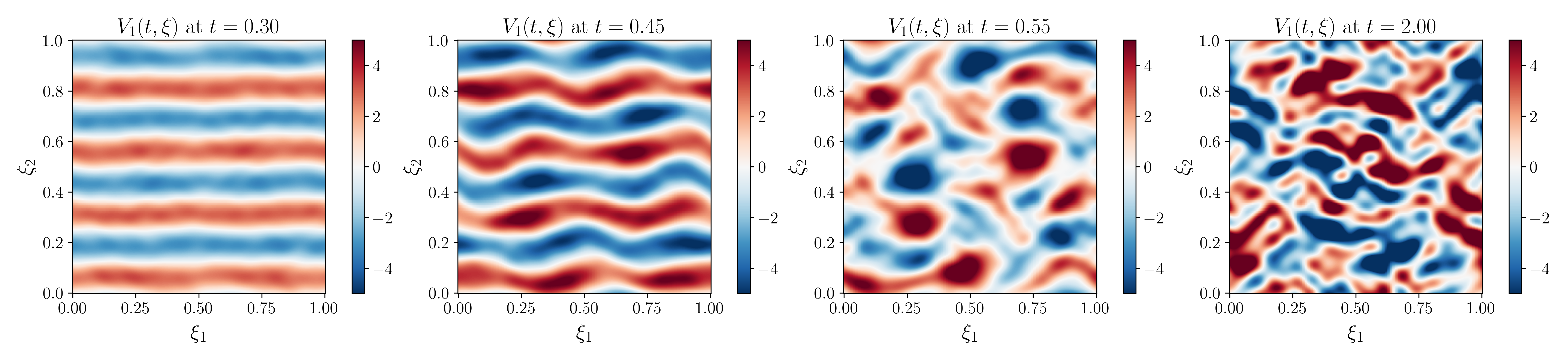}
        \caption{{\em Generation of the turbulent initial condition}. Profiles of the first component of the velocity $V=(V_1,V_2)$ for the dEB model Eq.~\eqref{eq:deb-sde} with Kolmogorov forcing Eq.~\eqref{eq:k-forcing} over $t\in[0,t_{\text{turb}}]$.}
    \label{fig:turbulence}
\end{figure}

The behavior of the dEB model with Kolmogorov forcing is illustrated in \autoref{fig:turbulence} (see also the first two rows of \autoref{fig:dEB}). We fix a single realization of Wiener noise with rate $q=5\times 10^{-4}$ and integrate the system forward in time using the EM-method Eq.~\eqref{eq:euler-maruyama}, from the zero state $X(0)=0$ at $t=0$ until turbulence is observed at $t=t_{\text{turb}}=2.0$, using a timestep $\Delta t = 10^{-4}$ to ensure stability of EM. The resulting state $x=X(t_{\text{turb}})$ is taken as the \textit{deterministic} turbulent IC. We take damping $\lambda = 10^{-4}$, Kolmogorov forcing with amplitude $F_0=10$ and frequency $k_f=4$, and coefficients $c_{ijk}$ with regularization parameters $\alpha=10^{-2}$ and $\beta=1.6$. We note that this choice of $\alpha,\beta$ falls into the natural target regime for quantum algorithms (see Conjectures \ref{conj-sparsity} and \ref{conj-quadratic-lambda}). In the first panel of \autoref{fig:turbulence}, at $t=0.3$, the velocity only slightly deviates from the analytical solution Eq.~\eqref{eq:deb-analytical-sol}, suggesting that we are essentially in a linear regime. In the second panel, at $t=0.45$, the velocity begins to diverge from the analytical solution, and finally, at $t=t_{\text{turb}}=2.0$, turbulence has fully developed. The complexity of $V$ grows with $t$: at $t=0.3$, $V$ can be represented by just a few basis functions as it behaves approximately like Eq.~\eqref{eq:deb-analytical-sol}, which would require only one basis function. However, as $t$ grows, increasingly many basis functions are required to resolve $V$ with fixed precision.

\paragraph{Averaged observable $v(t,x)$.} To simulate the dynamics of $v(t,x)$ in Eq.~\eqref{eq:numerics-v(t,x)}, we remove the Kolmogorov forcing term and simulate the dEB model Eq.~\eqref{eq:deb-sde} from the noisy initial conditions $x+z$ where $x=X(t_{\text{turb}})$ with $z_i\sim \calN(0,q/(2\lambda))$. We reduce the Wiener noise rate from $q=5\times 10^{-4}$ to ${q}=2.5\times 10^{-4}$, as this in turn allows for a reduced noise variance $q/(2\lambda)=1.25$ in the turbulent initial condition. The parameters $\lambda,\alpha,\beta,N$ are all taken as in the previous section. We simulate from $t=t_{\text{turb}}=2.0$ to $t=T=3.0$ over $5,000$ realizations of the noise in the initial condition using the EM-method. Following the EM error estimate in Eq.~\eqref{eq:em-mcmc-error}, the timestep is taken as $\Delta t=C_1/ \sqrt{P}$ where we take $C_1=1.2\times 10^{-3}$ to ensure stability of the EM-method, thus giving a timestep of $\Delta t=C_1 / \sqrt{P} = 1.6\times 10^{-5}$ for $P=5,000$ samples. \autoref{fig:sample-variance} shows the sample variance in the coefficients $X_i(T)$ and $X_i^2(T)$ for $i=1,...,N$, which have converged on average to the values $\sigma^2(X(T))=1.97$ and $\sigma^2(X^2(T))=7.93$, respectively. Empirical observation suggests that $C_2=O(1)$ and so, following Eq.~\eqref{eq:em-mcmc-error}, to achieve an absolute error of $\varepsilon_P = 0.1$ in the observable $\varphi(X) = X_i$ with $98\%$ probability\footnote{We take $z$-score $z=2.326$ for a two-tailed confidence level of $98\%$.} one requires $P= \lceil \sigma^2(X(T)) z^2 / \varepsilon^2 \rceil = 1,066$ samples. Similarly, one requires $P= \lceil \sigma^2(X^2(T)) z^2 / \varepsilon^2 \rceil = 4,291$
samples to achieve the same error $\varepsilon_P$ in the observable $\varphi(X) = X_i^2$ with $98\%$ probability.

\begin{figure}[htbp]
\centering\includegraphics[width=0.9\textwidth]{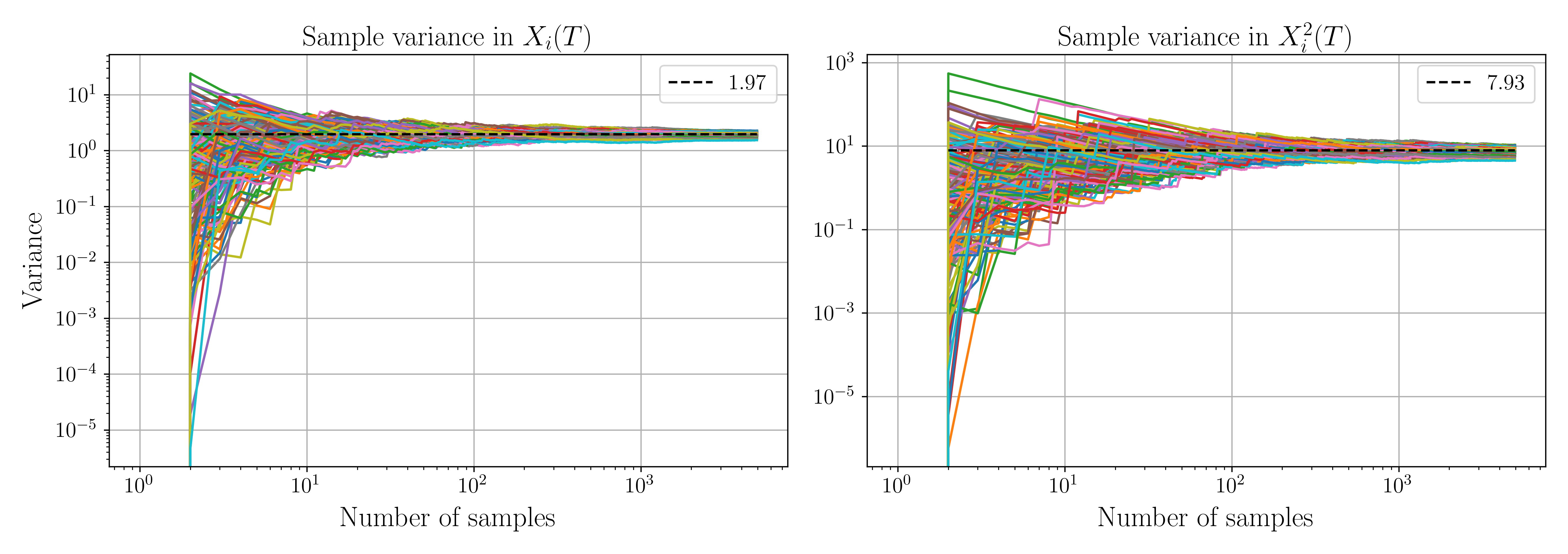}
        \caption{{\em Sample variance in projection coefficients}. Unbiased sample variance in all $N$ coefficients $X_i(T)$ and $X_i^2(T)$ at the final time $T=3.0$.
        }
    \label{fig:sample-variance}
\end{figure}

For the observable, we consider
\begin{equation}\label{eq:numerics-u0}
u_{0,\xi^*}(X(t)) = \sum_{i=1}^N X_i(t) \big( \, (1,0) \cdot \left( w(p_i) e_{p_i}(\xi^*) \right) \big),
\end{equation}
defined at the spatial points  $\xi^*\in l=\{(0.5,0),...,(0.5,1)\}$. That is, $u_{0,\xi^*}(X(t))$ approximates the first component of the velocity field $V_1(t,\xi^*)$ along the vertical line $\xi^*\in l$. The observable is a weighted linear combination of $N$ linear observables $x_i$ (i.e., $\prod_{i=1}^N x_i^{d_i}$ for $d_i=1$) with state encoding
\[
|\psi(0)\ra = \sum_{i=1}^N \sqrt{\frac{q}{2\lambda_i}} (1,0) \cdot \left( w(p_i) e_{p_i}(\xi^*) \right) |\mm_i\ra.
\] 
It follows from Eq.~\eqref{eq:I-beta>1-both} that, for $\beta>1$,
\begin{equation*}
\begin{aligned}
    \|u_{0,\xi^*}\|^2_2  = \la \psi(0)|\psi(0)\ra = \sum_{i=1}^N \frac{1}{( 1+\alpha \|p_i\|^\beta )^2} \leq \frac{2^{2\beta+1} \pi^2 \alpha^{-2/\beta}(\beta-2)}{\beta^2 \sin(2\pi/\beta)}
\end{aligned}
\end{equation*}
and so we conjecture that $u_{0,\xi^*}$ can be efficiently prepared as a quantum state. As a corollary of Lemma~\ref{lemma:alpha-norm-bound}, it can be further shown that the observable $u_{0,\xi^*}$ converges to the ``unfiltered" observable (defined as in Eq.~\eqref{eq:numerics-u0} but without $w(p_i)$ factor) at the rate $O(\alpha^{1/\beta})$. 

Rows 3, 4 and 5 of~\autoref{fig:dEB} show the expectation values $v(t,x)$ and associated averaged velocity fields obtained from $5,000$ realizations of the noisy initial condition and Wiener noise. The observable functions $u_{0,\xi^*}$ are defined at 100 equally-spaced points $\xi^*\in l=\{(0.5,0),...,(0.5,1)\}$. \autoref{fig:v-truncation} shows a comparison of the expectation $v(t,x)$ for the filtered and unfiltered $u_{0,\xi^*}$. It is clear that the filtered observable serves as a good approximation of the unfiltered observable, i.e., that $u_{0,\xi^*}(X(t))$ is indeed a good approximation of $V_1(t,\xi^*)$ along $\xi^*\in l$.

\begin{figure}[htbp]
\centering\includegraphics[width=\textwidth]{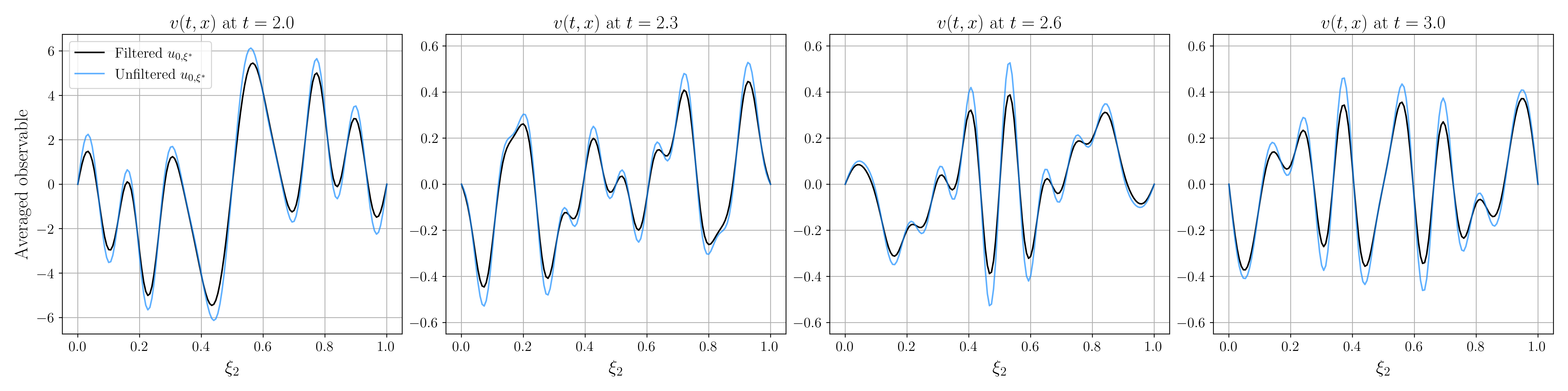}
        \caption{{\em Unfiltered expectation $v(t,x)$}. Comparison of the expected value $v(t,x)$ for observable $u_{0,\xi^*}(X(t))$ with the filter $w(p_i)$ for $(\alpha,\beta)=(10^{-2},1.6)$ versus the ``unfiltered" observable with $(\alpha,\beta)=(0,0)$, over $t\in[t_{\text{turb}},T]$.
        }
    \label{fig:v-truncation}
\end{figure}

\autoref{fig:obserable-variance} reports the sample variance in the averaged observable $v(t,x)$ for the above parameters. Namely, we compute the sample variance $\sigma^2_k$ in $v(t,x)$ at each $\xi_k^*\in l$ and plot the average across all $\sigma^2_k$ values, which we denote $\widehat{\sigma}^2$. That is, $\widehat{\sigma}^2$ represents the sample variance in the expectation value $v(t,x)$, averaged across the points $\xi^*$ along the vertical line $l$. The variance in the observable has clearly converged for each of the plotted timepoints $t\in[t_{\text{turb}},T]$.

\begin{figure}[htbp]
\centering\includegraphics[width=0.5\textwidth]{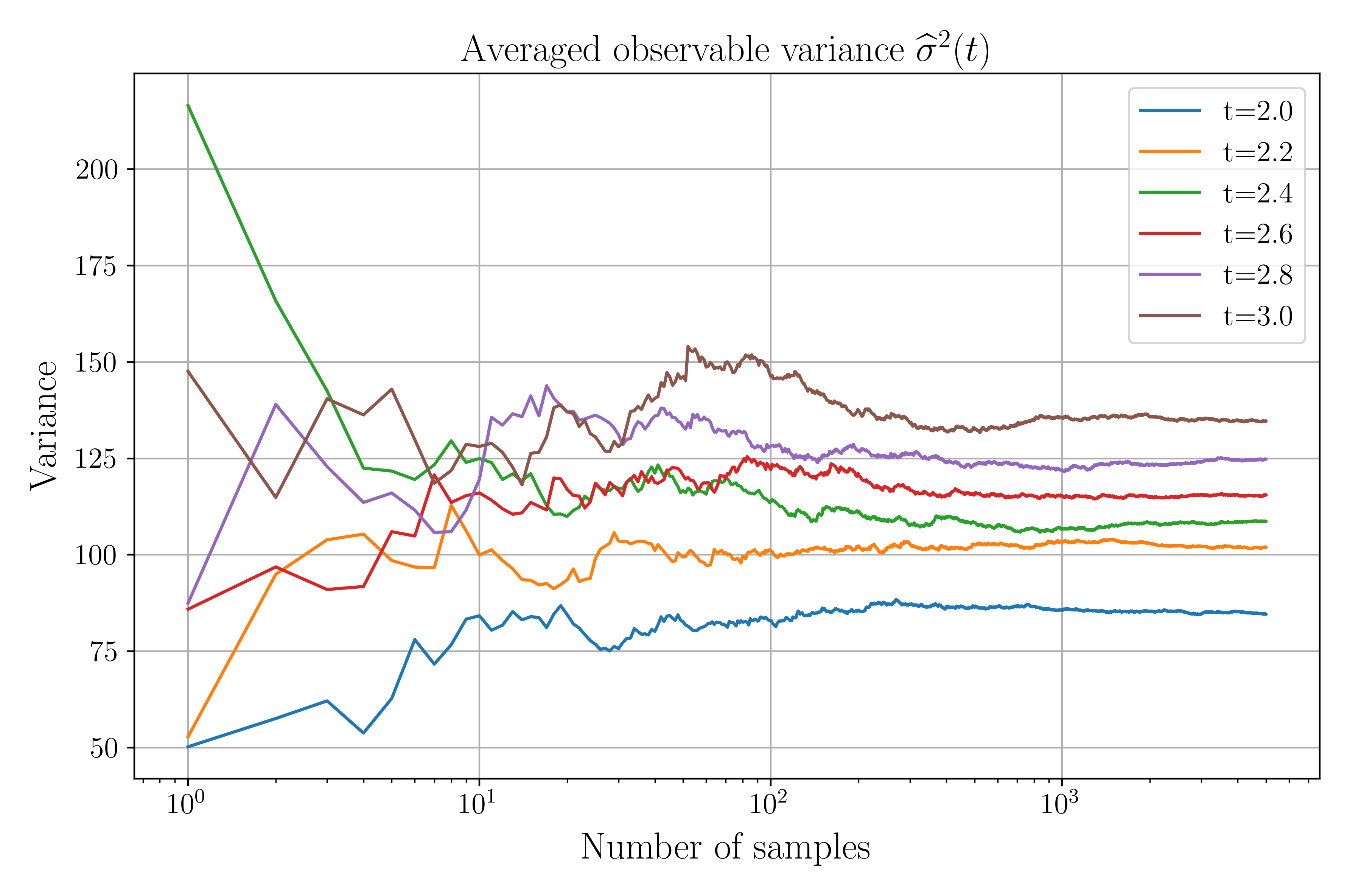}
        \caption{{\em Sample variance in averaged observables}. Sample variance $\widehat{\sigma}^2$ in the expectation $v(t,x)$ averaged across the spatial points $\xi^*\in l$, plotted at timepoints $t\in[t_{\text{turb}},T]$.
        }
    \label{fig:obserable-variance}
\end{figure}

As noted above, the noise rate in the IC is $q/(2\lambda)=1.25$ in the present configuration. The corresponding norm of the readout vector $\psi_{out}(x)$ is $\exp(\lambda \|x\|_2^2/q)=7.3\times 10^{4}$. If we took an increased Wiener noise rate of $q=5\times 10^{-4}$, this would result in an increased noise variance in the IC of $q/(2\lambda)=2.5$ and a reduced readout norm of $\exp(\lambda \|x\|_2^2/q)=2.7\times 10^{2}$. A comparison of the observable $v(t,x)$ obtained for these two Wiener noise rates is presented in \autoref{fig:v-IC-noise}, which highlights that we can still obtain nontrivial dynamics for a small norm in the readout state, but at the expense of larger noise in the IC.

\begin{figure}[htbp]
\centering\includegraphics[width=\textwidth]{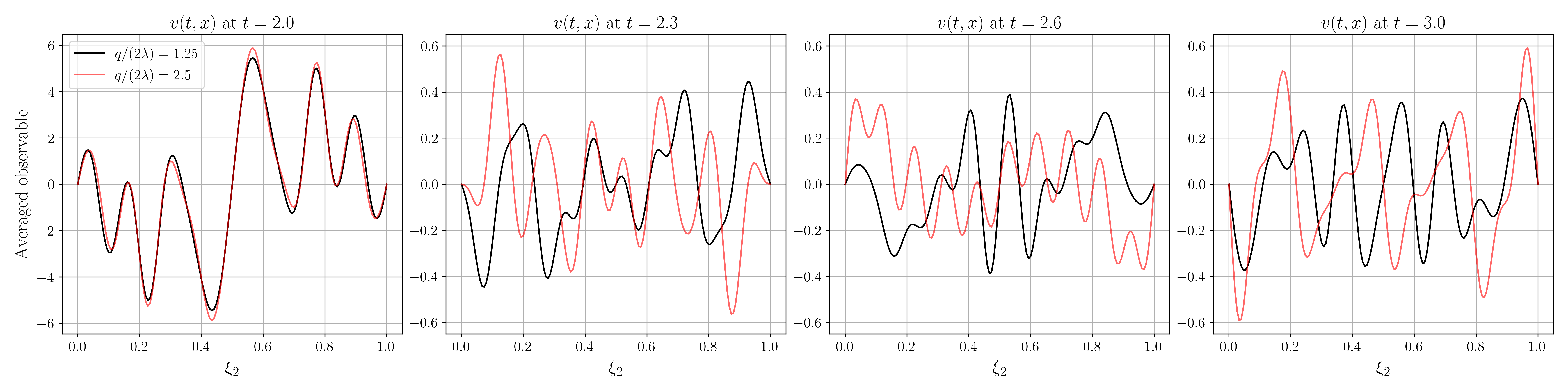}
        \caption{\emph{Expectation $v(t,x)$ for different noise rates $q$.} Comparison of the expected value $v(t,x)$ for Wiener noise $q=2.5\times 10^{-4}$ versus with $q=5\times 10^{-4}$ over $t\in[t_{\text{turb}},T]$, which  correspond to initial condition noise rates of $q/(2\lambda)=1.25$ and $q/(2\lambda)=2.5$, respectively.
        }
    \label{fig:v-IC-noise}
\end{figure}

\autoref{fig:Xi-exp-vals} shows the expectation $v(t,x)$ for each choice of observable function representing the first mode $u_0(X(t))=X_i(t)$, second mode $u_0(X(t))=X_i^2(t)$ and nearest-neighbor terms $u_0(X(t))=X_i(t) X_{i+1}(t)$. Each of these observables can be efficiently prepared as a quantum state (see examples in Section~\ref{sec:init}).

\begin{figure}[htbp]
\centering\includegraphics[width=0.9\textwidth]{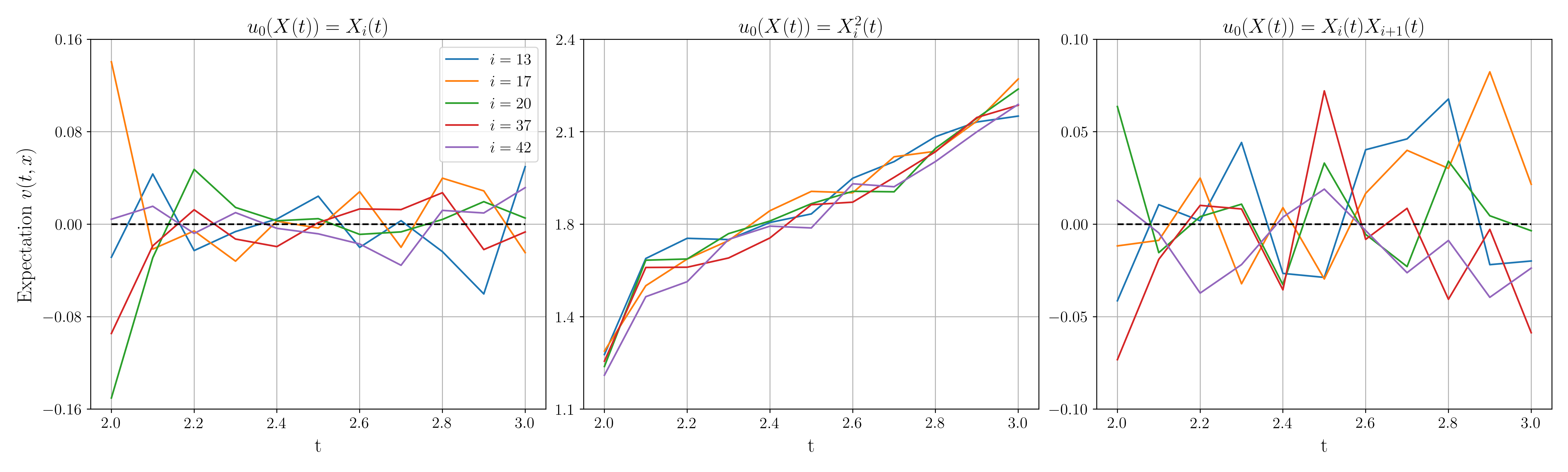}
        \caption{\emph{Expectation $v(t,x)$ for different choices of observable $u_0(X(t))$}. Expected value of the observables $X_i(t)$, $X_i^2(t)$ and $X_i X_{i+1}(t)$ over a subset of indices $1\leq i \leq N$, for $q=2.5\times 10^{-4}$, over $t\in[t_{\text{turb}},T]$.}
    \label{fig:Xi-exp-vals}
\end{figure}

\section{Classical simulations of non-dissipative quadratic ODEs} 
\label{sec:qdrift-euler}
In what follows we discuss classical simulation of the discrete dEB model. 
\begin{conj}\label{conj-sparsity}
Assume that $\beta\ge 5/3$ (so that $J=\sum_{i,j,k=1}^N |c_{ijk}|=O(\polylog{N})$). Further assume that there is an efficient way to sample from a probability distribution induced by the normalized tensor $\frac{c}{J}$. Then the system of ODEs given by 
\begin{equation}
    \label{eq:Fx}
\dfrac{dX_i}{dt} = f_i(X)=\sum_{i,j,k=1}^N c_{ijk} X_jX_k
\end{equation}
with divergence-free drift $c$, and sparse initial condition $X(0)=X_0$ can be solved directly in ODE' state space by a classical randomized time-evolution algorithm with polynomial runtime.     
\end{conj}
\begin{proof}
    For \(1\le i,j,k\le N\), let \(e_k\in\mathbb R^N\) denote the \(k\)-th canonical basis vector, and define
\[
S_{ijk}(x):=\langle e_k,x\rangle\,(E_{ij}-E_{ji}),
\qquad x\in\mathbb R^N,
\]
where \(E_{ij}\) is the canonical matrix with \(1\) in position \((i,j)\) and zeros elsewhere. We introduce the matrix-valued map
\[
H(X):=\frac{2}{3}\sum_{i,j,k=1}^N c_{ijk} S_{ijk}(X)
=
\frac{2}{3}\sum_{i,j,k=1}^N c_{ijk}\,\langle e_k,X\rangle\,(E_{ij}-E_{ji}),
\qquad x\in\mathbb R^N.
\]
We also define the quadratic vector field \( F(X)=H(X)X\). Then \[
F_i(X)=\sum_{j,k=1}^N c_{ijk}X_jX_k = f_i(X).
\]
Clearly $H(x)^\top=-H(x)$ and hence \[
\langle x,H(x)x\rangle=0
\qquad\text{for every }x\in\mathbb R^N.
\]
Now fix an integer \(M\ge 1\), set
\[
h:=\frac{T}{M},
\qquad
t_n:=nh,
\qquad n=0,1,\dots,M,
\]
and define the left-endpoint projection
\[
\eta(t):=t_{n-1}
\qquad\text{for }t\in[t_{n-1},t_n).
\]
Consider the following time-varying piece-wise linear approximation of Eq.~\eqref{eq:Fx} given by: 
\begin{equation}\label{eq:linHx}
\dfrac{dY}{dt} = H(Y(\eta(t)))Y(t)
\qquad
Y(0)=X_0.
\end{equation}
Clearly, the solution of Eq.~\eqref{eq:linHx} can be represented as follows: 
\begin{equation}
    \label{eq:Yexp}
Y(t) = e^{H(Y(t_{n-1}))(t-t_{n-1})}Y(t_{n-1})=
e^{H(Y(t_{n-1}))(t-t_{n-1})}e^{H(Y(t_{n-2}))\,h}\times\dots\times e^{H(Y(0))\,h}Y(0) ,\qquad t\in (t_n,t_{n-1}]
\end{equation}
It can be demonstrated that $\int_0^T\|X(t)-Y(t)\|^2dt\to 0$ as $M\to\infty$, see e.g.~\cite{Zhuk2015EulerEstimation} . Clearly, Eq.~\eqref{eq:Yexp} suggest that Q-Drift like algorithm~\cite{qDriftCampbell2019} can be used to approximate each $e^{H(Y(t_{n-2}))\,h}$ building upon the fact that $H(X):=\frac{2}{3}\sum_{i,j,k=1}^N c_{ijk} S_{ijk}(X)$ and \[
\exp\!\left(S_{ijk}(X)\,h\right) = 
\exp\!\left(X_k\, h\,(E_{ij}-E_{ji})\right)
=
I+\sin(h\,X_k)\,(E_{ij}-E_{ji})+(\cos(h\,X_k)-1)(E_{ii}+E_{jj}).
\] 
For example, to approximate \[
\exp^{H(Y(0))\,h} = \exp\left(\frac{2}{3}\sum_{i,j,k=1}^N |c_{ijk}| \tilde S_{ijk}(X(0))\right),\qquad \tilde S_{ijk}(X) = \operatorname{sign}(c_{ijk}) X_k(E_{ij}-E_{ji})
\]
we first note that the largest singular value of each of $\tilde S_{ijk}(X)$ is $1$ as long as $\|X_0\|\le 1 $, then we define a probability distribution
\[
P_{ijk}:=\frac{|c_{ijk}|}{J},
\qquad 1\le i,j,k\le N.
\]
and assume that 
\[
(i_1,j_1,k_1),\dots,(i_W,j_W,k_W)
\]
are i.i.d. samples drawn from \(P\). Following~\cite{qDriftCampbell2019} we define
\begin{align}
Z(t_1)&:=
\prod_{v=1}^W
\exp\!\left(
\frac{J h}{W}\,
\tilde S_{i_v j_v k_v}(X_0)
\right)X_0\\
&=  \prod_{v=1}^W \left(I+\sin(\frac{J h \operatorname{sign}(c_{i_vj_vk_v})}{W}\,X_{k_v})\,(E_{i_vj_v}-E_{j_vi_v})+(\cos(\frac{J h\operatorname{sign}(c_{i_vj_vk_v})}{W}\,X_{k_v})-1)(E_{i_vi_v}+E_{j_vj_v})
\right)X_0
\end{align}
If $W=\left\lceil \frac{4 J^2 h^2}{\epsilon} \right\rceil$ and $M'$ is the number of different realizations of $Z(t_1)$ then slightly reformulating result of~\cite{qDriftImportanceSampling} we get \[
P(\| Y(t_1) - Z(t_1) \| > \epsilon/2) < 2^{\log_2(N) +1} \exp\{- \frac{W M' \epsilon^2}{11t_1^2 J^2 C}\}
\] where $C$ depends on the probability distribution $P_{ijk}$ (importance sampling). Since $W$ is only polynomial in $\log(N)$, and each $\exp{}$ in $Z(t_1)$ increases the number of non-zero elements of $X_0$ by at most two (two diagonal terms leave the support unchanged and the only increase comes from the skew-symmetric term $(E_{i_vj_v}-E_{j_vi_v})$), we have that $Z(t_1)$ support is at most $\polylog{N}$. Repeating this process $M$ times with $M=O(\polylog{N})$ we obtain the required claim provided there is an efficient sampling routine to sample from $P_{ijk}$. We conjecture that this sampling can be done classically and efficiently using MCMC-machinery, and leave this for the future research.  
\end{proof}
\section{Classical simulations of strongly dissipative quadratic SDEs} 
\label{sec:strongly}

Here we consider a special case when the dissipation rates follow the power law scaling
$\lambda_i \propto i^p$ for some constant $p\ge 1$. Let $k>0$ be the regularization cutoff introduced in Section~\ref{sec:regul-sz}.
We show that the low-dissipation subspace $\calH_k$
simulated by our quantum algorithm has dimension  $\mathrm{dim}(\calH_k) =O(e^{k^\frac{1}{1+p}})$, and outline a potential proof of the following conjecture: despite the fact that resolving classically the entire state $\psi$ of Kolmogorov equation with given error tolerance $\epsilon$ requires $O(e^{k^\frac{1}{1+p}})$ operations with $k=\operatorname{poly}(\epsilon^{-1},J,\lambda_1^{-1})$, in contrast estimating $u(t,x)$ at a point $x$, might be done classically by Euler-Maruyama-type numerical algorithm with runtime just polynomial in $k$ and $\epsilon^{-1}$. We also provide a practically relevant example of ``strong" dissipation: the eigen-values $\lambda_i$ of Stokes operator determining the dissipative part of NSE in 2D.

Let us demonstrate that for $\lambda_i\propto i^p$, $p\ge1$ the dimension of the low-dissipation subspace, $\calH_k$ spanned by Fock basis vectors $|\mm\ra$ satisfying $\la \mm|A|\mm\ra\le k$ is in fact independent of $N$, and depends only on the cutoff parameter $k$: $\mathrm{dim}(\calH_k)=O(e^{k^\frac{1}{1+p}})$. To this end let $\Gamma$ denotes Euler's Gamma function and $\zeta$ the Riemann zeta function:
\[
\Gamma(s)=\int_0^\infty t^{s-1}e^{-t}\,dt \qquad (\Re(s)>0),
\]
\[
\zeta(s)=\sum_{n=1}^\infty n^{-s} \qquad (\Re(s)>1).
\]
\begin{lemma}\label{lem:Gk-asymptotic}
Let $p\ge 2$ be fixed, and let $N\ge k$. Define
\[
G_k:=\left\{m=(m_1,\dots,m_N)\in \mathbb{N}_0^N:\ \sum_{i=1}^N i^p m_i\le k\right\}.
\]
Set
\[
A_p:=\left(\frac{\Gamma\!\left(1+\frac1p\right)\zeta\!\left(1+\frac1p\right)}{p}\right)^{\!p/(p+1)}.
\]
Then, for any $\epsilon>0$ there is a constant $C_0$ such that as $k\to\infty$,
\[
\operatorname{dim}(\calF_0)=|G_k|
\le C_0+ (1+\epsilon)
\frac{A_p}{(2\pi)^{(p+1)/2}\sqrt{1+\frac1p}}\,
\exp\!\left((p+1)A_p\,k^{1/(p+1)}\right).
\]
\end{lemma}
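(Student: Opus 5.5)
The plan is to identify $|G_k|$ as a cumulative count of weighted partitions and to estimate it with the Meinardus/saddle-point method. Since $N\ge k$ and $i^p>k$ forces $m_i=0$ for $i>k^{1/p}$, the set $G_k$ does not depend on $N$. Hence
\[
|G_k|=\sum_{n=0}^{k} r_p(n),
\]
where $r_p(n)$ is the number of partitions of $n$ into parts drawn from $\{1^p,2^p,3^p,\dots\}$. Its generating function is
\[
F(z)=\prod_{i\ge1}(1-z^{i^p})^{-1},
\]
so that
\[
\sum_{n\le k} r_p(n)\,z^n=[\,\text{coefficients up to }k\,]\ \text{of}\ \frac{F(z)}{1-z}.
\]
Equivalently, $|G_k|$ is the coefficient of $z^k$ in $F(z)/(1-z)$.

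The first step is to obtain the asymptotics of $\log F(e^{-\tau})$ as $\tau\to0^+$ through the Mellin transform. The associated Dirichlet series is
\[
D(s)=\sum_i (i^p)^{-s}=\zeta(ps),
\]
which has a simple pole at $s=1/p$ with residue $1/p$. The Mellin representation
\[
\log F(e^{-\tau})=\frac{1}{2\pi i}\int \Gamma(s)\zeta(s+1)\zeta(ps)\,\tau^{-s}\,ds
\]
then gives
\[
\log F(e^{-\tau})=\frac1p\Gamma\!\left(\tfrac1p\right)\zeta\!\left(1+\tfrac1p\right)\tau^{-1/p}+\zeta(0)\log\tau+O(1).
\]
Here
\[
\frac1p\Gamma\!\left(\tfrac1p\right)=\Gamma\!\left(1+\tfrac1p\right).
\]
Setting $B=\Gamma(1+\frac1p)\zeta(1+\frac1p)$, the constant $\zeta(0)=-\tfrac12$ produces a $\tau^{1/2}$ factor. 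The extra factor $1/(1-e^{-\tau})\sim\tau^{-1}$ comes from the cumulative sum.

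The second step is the saddle point. Minimizing $B\tau^{-1/p}+k\tau$ gives
\[
\tau_k=\left(\frac{B}{pk}\right)^{p/(p+1)},
\]
with exponent value
\[
(p+1)\left(\frac{B}{p}\right)^{p/(p+1)}k^{1/(p+1)}=(p+1)A_p\,k^{1/(p+1)},
\]
which matches the statement. The Gaussian width comes from the second derivative
\[
\frac{1+p}{p^2}\,B\,\tau_k^{-1/p-2}.
\]
Collecting $\tau_k^{1/2}\cdot\tau_k^{-1}\cdot(2\pi\cdot\mathrm{second\ derivative})^{-1/2}$, together with the $(2\pi)^{-1/2}$ normalizations coming from $F$'s constant term $e^{\zeta'(0)}$, which carries a factor $(2\pi)^{-1/2}$ raised to a power, should assemble the prefactor $A_p(2\pi)^{-(p+1)/2}(1+1/p)^{-1/2}$. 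I would verify this bookkeeping against Meinardus's theorem, applied with Dirichlet series $\zeta(ps)$, $\alpha=1/p$, $A=1/p$. That theorem gives $r_p(n)$ directly; summing over $n\le k$ multiplies by roughly $1/\tau_k$.

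The main obstacle is that Meinardus's theorem requires the minor-arc condition. One must show that $\mathrm{Re}\,\sum_i e^{-i^p\tau}(1-\cos(i^p y))$ grows suitably for $|y|$ away from $0$. Because the parts $i^p$ have $\gcd 1$, this holds, but it requires a Weyl-sum/equidistribution estimate for $i^p y$. I would try first to cite the known treatment of partitions into $p$-th powers (Wright/Hardy–Ramanujan). Only the upper bound is needed, and for an upper bound one can avoid the minor arcs entirely by using the Chernoff-type inequality
\[
|G_k|\le e^{k\tau}F(e^{-\tau})/(1-e^{-\tau})
\]
evaluated at $\tau=\tau_k$. This gives the correct exponent but a worse polynomial prefactor. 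If the exact constant $(2\pi)^{-(p+1)/2}$ is required, the full saddle-point analysis is needed. The constant $C_0$ and the factor $1+\epsilon$ absorb the $o(1)$ error terms for small $k$.
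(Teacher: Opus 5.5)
\textbf{Gap: the elementary bound you carry out does not reach the statement, and your reason for needing more is mistaken.}

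\emph{Your Chernoff bound falls short.} The only estimate you actually carry out is $|G_k|\le e^{k\tau}F(e^{-\tau})/(1-e^{-\tau})$ at $\tau=\tau_k$. It gives $(1+o(1))(2\pi)^{-p/2}\tau_k^{-1/2}e^{(p+1)A_pk^{1/(p+1)}}$, and $\tau_k^{-1/2}\asymp k^{p/(2(p+1))}\to\infty$. Re-optimizing $\tau$ changes this only by a bounded factor: decreasing $\tau$ increases $\tau^{-1/2}$, and increasing $\tau$ by a constant factor costs $\gtrsim k^{1/(p+1)}$ in the exponent. So as written the argument does not prove the lemma.

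\emph{Your forecast for the saddle point is off.} Carrying out the bookkeeping you describe, $(2\pi)^{-p/2}\tau_k^{1/2}\cdot\tau_k^{-1}\cdot(2\pi\sigma_k^2)^{-1/2}$ with $\sigma_k^2=\frac{p+1}{p^2}B\tau_k^{-1/p-2}$, gives $(2\pi)^{-(p+1)/2}(1+1/p)^{-1/2}k^{-1/2}$. That is,
\[
|G_k|\sim\frac{k^{-1/2}}{(2\pi)^{(p+1)/2}\sqrt{1+1/p}}\,\exp\!\left((p+1)A_p k^{1/(p+1)}\right),
\]
with no factor $A_p$ and an extra $k^{-1/2}$. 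Your own remark that summing multiplies Meinardus's $B_pn^{-\alpha_p}$ by $1/\tau_k=k^{p/(p+1)}/A_p$ says the same, since $p/(p+1)-\alpha_p=-1/2$. So the lemma's constant is not the sharp constant for $|G_k|$. It is Meinardus's constant for $r_p(n)$, which the paper simply inherits. The paper cites $Q_p(n)\sim B_pn^{-\alpha_p}e^{C_pn^{1/(p+1)}}$, bounds $e^{C_pn^{1/(p+1)}}\le e^{C_pk^{1/(p+1)}}$ for $n\le k$, and uses convergence of $\sum_n n^{-\alpha_p}$. To get the stated constant one should choose $n_0$ so that $\sum_{n\ge n_0}n^{-\alpha_p}\le 1$. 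Your ``cite Wright/Meinardus and sum'' option is exactly this proof, and it works. The full saddle point on $F(z)/(1-z)$, and with it the minor-arc analysis you flag as the main obstacle, is never needed.

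\textbf{Fix: drop the factor $(1-e^{-\tau})^{-1}$.} Because the lemma is loose by a factor $\asymp k^{1/2}$, your elementary route closes once this factor is removed. Since $k-n\ge0$ for $n\le k$,
$|G_k|=\sum_{n\le k}r_p(n)\le\sum_{n\le k}r_p(n)e^{(k-n)\tau}\le e^{k\tau}F(e^{-\tau})$.
Shifting the Mellin contour past $s=1/p$ and $s=0$ gives $\log F(e^{-\tau})=B\tau^{-1/p}+\tfrac12\log\tau-\tfrac p2\log(2\pi)+O(\tau^{c_0})$. The residue at $s=0$ is $D'(0)-D(0)\log\tau$ with $D'(0)=p\zeta'(0)$, so the logarithmic term is $-\zeta(0)\log\tau$, not the $+\zeta(0)\log\tau$ you display; your conclusion $\tau^{1/2}$ is the right one. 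Hence $|G_k|\le(1+o(1))(2\pi)^{-p/2}\tau_k^{1/2}e^{(p+1)A_pk^{1/(p+1)}}$ with $\tau_k=A_pk^{-p/(p+1)}\to0$. The prefactor therefore falls below $B_p$ for $k\ge k_0$, and $C_0=\max_{k<k_0}|G_k|$ handles the rest.

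Compared with the paper, this replaces a cited asymptotic, whose proof needs Weyl-sum estimates, by a single contour shift. It even shows $|G_k|=o\bigl(e^{(p+1)A_pk^{1/(p+1)}}\bigr)$.
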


\begin{proof}
For $n\ge 0$, let
\[
Q_p(n):=\#\left\{m=(m_1,\dots,m_N)\in \mathbb{N}_0^N:\ \sum_{i=1}^N i^p m_i=n\right\}.
\]
Then
\[
|G_k|=\sum_{n=0}^k Q_p(n).
\]

Since $N\ge k$, the equality
\[
\sum_{i=1}^N i^p m_i=n,\qquad n\le k,
\]
implies that $m_i=0$ for $i>n$. Therefore $Q_p(n)$ is exactly the number of partitions of $n$
into parts from the set
\[
\mathcal A_p:=\{1^p,2^p,3^p,\dots\}.
\]
%Equivalently, $Q_p(n)=p_{\mathcal A_p}(n)$ in the notation of \cite{BMZ}.

Recall that $p>1$. Then, by the asymptotic formula for partitions into perfect $p$-th powers
(see \cite{BMZ,TWL}),
\[
Q_p(n)\sim B_p\,n^{-\alpha_p}\exp\!\bigl(C_p n^{\delta_p}\bigr),
\qquad n\to\infty,
\]
where
\[
\delta_p=\frac1{p+1},\qquad
\alpha_p=\frac{3p+1}{2(p+1)},
\]
and
\[
B_p=\frac{A_p}{(2\pi)^{(p+1)/2}\sqrt{1+\frac1p}},
\qquad
C_p=(p+1)A_p.
\]
Hence, for every $\varepsilon>0$, there exists $n_0$ such that for all $n\ge n_0$,
\[
Q_p(n)\le (1+\varepsilon)B_p\,n^{-\alpha_p}\exp\!\bigl(C_p n^{\delta_p}\bigr).
\]
Therefore, for $k\ge n_0$,
\[
|G_k|
\le
\sum_{n=0}^{n_0-1}Q_p(n)
+
(1+\varepsilon)B_p\sum_{n=n_0}^k n^{-\alpha_p}e^{C_p n^{\delta_p}}.
\]
Since $n^{\delta_p}\le k^{\delta_p}$ for $n\le k$, we obtain
\[
|G_k|
\le
C_0+(1+\varepsilon)B_p e^{C_p k^{\delta_p}}\sum_{n=n_0}^k n^{-\alpha_p}.
\]
Because $\alpha_p>1$ when $p>1$, the series $\sum_{n=1}^\infty n^{-\alpha_p}$ converges, and thus
\[
|G_k|=O\!\left(e^{C_p k^{\delta_p}}\right).
\]
\end{proof}

\begin{rem}\label{rem:p=1}
Similar asymptotic formula holds for $p=1$, provided one sets
\[
A_1=\left(\Gamma(2)\zeta(2)\right)^{1/2}=\frac{\pi}{\sqrt6}.
\]
Indeed, in this case $Q_1(n)$ is the ordinary partition function with asymptotics determined by the classical Hardy--Ramanujan formula:
\[
Q_1(n)\sim \frac{1}{4\sqrt3\,n}\exp\!\left(\pi\sqrt{\frac{2n}{3}}\right)
\qquad (n\to\infty),
\]
Repeating verbatim the summation
argument from the proof of Lemma~\ref{lem:Gk-asymptotic}, one obtains
\[
|G_k|\le C_0 + (1+\epsilon)
\frac{\log(k)}{4\sqrt3}\,
\exp\!\left(\pi\sqrt{\frac{2k}{3}}\right),
\qquad (k\to\infty),
\]
\end{rem}
Lemma~\ref{lem:Gk-asymptotic} suggests that the dimension of the projected Fock space is independent of $N$, and in turn motivates the following.
\begin{conj}\label{conj-quadratic-lambda}
Assume that $u_0(x_1\dots x_N)=u_0(x_1\dots x_k,0,\dots,0)$. If $\lambda_i\propto i^p$, $p=1,2,..$ then there exists a classical Euler-Maruyama-type numerical algorithm that estimates the expected value $v(t,x)$ with the runtime which is polynomial in $k$ and $\epsilon^{-1}$.
\end{conj}
\begin{proof}
  Let us outline a potential proof. Assume that $\lambda_1\le\dots\lambda_N$ and consider a system of SDEs
with $N$ real variables $\vec X_N=(X_1,\ldots,X_N)$ of the form
\be
\label{SDE_N}
dX_i(t) = -\lambda_i X_i(t) dt  + f_i(X(t))dt  +
\sqrt{q}\, dW_i(t),  \qquad t\ge 0.
\ee
Let \be
\label{uN(t,x)}
u_N(t,x) =
  \EE_{W} u_0(\vec X_N(t))
    \quad \mbox{subject to $X_N(0)=x$}.
 \ee
Here the expected value is over Wiener noise $W(t)$ in Eq.~(\ref{SDE_N}).
Recall that $u_N$ solves Kolmogorov equation
\begin{equation}
  \label{eq:KEN}
\frac{\partial}{\partial t} u_N(t,x) = \sum_{i=1}^N (-\lambda_i x_i + f_i(x)) \frac{\partial}{\partial x_i} u_N(t,x)
+\frac{q}2 \sum_{i=1}^N  \frac{\partial^2}{\partial x_i^2} u_N(t,x)
\end{equation}
with the initial condition $u_N(0,x)=u_0(x)$.  Let us introduce a system of
SDEs with $m$ real variables, $\vec X_m=(X_1,\ldots,X_m)$
\be
\label{SDE_k}
dX_i(t) = -\lambda_i X_i(t) dt  + f_i(X(t))dt  +
\sqrt{q}\, dW_i(t),  \qquad t\ge 0.
\ee
and let \be
\label{uk(t,x)}
u_m(t,x) =
  \EE_{W} u_0(\vec X_m(t))
    \quad \mbox{subject to $\vec X_m(0)=x$}.
    \ee
Here $\vec X_m(0)=x$ is understood for the first $m$-components of $x$.     
Similarly $u_m$ solves Kolmogorov equation
\begin{equation}
  \label{eq:KEk}
\frac{\partial}{\partial t} u_m(t,x) = \sum_{i=1}^m (-\lambda_i x_i + f_i(x)) \frac{\partial}{\partial x_i} u_m(t,x)
+\frac{q}2 \sum_{i=1}^m  \frac{\partial^2}{\partial x_i^2} u_m(t,x)
\end{equation}
with the initial condition $u_m(0,x)=u_0(x_1\dots x_m,0,\dots,0)$. Consider the regularized bKE defined in Section~\ref{sec:regul-sz} with regularization cutoff $m$: 
\be
\label{KE_reg}
\frac{d}{dt} |\psi_m(t)\ra = (-A + \Pi_m C\Pi_m ) |\psi_m(t)\ra
\ee
with the initial condition $|\psi_m(0)\ra = |\psi(0)\ra$.
The key observation here is that the regularized bKE~\eqref{KE_reg}, obtained from Eq.~\eqref{eq:KEN}, coincides with the regularized bKE obtained from Eq.~\eqref{eq:KEk}. 
This follows immediately from the fact that the inequality $\sum_{i=1}^N\lambda_i m_i \le m $ forces all $m_i$ corresponding to $\lambda_i>m$ to $0$. The latter implies that the regularized solution is composed of linear combinations of $m$-variate polynomials
\be
\label{Hermite_normalizedk}
\mathbb{H}_{\multi{m}}(x) = \prod_{i=1}^m  \frac1{\sqrt{(m_i)!}} \herm_{m_i} \left(x_i\sqrt{2\lambda_i/q}\right),
\ee
by the definition of $\calJ_m$. Hence approximation of $u_N$ obtained from $\psi_m$ is independent of variables $x_{m+1}\dots$. 

Let $\calM$ denote the mapping of $u$ to its representation in the Fock space, and let $L^2_\mu(R^N)$ denote space of square-integrable functions with Gaussian weight $\mu$. Extend $u_m$ from $R^m$ to $R^N$ by adding $0$-order univariate Hermite polynomials for variables $x_{m+1}\dots$.  Then $\| u_N(t,\cdot)-u_m(t,\cdot)\|_{L^2_\mu}\le \|\calM(u_N(t,\cdot))-\psi_m(t)\|+\|\calM(u_m(t,\cdot))-\psi_m(t)\|\le 2\epsilon$ by Theorem~\ref{thm:regul}. Now assuming that $u_N$ and $u_m$ are smooth enough (e.g. Lipschitz continuity is enough for pointwise convergence of Hermite series) to have that $|u_N(x,t)-u_m(x,t)|\le C \| u_N(t,\cdot)-u_m(t,\cdot)\|_{L^2_\mu}\le 2\epsilon C$ for some $C>0$ we deduce that \[
  |u_N(x,t)-EM(x,t)|\le  |u_N(x,t)-u_m(x,t)|+|u_m(x,t)-EM(x,t)| \le  2\epsilon C + \epsilon_1
  \] where $EM(x,t)$ is Euler-Mariyama based approximation of $u_m(t,x) =
  \EE_{W} u_0(\vec X_m(t))$ with error $\epsilon_1$. What remains to be shown is that weak EM algorithm for NSE-like systems with strong dissipation is at most polynomial in $m$ and $\epsilon_1^{-1}$ for polynomial $u_0$, which amounts to estimating variance of the random variable $u_0(\vec X_m(t))$ and applying Chebyshev-type inequality to estimate the required number of samples to reach the required level of precision, $\epsilon_1$. We leave this to the future research.

\end{proof}

Finally we provide a practically relevant example of ``strong" dissipation: the eigen-values $\lambda_i$ of Stokes operator determining the dissipative part of NSE in 2D.
Recall that the classical NSE in 2D is a system of two PDEs defining dynamics of the scalar pressure field $p(x,y)$ and the viscous fluid velocity vector-field $\vec v(t,x,y)=[v_1(t,x,y),v_2(t,x,y)]$ which depends on the initial condition $\vec v(0)=\vec v_0$, forcing $\vec g$, and Boundary Conditions (BC), e.g. periodic BC $v_{1,2}(t,x+\ell_1,y) = v_{1,2}(t,x,y)$, $v_{1,2}(t,x,y+\ell_2) = v_{1,2}(t,x,y)$. In the vector form it reads as follows:
\begin{equation}
  \label{eq:NSE-vector}
  \begin{split}
    \dfrac{d\vec{v}}{dt} &+ (\vec{v} \cdot \nabla) \vec{v} - \nu\Delta\vec{v}+\nabla p = \vec g,\quad \nabla\cdot\vec{v} = 0
  \end{split}
\end{equation}
Eigen-values of the Stokes operator $A=-\Delta$ take the following form: (see~\cite{flandoli1998kolmogorov})
\begin{align*}
    A\vec{e}_{\vec k} &= \Lambda_{\vec k} \vec{e}_{\vec k}, & \Lambda_{\vec k} &= 4\pi^2|\vec k|^2_2,&\vec k &= (k_1,k_2)^\top, k_{1}\in\mathbb{N}, k_{2}\in\mathbb{Z}\\
    \vec{e}_{\vec k}&= \frac{\vec{k}^\perp}{|\vec k|_2}\sqrt{2}\sin(2\pi\vec{k}\cdot\vec{\xi}),
    & \vec{k}^\perp&=(k_2,-k_1)^\top,&\vec{k}\cdot\vec{\xi}&=k_1\xi_1+k_2\xi_2, |\vec k|_2^2=\vec k\cdot\vec k
\end{align*}
Let $\lambda_i$ denote the $i$-th element of the set of all eigen-values $\Lambda_{\vec k}$ written in nondecreasing order and counted with multiplicity. Then Galerkin projection onto $\vec e$ results in SDE with dissipation given by $\lambda_iX_i$. The following lemma demonstrates that $\lambda_i\sim i$.
\begin{lemma}
Let
\[
N(\Lambda):=\#\{(k_1,k_2)\in \mathbb N\times\mathbb Z:\ \Lambda_{\vec k}\le \Lambda\}.
\]
Then, as $\Lambda\to\infty$,
\[
N(\Lambda)=\frac{\Lambda}{8\pi}+O(\sqrt{\Lambda}).
\]
In particular, if $\lambda_i$ denotes the $i$-th element of the multiset
$\{\Lambda_{\vec k}:k_1\in\mathbb N,\ k_2\in\mathbb Z\}$
written in nondecreasing order and counted with multiplicity, then
\[
\lambda_i\sim 8\pi\, i.
\]
\end{lemma}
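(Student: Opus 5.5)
The plan is a lattice-point count in a half-disk, followed by inversion of the counting function.

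First, note that $\Lambda_{\vec k}\le\Lambda$ is equivalent to $k_1^2+k_2^2\le R^2$ with $R:=\sqrt{\Lambda}/(2\pi)$. So $N(\Lambda)$ counts lattice points $(k_1,k_2)$ with $k_1\ge 1$ in the closed disk of radius $R$, which is essentially a half-disk.

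To estimate this count, I will use the symmetry $(k_1,k_2)\mapsto(-k_1,k_2)$. The full disk's lattice points split into three groups: those with $k_1\ge1$, those with $k_1\le -1$, and those on the axis $k_1=0$. The two half-disks have equal counts, and the axis contributes $2\lfloor R\rfloor+1$ points. This gives
\[
N(\Lambda)=\tfrac12\bigl(\#\{k\in\ZZ^2:\|k\|\le R\}-2\lfloor R\rfloor-1\bigr).
\]

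Next, I will apply the elementary Gauss circle estimate $\#\{k\in\ZZ^2:\|k\|\le R\}=\pi R^2+O(R)$. This follows by the same unit-square comparison used in the proof of Lemma~\ref{lem:MN-by-coarea}. Attach to each lattice point $p$ the square $Q_p$. Every $Q_p$ with $\|p\|\le R$ lies in the disk of radius $R+\sqrt2/2$. Conversely, the disk of radius $R-\sqrt2/2$ is covered by these squares. Hence the count is squeezed between $\pi(R\mp\sqrt2/2)^2$, and the error is $O(R)$.

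Substituting, we get $N(\Lambda)=\pi R^2/2+O(R)$. Since $\pi R^2/2=\Lambda/(8\pi)$ and $R=O(\sqrt\Lambda)$, this is $\Lambda/(8\pi)+O(\sqrt\Lambda)$, as claimed.

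For the second claim, I will invert the counting function. The sequence $\lambda_i$ is nondecreasing, and by the definition of $N$ it satisfies $N(\lambda_i-0)<i\le N(\lambda_i)$, where $N(\lambda_i-0)$ denotes the left limit. Taking the left limit of the asymptotic formula gives the same estimate, so
\[
i=\frac{\lambda_i}{8\pi}+O(\sqrt{\lambda_i}).
\]
Also $\lambda_i\to\infty$, because $N(\Lambda)$ is finite for every $\Lambda$. Dividing by $\lambda_i$ then yields $i/\lambda_i\to 1/(8\pi)$, that is, $\lambda_i\sim 8\pi i$.

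The only mildly delicate step is this inversion when there are ties. Multiplicities in the sequence make $N$ jump by more than one at some values. I handle this by using the two-sided bracketing $N(\lambda_i-0)<i\le N(\lambda_i)$ rather than treating $N$ as continuous. The lattice-count error then controls $i$ on both sides, and nothing deeper than the $O(R)$ circle bound is needed.
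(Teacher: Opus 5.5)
Your proof is correct and follows the same route as the paper. Both reduce to lattice points in a half-disk of radius $R=\sqrt{\Lambda}/(2\pi)$, use the Gauss circle bound to get $\pi R^2/2+O(R)$, and invert the counting function via the bracketing $N(\lambda_i^-)<i\le N(\lambda_i)$. You supply more detail than the paper does, namely the reflection symmetry and the unit-square proof of the $O(R)$ circle estimate.
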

\begin{proof}
By definition,
\[
\Lambda_{\vec k}\le \Lambda
\iff
4\pi^2(k_1^2+k_2^2)\le \Lambda
\iff
k_1^2+k_2^2\le R^2, \quad R:=\frac{\sqrt{\Lambda}}{2\pi}.
\]
Hence
\[
N(\Lambda)
=
\#\{(k_1,k_2)\in \mathbb N\times\mathbb Z:\ k_1^2+k_2^2\le R^2\}.
\]

This is the number of lattice points in the half-disk
\[
\{(x,y)\in\mathbb R^2:\ x>0,\ x^2+y^2\le R^2\},
\]
up to the contribution of the boundary line $x=0$, which contains only $O(R)$ lattice points.
Therefore Gauss' area law gives
\[
N(\Lambda)=\frac{\pi R^2}{2}+O(R).
\]
Substituting $R=\sqrt{\Lambda}/(2\pi)$ yields
\[
N(\Lambda)
=
\frac{\pi}{2}\cdot\frac{\Lambda}{4\pi^2}
+
O(\sqrt{\Lambda})
=
\frac{\Lambda}{8\pi}+O(\sqrt{\Lambda}).
\]
Define \[
N(\lambda_i^-):=\#\{(k_1,k_2):\Lambda_{\vec k}<\lambda_i\}.
\]
Now by definition of the counting function,
\[
N(\lambda_i)\ge i
\qquad\text{and}\qquad
N(\Lambda_i^-)\le i,
\]
because at most the first $i-1$ eigenvalues can be strictly smaller than $\lambda_i$, while the first $i$ eigenvalues are all $\le \lambda_i$.
So asymptotically $N(\lambda_i)\sim i$. Since
\[
N(\Lambda)=\frac{\Lambda}{8\pi}+O(\sqrt{\Lambda}),
\]
we obtain
\[
i\sim \frac{\lambda_i}{8\pi},
\]
and hence
\[
\lambda_i\sim 8\pi\, i.
\]
\end{proof}

\section*{Acknowledgments}
The authors thank Andrew Childs, Arkopal Dutt, Jay Gambetta, Hari Krovi, Martin Mevissen, Juan Bernabe Moreno,
Kristan Temme, and Barry Sanders for helpful discussions. 

\bibliographystyle{unsrtnat}
\bibliography{mybib}

\end{document}